\newcommand{\OMIT}[1]{}
\newtheorem{theorem}{Theorem}[section]
\newtheorem{lemma}[theorem]{Lemma}
	\newtheorem{corollary}[theorem]{Corollary}
	\newtheorem{proposition}[theorem]{Proposition}
	\newtheorem{definition}[theorem]{Definition}
	\theoremstyle{definition}
	\newtheorem*{note*}{Note}
	\newtheorem{remark}[theorem]{Remark}
	\newtheorem*{remark*}{Remark}
\theoremstyle{definition}
\DeclareTextFontCommand{\emph}{\em}
\newif\ifcomments
\newcommand{\bruce}[1]{\textcolor{blue}{(Bruce: #1)}}
\newcommand{\koosha}[1]{\textcolor{green}{(Koosha: #1)}}
\newcommand{\changes}[1]{\textcolor{red}{(changes: #1)}}
\newcommand{\new}[1]{\textcolor{brown}{( #1)}}
\newcommand{\bruce}[1]{}
\newcommand{\koosha}[1]{}
\newcommand{\changes}[1]{}
\newcommand{\new}[1]{}
\newlist{primenumerate}{enumerate}{1}
\setlist[primenumerate,1]{label={\arabic*$'$}}
\NewDocumentCommand{\problemStatement}{mm}
 {
  \arteche_problemstatement:nn { #1 } { #2 }
 }
\title{The Computational Complexity of Variational Inequalities and Applications in Game Theory}
\author{ Bruce M. Kapron\hspace{2mm}\href{https://orcid.org/0000-0002-3295-543X}{\includegraphics[scale=0.05]{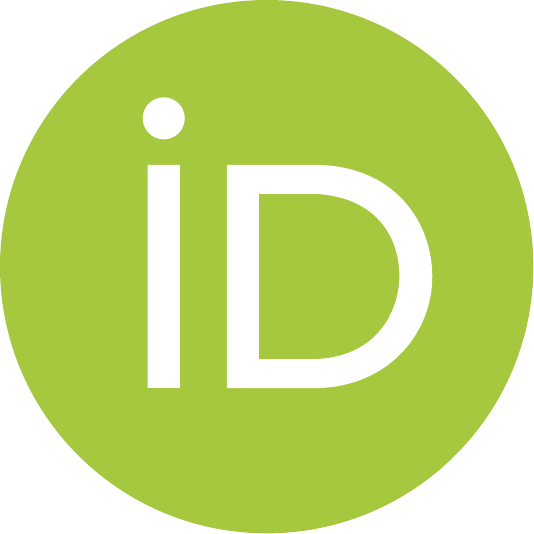}}\ \\
	Department of Computer Science\\
	University of Victoria\\
	Victoria, BC\\
	\texttt{bmkapron@uvic.ca} \\
	\And
	Koosha Samieefar\hspace{1mm}\href{https://orcid.org/0000-0001-8960-9282}{\includegraphics[scale=0.06]{orcid.pdf}} \\
	Department of Computer Science\\
	University of Victoria\\
	Victoria, BC \\
	\texttt{koosha021@gmail.com} \\
}
\begin{document}
\maketitle
\begin{abstract}
We present a computational formulation for the approximate version of several variational inequality problems, investigating their computational complexity and establishing PPAD-completeness. Examining applications in computational game theory, we specifically focus on two key concepts: resilient Nash equilibrium, and multi-leader-follower games—domains traditionally known for the absence of general solutions. In the presence of standard assumptions and relaxation techniques, we formulate problem versions for such games that are expressible in terms of variational inequalities, ultimately leading to proofs of PPAD-completeness.
\end{abstract}

\begin{titlepage}

\maketitle


\setcounter{tocdepth}{1} 
\tableofcontents

\end{titlepage}

\section{Introduction}

A variational inequality (VI) is a problem specified by a closed convex set $\mathcal{R}\in \mathbb{R}^n$ and a function  $F:\mathbb{R}^n \rightarrow \mathbb{R}^n$, and asks for some $x^* \in \mathcal{R}$ such that for every $y\in \mathcal{R}$, $\langle F(x^*),y-x^* \rangle \ge 0$. This formulation generalizes a number of familiar problems, including systems of equations, convex optimization, equilibria, and fixed points.
 Quasi-variational inequalities (QVI) extend this notion by introducing a set-valued mapping, and dependence of the feasible set on an external variable. Even further generalizations lead to generalized quasi-variational inequalities (GQVI) in which  $F:\mathbb{R}^n \rightarrow \mathbb{R}^n$ becomes a correspondence,   extending classical QVIs as well as the generalized variational inequality (GVI) studied in \cite{Fang}. 
These problems find application in diverse fields including optimization theory, economics, and engineering, serving as a modeling and solution tool for a broad spectrum of real-world problems \cite{Giannessi1995,Panagiotopoulos1,Panagiotopoulos2,Pang2005}. This is especially true in cases where the aim is to find a solution subject to a particular inequality condition that is defined by a set of functions or operators.  For example, under some assumptions such as differentiability, Debreu-Rosen style games (see \cite{Social,concavegames,rosen}) can be expressed as QVIs, aiding their analysis through variational techniques, and offering a unified framework for various multiplayer, non-cooperative games \cite{Pang2005}.

Finding a solution to more generalized forms of variational inequalities such as QVI  is typically more intricate compared to a standard variational inequality due to their increased generality. Commencing with the formal definition of QVI proposed by Bensoussan and Lion (\cite{Bensoussan1,Bensoussan4,Bensoussan3,Bensoussan2}), researchers have engaged in an exploration of algorithmic solutions, focusing on conditions governing convergence. Various numerical and mathematical techniques, including fixed-point methods, penalty methods, and projection methods, can be used to find solutions to QVIs \cite{fixpointbook,pang2004,Giannessi1995,Pang2005,Rockafellar1998}.  In many instances, directly solving QVIs can be challenging. Researchers often resort to regularization techniques or approximate the problem to enhance its suitability for numerical methods. These techniques may yield an approximate solution within a specified tolerance of the exact solution. In such cases, the complexity is contingent upon the chosen approximation accuracy and the convergence rate of the algorithm. The study of the computational aspects of these variational inequalities is in its early stages of development. Specifically, the majority of research papers focus on examining whether a solution to a problem exists \cite{battaexsy,chan,existenceqvi,qviexist2,pangchan2,YAOexsitqvi}. For example, in \cite{chan},  an existence result for the GQVI  was proved using the Eilenberg-Montgomery fixed point theorem (see \cite{ELMfixp}).


Strong Nash equilibrium \emph{(SNE)} is a concept in game theory that refines the traditional notion of Nash Equilibrium. In SNE, not only does each player have no incentive to deviate from their strategy given the strategies of others unilaterally, but this equilibrium also withstands deviations by any possible coalition of players \cite{Strongnash}. In other words, even if a group of players collaboratively tries to deviate from the equilibrium, they cannot collectively gain from such deviations.  This concept had faced criticism for being excessively "strong", particularly in environments permitting unlimited private communication (for more information, see\cite{Coalationproof}).  Furthermore, a strong Nash equilibrium is weakly Pareto-efficient, a fact which may be verified by considering a deviation of the ``grand coalition'' of all players.  In conclusion, the existence of such equilibria is unlikely (even if mixed strategies are allowed) \cite{Coalationproof}, except under strict conditions and assumptions (extensively discussed in \cite{StrongkNash,socialcoalition,StrongNashexist,strongnashvotingexists}.) In contrast, for \emph{coalition-proof Nash equilibrium (CPNE)}, limitations on private communication are imposed \cite{Coalationproof}. CPNE emphasizes stability against coordinated actions by any subset of players, regardless of the coalition's size while SNE goes further by requiring that not only should deviations by any coalition be unsuccessful, but any subset of players (including the entire group) should also be unable to deviate successfully. In contrast to the NP-completeness of deciding whether an SNE exists \cite{Conitzer,Gatti2013,gatti2017verification}, the problem of finding an SNE is in smoothed polynomial time\cite{strongnashnothard2,strongnashnothard3,strongnashnothard1}\footnote{These results concern the problem of finding (mixed strategy and exact solution of) SNE in bi-matrix games which is a special case. Specifically, they show that if there is a mixed–strategy, the payoffs restricted to the actions in the support must satisfy strict geometric conditions.}. We may also restrict ourselves to coalitions of size at most $t$. This limitation appears reasonable in practice, as forming and coordinating large coalitions can be challenging. This fact in turn motivated the definition of \emph{$t$-strong Nash equilibrium} and \emph{$t$-resilient Nash equilibrium}\cite{Abraham1,StrongkNash}. Informally, here, no member of a coalition of size up to $t$ can do better, even if the whole coalition defects.  Apart from the technical interest of these variations, numerous papers explore their practical applications \cite{Abraham1,Abraham3,coalitionproofapplications,resilientapplication}.


 Multi-leader-follower games are a class of games in which multiple agents, referred to as \emph{leaders} and \emph{followers}, interact strategically to achieve their respective objectives. Leaders and followers often have conflicting objectives or interests, in particular aiming to maximize their own profit or minimize their loss.  This setup resembles a bi-level program, where the leaders engage in competitive, non-cooperative Nash games at the upper level, making their decisions while considering the followers' responses. After the leaders have made their choices, the followers then engage in a parametric, non-cooperative game at the lower level, with the strategies of the leaders treated as external parameters.  The concept of multi-leader-follower games has a variety of applications that arise from situations where there are multiple oligopoly firms operating in the market \cite{mlfapp1,mlfapp2,mlfapp3,mlfapp4,Pang2005}. Oligopoly markets are markets dominated by a small number of suppliers.   The simplest form of the multi-leader-follower game is a Stackelberg game \cite{stacklbergbook,Sherali_1984,Stackelberg} in which one leader and multiple followers react to the leader’s strategies. These games have applications in various fields, including economics, engineering, and multi-agent systems \cite{Stackelbergapp1,Stackelbergapp2,Stackelbergapp3}.  Traditional game theory provides solution concepts for analyzing and solving multi-leader follower games. Common solution concepts include Stackelberg equilibrium (a hierarchical equilibrium where the leader moves first and followers react) and sub-game perfect equilibrium (a refinement of Nash equilibrium that considers strategies at each stage of the game).  While the multi-leader-follower problem provides a sound mathematical framework with a well-defined solution concept and practical applications, its elevated level of complexity and technical intricacies make it computationally challenging or intractable. Specifically, it resembles an equilibrium problem in a more complex form of Debreu-Rosen style games, requiring each leader to solve a non-convex mathematical program with equilibrium constraints \cite{bilevel2,Pang2005,bilevel}. This formulation faces two significant issues: a potential absence of an equilibrium solution due to non-convexity, and computational intractability. In response to these challenges, \cite{Pang2005} proposed a careful analysis and selection of ``remedial models'' aimed at deriving sensible equilibrium solutions. An alternative approach involves examining a specific class of multi-leader-follower games \cite{mlfapp3} that adhere to particular yet reasonable assumptions.

Problems for which  the existence of a solution follows from Kakutani's fixed-point theorem, particularly those related to games introduced by Debreu and Rosen, as well as quasi-variational inequalities, have not been extensively explored from a complexity standpoint. In \cite{PPADcomplexity}, a problem known as {\sc Kakutani} was introduced, with a brief overview of its inclusion in the complexity class PPAD.
The primary challenge in formulating {\sc Kakutani} as a computational problem lies in the limitations of conventional approaches to explicitly and succinctly represent convex sets. Methods such as the convex hull of a point set or a convex polytope defined by linear inequalities prove overly restrictive and fail to encompass vital practical applications of Kakutani's fixed-point theorem, such as its application to the games mentioned earlier. A more suitable computational formulation of the Kakutani problem has recently been introduced by leveraging the computational convex geometry methods introduced in \cite{Grostel}. These methods represent convex sets using separation oracles. This formulation was subsequently applied in \cite{concavegames}, which considers the computational complexity of finding approximate equilibrium solutions for Rosen-style games.

\subsection*{Our Contribution}

A primary objective of algorithmic game theory is to categorize the complexity of key economic concepts. PPAD-completeness has become a significant unifying principle within this endeavor \cite{Daskalakis2006,concavegames}. 
The main results of this paper focus on the computational complexity of finding approximate solutions for variational inequalities, and variants, namely quasi-variational inequalities and generalized quasi-variational inequalities, and their applications in game theory. We give formulations of these problems that are general enough for a variety of game-theoretic applications and prove that they are PPAD-complete. PPAD-hardness can be shown directly by the fact that well-known PPAD-hard problems such as the Nash equilibrium problem may be directly formulated as variational inequality problems. The main challenge addressed in this paper is establishing membership in PPAD for the problems we consider. This requires extensions and also combination of techniques and results from a variety of previous works \cite{chan,Harker,Pang2005,concavegames}. Note that our proofs also provide an alternative method for showing the PPAD-completeness for the Rosen-style games considered in \cite{concavegames} and even for Debreu-style games, using variational inequalities (see Proposition \ref{geneqvari2})\footnote{An alternative proof is available in recent work \cite{STOC24}.}.

Our formulation of the computational version of these problems uses weak and strong separation oracles. Informally, strong (weak) separation oracles can verify the membership (almost membership) of a point in given correspondences, or sets. Simpler methods of representing convex sets, such as using convex hulls or convex polytopes, are too restrictive and often not useful for capturing the practical applications of Kakutani's fixed point in game theory, as discussed in \cite{concavegames}.  We can similarly motivate this approach for variational inequalities. As we are concerned with the complexity of the problems, we would like to establish PPAD membership for their most general form.  Moreover, using convex hull/polytope representations would require new techniques to replace the use of Berge's maximum theorem and Kakutani's fixed point and would likely not significantly simplify our proofs\footnote{For more information, see both the definitions of Kakutani's fixed point and also, Theorem E.1 in \cite{concavegames}.}. In addition, in different application areas, including economics, variational inequality problems with more general convex sets arise, e.g., the approach to modeling Walrasian equilibria given in \cite{walrasian}. Another simple example of equilibrium problems in which general convex sets are required to present the constraint is the problem of finding generalized close equilibrium in which each player's alternative admissible strategies are limited to strategies that are statistically close to their strategies, not the whole mixed strategy sets \cite{MFCS}.

\begin{theorem}[informal]
\label{informal1}
Finding an approximate solution to computational variants of GQVI, QVI and VI is PPAD-complete where:

\begin{itemize}
    \item The correspondences\footnote{For example, an instance of a GQVI has two correspondences, a QVI has only one, and a VI has none.} are convex-valued and Lipschitz and given by either a strong or weak separation oracle. 
    \item The functions are convex and are given by circuits.
    \item The sets are convex and given by either a strong or weak separation oracle.
\end{itemize}

\end{theorem}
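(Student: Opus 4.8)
The plan is to prove the informal theorem in two directions, hardness and membership, with the main work being membership.

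\bigskip

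\noindent\textbf{PPAD-hardness.} For hardness it suffices to treat the least general problem in the list, namely the approximate VI problem, since VI is a special case of QVI (no correspondence constraining the feasible set), which is in turn a special case of GQVI. I would exhibit a polynomial-time many-one reduction from a known PPAD-complete problem — the natural choice being the problem of finding an approximate (mixed) Nash equilibrium of a bimatrix game, or more conveniently its formulation as an approximate Brouwer fixed point on a simplex / product of simplices. Given a game with payoff matrices, the feasible set $\mathcal{R}$ is the product of strategy simplices (which admits a trivial strong separation oracle, so it fits the hypothesis), and the operator $F(x)$ is built from the (negated) payoff gradients, which are affine and hence trivially computable by a circuit and convex. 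A standard Minty-style computation shows that $x^*$ solves the VI $\langle F(x^*), y - x^*\rangle \ge 0$ for all $y \in \mathcal{R}$ exactly when $x^*$ is a Nash equilibrium, and one checks the approximate versions correspond up to a polynomial change in the accuracy parameter. Care is needed that the oracle/circuit encoding of the instance is polynomial in the size of the game and that the error parameter $\varepsilon$ is handled so that an $\varepsilon$-solution of the VI yields a $\mathrm{poly}(\varepsilon)$-Nash equilibrium.

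\bigskip

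\noindent\textbf{PPAD-membership.} This is the substantive part, and it again suffices to do it for the most general problem, GQVI, since QVI and VI are obtained by instantiating the extra correspondences trivially. The strategy is: (i) reduce GQVI to an approximate fixed-point problem for a suitable (Kakutani-type) correspondence; (ii) reduce that fixed-point problem to the computational {\sc Kakutani} problem of \cite{PPADcomplexity, Grostel, concavegames}, which is already known to be in PPAD. For step (i), following the classical existence arguments for GQVI via the Eilenberg–Montgomery / Kakutani fixed-point theorem (as in \cite{chan} and the variational-inequality reformulations in \cite{Harker, Pang2005}), I would define, for each point $x$, a solution-set map $\Phi(x)$ consisting of the minimizers over the constraint set $K(x)$ of the linearized objective $\langle F(x), \cdot \rangle$ (replacing $F(x)$, when it is a correspondence, by a measurable/continuous selection or by taking the union over selections), and show that fixed points of $\Phi$ are exactly the GQVI solutions. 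One must verify that $\Phi$ is nonempty- and convex-valued and has closed graph — this is where convex-valuedness and Lipschitz continuity of the correspondences, plus Berge's maximum theorem, are used — and crucially that a strong/weak separation oracle for $K(x)$ and the circuit/oracle for $F$ yield a strong/weak separation oracle (or the approximate membership / optimization primitive) for $\Phi(x)$ as required by the computational {\sc Kakutani} formalism. For step (ii), I would track the approximation parameters carefully: an approximate Kakutani fixed point must be shown to map back to an $\varepsilon$-GQVI solution with only a polynomial loss, using Lipschitz bounds to control how errors in the oracle evaluations and in the fixed-point condition propagate through the linearized-objective minimization.

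\bigskip

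\noindent\textbf{Main obstacle.} I expect the principal difficulty to be step (i) of membership: constructing the auxiliary correspondence $\Phi$ so that it simultaneously (a) has the regularity (closed graph, convex nonempty values, self-map on a computable compact convex domain) needed for the Kakutani/Eilenberg–Montgomery machinery, and (b) is presented \emph{computationally} in the exact input format demanded by the {\sc Kakutani} problem — i.e., turning a weak separation oracle for the moving constraint set $K(x)$ together with a circuit for $F$ into the required (weak) separation/optimization oracle for $\Phi(x)$, uniformly in $x$ and with polynomially bounded Lipschitz constants and bit-complexity. Handling the genuinely set-valued $F$ in the GQVI case (as opposed to single-valued $F$ in QVI/VI) adds a further layer: one needs to argue that restricting to a suitable selection does not change the solution set and remains computationally tractable, which is precisely where the combination of techniques from \cite{chan} and \cite{concavegames} must be extended rather than merely cited.
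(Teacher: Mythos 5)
Your overall architecture matches the paper's: hardness via the Harker-style reformulation of approximate Nash as a VI (Corollary~\ref{nashvi2}), and membership by reducing the most general problem, GQVI, to the computational Kakutani problem of \cite{concavegames} through a Berge-type continuity argument. However, two concrete steps in your membership plan would fail as stated. First, your treatment of the set-valued $\mathcal{F}$ by ``a measurable/continuous selection or the union over selections'' does not work: a fixed selection can miss GQVI solutions (which only require \emph{some} $w^*\in\mathcal{F}(x^*)$), a computable continuous selection is not available from a separation oracle, and the union over selections destroys convex-valuedness of the resulting argmin map. The paper instead lifts the problem to the product space of pairs $(x,w)$ and runs Kakutani on the correspondence $\Psi(x,w)=(\Pi(x,w),\mathcal{F}(x))$, so that the second coordinate of the fixed point delivers the witness $w^*\in\mathcal{F}(x)$ directly; this is the structural idea borrowed from \cite{chan}'s existence proof and it is what makes the GQVI case go through.

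Second, taking $\Phi(x)$ to be the \emph{exact} minimizers of the linearized objective $\langle F(x),\cdot\rangle$ over $K(x)$ gives a correspondence that is closed-graph but not Hausdorff--Lipschitz: the argmin of a linear functional over a moving convex set can jump discontinuously, whereas the computational Kakutani problem requires an $L$-Hausdorff-Lipschitz input, and the Robust Berge Maximum Theorem (Theorem~\ref{berge}) requires \emph{strong} concavity of the objective to yield the Hölder bound. The paper fixes both issues at once by regularizing, i.e.\ working with $\Phi(y,x,w)=-(y-x)^{T}w-\gamma\|y\|_2^2$, and by replacing the exact argmax with the $\epsilon$-approximate argmax set $\Pi(x,w)$, which is provably within $\sqrt{\epsilon/\gamma}$ of the exact one by strong concavity and, unlike the exact (singleton) maximizer, contains a ball of positive radius — a non-emptiness condition the separation-oracle framework needs. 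You correctly identified this construction as the main obstacle, but the proposal leaves precisely these two choices unresolved, and the naive versions you suggest do not satisfy the hypotheses of the computational Kakutani reduction.
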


 Next, applying the computational versions of the variational inequality problems, we demonstrate that several central problems in game theory are PPAD-complete under reasonable assumptions. First, we address challenges to finding an $L/F$-equilibrium for a multi-leader-follower game. In addition to the complexity of the leaders' optimization problem, the non-convexity of the leader's constraints may mean that a $L/F$-equilibrium does not always exist. We address the non-convexity of the leader's constraints by following an algebraic relaxation/restriction approach introduced in \cite{Pang2005} which remedies the non-existence of a $L/F$-equilibrium by using a \emph{Karush–Kuhn–Tucker condition} (KKT) representation  and additionally introducing various relaxations/restrictions of such a representation.

\begin{theorem}[informal]
Finding an approximate remedial solution (proposed by \cite{Pang2005}) of an $L/F$-equilibrium in a multi-leader follower game is PPAD-complete under the following assumptions:

\begin{itemize}
    \item The correspondences are convex-valued given by linear arithmetic circuits that represent a strong or weak separation oracle. 
    \item The functions are represented by linear arithmetic circuits.
    \item The sets are convex given by linear arithmetic circuits which represent either a strong or weak separation oracle.
\end{itemize}
\end{theorem}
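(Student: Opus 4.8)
The plan is to prove the two directions separately, with the membership direction reducing to the preceding theorem on computational GQVI.

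\textbf{PPAD-hardness.} I would first give an essentially trivial reduction from computing an approximate Nash equilibrium of a polymatrix (or bimatrix) game, which is PPAD-hard. Take a multi-leader-follower instance with no followers --- or with a single dummy follower whose parametric optimization is vacuous --- so that the lower-level game is trivially solved, its KKT system is empty, and every relaxation/restriction of \cite{Pang2005} collapses. Then a remedial $L/F$-equilibrium is exactly a Nash equilibrium among the leaders, and an approximate remedial solution is an approximate Nash equilibrium. It remains only to check that the payoff functions and the (convex, e.g.\ simplex) strategy sets of the underlying Nash game can be encoded by linear arithmetic circuits together with their associated separation oracles, which is routine.

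\textbf{PPAD membership.} The core of the argument is to rewrite the remedial model of \cite{Pang2005} as an instance of the computational GQVI covered by Theorem~\ref{informal1}. I would proceed as follows. (1) \emph{Lower-level replacement:} for fixed leader strategies $x=(x^\nu)_\nu$ each follower solves a convex program, and under the constraint qualification standing in the remedial model we replace each follower's problem by its KKT system (stationarity, primal feasibility, complementary slackness, dual feasibility); collecting these over all followers gives a mixed complementarity system in the follower variables $y$ and multipliers $\lambda$, parameterized by $x$. (2) \emph{Relaxation/restriction:} apply the algebraic relaxation/restriction of \cite{Pang2005} to the complementarity conditions --- e.g.\ replacing each pair $0\le a\perp b\ge 0$ by $a\ge 0,\ b\ge 0,\ a^\top b\le\varepsilon$ together with the prescribed convexifying restriction of the leaders' feasible regions --- so that the feasible region of each leader $\nu$, viewed as a function of the remaining variables $z^{-\nu}$ (the other leaders' strategies together with $(y,\lambda)$), becomes a convex set $K_\nu(z^{-\nu})$. (3) \emph{Assembly into a GQVI:} stack the leaders' first-order optimality conditions and the relaxed lower-level system into a single GQVI on the product of all variable spaces, where $F$ is the subgradient map of the leaders' objectives together with the residual maps of the relaxed KKT system, and the correspondence is $\prod_\nu K_\nu$ intersected with the relaxed lower-level feasible set; one then verifies that $F$ is built from the problem's convex objectives and the (linear or convex) constraint functions, hence representable by linear arithmetic circuits, that the correspondence is convex-valued by step~(2), and that it is Lipschitz in the Hausdorff metric because the defining constraint functions have bounded gradients on the compact domain, which also yields a linear-arithmetic-circuit separation oracle for it. (4) \emph{Transfer of approximation:} show that an $\epsilon'$-approximate solution of the assembled GQVI yields an approximate remedial $L/F$-equilibrium in the sense of \cite{Pang2005} for a suitable $\epsilon'$ polynomially related to the target accuracy and the relaxation parameter $\varepsilon$, while conversely every remedial solution is an exact GQVI solution; membership in PPAD then follows from Theorem~\ref{informal1}.

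\textbf{Main obstacle.} I expect the principal difficulty to lie in step~(3): ensuring that the parametric feasible-set correspondence coming from the relaxed KKT system is simultaneously convex-valued, Lipschitz, and presented by a genuine (weak or strong) separation oracle computable by a linear arithmetic circuit. This is exactly where the constraint qualification is essential --- without an a priori bound on the KKT multipliers the domain is noncompact and both the Lipschitz estimate and the oracle construction fail --- so part of the work is to extract from the assumptions of the remedial model an explicit compact box containing all relevant multipliers and to absorb it into the domain. A secondary but nontrivial point is reconciling the two small parameters, the relaxation parameter of the remedial model and the GQVI approximation parameter, so that the transfer in step~(4) goes through with only a polynomial loss.
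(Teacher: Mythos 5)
Your overall architecture matches the paper's: hardness by trivializing the followers so that a remedial $L/F$-equilibrium collapses to an ordinary two-player Nash equilibrium among the leaders, and membership by passing through the KKT representation, the Pang--Fukushima relaxation/restriction, and then the computational GQVI theorem. The paper executes the membership step via Remark~\ref{remarkG} and Proposition~\ref{GQVIAPPRox}, taking $\mathcal{F}$ to be the product of subdifferentials of the leaders' losses (needed because linear arithmetic circuits are non-differentiable) and absorbing the entire relaxed KKT system into the constraint correspondences $G_\mathrm{I},G_\mathrm{II}$; your step~(4) is the paper's generalized Penalty Lemma~\ref{blowup}. Two points of divergence are worth flagging. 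First, your illustrative relaxation of $0\le a\perp b\ge 0$ into $a\ge 0$, $b\ge 0$, $a^\top b\le\varepsilon$ does not do what your step~(2) needs: the resulting set is bounded by a bilinear surface and is \emph{not} convex, so the correspondence $K_\nu$ would fail convex-valuedness. The remedial model, and the paper, instead replace the two orthogonality conditions by membership in \emph{polyhedral} sets $W_{(i,\cdot)}$ and $V_{(i,\cdot)}$ defining $Z^{\mathrm{I}},Z^{\mathrm{II}}$, so convexity is automatic. Second, the ``main obstacle'' you identify --- constructing a Lipschitz, convex-valued, oracle-presented correspondence from the relaxed KKT system and bounding the multipliers --- is not actually overcome in the paper: the computational problem is \emph{formulated} with $Z^{\mathrm{I}},Z^{\mathrm{II}}$ given as inputs by linear-arithmetic-circuit separation oracles, with non-emptiness, convex-valuedness, and compactness as promises (the paper remarks that one could instead take oracles for the polyhedral pieces and assemble $Z$, but declines to for simplicity of presentation). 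So your plan is more ambitious than the paper's proof on exactly this point; if you carry it out you should restrict to the polyhedral relaxations, for which boundedness of the relevant multiplier set is an explicit hypothesis of the Pang--Fukushima existence theorem rather than something to be derived.
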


 Finally, we demonstrate that given specific continuity and  concavity conditions (in particular a more restricted condition that we call multi-concavity), notions such as $t$-resilient Nash and some other related notions can be formulated using a modified version of variational inequalities. Informally, the multi-concavity of the (multivariate) function $f(x_1,\dots,x_n)$  means that $f(x_{i_1},\dots,x_{i_t},.)$  is concave for any $\{i_1,\dots,i_t\}\subset \{1,\dots,n\}$.

\begin{theorem}[informal]
Finding an approximate solution to $t$-resilient Nash equilibrium, $t$-strong Nash equilibrium, coalition-proof Nash equilibrium (for a given constant $t$) are PPAD-complete under the following assumptions\footnote{Note that decision version of $t$-resilient Nash equilibrium without the multi-concavity condition (and also strong Nash equilibrium \cite{Conitzer}) is NP-hard in general. }:

\begin{itemize}
   \item The multi-concave utility functions are represented by linear arithmetic circuits.

\end{itemize}
\end{theorem}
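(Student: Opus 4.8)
The plan is to reduce the problem to the computational GQVI/QVI machinery already established in Theorem~\ref{informal1}. First I would recall the standard variational characterization of Nash equilibria: for a game with concave utilities $u_i$ and strategy simplices $\Delta_i$, the joint strategy $x^*$ is a Nash equilibrium iff $x^*$ solves the VI with $F(x) = (-\nabla_{x_i} u_i(x))_i$ over $\mathcal{R} = \prod_i \Delta_i$. The key conceptual step for the resilient/strong/coalition-proof notions is to observe that, under the \emph{multi-concavity} hypothesis, the ``no coalition of size $\le t$ can profitably deviate'' condition can likewise be packaged as a (quasi-)variational inequality: for each coalition $S$ with $|S|\le t$, multi-concavity guarantees that the best-response problem of $S$ (holding $x_{-S}$ fixed) is a concave maximization over the convex set $\prod_{i\in S}\Delta_i$, hence expressible by a first-order optimality/VI condition. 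I would then stack these conditions over all $O(n^t)$ coalitions — a polynomial number for constant $t$ — obtaining a single GQVI whose feasible-set correspondence encodes ``$x_{-S}$ fixed, $x_S$ free'' for each $S$, and whose operator aggregates the gradients $-\nabla_{x_S} u_i$. Because $t$ is constant, all of this is computable by linear/arithmetic circuits of polynomial size, and the resulting correspondences are convex-valued and Lipschitz, so Theorem~\ref{informal1} applies to give PPAD-membership of the approximate version.

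For the three specific equilibrium notions I would handle the differences in the coalition-deviation semantics separately. For $t$-strong Nash equilibrium the requirement is that no coalition $S$ with $|S|\le t$ has a deviation making \emph{every} member strictly better off; for $t$-resilient Nash equilibrium the (slightly stronger) requirement is that no member of any such $S$ can improve even if the whole coalition defects. In both cases, with multi-concavity, the ``blocking'' condition for a fixed coalition reduces to checking whether a concave vector-maximization admits a Pareto-improving point, which by scalarization (a fixed convex combination, or a componentwise slack formulation) becomes a VI-type feasibility condition; I would make the $\varepsilon$-approximate versions precise so that the stacked system has an approximate solution exactly when an approximate equilibrium exists. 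Coalition-proof Nash equilibrium requires the extra recursive/self-enforcing clause — a coalitional deviation only counts if it is itself immune to further sub-coalitional deviations — but since $t$ is a fixed constant the recursion has constant depth, so one can unfold it into a finite (polynomial-size) conjunction of VI conditions indexed by chains of nested coalitions, and again invoke Theorem~\ref{informal1}.

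PPAD-hardness is the easy direction: the ordinary Nash equilibrium problem is the $t=1$ case of all three notions (a singleton coalition's ``deviation'' is just a unilateral best response), and Nash with concave — indeed linear — utilities given by circuits is already PPAD-hard, so the reduction is immediate and multi-concavity is trivially satisfied by the hard instances. Hence the real work is membership, and within membership the main obstacle I expect is \textbf{faithfully encoding the coalitional deviation conditions as a single GQVI with well-behaved (convex-valued, Lipschitz) correspondences given by small circuits}, while ensuring the $\varepsilon$-approximate solution of that GQVI genuinely certifies an $\varepsilon$-approximate equilibrium of the intended type — in particular getting the quantifier structure right for the Pareto/strict-improvement clauses of strong vs.\ resilient equilibria, and correctly flattening the constant-depth recursion for the coalition-proof case without an exponential blow-up. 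Once the reduction is set up, the analytic facts it relies on (concave maximization $\Leftrightarrow$ first-order VI condition, Lipschitz dependence of the gradient operator, Kakutani-type existence for the resulting GQVI) are exactly what Theorem~\ref{informal1} and the cited results of \cite{chan,Harker,Pang2005,concavegames} already supply.
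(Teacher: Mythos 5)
Your hardness direction and your overall instinct (turn each coalition's best-response condition into a first-order variational condition and exploit that $t$ is constant so only polynomially many coalitions arise) match the paper. The genuine gap is in the membership step: you claim that stacking the coalition conditions yields ``a single GQVI'' to which Theorem~\ref{informal1} applies. It does not. Each pair (coalition $J$, member $j\in J$) contributes its own gradient block $\nabla_{x_J}\theta_j$, and all of these conditions must hold \emph{simultaneously}; the scalar inequality $(y-x)^{T}w^{*}+\beta\ge 0$ of the GQVI has a single operator and cannot encode them. If you ``aggregate'' the gradients into one vector (e.g., by summing), nonnegativity of the aggregate does not imply nonnegativity of each term, and the equivalence with $t$-resilience breaks; if instead you keep them separate, the resulting object is a \emph{system} of variational inequalities, which is not an instance of anything Theorem~\ref{informal1} covers. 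This is precisely why the paper does not invoke Theorem~\ref{informal1} but introduces a new matrix-valued problem (MGQVI/MVI, Appendix~\ref{Appendixtotalresilientinppad}) whose solution condition is the vector inequality $(y-x)^{T}F(x)+\epsilon\mathds{1}\ge 0$ with $F$ a matrix whose columns are indexed by coalition--member pairs via the function $\pi$ (Lemma~\ref{reslient2k}), and then re-proves PPAD-membership for that problem from scratch: a vector-valued robust Berge maximum theorem (Theorem~\ref{berge2}), a generalized subgradient ellipsoid central-cut method (Appendix~\ref{AppendixCentralcutellipsoidsection}), and a new reduction to Kakutani (Theorem~\ref{D1theorem}). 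Your proposal is missing this entire layer.

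Two smaller problems. For $t$-strong and coalition-proof equilibria you propose to capture the ``every member strictly better off'' clause by scalarization with a fixed convex combination; that does not correctly decide existence of a strictly Pareto-improving coalitional deviation, and the paper avoids the issue entirely via the containments of Remark~\ref{relationshipremark} (every $t$-resilient equilibrium is $t$-strong, and every strong equilibrium is coalition-proof), so that solving the resilient problem already certifies the weaker notions. Also, your claim that the self-enforcing recursion for coalition-proofness has constant depth ``since $t$ is constant'' is incorrect: that recursion is on the number of players, which is exactly why the paper's corollary restricts coalition-proof Nash equilibrium to games with a constant number of players.
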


\section{Preliminaries}\label{defs}

This section is dedicated to presenting fundamental definitions and basic facts used in this paper. For various definitions, we adhere to their classical definitions, which are framed as either maximization or minimization problems, to maintain consistency with the original terminology\footnote{For example, in the context of generalized (social) Nash equilibrium and related problems, the problems are often specified as minimization problems. }.

\subsection{Some Elementary Definitions}

\begin{definition}
 For sets $X,Y$, a \emph{correspondence} (also called a \emph{set-valued map} or a  \emph{a point-to-set map}) from $X$ to $Y$ is a function $\mathcal{R}:X\rightarrow 2^Y$. Sometimes we will denote such a mapping by $\mathcal{R}: X\rightrightarrows Y$. 
\end{definition}

\begin{definition}  Let $X$ be a convex, non-empty, and compact set.
   \begin{itemize}
     \item We let $\mathrm{d}(x, z)$ denote the Euclidean distance between  two points $x, z \in \mathbb{R}^m$.
     \item  The projection map of a point $x$ to the set $X$ is $\Pi_{X}(x)=\arg \inf _{z \in X} \mathrm{~d}(x, z)$.
     \item  The closed $\epsilon$-parallel body of $X$ is defined as $\overline{\mathrm{B}}(X, \epsilon):=\bigcup_{x \in X}\{z \in \mathcal{M}: \mathrm{d}(x, z) \leq \epsilon\}$
     Note that $\mathcal{M}$ is the universe. The inner closed $\epsilon$-parallel body of $X$ is $\overline{\mathrm{B}}(X,-\epsilon)=\{x \in X: \overline{\mathrm{B}}(x, \epsilon) \subseteq X\}$.

 \end{itemize}
   
\end{definition}
\begin{remark}
    The elements of $\overline{\mathrm{B}}(X,-\epsilon)$ can be viewed as the points "deep inside of $X$", while $\overline{\mathrm{B}}(X, \epsilon)$ as the points that are "almost inside of $X$". 
\end{remark}

\subsection{Semi-Continuity and Convexity}

\begin{definition}
   The correspondence $\mathcal{R}$ is \emph{upper semicontinuous} at $x \in X$ if for every open neighborhood $V$ of $\mathcal{R}(x)$ there is an open neighborhood $U$ of $x$ such that for every $y \in U$, $\mathcal{R}(y)\subseteq V$.
\end{definition}

\begin{definition}
    The correspondence $\mathcal{R}$ is \emph{lower semicontinuous} at $x$ if for any open set $V$ such that $\mathcal{R}(x)\cap V $ is nonempty, there exists a neighbourhood $U$ of $x$  such  that $\mathcal{R}(y)\cap  V$ is nonempty for all $y \in U$. 
\end{definition}

\begin{definition}
    A set $C \subseteq \mathbb{R}^n$ is called convex if, for any two points $x_1, x_2 \in C$ and for any $\alpha \in [0, 1]$, the following holds:
\[
\alpha x_1 + (1 - \alpha) x_2 \in C.
\]
\end{definition}

\begin{definition}
    A vector-valued function $f: \mathbb{R}^n \rightarrow \mathbb{R}^m$ is convex if, for any $x_1, x_2 \in \mathbb{R}^n$ and $\alpha \in [0, 1]$, the following holds component-wise:
\[
f(\alpha x_1 + (1-\alpha) x_2) \preceq \alpha f(x_1) + (1-\alpha) f(x_2),
\]
where $\preceq$ denotes element-wise inequality.
\end{definition}

\begin{definition}
    A function $f: \mathbb{R}^n \rightarrow \mathbb{R}$ is said to be a convex function with respect to the variable $x_i$ if, for any fixed values of the other variables $x_j$ (where $j \neq i$) and for any $x_{i1}, x_{i2} \in \mathbb{R}$ and $\alpha \in [0, 1]$, the following holds:
\[
f\left(x_{i1}, x_j\right) \leq \alpha f\left(x_{i2}, x_j\right) + (1 - \alpha) f\left(x_{i1}, x_j\right),
\]
for all $x_j$.
\end{definition}

\subsection{Kakutani's Fixed Point Theorem}

\begin{theorem}[ \cite{Kakutani}]
\label{thm:Kakutani}
Suppose $S$ is a  non-empty, compact and convex subset of $\mathbb{R}
^n$ and $\mathcal{R}:S\rightrightarrows 2^S$ is a correspondence with the following properties has a fixed point, i.e., a point $s \in S$ such that $s \in \mathcal{R}(s)$.
\begin{enumerate}
\item  $\mathcal{R}$ is upper semi-continuous on $S$;
\item $\mathcal{R}(s)$ is non-empty and convex for all $s\in S$.
\end{enumerate}

\end{theorem}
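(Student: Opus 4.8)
The plan is to obtain Kakutani's theorem from Brouwer's fixed point theorem by piecewise-affine (simplicial) approximation of the correspondence $\mathcal{R}$, which is essentially Kakutani's original argument. First I would reduce to the case where $S$ is an $n$-simplex. Since $S$ is compact and convex it sits inside some simplex $\Delta\subseteq\mathbb{R}^n$, and the projection map $\Pi_S$ defined above is a continuous retraction $r:\Delta\to S$. Replacing $\mathcal{R}$ by the correspondence $x\mapsto \mathcal{R}(r(x))$ on $\Delta$ (still upper semicontinuous, with non-empty convex values, all contained in $S$), any fixed point of the new correspondence automatically lies in $S$ because its values do, and is then a fixed point of the original $\mathcal{R}$. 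So we may assume $S=\Delta$.

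Next I would build the approximations. For each $k$ take the $k$-fold barycentric subdivision of $\Delta$, with vertex set $V_k$; its mesh tends to $0$ as $k\to\infty$. For each $v\in V_k$ pick a point $y_v\in\mathcal{R}(v)$ (using non-emptiness), and let $f_k:\Delta\to\Delta$ be the unique map equal to $y_v$ at each $v\in V_k$ and affine on each simplex of the subdivision; $f_k$ is continuous and maps $\Delta$ into $\Delta$ since the $y_v$ lie in $\Delta$ and $\Delta$ is convex. By Brouwer's fixed point theorem $f_k$ has a fixed point $x_k$, which lies in some subsimplex with vertices $v_0^k,\dots,v_n^k\in V_k$, so that $x_k=\sum_{i=0}^n\lambda_i^k v_i^k = \sum_{i=0}^n\lambda_i^k\, y_{v_i^k}$ with $\lambda_i^k\ge 0$ and $\sum_i\lambda_i^k=1$. (If one does not wish to invoke Brouwer as a black box, the $x_k$ can instead be produced from Sperner's lemma.)

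Then I would take limits. By compactness of $\Delta$ and of the standard simplex of barycentric coordinates, pass to a subsequence along which $x_k\to x^*$, $\lambda_i^k\to\lambda_i$, and $y_{v_i^k}\to y_i$ for each $i$; since the mesh tends to $0$, also $v_i^k\to x^*$ for each $i$. From upper semicontinuity of $\mathcal{R}$ together with closedness of its values one gets that the graph of $\mathcal{R}$ is closed, hence $y_i\in\mathcal{R}(x^*)$ for every $i$. Finally, convexity of $\mathcal{R}(x^*)$ gives $x^*=\sum_i\lambda_i y_i\in\mathcal{R}(x^*)$, which is the desired fixed point.

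The main obstacle is the last step, namely the implication $y_{v_i^k}\in\mathcal{R}(v_i^k)$ together with $v_i^k\to x^*$ and $y_{v_i^k}\to y_i$ forcing $y_i\in\mathcal{R}(x^*)$. This is exactly where upper semicontinuity enters, and it needs the values $\mathcal{R}(s)$ to be closed (equivalently, since $S$ is compact, compact), which is the standard hypothesis accompanying upper semicontinuity; with the definition given above one argues by contradiction: if $y_i\notin\mathcal{R}(x^*)$ then $d(y_i,\mathcal{R}(x^*))=\delta>0$, the set $V=\{z:d(z,\mathcal{R}(x^*))<\delta/2\}$ is an open neighbourhood of $\mathcal{R}(x^*)$, upper semicontinuity yields a neighbourhood $U$ of $x^*$ with $\mathcal{R}(x)\subseteq V$ for $x\in U$, and for large $k$ we have $v_i^k\in U$ yet $y_{v_i^k}\to y_i\notin\overline{V}$, a contradiction. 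Convexity of the values is used only in the final averaging $\sum_i\lambda_i y_i\in\mathcal{R}(x^*)$; a secondary point to check is that the reduction to a simplex preserves upper semicontinuity, which holds because $r$ is continuous and single-valued.
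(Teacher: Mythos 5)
Your argument is correct, and it is essentially Kakutani's original proof (reduction to a simplex via the projection retraction, piecewise-affine approximation on barycentric subdivisions, Brouwer's theorem, and a limit argument using the closed graph plus convexity of values). The paper itself gives no proof of this statement: Theorem~\ref{thm:Kakutani} is imported as a black box from the cited reference, so there is nothing in the paper to compare your argument against step by step. Two small points are worth making explicit. First, you are right that the limit step $y_{v_i^k}\in\mathcal{R}(v_i^k)$, $v_i^k\to x^*$, $y_{v_i^k}\to y_i$ $\Rightarrow$ $y_i\in\mathcal{R}(x^*)$ genuinely requires the values $\mathcal{R}(s)$ to be closed; the theorem as stated in the paper omits this hypothesis, but the remark immediately following it (invoking the closed graph theorem on the compact set $S$) makes clear that closed values, equivalently a closed graph, are intended, so your added hypothesis matches the paper's intent rather than strengthening the statement. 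Second, in the reduction to the simplex you should note that a fixed point $x$ of $x\mapsto\mathcal{R}(r(x))$ satisfies $x\in\mathcal{R}(r(x))\subseteq S$, hence $r(x)=x$ and therefore $x\in\mathcal{R}(x)$; you do say the fixed point "automatically lies in $S$ because its values do," which is exactly the right observation, just make sure the final substitution $r(x)=x$ is stated. With those details in place the proof is complete and correct.
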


\begin{definition}
    
We say that $\mathcal{R}$ has a \emph{closed graph} if $\{(x,y)~|~y\in \mathcal{R}(x)\}$ is a closed subset of $X \times Y$ in the product topology.
\end{definition}

By the \emph{closed graph theorem}, since $S$ is compact, the first condition in Theorem~\ref{thm:Kakutani} may be replaced by  $\mathcal{R}$ having a closed graph.

\subsection{Basic Game Theory}

\begin{definition}
    Suppose that there are $k$ \emph{players}, where each player $i$,  $1 \le i \le k$ has a set $S_i$ of possible \emph{pure strategies}. Each also player has a \emph{utility function} $u_i:S_1\times\dots\times S_k\rightarrow \mathbb{R}$, and a game is specified by the tuple $\mathcal{U}=(u_1,\dots,u_k)$ and $\mathcal{S}=(S_1,\dots,S_k)$. Player $i$'s utility when each player $j$, $1\le j\le k$ plays $s_j$ is $u_i(s_1,\dots,s_k)$. 
\end{definition}

\begin{definition}

An \emph{$\epsilon$-approximate Nash equilibrium} is a strategy profile\footnote{A \emph{strategy profile} is a vector that includes all players' strategies.} $\textbf{s}=(s^*_1,\dots,s^*_k)$ such that for $1 \le i\le k$ and 
\begin{equation}
\label{eq:bestresp}
    u_i(\textbf{s}^*) +\epsilon \ge u_i(s_i,\textbf{s}^*_{-i}), \quad\quad\quad\quad \forall s_i\in S_i
\end{equation}
\end{definition}

Where $\textbf{s}_{-i}^*$ includes all strategies except $i$. A positive answer to the existence of a solution was provided by Nash (for the exact case, i.e, $\epsilon=0$) in his celebrated papers \cite{Nash,Nash0} for bi-matrix games and mixed strategies. Nash's result follows in a fairly direct way from Kakutani's fixed-point theorem
\cite{Kakutani,Nash0} (and also Brouwer's fixed point \cite{Nash}).

\begin{remark}
    We can assume that each $S_i$ is a bounded subset of  $\mathbb{R}^{n_i}$ and each player $i$ controls $dim(S_i)=n_i$ variables of the strategy profile. Furthermore, $\sum_{i=1}^k n_i=n$.
\end{remark}

A natural way to limit the notion of a player's best response is to suppose that player $i$, when considering responses to $\textbf{s}^*_{-i}$, is constrained to those lying in some subset of $S_i$, determined by $\textbf{s}^*_{-i}$. This constraint might be given as a correspondence $\mathcal{R}:S_{-i}\rightrightarrows S_i$, where $S_{-i}$ denotes $S_1\times\dots\times S_{i-1}\times S_{i+1}\times\dots\times S_k$. We then have the following condition:
\begin{equation*}
\label{eq:bestsocresp}
    u_i(s^*_i,\textbf{s}^*_{-i})+\epsilon \ge u_i(s_i,\textbf{s}^*_{-i}), \quad\quad\quad\quad \forall s_i\in \mathcal{R}_i(\textbf{s}^*_{-i})\tag{**}
\end{equation*}
Given this notion of best response, Debreu generalized the definition of Nash equilibrium:
\begin{definition} A \emph{ generalized (social)  equilibrium}  is a (pure) strategy profile $\textbf{s}=(s^*_1,\dots,s^*_n)$ which satisfies $s^*_i\in \mathcal{R}_i(s^*_{-i})$ and (\ref{eq:bestsocresp}) for $1 \le i \le n$.
\end{definition}


\begin{remark}
Following the approach in some applications to generalized (Nash) equilibria and multi-leader-follower games \cite{Generalized,Pang2005},
we may assume that each player $i$'s constraints are represented by the non-empty set motivated by :
\[\mathcal{R}_{i}\left( \textbf{x}_{-i}\right) \equiv\left\{x_{i} \in S_i ~|~ g_{i}\left(x_{i}, \textbf{x}_{-i}\right) \leq 0, \thickspace h_{i}\left(x_{i}\right) \leq 0\right\},\]
where $x_{i} \in S_{i}$ and $g_{i}: \Pi_{j=1 }^k S_j \rightarrow \mathbb{R}^{m_i}$ and $h_{i}: S_\mathrm{i} \rightarrow \mathbb{R}^{l_i}$ are continuously differentiable and $g_{i}\left(\cdot, \textbf{x}_{-i}\right)$ is convex in $\textbf{x}_{-i}$. Note that $m_i$ and $l_i$ are some given constants.
   
\end{remark}

\subsection{Variational Inequalities}


\begin{definition}
   Given a function $F:\mathbb{R}^m \rightarrow \mathbb{R}^m$  and a correspondence  $\mathcal{R}:\mathbb{R}^m \rightrightarrows \mathbb{R}^m$, an approximate solution to QVI  $(\mathcal{R}, F)$ is a vector $x^* \in \mathcal{R}(x^*)$  such that:
$$
\left(y-x^*\right)^T F\left(x^*\right)+\epsilon \geq 0, \quad \forall y \in \mathcal{R}\left(x^*\right)
$$
\end{definition}

\begin{definition}

  When $\mathcal{R}(x)$ is independent of $x$, say, $\mathcal{R}(x)=\mathcal{R}$ for all $x$, the QVI becomes the VI for which a solution $x^* \in \mathcal{R}$ satisfies:
$$
(y-x^*)^T F(x^*)+\epsilon \geq 0, \quad \forall y \in \mathcal{R}
$$  
\end{definition}







Generalized quasi-variational inequalities (GQVI) extend the notion of quasi-variational inequality (QVI).  GQVI is a unification of QVI and generalized VI \cite{chan}.

\begin{definition}
    Given a correspondence $\mathcal{F}: \mathbb{R}^m \rightrightarrows \mathbb{R}^m $ and a correspondence $\mathcal{R}:\mathbb{R}^m \rightrightarrows \mathbb{R}^m$, an approximate solution to GQVI $(\mathcal{R}, \mathcal{F})$ consists of two vectors $x^* \in \mathcal{R}(x^*)$ and $w^*\in \mathcal{F}(x^*)$ such that:
$$
\left(y-x^*\right)^T w^* +\epsilon \geq 0, \quad \forall y \in \mathcal{R}\left(x^*\right)
$$
\end{definition}


\begin{proposition}\label{geneqvari2}
  Suppose that we have a game with concave and continuously differentiable utilities $\mathcal{U}=(u_1,\dots,u_k)$,  strategies $\mathcal{S}=(S_1,\dots,S_k)$ and constraints $\mathcal{R}=(\mathcal{R}_1,\dots,\mathcal{R}_k)$  which are closed convex subsets of $\mathcal{S}=(S_1,\dots,S_k)$.  The problem of finding an approximate generalized equilibrium of this game can be transformed into finding an approximate solution to a QVI problem\footnote{This proposition is a simple extension of Harker's transformation (\cite{Harker}) for the approximate case. The proof is available in Appendix \ref{AppendixElementaryProofsection}}. 
\end{proposition}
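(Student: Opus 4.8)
The plan is to reduce the generalized equilibrium problem to a single QVI by gluing together the players' optimality conditions, in the spirit of Harker's classical transformation but tracking the additive slack $\epsilon$ throughout. First I would set $x = (x_1,\dots,x_k) \in \prod_i S_i \subseteq \mathbb{R}^n$ with player $i$ controlling the block $x_i \in \mathbb{R}^{n_i}$, and define the feasible-set correspondence $\mathcal{R}(x) = \prod_{i=1}^k \mathcal{R}_i(x_{-i})$, which is convex-valued since each $\mathcal{R}_i(x_{-i})$ is a closed convex subset of $S_i$ by hypothesis. For the operator I would take the ``pseudo-gradient'' $F(x) = \bigl(-\nabla_{x_1} u_1(x),\, -\nabla_{x_2} u_2(x),\,\dots,\,-\nabla_{x_k} u_k(x)\bigr)$, using the negative gradients because generalized equilibria are specified here as maximization of utilities while the VI convention is phrased with $\langle F(x^*), y - x^*\rangle \ge 0$. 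The claim is then that an $\epsilon$-approximate solution of QVI$(\mathcal{R},F)$ corresponds to an approximate generalized equilibrium, and conversely.

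The key steps, in order: (1) Show that $x^* \in \mathcal{R}(x^*)$ is exactly the statement $x_i^* \in \mathcal{R}_i(x_{-i}^*)$ for all $i$ — this is immediate from the product structure. (2) For the ``QVI $\Rightarrow$ equilibrium'' direction, take any $i$ and any $s_i \in \mathcal{R}_i(x_{-i}^*)$, and test the QVI inequality with $y = (x_1^*,\dots,x_{i-1}^*, s_i, x_{i+1}^*,\dots,x_k^*)$, which lies in $\mathcal{R}(x^*)$ by the product structure; since $y - x^*$ is supported only on block $i$, the inner product $\langle F(x^*), y - x^*\rangle$ collapses to $\langle -\nabla_{x_i} u_i(x^*),\, s_i - x_i^*\rangle \ge -\epsilon$, i.e. $\langle \nabla_{x_i} u_i(x^*),\, s_i - x_i^*\rangle \le \epsilon$. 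Then invoke concavity of $u_i$ in its own block: $u_i(s_i, x_{-i}^*) \le u_i(x^*) + \langle \nabla_{x_i} u_i(x^*),\, s_i - x_i^*\rangle \le u_i(x^*) + \epsilon$, which is condition (\ref{eq:bestsocresp}). (3) For the converse, note that summing the $k$ per-block inequalities $\langle \nabla_{x_i} u_i(x^*), s_i - x_i^*\rangle \le \epsilon$ as $s$ ranges over $\mathcal{R}(x^*)$ gives, for arbitrary $y \in \mathcal{R}(x^*)$, $\langle F(x^*), y - x^*\rangle \ge -k\epsilon$; one gets the clean statement either by rescaling $\epsilon \mapsto \epsilon/k$ (a constant, since $k$ is fixed) or by observing that the approximate-equilibrium condition combined with concavity directly yields a per-coordinate gradient bound. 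I would state the transformation with the explicit $\epsilon$-to-$\epsilon/k$ bookkeeping so that it is polynomial-time and preserves the approximation parameter up to a constant factor.

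The main obstacle I anticipate is handling the approximation parameter correctly and in the ``right'' direction. In the exact case ($\epsilon = 0$) Harker's transformation is a clean equivalence, but with additive slack the two directions are not perfectly symmetric: a single QVI inequality controls a weighted sum over all blocks, whereas an equilibrium gives one inequality per block. So the QVI$\Rightarrow$equilibrium direction is essentially free (restricting $y$ to vary in one block only), while the equilibrium$\Rightarrow$QVI direction loses a factor of $k$. Since the proposition only asserts that the two problems are interreducible (and $k$ is part of the input but the reduction is polynomial in all parameters), this constant-factor loss is harmless, but the statement and proof must be careful about it. A secondary technical point is that I should confirm $\nabla_{x_i} u_i$ is well-defined and yields a polynomial-time computable (circuit-representable) operator $F$ from the continuous differentiability hypothesis — this is needed so that the reduction is genuinely a reduction between the computational problems and not just a mathematical equivalence, though for this elementary proposition (whose full proof the authors defer to the appendix) a symbolic argument via the concavity/first-order inequality suffices and avoids any smoothness subtleties beyond what is assumed.
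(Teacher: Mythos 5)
Your proposal is essentially the paper's proof: the same product correspondence $\mathcal{R}(x)=\prod_{i}\mathcal{R}_i(x_{-i})$, the same pseudo-gradient operator (the paper writes $F(x)=\left(\nabla_{x_i}\theta_i(x)\right)_{i}$ with $\theta_i=-u_i$), and the same forward direction via single-block test vectors combined with the first-order concavity inequality. The only divergence is the converse direction, where the paper avoids your factor-$k$ rescaling by differentiating $\Psi(t)=f(x^*+t(y-x^*))$ at $t=0$ to obtain $(y-x^*)^{T}F(x^*)+\epsilon\ge 0$ with the same $\epsilon$ --- though both that step and your ``per-coordinate gradient bound from concavity'' hinge on the same delicate passage from approximate optimality to approximate stationarity, which is the one place where extra care is warranted in either write-up.
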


\begin{corollary}\label{nashvi2}
    
  Suppose that we have a game with the same conditions on  $\mathcal{U}=(u_1,\dots,u_k)$,  and strategies $\mathcal{S}=(S_1,\dots, S_k)$. The problem of finding an approximate Nash equilibrium of this game can be transformed into finding an approximate solution to a VI problem.
\end{corollary}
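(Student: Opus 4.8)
The plan is to derive this as an immediate specialization of Proposition~\ref{geneqvari2}. The key observation is that an $\epsilon$-approximate Nash equilibrium is precisely an $\epsilon$-approximate generalized equilibrium of the game in which every player's constraint correspondence is the trivial one, $\mathcal{R}_i(\mathbf{x}_{-i}) \equiv S_i$: with this choice the membership condition $s_i^* \in \mathcal{R}_i(s_{-i}^*)$ holds automatically, and condition~(\ref{eq:bestsocresp}) collapses to condition~(\ref{eq:bestresp}). The constant correspondence $\mathbf{x}_{-i} \mapsto S_i$ is trivially upper (and lower) semicontinuous and convex-valued, and $S_i$ is a closed convex subset of itself, so the game $(\mathcal{U},\mathcal{S},\mathcal{R})$ with $\mathcal{R}=(S_1,\dots,S_k)$ satisfies the hypotheses of Proposition~\ref{geneqvari2} (using that the $S_i$ are, as before, compact convex).

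First I would invoke Proposition~\ref{geneqvari2} on this game to obtain a QVI $(\mathcal{R}^{\mathrm{VI}},F)$ whose approximate solutions correspond to the approximate generalized equilibria, hence to the approximate Nash equilibria of the original game. Recalling Harker's transformation, $F$ is the negated pseudo-gradient $F(x) = \bigl(-\nabla_{x_1} u_1(x),\dots,-\nabla_{x_k} u_k(x)\bigr)$, computable from the data defining the $u_i$, while the feasible correspondence produced here is the constant map $\mathcal{R}^{\mathrm{VI}}(x) = S_1\times\cdots\times S_k$ for every $x$.

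Next I would observe that, since $\mathcal{R}^{\mathrm{VI}}(x)$ is independent of $x$, this QVI is by definition the VI on the fixed set $\mathcal{R} = S_1\times\cdots\times S_k$: namely, $x^*$ is a solution iff $x^*\in\mathcal{R}$ and $(y-x^*)^TF(x^*)+\epsilon\ge 0$ for all $y\in\mathcal{R}$. Because $\mathcal{R}$ is a product, this inequality decouples across players, and the gradient inequality for each $u_i$ (concave in $x_i$) turns the per-block inequality back into~(\ref{eq:bestresp}); this is exactly the computation already performed inside the proof of Proposition~\ref{geneqvari2}, so nothing new is needed here. Thus finding an approximate Nash equilibrium reduces to finding an approximate solution of the VI $(\mathcal{R},F)$.

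There is essentially no hard step. The only items to check are the bookkeeping ones noted above: that the trivial constraint correspondence is admissible in Proposition~\ref{geneqvari2}, and that the approximation parameter is preserved — the decoupling over the product set introduces no extra slack, so an $\epsilon$-solution of the VI yields an $\epsilon$-approximate Nash equilibrium and conversely. If one preferred a self-contained argument that does not route through the QVI machinery, the mildest obstacle would be re-deriving the equivalence between the variational inequality on the product set and the system of per-player first-order optimality conditions, but given differentiability and concavity of the $u_i$ this is routine.
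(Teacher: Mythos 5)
Your proposal is correct and matches the paper's intended derivation: Corollary~\ref{nashvi2} is obtained exactly by specializing Proposition~\ref{geneqvari2} to the constant constraint correspondences $\mathcal{R}_i(\mathbf{x}_{-i})\equiv S_i$, under which the resulting QVI has a feasible set independent of $x$ and is therefore a VI on $S_1\times\cdots\times S_k$ (this mirrors how the paper derives the exact-case Corollary~\ref{nashvi1} from Proposition~\ref{exgenqvi}). Your bookkeeping about the trivial correspondence being admissible and the $\epsilon$ being preserved is consistent with the computation already carried out in the paper's proof of Proposition~\ref{geneqvari2}.
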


\subsection{Equilibria and Coalitions}

Significant effort has been dedicated to addressing deviations by coalitions of players in game theory, dating back to the work of Aumann\cite{Strongnash}. Here we review some of the relevant notions. Consider a $k$-player game $(\mathcal{G},\mathcal{U},\mathcal{S})$. Let $\mathcal{J}$ be the set of proper subsets of $\{1, \ldots, k\}$, $J \in \mathcal{J}$ denote a \emph{coalition} and $S_J \equiv \prod_{j \in J} S_j$  \emph{the possible strategy set} for the coalition $J$. Strong Nash equilibrium is particularly useful in areas such as the study of voting systems, in which a certain level of stability and resistance to coordinated deviations is required. Somewhat confusingly, the concept of a strong Nash equilibrium is unrelated to that of a weak Nash equilibrium.

\begin{definition}[\cite{Strongnash}]
    In a game with $k$-players, a strategy profile $\textbf{s}^*$ is considered a \emph{strong Nash equilibrium} if, for any coalition $J \subseteq \mathcal{J}$ and any deviation $\textbf{s}_J'\in \Pi_{j\in J} S_j$ by the players in the coalition $J$ , there exists a player $j\in J$ such that the following condition holds:
\[
u_j\left(\textbf{s}^*\right) \geq u_j\left(\textbf{s}_J', \textbf{s}_{-J}^*\right)
\]where $-J$ denote the complement of $J$ in $[k]$ (which is equal to $[k]-J$).
\end{definition}

\begin{remark}
It is equivalent to consider a strong Nash equilibrium to be a strategy profile in which for any $J\in \mathcal{J}$, there exists no player $j\in J$ and an alternative strategy $\textbf{s}_J'$, such that the condition $u_j\left(\textbf{s}^*\right) < u_j\left(\textbf{s}_J', \textbf{s}_{-J}^*\right)
$ holds.

\end{remark}

\begin{definition}[\cite{StrongkNash}]
     A \emph{$t$-strong Nash equilibrium} is a Nash equilibrium in which no coalition of size $t$ or less of players can deviate so that all its members strictly benefit.
\end{definition}

The idea of a coalition-proof Nash equilibrium is relevant in situations where players can talk about their strategies but can't make binding commitments (see \cite{Coalationproof}). In simple terms, it emphasizes immunity to deviations (alternative strategies) that are self-enforcing.

\begin{definition}[\cite{Coalationproof}]
   For each $\textbf{s}_{-J}^0 \in S_{-J}$, we use $\mathcal{G} / \textbf{s}_{-J}^0$ to denote the game induced on coalition $J$ by the strategies $s_{-J}^0$ for coalition $-J$, i.e., for all $j \in J$  and $\textbf{s}_J \in S_J$ we define the utilities ($\bar{u}_j: S_J \rightarrow \mathbb{R}$) of the members of coalition $J$  to be $ u_j(\textbf{s}_J, \textbf{s}_{-J}^0)$. Next, \emph{self-enforceability} and \emph{coalition-proofness} are defined by mutual recursion as follows:
     \begin{enumerate}
         \item In a single-player game $ s^* \in S_1$ is a coalition-proof Nash equilibrium if and only if $\textbf{s}^*$ maximizes $u_1(\textbf{s})$.
         \item Let $k>1$ and assume that coalition-proof Nash equilibrium has been defined for games with fewer than $k$ players. Then,
         \begin{itemize}
             \item For any game $\mathcal{G}$ with $k$ players, $\textbf{s}^* \in S$ is self-enforcing if, for all $J \in \mathcal{J}, \textbf{s}_J^*$ is a coalition-proof Nash equilibrium in the game $\mathcal{G} / \textbf{s}_{-J}^*$.
             \item  For any game $\mathcal{G}$ with $k$ players, $\textbf{s}^* \in S$ is a coalition-proof Nash equilibrium if it is self-enforcing and if there does not exist another self-enforcing strategy vector $\textbf{s} \in S$ such that $u_j\left(\textbf{s}^*\right)<u_j(\textbf{s})$ for all $j\in [k]$.
         \end{itemize}
     \end{enumerate}
\end{definition}

 Next, we define a more restricted equilibrium notion called $t$-resilient Nash equilibrium, introduced in \cite{Abraham1}, which is fundamental to applications such as secret sharing and multi-party computation. 


\begin{definition}[\cite{Abraham1}]\label{resilient}
 Given a nonempty set $J \subseteq [k] , \textbf{s}^*_J \in S_J$ is a group best response for $J$ to $\textbf{s}^*_{-J} \in S_{-J}$ if, for all $\textbf{s}^\prime_J \in S_J$ and all $j \in J$, we have:
$$
u_j\left(\textbf{s}^*\right) \geq u_j\left(\textbf{s}^\prime_J, \textbf{s}^*_{-J}\right) .
$$ A  strategy profile $\textbf{s}^*\in \mathcal{S}$ is a \emph{$t$-resilient Nash equilibrium} if,  $\forall J \subseteq [k]$ with $|J| \leq t, s^*_J$ is a group best response for $J$ to $\textbf{s}^*_{-J}$\footnote{We call a strategy \emph{strongly resilient} if it is $t$-resilient for all $ t \leq k-1$.}.  
\end{definition}
 
\begin{remark}\label{relationshipremark}
        The $t$-resilient Nash equilibrium problem is the most restricted of these three problems, as every $t$-resilient Nash equilibrium is also a $t$-strong Nash equilibrium. Every strong Nash equilibrium is also inherently a coalition-proof Nash equilibrium. This follows by definition; the set of possible deviations in the latter is a subset of that in the former. $1$-resilient and $1$-strong Nash equilibria are equivalent to the Nash equilibrium problem. Furthermore,  coalition-proof Nash equilibrium for two players is equivalent to the two-player Nash equilibrium problem.  The connection between Nash and coalition-proof Nash equilibrium is less straightforward for more than two players. The paper \cite{Coalationproof} presents an example of a three-player game lacking coalition-proof Nash equilibria. To our knowledge, no other definitive inclusion relationship between these two classes has been established for multiplayer games.  
\end{remark}

\subsection{Multi-Leader-Follower Games}\label{mlfsection}

The multi-leader-follower $L/F$-(Nash) equilibrium is a solution concept for multi-leader-follower games and can be described as a collection of strategies employed by leaders and followers. In this equilibrium, no individual player, whether a leader or a follower, can improve their utility (or minimize their \emph{loss} or \emph{regret} functions) by unilaterally altering their current strategy. Stackelberg games (multi-leader-follower games with one leader) can be seen as mathematical programs with equilibrium constraints (MPEC). In this case, the followers' problems are replaced by a constraint given by their optimality conditions. In a broader context, an MPEC is an optimization problem that encompasses two sets of variables, namely decision variables and response variables \cite{MPEG3,MPEG,MPEG2}. The framework commonly used to represent the multi-leader-follower game is referred to as the equilibrium problem with equilibrium constraints (EPEC). An EPEC \cite{thesis2,Gabriel2012,thesis3,Hu2007,thesis1,SU200774} is essentially an equilibrium problem composed of multiple parametric MPECs, each of which incorporates other players' strategies as parameters.


For simplicity, we consider games that have two leaders, labeled by $\mathrm{I}$ and $\mathrm{II}$, and $k$ followers, labeled by $i \in [k]$.  We denote the \emph{domain} of strategies of leaders $\mathrm{I}$ and $\mathrm{II}$ by $X^{\mathrm{I}}$ and $X^{\mathrm{II}}$ respectively. The leaders’ loss functions are denoted by $\phi_{\mathrm{I}}(x_\mathrm{I}, x_{\mathrm{II}}, y)$ and $\phi_{\mathrm{II}}(x_\mathrm{I}, x_\mathrm{II}, y)$. The notation implies that the loss of each leader is determined by both its own strategies and those of the opposing leader, as well as the strategies of the followers represented as the vector $y$. The followers also respond to the leaders’ strategies as follows.  For each follower $i=1, \dots, k$, let $\theta_i(x_{\mathrm{I}}, x_{\mathrm{II}}, y)$ and $\mathcal{R}_i(x_{\mathrm{I}}, x_{\mathrm{II}}, y_{-i})$ denote follower $i$'s loss function and available constrained strategy set respectively. This strategy set depends on the pair of strategies $(x_{\mathrm{I}}, x_{\mathrm{II}}) \in X^{\mathrm{I}} \times X^{\mathrm{II}}$. For each such pair $(x_{\mathrm{I}}, x_{\mathrm{II}})$, the followers' problem is modeled by a Debreu game (generalized Nash equilibrium problem) parameterized by the leaders' strategies. Let $Y(x_{\mathrm{I}}, x_{\mathrm{II}})$ denote the set of such solutions (not necessarily a singleton). Each element $\bar{y} \in Y(x_{\mathrm{I}}, x_{\mathrm{II}}) \subseteq \Pi_{i=1}^k S_{i}$ is a tuple $(\bar{y}_i)_{i=1}^k$ where for each follower $i$, $\bar{y}_i$ is a solution of problem \ref{taa}. The tuple $(x_{\mathrm{I}}, x_{\mathrm{II}}, \bar{y}_{-i})$ is external to the minimization program \ref{taa}, and $y_i$ is the primary variable that must be computed.\begin{equation}\label{taa}
    \begin{aligned}
& \operatorname{Min} \theta_i\left(x_{\mathrm{I}}, x_{\mathrm{II}}, \bar{y}_{-i}, y_i\right) \\
& \text { s.t } y_i \in \mathcal{R}_i\left(x_{\mathrm{I}}, x_{\mathrm{II}}, \bar{y}_{-i}\right),
\end{aligned}
\end{equation}

We now define the concept of equilibrium in multi-leader-follower games. A pair $(x^{*}_\mathrm{I}, x^{*}_\mathrm{II}) \in X^{\mathrm{I}} \times X^{\mathrm{II}}$ is called a \emph{$L/F$-equilibrium}, if there exists $(y^{*}_\mathrm{I}, y^{*}_\mathrm{II})$ such that $(x^{*}_\mathrm{I}, y^{*}_\mathrm{I})$ is an optimal solution of leader $\mathrm{I}$'s problem, which tries to find a pair $(x_{\mathrm{I}}, y_{\mathrm{I}})$ to the following:
\begin{equation}
\begin{aligned}
& \operatorname{Min} \phi_{\mathrm{I}}\left(x_{\mathrm{I}}, x^{*}_\mathrm{II}, y_{\mathrm{I}}\right) \\
& \text { s.t } x_{\mathrm{I}} \in X^{\mathrm{I}} \\
& \text { and } y_{\mathrm{I}} \in Y\left(x_{\mathrm{I}}, x^{*}_\mathrm{II}\right)
\end{aligned}
\end{equation}
and $(x^{*}_\mathrm{II}, y^{*}_\mathrm{II})$ is an optimal solution of leader $\mathrm{II}$'s problem, which tries to find a pair $\left(x_{\mathrm{II}}, y_{\mathrm{II}}\right)$ to the following problem:
\begin{equation}
\begin{aligned}
& \operatorname{Min} \phi_{\mathrm{II}}\left(x^{*}_\mathrm{I}, x_{\mathrm{II}}, y_{\mathrm{II}}\right) \\
& \text { s.t } x_{\mathrm{II}} \in X^{\mathrm{II}} \\
& \text { and } \quad y_{\mathrm{II}} \in Y\left(x^{*}_\mathrm{I}, x_{\mathrm{II}}\right)
\end{aligned}
\end{equation}
In the given definition,  the equilibrium strategies of the followers, represented as $y^*_{\mathrm{I}}$ and $y^*_{\mathrm{II}}$, belong to the same set $Y(x^*_{\mathrm{I}}, x^*_{\mathrm{II}})$. However, they are not required to be the same. This flexibility arises because these strategies are based on different anticipations by Leader $\mathrm{I}$ and Leader $\mathrm{II}$ of how the followers collectively respond to the pair of strategies $(x^*_{\mathrm{I}}, x^*_{\mathrm{II}})$.   One could introduce another variant of this problem by enforcing $y^*_{\mathrm{I}}=y^*_{\mathrm{II}}$. However, even in cases where $Y(x_{\mathrm{I}}, x_{\mathrm{II}})$ contains only one unique response for all pairs of strategies $(x_{\mathrm{I}}, x_{\mathrm{II}})$, a $L/ F$-equilibrium may not exist, as demonstrated in \cite{Pang2005}.

\subsection{Some Computational Complexity Definitions}

\begin{definition}
	TFNP (Total Functional Nondeterministic Polynomial Time) is the class of total search problems with solutions that are poly-time verifiable. Formally, given a poly-balanced poly-time relation $R(x,y)$ the associated  \emph{NP search problem} is the (partial) multi-valued function $Q(x)=\{y~|~R(x,y)\}$. The problem is \emph{total} if $Q(x)$ is nonempty for all $x$. The class \emph{FNP} consists of all NP search problems, while TFNP consists of all total NP search problems. A search problem $Q$ is in \emph{FP} if there is a  poly-time function $Q^\prime$ such that for all $x$, $Q^\prime(x)\in Q(x)$. The \emph{decision problem} associated with $R$ is to determine, given $x$, whether there is some $y$ such that $R(x,y)$. NP is the class of all such decision problems.
	
\end{definition}

\begin{definition}

We can extend the notion of many-one reduction to total search problems as follows:
For two total search problems $R$ and $S$ we say $R \leq_{m} S$ if there exist poly-time computable functions  $f, g$ such that for all $x,y$  if $(f(x), y) \in S$  then $(x, g(y)) \in R$.

\end{definition}

\begin{remark}
  This form of reduction between search problems is equivalent to a Cook reduction with one call to the oracle.
\end{remark}

The subclass PPAD of TFNP plays a central role in the complexity of the equilibrium problems in game theory and is defined as follows:
\begin{definition}
The \emph{end-of-the-line (EOTL)} problem is defined as follows:
Given a directed graph $\mathcal{G}$, represented by two poly-sized circuits which return the predecessor and the successor of a node (represented in binary), where each vertex has at most one predecessor and successor and a vertex $u$ in $\mathcal{G}$ with no predecessor, find another vertex $v\neq u$ with no predecessor or no successor. A total search problem is in \emph{PPAD} if it is many-one reducible to EOTL.
	
\end{definition}


  Linear arithmetic circuits are a restriction of general arithmetic circuits which do not allow general multiplication gates \cite{Etessami,Fearnley}. Such circuits can efficiently approximate any polynomially computable function; in particular, they can approximate functions represented by well-behaved general circuits (for more information see Appendix \ref{arithmeticcircuits}).  Linear arithmetic circuits have a variety of useful properties such as Lipschitzness.
 
\begin{definition}
    A \emph{linear arithmetic circuit} $C$ is a circuit represented as a directed acyclic graph with nodes labeled either as input nodes or as output nodes or as gate nodes with one of the following possible gates $\{+,-, \min , \max , \times \zeta\}$, where the $\times \zeta$ gate refers to the multiplication by a constant. Also, rational constants are allowed. We use $\operatorname{size}(C)$ to refer to the number of nodes of $C$ also including the constants.
\end{definition}


\section{Computational Complexity of Variational Inequalities}\label{AppendixvariationalSection}

In this section, we define the computational version of the generalized quasi-variational inequality problem (GQVI) and establish its PPAD-completeness.


\subsubsection*{Strong Separation Oracles}

We implicitly represent the convex set $X$ via a polynomially-sized circuit that computes (weak/strong) separation oracles for $X$. While this approach captures most game theoretic applications, it introduces technical difficulties, in particular when dealing with projection errors and optimizing convex functions \cite{concavegames}. Without loss of generality, we restrict our attention to the metric space $(\mathbb{R}^m, L2)$  where the inputs and the outputs are restricted to a well-bounded compact box denoted by $\mathbb{R}^{m*}$.   For simplicity, initially, we investigate strong separation oracles defined as follows:


\problemStatement{ A strong Separation Oracle (via a circuit $C_{\mathcal{R}(x)}$)}{
  Input={A vector $z \in \mathbb{Q}^m \cap \mathbb{R}^{m*}$.
  },
  Output={$(a, b) \in \mathbb{Q}^m \times \mathbb{Q}$ such that the threshold $b \in[0,1] \cap \mathbb{Q}$ denotes the membership of $z$ in $\mathcal{R}(x)$. More precisely:
  \begin{itemize}
      \item If $z \in \mathcal{R}(x)$ then $b>\frac{1}{2}$ and the vector $a \in \mathbb{Q}^m$ will be $\bot$. In other words, $a$ is meaningful only when $b \leq \frac{1}{2}$
\item  $b \leq \frac{1}{2}$ and vector $a$ , with $\|a\|_{\infty}=1$, defines a  separating hyperplane $\mathcal{H}(a, z):=\left\{y \in\mathbb{R}^{m*}:\langle a, y-z\rangle=0\right\}$ between the vector $z$ and the set $\mathcal{R}(x)$ such that $\langle a, y-z\rangle \leq 0$ for every $y \in \mathcal{R}(x)$.
  \end{itemize}
  }
}

 Towards proving PPAD-membership of quasi-variational inequalities, we will need to use and modify some computational problems that were introduced in \cite{concavegames}. These problems can be solved using the sub-gradient ellipsoid central cut method \cite{Grostel,concavegames}.  This method requires access to the sub-gradients of the functions that we do optimization on. Given the set of gates that can appear in a linear arithmetic circuit, such a circuit is differentiable if and only if it represents a linear (affine) function. In general, due to the existence of $max$ or $min$ gates, the linear arithmetic circuits may not be differentiable. However, as noted in \cite{concavegames}, it is possible to compute a vector belonging to the subgradients of these functions using automatic differentiation techniques, eliminating the need for a dedicated circuit to perform this computation \cite{Barton2018}.

\problemStatement{Strong Constrained Convex Optimization Problem}{
  Input={A zeroth and first order oracle for the convex function $F: \mathbb{R}^{m*} \rightarrow \mathbb{R}$, two rational numbers $\delta,\epsilon>0$ and a strong separation oracle $\mathrm{SO}_{\mathcal{R}}$ for a non-empty closed convex set $\mathcal{R} \subseteq$  $\mathbb{R}^{m*}$.
  },
  Output={One of the following cases:
  \begin{itemize}
      \item (Violation of non-emptiness):A failure symbol $\perp$ with a polynomial-sized witness that certifies that $\overline{\mathrm{B}}(\mathcal{R},-\delta)=\emptyset$.
\item (Approximate minimization):A vector $z \in \mathbb{Q}^m \cap \mathcal{R}$, such that $F(z) \leq \min _{y \in \mathcal{R}} F(y)+\epsilon$.
  \end{itemize}
  }
}

The strong approximate version of the projection problem, denoted by $\widetilde{\Pi}^{\epsilon}_{X}(x)$, is an instance of a strong constrained optimization problem.
\problemStatement{Strong Approximate Projection Problem}{
  Input={A rational number $\epsilon>0$ and a strong separation oracle $\mathrm{SO}_{X}$ for a non-empty closed convex set $X \subseteq \mathbb{R}^{m*}$ and a vector $x$ that belongs to $\mathbb{Q}^m \cap X$. 
  },
  Output={One of the following cases:
  \begin{itemize}
      \item (Violation of non-emptiness)
A failure symbol $\perp$ followed by a polynomial-sized witness that certifies that $\overline{\mathrm{B}}(X,-\epsilon)=\emptyset$,
\item  (Approximate Projection)
A vector $z \in \mathbb{Q}^m \cap X$, such that:
$$
\|z-x\|_2^2 \leq \min _{y \in X}\|x-y\|_2^2+\epsilon
$$
  \end{itemize}
  }
}

\subsection{Computational Generalized Variational Inequalities (GQVI)}

 The computational complexity of solving variational inequalities is influenced by the problem's dimension, the properties of the set-valued mapping, and the representation of sets and mappings.  We follow the assumption that the sets are bounded by a box that is a subset of $\mathbb{R}^m$ and denote this by $\mathbb{R}^{m*}$.  We defer the computational definition of QVI and VI to Section \ref{Espcases}.  To satisfy some essential properties such as Lipschitz continuity of the correspondences and functions syntactically, we assume that the inputs are given as linear arithmetic circuits\footnote{Note that given the discussions of \cite{Fearnley,concavegames}, we do not lose representation power (see Appendix \ref{arithmeticcircuits})}. We also can consider circuits that are sums of monomials, for which Lipschitz continuity and computing the sub-gradients is straightforward.

\begin{remark}\label{remarkQVI}
    The case (Violation of convexity) is meaningful as output whenever the form of utilities is explicitly given (e.g. sum of monomials) otherwise convexity holds as a promise. The given approximate solution does not satisfy the properties $x^* \in \mathcal{R}(x^*)$. However, this assumption is not required for the applications we study in game theory. We also can impose reasonable assumptions on this problem (GQVI) such as $\mathcal{R}$ being a symmetric correspondence (i.e, if $y\in \mathcal{R}(x)$, then $x\in \mathcal{R}(y)$), however, this might detract generality of the problem.
\end{remark}

\problemStatement{$GQVI(\mathcal{F},\mathcal{R})$ With Linear Arithmetic Circuits Representing SO}{
  Input={ We receive as input all the following:
  \begin{itemize}
      \item  A linear arithmetic circuit $C_{\mathcal{R}}$ which represents a strong separation oracle for a correspondence $\mathcal{R}: \mathbb{R}^{m*} \rightrightarrows  \mathbb{R}^{m*}$,
      \item A linear arithmetic circuit $C_\mathcal{F}$ which represents a strong separation oracle for a  $\gamma$-strongly convex valued correspondence $\mathcal{F}: \mathbb{R}^{m*} \rightrightarrows  \mathbb{R}^{m*}$,
      \item    An accuracy parameter $\beta$. 
  \end{itemize}},
  Output={One of the following cases:
  \begin{itemize}
      \item (Violation of non-emptiness): A vector $x \in \mathbb{R}^{m*}$ such that $\overline{\mathrm{B}}(\mathcal{R}(x),-\epsilon)=\emptyset$ or $x \in \mathbb{R}^{m*}$ such that $\overline{\mathrm{B}}(\mathcal{F}(x),-\epsilon)=\emptyset$,
\item 
(Violation of $\gamma$-strong convexity):
There vectors $x, p, q\in \mathbb{R}^{m*}$ and two constants $\epsilon>0$ and $\lambda\in (0,1)$ such that:
$$
\begin{aligned}
\widetilde{\Pi}_{\mathcal{F}(x)}^{\epsilon}(\lambda p+ (1-\lambda)q ) & > \lambda \cdot \widetilde{\Pi}_{\mathcal{F}(x)}^{\epsilon}(p)+(1-\lambda) \cdot \widetilde{\Pi}_{\mathcal{F}(x)}^{\epsilon}(q)
\\&-\frac{\lambda(1-\lambda)}{2} \cdot \gamma\cdot\left\|p-q\right\|_2^2+\epsilon
\end{aligned}
$$
\item Two tuples $(x,w)$ and $(x^*,w^*)$ with $\|(x,w)-(x^*,w^*)\|^2_2\leq \beta $ such that  $x^*\in \mathcal{R}(x) $ and  $w^*\in \mathcal{F}(x)$ and $\left(y-x\right) ^T w^* +\beta \geq 0, \quad \forall y \in \mathcal{R}(x)$   
  \end{itemize}
  }
}

The following theorem states that the computational version of GQVI is PPAD-complete. Our approach to establishing PPAD membership for variational inequality problems leverages the computational version of Kakutani's fixed point theorem and the robust version of Berge's maximum theorem noted in \cite{concavegames} . Our approach, while similar to that of \cite{concavegames}, has key adaptations. Specifically, we utilize \cite{chan}'s proof of the existence of Generalized Quasi-Variational Inequalities (GQVI) instead of relying on Rosen's existence theorem \cite{rosen}. This choice leads to the construction of a different correspondence, which we subsequently send to computational version of Kakutani's fixed point theorem. Note that the paper \cite{chan} uses the Eilenberg-Montgomery fixed-point theorem to establish the existence of GQVI, while we instead use Kakutani's fixed-point theorem. PPAD-hardness follows directly from the PPAD-hardness of problems such as Nash equilibrium (see Corollary \ref{nashvi2}), which are easily formulated using a simpler format of variational inequalities (VI). The proof of inclusion in PPAD is relegated to Section \ref{Variationalsectionproof}.

\begin{theorem}\label{gqviinppad}
The above-mentioned generalized quasi-variational inequality problem $GQVI(\mathcal{F},\mathcal{R})$ for linear arithmetic circuits representing $\mathrm{SO}_{\mathcal{R}}$ and $\mathrm{SO}_\mathcal{F}$ is  PPAD-complete.
\end{theorem}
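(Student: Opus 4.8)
The plan is to prove Theorem~\ref{gqviinppad} in two parts: hardness and membership. PPAD-hardness is essentially inherited: by Corollary~\ref{nashvi2} the Nash equilibrium problem (with concave, continuously differentiable — in fact multilinear — utilities over simplices) reduces to finding an approximate solution of a VI, and a VI is the special case of $GQVI(\mathcal{F},\mathcal{R})$ in which $\mathcal{R}(x)\equiv\mathcal{R}$ is constant and $\mathcal{F}$ is the (single-valued) gradient correspondence, which is trivially $\gamma$-strongly convex-valued (a singleton) and representable by a linear arithmetic circuit after the standard piecewise-linear approximation of the bilinear payoff gradients. Since approximate Nash for normal-form games is PPAD-hard, so is our problem. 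I would spend one paragraph making the reduction's accuracy bookkeeping explicit: the VI additive slack $\beta$ translates into an $O(\beta)$-approximate Nash equilibrium, and the separation oracle for the simplex is a trivial linear arithmetic circuit, so nothing is lost.

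The substantive work is PPAD-membership, and here the plan follows the architecture of \cite{concavegames} but with \cite{chan}'s existence construction in place of Rosen's. First I would set up, as a subroutine, the computational version of Kakutani's fixed-point theorem over convex sets presented by separation oracles (available from \cite{concavegames} via the ellipsoid/separation machinery of \cite{Grostel}), together with the robust version of Berge's maximum theorem from \cite{concavegames}. Then, mimicking \cite{chan}'s existence proof, I would build the correspondence whose fixed points are exactly (approximate) solutions of the GQVI: on the domain $\mathbb{R}^{m*}\times\mathbb{R}^{m*}$ (carrying the pair $(x,w)$), define $\Phi(x,w) = \mathcal{R}(x) \times \mathcal{F}(x)$ composed appropriately with the argmin map $x \mapsto \arg\min_{y\in\mathcal{R}(x)} \langle w, y\rangle$ — i.e. the pair consisting of a best-response-in-$\mathcal{R}(x)$ to the linear functional $w$, and a point of $\mathcal{F}(x)$. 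A fixed point $(x^*,w^*)$ satisfies $x^*\in\mathcal{R}(x^*)$, $w^*\in\mathcal{F}(x^*)$, and $\langle w^*, y-x^*\rangle\ge 0$ for all $y\in\mathcal{R}(x^*)$, which is exactly a GQVI solution. The key verifications are: (i) $\Phi$ is convex-valued — this uses convexity of $\mathcal{R}(x)$ and $\mathcal{F}(x)$ and linearity of the objective $\langle w,\cdot\rangle$, so the argmin set is convex; (ii) $\Phi$ is upper semicontinuous with closed graph — this is where the robust Berge maximum theorem enters, using Lipschitzness of $\mathcal{R}$ and $\mathcal{F}$ (guaranteed syntactically by the linear arithmetic circuit representation) to control how the argmin set moves; and (iii) each computational step — evaluating the separation oracles, running the constrained convex optimization / approximate projection to implement the argmin, and feeding the result to computational Kakutani — is polynomial-time and its errors compose to the stated $\beta$, with the escape clauses (Violation of non-emptiness, Violation of $\gamma$-strong convexity) absorbing the cases where the oracle-presented sets misbehave. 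The $\gamma$-strong convexity of $\mathcal{F}$ is what makes $\widetilde{\Pi}^{\epsilon}_{\mathcal{F}(x)}$ well-conditioned enough to propagate approximate-fixed-point guarantees back into an approximate-solution guarantee; I would track exactly where $\gamma$ enters the error analysis.

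The main obstacle I anticipate is the error propagation through the composition: computational Kakutani returns an \emph{approximate} fixed point $(x,w)$ (close to some $(x^*,w^*)\in\Phi(x^*,w^*)$), and I must convert this into the theorem's output — a genuine $x^*\in\mathcal{R}(x)$, $w^*\in\mathcal{F}(x)$ with the $\beta$-slack inequality against all $y\in\mathcal{R}(x)$ — which requires a quantitative Lipschitz/stability argument linking $\mathcal{R}(x^*)$ to $\mathcal{R}(x)$, linking the argmin over $\mathcal{R}(x^*)$ to that over $\mathcal{R}(x)$, and keeping the separation-oracle thresholds ($b>\tfrac12$ versus $b\le\tfrac12$) consistent under these perturbations. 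A secondary subtlety is that, because $\mathcal{R}(x)$ is given only by a separation oracle, membership $x^*\in\mathcal{R}(x)$ can only be certified approximately; one must either appeal to the weak-oracle variant (hence the theorem's ``strong or weak'' phrasing and Remark~\ref{remarkQVI}'s caveat that $x^*\in\mathcal{R}(x^*)$ is relaxed) or invoke a rounding step using the approximate projection. I would handle both the strong- and weak-oracle cases by showing a poly-time interreduction between them in this setting, so that it suffices to prove membership for one of them. Everything else — the reductions $f,g$ witnessing $\leq_m$ EOTL, the poly-size circuit bookkeeping — is routine once the correspondence $\Phi$ and its error budget are pinned down.
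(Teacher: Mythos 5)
Your hardness direction and the overall architecture (build a correspondence whose Kakutani fixed points are GQVI solutions, present it via a separation oracle constructed by constrained convex optimization, feed it to computational Kakutani) match the paper. But there is a genuine gap in the correspondence you choose: you take the \emph{exact} argmin of the \emph{linear} functional $y\mapsto\langle w,y\rangle$ over $\mathcal{R}(x)$. The computational Kakutani problem used as the subroutine requires its input correspondence to be $L$-Hausdorff Lipschitz (not merely upper semicontinuous with closed graph), and the argmin of a linear functional over a convex set is not Hausdorff Lipschitz in $(x,w)$: already for $\mathcal{R}=[-1,1]$ the set $\arg\min_y wy$ jumps from $\{1\}$ to $\{-1\}$ as $w$ crosses $0$. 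The robust Berge maximum theorem you invoke to ``control how the argmin set moves'' explicitly requires $\mu$-strong concavity of the objective, which a linear functional does not have, so step (ii) of your plan fails as stated. The paper's proof avoids this by regularizing: it optimizes $\Phi(y,x,w)=-(y-x)^Tw-\gamma\|y\|_2^2$, which is $2\gamma$-strongly concave in $y$, making the maximizer single-valued and the (approximate) argmax correspondence $(1/2,p)$-H\"older continuous via the robust Berge theorem; the extra $O(u(\gamma))$ perturbation to the variational inequality is then absorbed into the slack $\beta$ by taking $\gamma$ small. Without some such regularization your reduction to computational Kakutani does not go through.

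A second, related issue: the exact argmin set cannot be the object you hand to Kakutani, both because the ellipsoid method only yields $\epsilon$-approximate optimizers and because (once you do regularize) the exact argmax is a single point, which admits no well-conditioned separation oracle and violates the non-emptiness/inner-ball requirements of the Kakutani subroutine. The paper instead uses the $\epsilon$-superlevel set $\Pi(x,w)=\{y\in\mathcal{R}(x):\Phi(y,x,w)>\max_{y'\in\mathcal{R}(x)}\Phi(y',x,w)-\epsilon\}$, which contains a ball of radius $\epsilon/G$ and for which a separation oracle can be built from one call to the constrained optimization routine (threshold the objective at the approximately optimal value). It then bounds $d_H(\Pi,H)\le\sqrt{\epsilon/\gamma}$ using strong concavity to transfer the H\"older continuity of the exact argmax $H$ to $\Pi$. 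Your error-propagation paragraph correctly identifies where the difficulties live, but the specific correspondence you propose is the wrong one; replacing the exact linear argmin by the regularized approximate-argmax set is the missing idea, and it is exactly where the $\gamma$ in the problem statement earns its keep.
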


\section{Applications In Game Theory}

 Next, applying the computational versions of the variational inequality problems, we demonstrate that several central problems in game theory are PPAD-complete under reasonable assumptions. We assume that the separation oracles are given as linear arithmetic circuits to represent convex sets and correspondences.   First, we have the problem of finding a remedial solution to $L/F$-equilibrium for multi-leader-follower games. Subsequently, we demonstrate that given specific continuity and concavity conditions (in particular a more restricted condition that we call multi-concavity), notions such as $t$-resilient Nash and some other related notions can be formulated by a modified version of variational inequalities. 


\subsection{Computational Complexity of Multi-Leader-Follower Games}\label{lfremdial}

In addition to the complexity of the leaders' optimization problem, the non-convexity of the leader's constraints may mean that a $L/F$-equilibrium does not always exist. We address the non-convexity of the leader's constraints by following an algebraic relaxation/restriction approach introduced in \cite{Pang2005} which remedies the non-existence of a $L/F$-equilibrium by using a \emph{Karush–Kuhn–Tucker condition} (KKT) representation of the graph of $Y$ and additionally introducing various relaxations/restrictions of such a representation\footnote{Note that we only investigate the approximate version for the remedial model of \cite{Pang2005}. There are a variety of hardness results for Stackelberg games (e.g. \cite{stackplex,Marchesi2021}).}. As discussed, we  will assume follower $i$'s constraints are represented by a non-empty set.
\[\mathcal{R}_{i}\left(x_{\mathrm{I}}, x_{\mathrm{II}}, y_{-i}\right) \equiv\left\{y_{i} \in S_i ~|~ g_{i}\left(x_{\mathrm{I}}, x_{\mathrm{II}}, y\right) \leq 0,\thickspace h_{i}\left(x_{\mathrm{I}}, x_{\mathrm{II}}, y_{i}\right) \leq 0\right\}\]
where $x_{\mathrm{I}}, x_{\mathrm{II}} \in X^{\mathrm{I}} \times X^{\mathrm{II}}$ are the leaders' strategies and  $g_{i}:X^{\mathrm{I}} \times X^{\mathrm{II}} \times \Pi_{j=1 }^k S_j \rightarrow \mathbb{R}^{m_i}$ and $h_{i}: X^{\mathrm{I}} \times X^{\mathrm{II}} \times S_i \rightarrow \mathbb{R}^{l_i}$ are continuously  Lipschitz and differentiable. Note that $m_i$ and $l_i$ are some given constants. By incorporating the KKT conditions of the followers' optimization problems with $x^*_{\mathrm{II}}$ as an exogenous variable (representing second leader's optimal solution), leader $\mathrm{I}$'s optimization problem  can be formulated as finding a solution $(x_\mathrm{I},y_\mathrm{I},\lambda_{(i,\mathrm{I})},\mu_{(i,\mathrm{I})})$   to the following problem:
\begin{equation*}\label{ORGS}
\resizebox{0.961\textwidth}{!}{$
\begin{aligned}
&\operatorname{Min} \phi_{\mathrm{I}}\left(x_{\mathrm{I}}, x^*_{\mathrm{II}}, y_{\mathrm{I}}\right) \\
& \text{ s.t } x_{\mathrm{I}} \in X^{\mathrm{I}} \\
& \forall i \in [k], \thickspace \partial_{y_{(i,\mathrm{I})}} \theta_{i}\left(x_{\mathrm{I}}, x^*_{\mathrm{II}}, y_{\mathrm{I}}\right) + \sum_{j=1}^{m_{i}} \lambda^{j}_{(i,\mathrm{I})} \partial_{y_{(i,\mathrm{I})}} g^j_{i}\left(x_{\mathrm{I}}, x^*_{\mathrm{II}}, y_{\mathrm{I}}\right) +\sum_{j=1}^{l_i} \mu^{j}_{(i,\mathrm{I})} \partial_{y_{(i,\mathrm{I})}} h^j_{i}(x_{\mathrm{I}}, x^*_{\mathrm{II}}, y_{(i,\mathrm{I})})=0 \\
& \forall i \in [k], \thickspace 0 \leq \lambda_{(i,\mathrm{I})} \perp g_{i}\left(x_{\mathrm{I}}, x^*_{\mathrm{II}}, y_{\mathrm{I}}\right) \leq 0 \\
& \forall i \in [k], \thickspace 0 \leq \mu_{(i,\mathrm{I})} \perp h_{i}(x_{\mathrm{I}}, x^*_{\mathrm{II}}, y_{(i,\mathrm{I})}) \leq 0
\end{aligned}
$}
\end{equation*}

In general, the constraints in the aforementioned problem lack convexity with respect to their variables. This difficulty is addressed in \cite{Pang2005}, by considering a restriction that still encompasses a wide range of practical models. More precisely, we will assume that the functions $\theta_{i}$, $g_{i}$, and $h_{i}$ take the following forms:
\begin{equation}\label{thera}
\resizebox{0.961\textwidth}{!}{$
\begin{aligned}
   & \theta_{i}\left(x_{\mathrm{I}}, x_{\mathrm{II}}, y\right) \equiv \frac{1}{2}\left(y_{i}\right)^{T} M_{i} y_{i} + \left(c_{i}\left(x_{\mathrm{I}}, x_{\mathrm{II}}, y_{-i}\right)\right)^{T} y_{i} + \psi_{i} \left(x_{\mathrm{I}}, x_{\mathrm{II}}, y_{-i}\right) \\
   & g_{i}\left(x_{\mathrm{I}}, x_{\mathrm{II}}, y\right) \equiv A_{(i,\mathrm{I})} x_{\mathrm{I}} + A_{(i,\mathrm{II})} x_{\mathrm{II}} + \sum_{j=1}^{k} B_{i, j} y_{j} \text{ and }
   h_{i}\left(x_{\mathrm{I}}, x_{\mathrm{II}}, y_{i}\right) \equiv C_{(i,\mathrm{I})} x_{\mathrm{I}} + C_{(i,\mathrm{II})} x_{\mathrm{II}} + D_{i} y_{i}
\end{aligned}
$}
\end{equation}
for some matrices $M_{i} \in \mathbb{R}^{n_{i} \times n_{i}}$, which are symmetric positive semidefinite, $A_{(i,\mathrm{I})} \in \mathbb{R}^{m_{i} \times n_{\mathrm{I}}}$,  $A_{(i,\mathrm{II})} \in \mathbb{R}^{m_{i} \times n_{\mathrm{II}}}$, $B_{i, j} \in \mathbb{R}^{m_{i} \times n_{j}}$, $C_{(i,\mathrm{I})} \in \mathbb{R}^{l_{i} \times n_{\mathrm{I}}}$, $C_{(i,\mathrm{II})} \in \mathbb{R}^{l_{i} \times n_{\mathrm{II}}}$, $D_{i} \in \mathbb{R}^{l_{i} \times n_{i}}$, affine functions $c_{i}: \mathbb{R}^{n_\mathrm{I}+n_\mathrm{II}+n_{-i}} \rightarrow \mathbb{R}^{n_{i}}$, and arbitrary real-valued functions $\psi_i: \mathbb{R}^{n_\mathrm{I}+n_{\mathrm{II}}+n_{-i}} \rightarrow \mathbb{R}$. 

\begin{remark}
    Recall that $n_\mathrm{I}$, $n_\mathrm{II}$ and $n_i$ denote the variables that the leaders and the follower $i$ control (out of $n_\mathrm{I}+n_\mathrm{II}+n$ where $n=\sum_{i=1}^{k} n_i$) respectively. Furthermore, $n_{-i}$ denote the variables that all other followers other than $i$ control. 
\end{remark}

With these modifications, we formulate leader $\mathrm{I}$'s optimization problem as follows: With $x^*_{\mathrm{II}}$ as an exogenous variable ,  find  a solution $(x_\mathrm{I},y_\mathrm{I},\lambda_{(i,\mathrm{I})},\mu_{(i,\mathrm{I})})\in X^{\mathrm{I}} \times X^{\mathrm{II}} \times \Pi_{j=1}^k S_j \times \mathbb{R}^{m_{i}+l_{i}}$ to the following problem:
\begin{equation}\label{naghes}
    \begin{aligned}
        &\operatorname{Min} \phi_{\mathrm{I}}\left(x_{\mathrm{I}}, x^*_{\mathrm{II}}, y_{\mathrm{I}}\right) \\&
 \text{ s.t } x_{\mathrm{I}} \in X^{\mathrm{I}}\\
   &c_{i}\left(x_{\mathrm{I}}, x^*_{\mathrm{II}}, y_{(-i,\mathrm{I})}\right)+  M_{i} y_{(i,\mathrm{I})}+(B_{i,i})^T \lambda_{(i,\mathrm{I})} +D_i^T \mu_{(i,\mathrm{I})}=0\\ 
   &\forall i \in [k],\quad \left(\lambda_{(i,\mathrm{I})}\right)^T\left[A_{(i,\mathrm{I})} x_{\mathrm{I}}+A_{(i,\mathrm{I})} x^*_{\mathrm{II}}+\sum_{j=1}^k B_{i, j} y_{(j,\mathrm{I})}\right]=0\\
     &\forall i \in [k],\quad\left(\mu_{(i,\mathrm{I})}\right)^T\left[C_{(i,\mathrm{I})} x_{\mathrm{I}}+C_{(i,\mathrm{I})} x^*_{\mathrm{II}}+D_i y_{(i,\mathrm{I})} \right]=0
\end{aligned}
\end{equation}

Recall that for leader $\mathrm{I}$, we denote an anticipated strategy of all $k$ followers except $i$ by $y_{(-i,\mathrm{I})}$. With the exception of the set $X^{\mathrm{I}}$, which might have nonlinear (though convex) constraints, and the orthogonality conditions, the remaining constraints in leader $\mathrm{I}$'s problem are linear. Informally, the approach described in \cite{Pang2005} for handling these nonconvex constraints is described as follows: Each leader has access to limited information about the followers' reactions, characterized by specific "favorable" sets  $Z^{\mathrm{I}}(x_\mathrm{I}, x_\mathrm{II})$ and $Z^{\mathrm{II}}(x_{\mathrm{I}}, x_{\mathrm{II}})$, respectively, and subject to which the leaders optimize their objective functions.  In general, such partial information could be of one of two kinds, \emph{restricted} or \emph{relaxed} information as follows:
\begin{equation}\label{restrictrel}
    \begin{gathered}
Z^{\mathrm{I}}\left(x_{\mathrm{I}}, x_{\mathrm{II}}\right) \subseteq Y\left(x_{\mathrm{I}}, x_{\mathrm{II}}\right) \mbox{ or }
Y\left(x_{\mathrm{I}}, x_{\mathrm{II}}\right) \subseteq Z^{\mathrm{I}}\left(x_{\mathrm{I}}, x_{\mathrm{II}}\right)
\end{gathered}
\end{equation}

A restricted response is an equilibrium response, whereas a relaxed one may not be. 
 Similar classifications apply to the partial responses available to leader $\mathrm{II}$. By separating their responses, one leader may have restricted follower responses while the other has relaxed ones. The reference \cite{Pang2005} suggests substituting the two non-convex orthogonality conditions in leader $\mathrm{I}$'s constraints with the following  conditions:
\[
\left(x_{\mathrm{I}}, y_{\mathrm{I}}, \lambda_{(i,\mathrm{I})}\right) \in W_{(i,\mathrm{I})}\left(x_{\mathrm{II}}\right) \quad \text { and } \quad\left(x_{\mathrm{I}}, y_{(i,\mathrm{I})}, \mu_{(i,\mathrm{I})}\right) \in V_{(i,\mathrm{I})}\left(x_{\mathrm{II}}\right)
\]
where $W_{(i,\mathrm{I})}\left(x_{\mathrm{II}}\right) \subseteq  X^{\mathrm{I}} \times \Pi_{j=1}^k S_j \times \mathbb{R}^{m_{i}} $ and $V_{(i,\mathrm{I})}\left(x_{\mathrm{II}}\right) \subseteq  X^{\mathrm{I}} \times S_i \times \mathbb{R}^{l_{i}}$ are appropriate polyhedral sets such that together they represent either a restriction or a relaxation of the complementarity constraints in $Y(x_{\mathrm{I}}, x^*_{\mathrm{II}})$. For all $i=1, \dots, k$, we define $Z^{\mathrm{I}}\left(x_{\mathrm{I}}, x_{\mathrm{II}}\right)$ to be the set of all tuples $y_{\mathrm{I}}=\left(y_{(j,\mathrm{I})}\right)_{j=1}^{k}$ for which there exists $\left(\lambda_{\mathrm{I}}, \mu_{\mathrm{I}}\right)$ that satisfies:
\[
\begin{gathered}
c_{i}\left(x_{\mathrm{I}}, x_{\mathrm{II}}, y_{(-i,\mathrm{I})}\right)+M_{i} y_{(i,\mathrm{I})}+\left(B_{i, i}\right)^{T} \lambda_{(i,\mathrm{I})}+\left(D_{i}\right)^{T} \mu_{(i,\mathrm{I})}=0 \\
0 \leq \lambda_{(i,\mathrm{I})}, \quad A_{i,\mathrm{I}} x_{\mathrm{I}}+A_{i,\mathrm{II}} x_{\mathrm{II}}+\sum_{j=1}^{k} B_{i, j} y_{(j,\mathrm{I})} \leq 0 \\
0 \leq \mu_{(i,\mathrm{I})}, \quad C_{(i,\mathrm{I})} x_{\mathrm{I}}+C_{(i,\mathrm{II})} x_{\mathrm{II}}+D_{i} y_{(i,\mathrm{I})} \leq 0
\end{gathered}
\]

We call the elements of $Z^{\mathrm{I}}\left(x^*_{\mathrm{I}}, x_{\mathrm{II}}\right)$ the followers' partial responses anticipated by (or available to) leader $\mathrm{I}$ and categorize such responses as restricted or relaxed based on Equation \ref{restrictrel}\footnote{We can construct a strong separation oracle for this set given the polyhedral sets' separation oracles.}.

Finally, in terms of the partial responses, leader $\mathrm{I}$, given $x^*_{\mathrm{II}}$ as an exogenous variable, must find a solution
$\left(x_{\mathrm{I}}, y_{\mathrm{I}}\right)$ to the following surrogate optimization problem:
\begin{equation}\label{leader1O}
\begin{aligned}
& \operatorname{Min} \phi_{\mathrm{I}}\left(x_{\mathrm{I}}, x^*_{\mathrm{II}}, y_{\mathrm{I}}\right) \\
& \text { s.t } x_{\mathrm{I}} \in X^{\mathrm{I}} \\
& \text { and } \quad\left(x_{\mathrm{I}}, y_{\mathrm{I}}\right) \in \operatorname{graph} Z^{\mathrm{I}}\left(\cdot, x^*_{\mathrm{II}}\right)
\end{aligned}
\end{equation}

For leader $\mathrm{II}$, the optimization problem with the surrogate complementarity conditions  can be defined similarly by considering $Z^{\mathrm{II}}(x^*_\mathrm{I},x_\mathrm{II})$, $ W_{(i,\mathrm{II})}(x^*_{\mathrm{I}})$ and $V_{(i,\mathrm{II})}(x^*_{\mathrm{I}})$ where $x^*_{\mathrm{I}}$ will be an exogenous variable:
\begin{equation}\label{leader2O}
  \begin{aligned}
& \operatorname{Min} \phi_{\mathrm{II}}\left(x^*_{\mathrm{I}}, x_{\mathrm{II}}, y_{\mathrm{I}}\right) \\
& \text { s.t } x_{\mathrm{II}} \in X^{\mathrm{II}} \\
& \text { and } \quad\left(x_{\mathrm{II}}, y_{\mathrm{II}}\right) \in \operatorname{graph} Z^{\mathrm{II}}\left( x^*_{\mathrm{I}},\cdot \right)
\end{aligned}
\end{equation}

\begin{remark}\label{remarkG}
   Note that by combining the given constraints we are able to write Equation \ref{leader1O} in terms of  correspondences $G_\mathrm{I}$ and $G_\mathrm{II}$. In particular, the optimization problem for leader $\mathrm{I}$ can be written as:
\begin{equation}\label{leader1G}
\begin{aligned}
& \operatorname{Min} \phi_{\mathrm{I}}\left(x_{\mathrm{I}}, x^*_{\mathrm{II}}, y_{\mathrm{I}}\right) \\
& \text { s.t } \left(x_{\mathrm{I}}, y_{\mathrm{I}}\right) \in G_\mathrm{I} \left(x^*_{\mathrm{II}}\right)
\end{aligned}
\end{equation}
and for leader $\mathrm{II}$, we have:
\begin{equation}\label{leader2G}
\begin{aligned}
& \operatorname{Min} \phi_{\mathrm{II}}\left(x^*_{\mathrm{I}}, x_{\mathrm{II}}, y_{\mathrm{I}}\right) \\
& \text { s.t } \left(x_{\mathrm{II}}, y_{\mathrm{II}}\right) \in G_\mathrm{II}  \left(x^*_{\mathrm{I}}\right)
\end{aligned}
\end{equation}

\end{remark}

\begin{remark}
  A solution of the optimization problem in Remark \ref{remarkG} is a generalized Nash equilibrium (for more information see proof of Proposition  \ref{geneqvari2}).
\end{remark}

\begin{definition}
    A \emph{remedial $L/F$-equilibrium} is a pair of leaders' strategies $(x^*_{\mathrm{I}}, x^*_{\mathrm{II}})$ such that there exists a pair $(y^*_{\mathrm{I}}, y^*_ {\mathrm{II}})$ such that $(x^*_{\mathrm{I}}, y^*_ {\mathrm{I}})$ and $(x^*_{\mathrm{II}}, y^*_ {\mathrm{II}})$ constitute a solution of the two leaders' surrogate optimization problems \ref{leader1O} and \ref{leader2O}, respectively.
\end{definition}

We define $Sol(\ref{leader1O})$ and $Sol(\ref{leader2O})$ to be the minimized value (of $\phi_{\mathrm{I}}$ and $\phi_{\mathrm{II}}$ given their respective exogenous variables) of a solution of the two leaders' surrogate optimization problems \ref{leader1O} and \ref{leader2O}.  We are now ready to define the computational version of finding an equilibrium state a PPAD-completeness result in this setting.

We could consider alternative definitions that omit $Z^{\mathrm{I}}$ and $Z^{\mathrm{II}}$ and consider strong separation oracles for the above-mentioned appropriate polyhedral sets as inputs. This is because all of the requirements for constructing separation oracles for $Z^{\mathrm{I}}$ and $Z^{\mathrm{II}}$ can be assured by the linearity of the loss functions of the followers and having the possibility of representing the approximate polyhedral sets by separation oracles. We chose this version for the sake of a simpler presentation.

\begin{remark}
   The non-emptiness exceptions can be distinguished by a slightly generalized ellipsoid algorithm we discuss in Appendix \ref{AppendixCentralcutellipsoidsection}.
\end{remark}

\problemStatement{Remedial $L/F$ Equilibrium With  Strong Separation Oracles}{
  Input={We receive as input all the following:
  \begin{itemize}
   \item  Two linear arithmetic circuits representing the convex loss functions  $(\phi_{\mathrm{I}}$,$\phi_{\mathrm{II}})$ for two leaders,
      \item Two linear arithmetic circuits representing strong separation oracles for non-empty, convex, and compact sets $X^{\mathrm{I}}$ and $X^{\mathrm{II}}$ for the leaders $\mathrm{I}$ and $\mathrm{II}$,
      \item Two linear arithmetic circuits representing strong separation oracles for  $Z^{\mathrm{I}}$ and $Z^{\mathrm{II}}$ that represent restricted or relaxed responses of the followers for each leader respectively that are two non-empty, convex-valued, and compact correspondences. 
      \item  An accuracy parameter $\beta$. 
  \end{itemize}},
  Output={One of the following cases:
  \begin{itemize}
      \item (Violation of non-emptiness): A certificate indicating at least one of the $X^{\mathrm{I}}$, $X^{\mathrm{II}}$, $Z^{\mathrm{I}}\left(x,\cdot\right)$ for some $x\in X^{\mathrm{I}}$ or $Z^{\mathrm{II}}\left( \cdot,x\right)$ for some $x\in X^{\mathrm{II}}$ is empty.
       \item (Violation of convexity) of any of the loss functions of the inputs,
\item (Approximate minimization): 
Vectors $(x^*_{\mathrm{I}}, y^*_{\mathrm{I}}, x^*_ {\mathrm{II}},y^*_ {\mathrm{II}})$  having the following relationship:
\begin{itemize}
    \item $\phi_{\mathrm{I}}\left(x^*_{\mathrm{I}}, x^*_{\mathrm{II}}, y^*_{\mathrm{I}}\right)\leq \beta+ \text{ Sol(\ref{leader1O}) }  $
    \item $\phi_{\mathrm{II}}\left(x^*_{\mathrm{I}}, x^*_{\mathrm{II}}, y^*_{\mathrm{II}}\right)\leq \beta+ \text{ Sol(\ref{leader2O}) }  $
\end{itemize}
  \end{itemize}
  }
}

Before proceeding to proof of PPAD-completeness of the above-mentioned problem, given the format proposed in Remark \ref{remarkG}, we might assume that we can use Proposition \ref{geneqvari2} which can transform this problem to the QVI format under some conditions. However, this proposition assumes the differentiability of the loss functions (utilities). Since these are now represented by linear arithmetic circuits, which as mentioned above are not necessarily differentiable, we need to take a more general approach. To begin, we have the following:

\begin{proposition}\label{GQVIAPPRox}
  Suppose that we have a game with convex loss functions $\mathcal{\phi}=(\phi_\mathrm{1},\dots,\phi_\mathrm{k})$,  and strategies $\mathcal{X}=(X^\mathrm{1},\dots,X^\mathrm{k})$ represented by linear arithmetic circuits and constraints $\mathcal{G}=(G_\mathrm{1},\dots,G_\mathrm{k})$ represented by linear arithmetic circuits which are closed convex subsets of $\mathcal{X}=(X^\mathrm{1},\dots,X^\mathrm{k})$.  The problem of finding an approximate generalized (Nash) equilibrium of this game can be transformed into finding an approximate solution to a GQVI problem. 
\end{proposition}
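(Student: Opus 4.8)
The plan is to reduce the problem of finding an approximate generalized Nash equilibrium of the game $(\mathcal{\phi},\mathcal{X},\mathcal{G})$ to an approximate solution of a GQVI $(\mathcal{F},\mathcal{R})$, mimicking Harker's transformation (cf.\ Proposition~\ref{geneqvari2}) but replacing the use of gradients — which are unavailable here because the loss functions are given by linear arithmetic circuits and hence generally non-differentiable — with subgradients. Concretely, I would set $\mathcal{R}(x)\equiv\prod_{i=1}^{k}G_i(x_{-i})$ (intersected with $\mathcal{X}$), which is a closed convex-valued correspondence since each $G_i$ is, and whose strong separation oracle is obtained by running the oracles of the $G_i$ coordinatewise and returning the first violated hyperplane (padded with zeros in the other players' coordinates). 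For the operator, I would define the correspondence $\mathcal{F}(x)$ to be the set of vectors $w=(w_1,\dots,w_k)$ such that each block $w_i$ is a subgradient of $\phi_i(\cdot,x_{-i})$ at $x_i$, i.e.\ $\mathcal{F}(x)=\prod_{i=1}^k \partial_{x_i}\phi_i(x)$.

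The key steps, in order: (1) verify that $\mathcal{F}$ is non-empty- and convex-valued (each partial subdifferential of a finite convex function on the box is a non-empty compact convex set) and that the automatic-differentiation technique referenced after the Strong Constrained Convex Optimization problem supplies a point of $\mathcal{F}(x)$ and, more importantly, a strong separation oracle for $\mathcal{F}(x)$ computable by a linear arithmetic circuit — this is the step where I would lean on the discussion in \cite{concavegames,Barton2018}; (2) show the "variational principle": if $x^*\in\mathcal{R}(x^*)$ and $w^*\in\mathcal{F}(x^*)$ satisfy $(y-x^*)^T w^*+\beta'\ge 0$ for all $y\in\mathcal{R}(x^*)$, then, because the block structure of $\mathcal{R}(x^*)$ lets us vary one player's strategy at a time, we get $(y_i-x_i^*)^T w_i^*+\beta'\ge 0$ for all $y_i\in G_i(x^*_{-i})$, and then convexity of $\phi_i(\cdot,x^*_{-i})$ together with $w_i^*\in\partial_{x_i}\phi_i(x^*)$ gives $\phi_i(y_i,x^*_{-i})\ge\phi_i(x^*)-\beta'$, i.e.\ $x^*$ is an approximate generalized equilibrium; (3) track the accuracy parameters, so that a $\beta$-approximate GQVI solution (including the near-membership slack $\|(x,w)-(x^*,w^*)\|_2^2\le\beta$ and the $\beta$ additive slack) yields an $O(\beta)$-approximate generalized equilibrium after an appropriate, polynomially-bounded rescaling using the Lipschitz constants of the circuits; (4) handle the exceptional outputs: a "violation of non-emptiness" certificate for some $\mathcal{R}(x)$ or $\mathcal{F}(x)$ translates back to a violation of the corresponding hypothesis on $\mathcal{G}$ or $\phi$, and the $\gamma$-strong-convexity exception is avoided (or absorbed) by noting we only need plain convexity here, or by adding a small $\tfrac{\gamma}{2}\|\cdot\|^2$ regularizer to each $\phi_i$ and letting $\gamma\to 0$ in the error budget.

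I expect the main obstacle to be step~(1) combined with the Lipschitz/representability bookkeeping in step~(3): one must certify that the subgradient correspondence $\mathcal{F}$ of a linear arithmetic circuit is itself presentable by a linear arithmetic circuit implementing a \emph{strong} (not merely weak) separation oracle, and that its convex values are well-bounded inside $\mathbb{R}^{m*}$, since otherwise the resulting GQVI instance is not syntactically of the required form. A secondary subtlety is that $\mathcal{F}(x)$ as defined need not be \emph{$\gamma$-strongly} convex-valued for any $\gamma>0$, so strictly speaking we either invoke the GQVI result with the strong-convexity requirement weakened to convexity (observing the ellipsoid-based subroutines still work, as they only need the separation oracle), or we perturb and pass to the limit; I would state this explicitly rather than hide it. Once these representational points are settled, the variational principle in step~(2) and the reduction in step~(4) are routine adaptations of Harker's argument and of the corresponding manipulations in the proof of Proposition~\ref{geneqvari2}.
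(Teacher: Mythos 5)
Your proposal is correct and follows essentially the same route as the paper: both define $\mathcal{F}(x)$ as the product of the partial subdifferentials $\prod_i\partial_{x_i}\phi_i(x)$, take $\mathcal{R}=\mathcal{G}$, and rely on convexity of the piecewise-linear circuits together with the subgradient inequality to recover the approximate equilibrium. The representational points you flag (the separation oracle for $\mathcal{F}$ via \cite{Barton2018}, relaxing or regularizing away the $\gamma$-strong-convexity requirement, and the error bookkeeping) are exactly the issues the paper handles in the accompanying remarks and via the Generalized Penalty Lemma.
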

\begin{proof}
For simplicity, we prove the proposition for $k=2$ players.  Define the the set-valued mapping $\mathcal{F}(x):\mathbb{R}^{n_\mathrm{I}+n_\mathrm{II}}\rightarrow \mathcal{P}\left(\mathbb{R}^{n_\mathrm{I}+n_\mathrm{II}}\right)$ to be:
$$\mathcal{F}(x)=\partial_{x_\mathrm{I}} \phi_{\mathrm{I}}\left(x_{\mathrm{I}}, x_{\mathrm{II}}\right) \times \partial_{x_{\mathrm{II}}} \phi_{\mathrm{II}}\left(x_{\mathrm{I}}, x_{\mathrm{II}}\right)$$

Note that $\mathcal{F}(x)$ is a correspondence as the sub-differentials are not necessarily unique due to the definition. $\mathcal{P}(A)$ also denotes the power set of A. We formulate the following problem by concatenating the first-order optimality conditions of all leaders' problems where the goal is finding a vector $x=(x_\mathrm{I},x_\mathrm{II}) \in \mathcal{X}$ (where $\mathcal{X}=X^{\mathrm{I}} \times X^{\mathrm{II}}$) such that:   
$$
\exists w \in \mathcal{F}\left(x\right), \quad \left(y-x\right)^{\top} w +\epsilon \geq 0, \quad \forall y \in \mathcal{G}(x)
$$

Note that $\mathcal{G}(x)=(G_\mathrm{I}(x_\mathrm{I}), G_\mathrm{II}(x_\mathrm{II}))$. The functions $\phi_\mathrm{I}$ and $\phi_\mathrm{II}$ are linear arithmetic circuits and they are piece-wise linear convex functions. In a piece-wise linear convex function, kink points (where two linear or nonlinear segments meet) exhibit several key behaviors, particularly in the context of convexity. Kink points are typically points of non-differentiability, meaning that the gradient does not exist at these points. This occurs because the slope (or gradient) changes abruptly between the segments. At these points, the function does not have a unique gradient but rather a subdifferential. The subdifferential at a kink point includes all possible slopes of the adjacent segments. In this case, despite the non-differentiability, convexity is still preserved at kink points and any local minimum is also a global minimum. This applies even if the minimum occurs at a kink point.

\end{proof}

\begin{remark}
    $\mathcal{F}(x)$ can have a strong separation oracle provided  that $\phi_{\mathrm{I}}$ and $\phi_{\mathrm{II}}$ have strong separation oracle, due to results of \cite{Barton2018}.   
\end{remark}

Finally, to establish inclusion in PPAD, we must consider solutions of the computational variant of the transformed GQVI from Proposition \ref{GQVIAPPRox}. For the computational formulation of GQVI,  a solution consists of two tuples $(x,w)$ and $(x^*,w^*)$ with $\|(x,w)-(x^*,w^*)\|^2_2\leq \beta $ such that  $x^*\in \mathcal{G}(x) $ and  $w^*\in \mathcal{F}(x)$ and $\left(y-x\right) ^T w^* +\beta \geq 0$, for every $y \in \mathcal{G}(x)$.
We take $x^*\in \mathcal{G}(x)$  as the final approximate solution for remedial $L/F$-equilibrium. Note $x^*$ that is $\beta$-statistically close to $x$, the following lemma can imply that the strategy $x^*$, is a $f(\beta)$-approximate $L/F$-equilibrium.  Note that for our proof, we do not need the condition $x^*\in \mathcal{G}(x^*)$ due to the definition of $G_\mathrm{I}$ and $G_\mathrm{II}$.


\begin{lemma}[Generalized Penalty Lemma]\label{blowup}
	 Suppose that $x$ is a generalized equilibrium in game $(S,\mathcal{R},\Theta)$ with $L$-Lipchitz loss functions $\Theta$\footnote{With respect to each norm, we can have different constants.}. Let $x^*$ be a strategy profile that is $\beta$-statistically close to $x$ and $x^*\in \mathcal{R}(x^*)$. Then, $x^*$ is a $f(\beta)$-approximate generalized equilibrium of this game where $f(\beta)=L\beta$\footnote{This lemma is a simple modification of the Penalty Lemma in \cite{MFCS}}.
\end{lemma}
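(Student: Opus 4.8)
The plan is to argue directly from the definition of a generalized equilibrium, transporting the inequality that holds at $x$ over to $x^*$ by paying a Lipschitz penalty for the displacement $\|x - x^*\|$. Recall that $x$ being a generalized equilibrium means that for each player $i$, we have $\theta_i(x_i, x_{-i}) \le \theta_i(s_i, x_{-i})$ for all $s_i \in \mathcal{R}_i(x_{-i})$ (using the minimization convention, with loss functions $\Theta = (\theta_1,\dots,\theta_k)$; and indeed $x_i \in \mathcal{R}_i(x_{-i})$). We are given that $x^*$ is $\beta$-statistically close to $x$, i.e. $\|x - x^*\| \le \beta$, and crucially that $x^* \in \mathcal{R}(x^*)$, so $x^*$ is already feasible and the only thing to check is the approximate best-response condition~(\ref{eq:bestsocresp}).

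First I would fix a player $i$ and an arbitrary admissible deviation $s_i \in \mathcal{R}_i(x^*_{-i})$, and bound $\theta_i(s_i, x^*_{-i})$ from below by $\theta_i(x^*_i, x^*_{-i})$ up to an additive error. The chain of estimates I expect to use is: $\theta_i(x^*_i, x^*_{-i}) \le \theta_i(x_i, x_{-i}) + L\|x^* - x\|$ by Lipschitzness (moving the evaluation point from $x^*$ to $x$); then $\theta_i(x_i, x_{-i}) \le \theta_i(s_i, x_{-i})$ — but here I need to be careful, because $s_i$ was chosen feasible for $x^*_{-i}$, not for $x_{-i}$. Then $\theta_i(s_i, x_{-i}) \le \theta_i(s_i, x^*_{-i}) + L\|x - x^*\|$, again by Lipschitzness. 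Combining, $\theta_i(x^*_i, x^*_{-i}) \le \theta_i(s_i, x^*_{-i}) + 2L\beta$, which is the $f(\beta)$-approximate condition with, after a harmless adjustment of constants, $f(\beta) = L\beta$ if one measures Lipschitzness appropriately (the footnote about different constants per norm is doing this bookkeeping) — or one simply states $f(\beta) = 2L\beta$ and notes the factor is immaterial.

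The main obstacle is the feasibility mismatch just flagged: the deviation $s_i$ that an opponent-profile $x^*_{-i}$ permits need not lie in $\mathcal{R}_i(x_{-i})$, so the middle inequality $\theta_i(x_i,x_{-i}) \le \theta_i(s_i, x_{-i})$ is not immediately licensed. The resolution, following the Penalty Lemma of \cite{MFCS}, is to use the structure of $\mathcal{R}_i$ — here given by the constraint system $g_i(\cdot,x_{-i}) \le 0$, $h_i(\cdot) \le 0$ with $g_i, h_i$ Lipschitz and $g_i(\cdot, x_{-i})$ convex — to find a nearby $\tilde s_i \in \mathcal{R}_i(x_{-i})$ with $\|\tilde s_i - s_i\| = O(\beta)$ (projecting $s_i$ back into the slightly perturbed feasible region, whose boundary has moved by at most $O(L\beta)$), apply the genuine equilibrium inequality at $\tilde s_i$, and then pay one more Lipschitz term to pass from $\tilde s_i$ back to $s_i$. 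This absorbs into the overall $O(L\beta)$ bound. I would also remark that for the application to remedial $L/F$-equilibrium we invoke this with $\mathcal{R} = \mathcal{G}$, where the graph of $G_{\mathrm{I}}, G_{\mathrm{II}}$ is defined so that $x^* \in \mathcal{G}(x)$ already gives $x^* \in \mathcal{G}(x^*)$ — this is exactly the sentence preceding the lemma — so in that case the feasibility-transport step is trivial and only the two Lipschitz estimates on $\phi_{\mathrm{I}}, \phi_{\mathrm{II}}$ remain.
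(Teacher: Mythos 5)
Your Lipschitz-transport argument is exactly what the paper intends here: the paper gives no proof of Lemma~\ref{blowup} at all, deferring to the Penalty Lemma of \cite{MFCS} via the footnote, and the chain of estimates you write down (two Lipschitz moves of $L\beta$ sandwiching the exact equilibrium inequality at $x$, yielding $2L\beta$ up to the constant bookkeeping the footnote allows) is the standard and correct way to carry it out. You also correctly isolate the one genuinely delicate point, which the paper's statement glosses over: the deviation $s_i\in\mathcal{R}_i(x^*_{-i})$ need not be admissible against $x_{-i}$. Your proposed fix, however, is under-justified as written: Lipschitzness of the defining constraints $g_i,h_i$ tells you only that $s_i$ is \emph{almost} feasible for $x_{-i}$ (i.e. $g_i(s_i,x_{-i})\le O(L\beta)$); it does not by itself produce a genuinely feasible $\tilde s_i$ at distance $O(\beta)$, since the feasible region can be thin or degenerate near its boundary. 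What actually licenses that step is the Hausdorff-Lipschitz continuity of the correspondence $\mathcal{R}$, which the paper assumes throughout its computational formulations (or, equivalently, a regularity condition such as Slater's on the constraint system); you should invoke that hypothesis explicitly rather than the informal ``boundary moves by $O(L\beta)$'' claim. With that substitution the argument is complete, and your closing observation --- that in the remedial $L/F$ application the sets $G_{\mathrm{I}},G_{\mathrm{II}}$ are built so that feasibility transport is vacuous and only the two Lipschitz estimates on the losses are needed --- correctly explains why the paper can afford to state the lemma with the bare constant $L\beta$.
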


The above-mentioned lemma can be applied to remedial L/F-equilibrium as remedial L/F-equilibrium can be written in the format of a generalized equilibrium problem.


\begin{theorem}\label{remedial}
In a multi-leader-follower game, the problem of finding an approximate remedial $L/F$-equilibrium where the constraints of the followers are given by strong separation oracles is PPAD-complete.
\end{theorem}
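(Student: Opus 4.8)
The plan is to establish PPAD-hardness and PPAD-membership separately, with membership being the substantive part. For \emph{hardness}, observe that an ordinary Nash equilibrium problem (which is PPAD-hard via Corollary~\ref{nashvi2}) can be encoded as a degenerate remedial $L/F$-game: take trivial follower structure (or followers with a unique, trivially-computable response), let the leaders' loss functions be $-u_{\mathrm{I}}$ and $-u_{\mathrm{II}}$ for a two-player game, and let $Z^{\mathrm{I}}, Z^{\mathrm{II}}$ be constant correspondences so that $G_{\mathrm{I}}, G_{\mathrm{II}}$ reduce to the (box-)strategy sets. An approximate solution of the resulting remedial $L/F$-equilibrium problem is then an approximate Nash equilibrium, and the reduction is computable by linear arithmetic circuits. (If a $k$-leader generalization is in force the same works with $k$ players.)

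For \emph{membership}, I would go through the chain: remedial $L/F$-equilibrium $\to$ generalized (Nash) equilibrium of the game $(\mathcal{X}, \mathcal{G}, \phi)$ from Remark~\ref{remarkG} $\to$ a computational GQVI instance via Proposition~\ref{GQVIAPPRox} $\to$ apply Theorem~\ref{gqviinppad}. Concretely: first, using Remark~\ref{remarkG}, rewrite each leader's surrogate problem \eqref{leader1O}, \eqref{leader2O} as minimization of $\phi_{\ell}$ over the correspondence $G_{\ell}$; the key point is that a pair $(x^*_{\mathrm{I}}, x^*_{\mathrm{II}})$ with witnesses $(y^*_{\mathrm{I}}, y^*_{\mathrm{II}})$ is a remedial $L/F$-equilibrium exactly when $(x^*_{\mathrm{I}}, y^*_{\mathrm{I}}, x^*_{\mathrm{II}}, y^*_{\mathrm{II}})$ is a generalized equilibrium of this $2$-player game (players $=$ leaders, each controlling its own $x_\ell, y_\ell$), since each leader's constraint set depends only on the other leader's strategy. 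Next, apply Proposition~\ref{GQVIAPPRox} to transform this into a GQVI $(\mathcal{F}, \mathcal{G})$ with $\mathcal{F}(x) = \partial_{x_{\mathrm{I}}}\phi_{\mathrm{I}} \times \partial_{x_{\mathrm{II}}}\phi_{\mathrm{II}}$; one must check that $\mathcal{F}$ and $\mathcal{G}$ are convex-valued, Lipschitz, and admit strong separation oracles given by linear arithmetic circuits — $\mathcal{G}$'s oracle is built from the polyhedral oracles for $Z^{\mathrm{I}}, Z^{\mathrm{II}}$ (footnoted in the construction), and $\mathcal{F}$'s from the subgradient/automatic-differentiation results of \cite{Barton2018} (Remark after Proposition~\ref{GQVIAPPRox}). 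Then Theorem~\ref{gqviinppad} gives a computational GQVI solution: tuples $(x,w), (x^*,w^*)$ with $\|(x,w)-(x^*,w^*)\|_2^2 \le \beta$, $x^* \in \mathcal{G}(x)$, $w^* \in \mathcal{F}(x)$, and $(y-x)^T w^* + \beta \ge 0$ for all $y \in \mathcal{G}(x)$.

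Finally, I would convert this GQVI solution back to an approximate remedial $L/F$-equilibrium. Take $x^* \in \mathcal{G}(x)$ as the candidate; it is $\beta$-close to $x$, and $\mathcal{G}$ is defined so that $x^* \in \mathcal{G}(x)$ suffices (no need for $x^* \in \mathcal{G}(x^*)$, as noted before the lemma). The variational inequality $(y-x)^T w^* + \beta \ge 0$ together with convexity of $\phi_{\mathrm{I}}, \phi_{\mathrm{II}}$ (via the subgradient inequality, since $w^*$ is a selection of subgradients) shows $x$ approximately minimizes each $\phi_\ell$ over $\mathcal{G}(x)$ up to $O(\beta)$ — i.e., $x$ is an $O(\beta)$-approximate generalized equilibrium; then the Generalized Penalty Lemma~\ref{blowup} propagates this to $x^*$ with loss $L\cdot O(\beta)$ for $L$ the Lipschitz constant of the circuits, yielding $\phi_{\mathrm{I}}(x^*_{\mathrm{I}}, x^*_{\mathrm{II}}, y^*_{\mathrm{I}}) \le \beta' + \mathrm{Sol}(\ref{leader1O})$ and similarly for $\mathrm{II}$, after rescaling the input accuracy parameter. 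The non-emptiness and convexity-violation exceptions pass through directly from the GQVI outputs (and the generalized ellipsoid remark).

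\textbf{Main obstacle.} The delicate step is the correctness of the reduction \emph{remedial $L/F$-equilibrium} $\leftrightarrow$ \emph{generalized equilibrium of $(\mathcal{X}, \mathcal{G}, \phi)$}: one must verify that optimizing $\phi_\ell$ over $\operatorname{graph} Z^\ell$ (equivalently over $G_\ell$ of the other leader's strategy) genuinely coincides with the best-response condition of a generalized Nash equilibrium — in particular that the "witness" follower strategies $y^*_{\mathrm{I}}, y^*_{\mathrm{II}}$ can be treated as ordinary strategic variables of the leaders without changing the solution set — and then that the approximate/statistically-close solution delivered by Theorem~\ref{gqviinppad} still certifies the $\mathrm{Sol}(\ref{leader1O})$-type guarantees. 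Keeping track of how the accuracy parameter $\beta$ degrades through the Lipschitz constants of the (compositions of) linear arithmetic circuits, and ensuring all constructed separation oracles remain linear arithmetic circuits of polynomial size, is where the real care is needed.
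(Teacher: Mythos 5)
Your proposal follows essentially the same route as the paper: membership via the chain Remark~\ref{remarkG} $\to$ Proposition~\ref{GQVIAPPRox} $\to$ Theorem~\ref{gqviinppad}, with the conversion back to a remedial equilibrium handled by taking $x^*\in\mathcal{G}(x)$ and invoking the Generalized Penalty Lemma~\ref{blowup}, and hardness via a two-leader game with trivial followers encoding (mixed) Nash equilibrium. Your write-up is in fact more explicit than the paper's two-sentence proof about the correctness of the equivalence with the generalized-equilibrium formulation and the error propagation, but the underlying argument is the same.
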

\begin{proof}
    Consider the format given in Remark \ref{remarkG}, which allows us to transform the problem to a GQVI, so that is is in PPAD.  PPAD-hardness of this problem follows from the hardness of finding a mixed Nash equilibrium (see \cite{Chen}) in a game with $2$ leaders, where the loss functions represent the expected payoff of mixed strategies where the followers have only one strategy and no restrictions. 
      
\end{proof}

\subsection{Computational Complexity of Resilient Nash Equilibrium}



 We now proceed to the computational complexity of a total version of resilient Nash and related notions.  Assume that we have a game $(\mathcal{G},\mathcal{S},\mathcal{U})$ with $k$ players. When $t$ is not a constant, neither $t$-resilient nor $t$-strong Nash (and also strong Nash when $t=k$ and coalition-proof Nash) are likely to be in PPAD according to their definitions (they are required to satisfy an exponential in $t$ number of conditions). Therefore, we will assume that either the number of players in the game we discuss is constant or the parameter $t$ is constant.  Note that we know that $t$-resilient Nash equilibrium is defined for $t\leq k-1$ \cite{Abraham1}.

\problemStatement{Total $t$-Resilient Nash}{
  Input={We receive as input all the following:
  \begin{itemize}
      \item  A game $\mathcal{G}$ with $k$ players represented by $k$ linear arithmetic circuits that represent $k$ concave utility functions,
      \item The strategy sets $\mathcal{S}$, 
      \item  An accuracy parameter $\epsilon$.
  \end{itemize}},
  Output={One of the following cases:
  \begin{itemize}
      \item (Violation of $t$-multi-concavity): A subset $J\in \mathcal{J}$ of at most size $t$, one index $j$, three vectors\footnote{Recall that we declared that each player $j$ will play from $S_j$ and each player controls $n_j$ variables.  In computational game theory, the computational results relevant to strong Nash notions are usually restricted to mixed strategies and normal-form games. We kept the definitions closer to the generalized Nash equilibrium notion for consistency which is also more general. } $\textbf{x}_{J}, \textbf{y}_{J} \in  S_J$ and $ \textbf{x}_{-J} \in S_{-J}$  such that for some $\lambda\in(0,1)$:
\[
\begin{aligned}
  u_j\left(\lambda  \textbf{x}_J+(1-\lambda)  \textbf{y}_J, \textbf{x}_{-J}\right)
&<\lambda  u_j\left(\textbf{x}_J, \textbf{x}_{-J}\right)+(1-\lambda) u_j\left(\textbf{y}_J, \textbf{x}_{-J}\right)\end{aligned}\]
\item One vector $s^*$ which represents the strategy profile of all players that satisfies:
$$
\left(\forall J\in \mathcal{J} \mbox{ s.t } |J|\leq t\right), \forall s_J^\prime\in S_J, \forall j\in J: u_j\left(\textbf{s}^*\right) +\epsilon \geq u_j\left(\textbf{s}^\prime_J, \textbf{s}^*_{-J}\right)
$$  
  \end{itemize}
  }
}


\begin{theorem}
    The problem of finding a total $t$-resilient Nash equilibrium is PPAD-complete.
\end{theorem}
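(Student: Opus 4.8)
The plan is to establish PPAD-hardness and PPAD-membership separately. For PPAD-hardness, I would observe that the $1$-resilient Nash equilibrium problem is, by Remark~\ref{relationshipremark}, exactly the Nash equilibrium problem, and that the $t$-multi-concavity condition for $t=1$ degenerates to ordinary (per-coordinate-block) concavity of each $u_j$ in its own variables. Thus any two-player game with concave-in-own-strategy utilities represented by linear arithmetic circuits — in particular the mixed-extension of a bimatrix game, whose expected-payoff functions are multilinear and hence piecewise-linear convex/concave in each player's block and representable by linear arithmetic circuits — is a legal instance, and the "Violation of $t$-multi-concavity" branch can never fire. Since finding an (approximate) mixed Nash equilibrium of a bimatrix game is PPAD-hard (see \cite{Chen}), and the reduction functions $f,g$ are trivial, PPAD-hardness of the total $t$-resilient Nash problem follows immediately.

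For PPAD-membership the strategy is to reduce to the computational GQVI of Theorem~\ref{gqviinppad} (or, since no external parameter appears in the feasible set, to the VI special case deferred to Section~\ref{Espcases}), via a construction analogous to Corollary~\ref{nashvi2} but with one "super-player" per coalition. Concretely, for each $J\in\mathcal{J}$ with $|J|\le t$ and each $j\in J$, the group-best-response requirement says that on the slice $\textbf{s}^*_{-J}$, the block $\textbf{s}^*_J$ must maximize $u_j(\cdot,\textbf{s}^*_{-J})$ over $S_J$ simultaneously for all $j\in J$. Because the number of coalitions $\sum_{i\le t}\binom{k}{i}$ is polynomial (as $t$ or $k$ is constant), I can aggregate these conditions: set $\mathcal{R}=\prod_i S_i$ (a fixed convex box, so a VI, with a strong separation oracle assembled coordinate-wise from the $S_i$) and define $F(\textbf{s})$ coordinate-wise so that the coordinate block controlled by player $i$ receives the (negated) sub-gradient of a suitable aggregate objective — the natural choice being $\sum_{J\ni i,\,|J|\le t}\ \sum_{j\in J} u_j$ restricted to the variables in block $i$ — with $F$ realized by a linear arithmetic circuit obtained by summing the input circuits and invoking automatic differentiation \cite{Barton2018} for the sub-gradients, exactly as in \cite{concavegames}. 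Under $t$-multi-concavity each such aggregate is concave in each player's own block, so a solution $x^*$ of the VI is, by the standard first-order optimality argument (KKT / the variational characterization of a maximizer of a concave function over a convex set), a point at which no coalition $J$ of size $\le t$ can improve any of its members — i.e.\ an (approximate) $t$-resilient Nash equilibrium. The $\beta$-to-$\epsilon$ slackness is absorbed by the Lipschitz constant of the circuits via the Penalty-Lemma-style argument (Lemma~\ref{blowup}), and a violation of concavity surfaced by the VI solver is translated into the "Violation of $t$-multi-concavity" output by exhibiting the offending $J$, $j$, $\textbf{x}_J,\textbf{y}_J,\textbf{x}_{-J},\lambda$. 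Finally I would remark that the same construction handles $t$-strong Nash and coalition-proof Nash for constant $t$ (or constant $k$): $t$-strong Nash is weaker (strict benefit for all members), so the same $x^*$ works; and coalition-proof Nash for constant $k$ is handled by unwinding the bounded-depth mutual recursion in its definition into polynomially many such VI conditions.

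The main obstacle I anticipate is not the hardness direction but verifying that the aggregated VI actually captures the coalition conditions \emph{exactly}: one must be careful that maximizing the \emph{sum} $\sum_{j\in J}u_j$ over the block $S_J$ is equivalent, under $t$-multi-concavity, to each $u_j(\cdot,\textbf{s}^*_{-J})$ being individually maximized at $\textbf{s}^*_J$ — this equivalence needs the fact that at an interior-or-boundary maximizer the variational inequality $\langle \nabla u_j(\textbf{s}^*), \textbf{s}_J-\textbf{s}^*_J\rangle\le 0$ must hold for each $j$ separately, which does not follow merely from stationarity of the sum unless one sets up the VI with a product structure indexed by \emph{(coalition, member)} pairs rather than a single scalar aggregate. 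The clean fix is to keep the per-$(J,j)$ structure: let the "players" of the VI be the pairs $(J,j)$, let pair $(J,j)$ control a private copy of the $S_J$ variables, and enforce consistency of all copies with a shared profile $\textbf{s}^*$ through additional equality constraints folded into the convex feasible set; then $F$ is genuinely the concatenation of the individual sub-gradients and the argument goes through cleanly. Managing the bookkeeping of these copies, and checking that the resulting feasible set still admits an efficiently computable strong separation oracle built from those of the $S_i$, is where the real work lies, but it is routine given the machinery already developed for GQVI.
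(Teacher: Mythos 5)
Your hardness direction is fine and matches the paper (PPAD-hardness via $1$-resilient $=$ Nash). You have also correctly put your finger on the central difficulty of the membership direction: because coalition blocks $S_J$ overlap across different $J$, a single scalar variational inequality over an aggregate objective cannot be decomposed into the per-$(J,j)$ first-order conditions the way Harker's transformation decomposes over disjoint player blocks. However, your proposed repair does not close the gap, and the reduction target you invoke does not suffice. If the consistency of the private copies $z^{(J,j)}$ is folded into the convex feasible set as equality constraints, then the admissible deviations $y$ are also forced to be consistent, and the scalar VI collapses back to exactly the non-decomposing aggregate sum you rejected; if consistency is instead dropped from the feasible set so that each copy can deviate independently (which is what you need to extract each $\langle \nabla u_j, y_J - s^*_J\rangle \ge -\epsilon$ separately), then nothing in the VI forces the copies in the \emph{solution} to agree, and the recovered point need not be a single coherent strategy profile. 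This tension is not "routine bookkeeping"; it is the reason the paper does not reduce to the scalar $GQVI$/$VI$ of Theorem~\ref{gqviinppad} at all.

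What the paper actually does is generalize the variational inequality itself to a matrix form (MVI/MGQVI, Appendix~\ref{Appendixtotalresilientinppad}): $F(\textbf{s})$ is a matrix with one column per coalition-condition (indexed via the map $\pi$ in Lemma~\ref{reslient2k}), the inequality $(\textbf{y}-\textbf{s}^*)^T F(\textbf{s}^*)+\epsilon\mathds{1}\ge 0$ is required \emph{component-wise}, and the strategy space is never duplicated. Since this object is not an instance of the scalar GQVI, its PPAD-membership must be re-proved from scratch, which is the bulk of the work: a vector-valued robust Berge maximum theorem (Theorem~\ref{berge2}), a modified Kakutani reduction in which $w$ is a matrix that must be encoded into the fixed-point instance, and a generalized subgradient central-cut ellipsoid method (Appendix~\ref{AppendixCentralcutellipsoidsection}) for the resulting vector-valued constrained optimization. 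None of this machinery is supplied or even identified as necessary in your proposal, so the membership half of your argument has a genuine missing idea rather than a deferred technicality.
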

\begin{proofsketch}
    We need to further generalize the techniques of \cite{concavegames} and also the definitions of variational inequality problems to obtain a version more appropriate for this application.  This stems from the fact that for a $t$-resilient Nash equilibrium with conditions such as multi-concavity, we need to involve exponentially many equations in the constant $t$ (see the definition of $t$-resilient Nash equilibrium). Specifically, we need to make necessary adjustments to both the constrained optimization problem and the robust version of Berge's maximum theorem of \cite{concavegames}. Roughly, we show that a matrix form of variational inequalities may still be solved using Kakutani's fixed point, by an appropriate transformation. Solving this modified constrained optimization problem also requires a slightly generalized sub-gradient ellipsoid central cut method. More information is available in  Appendix \ref{Appendixtotalresilientinppad} and \ref{AppendixCentralcutellipsoidsection}. The transformation from the $t$-resilient Nash equilibrium problem to our generalized version of variational inequalities is also slightly different from the famous transformation introduced by \cite{Harker} (compare Corollary \ref{nashvi1} and Lemma \ref{reslient2k}).  Dealing with the non-differentiability of linear arithmetic circuits is similar. For PPAD-hardness, we consider the problem of finding a mixed approximate Nash equilibrium ($1$-resilient Nash equilibrium) in which the expected payoff is concave. 
\end{proofsketch}

\begin{remark}
    We also do not need to use the generalized Penalty Lemma due to the fact that we use the generalized version of VI for this proof not the generalized version of GQVI.
\end{remark}

\begin{corollary}
    The problem of finding a  $t$-strong Nash equilibrium ($t\leq k-1$ and $t$ is a constant) and coalition-proof Nash equilibrium (with constant players) under multi-concavity are PPAD-complete.
\end{corollary}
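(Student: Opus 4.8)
The plan is to get PPAD-membership by reducing each problem to \textsc{Total $t$-Resilient Nash} — run with parameter $t$ for the $t$-strong problem and with parameter $k-1$ for the coalition-proof problem — which is PPAD-complete by the preceding theorem, and to get PPAD-hardness by reducing from the (mixed) Nash equilibrium problem. The engine behind the membership reductions is Remark~\ref{relationshipremark}: every $t$-resilient Nash equilibrium is a $t$-strong Nash equilibrium, and every strong Nash equilibrium — which, since this paper's strong Nash equilibrium quantifies only over proper coalitions, is precisely a $(k-1)$-strong Nash equilibrium — is a coalition-proof Nash equilibrium. What must be added is the passage from these exact set inclusions to their approximate, computational counterparts, together with a check that the ``violation of $t$-multi-concavity'' side-outputs correspond.

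\emph{Membership.} For $t$-strong Nash this is essentially immediate: pass the given game, strategy sets, and accuracy $\epsilon$ to the \textsc{Total $t$-Resilient Nash} solver; a returned $t$-multi-concavity certificate is verbatim valid for the $t$-strong instance, and otherwise the returned $s^*$ satisfies $u_j(s^*)+\epsilon\ge u_j(s'_J,s^*_{-J})$ for every $J$ with $|J|\le t$, every $s'_J\in S_J$, and every $j\in J$, so that, since ``no member of $J$ gains more than $\epsilon$'' implies ``not every member of $J$ gains more than $\epsilon$'', $s^*$ is an $\epsilon$-approximate $t$-strong Nash equilibrium (and, from the $|J|=1$ instances, an $\epsilon$-approximate Nash equilibrium). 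For coalition-proof Nash with constant $k$ I would run \textsc{Total $(k-1)$-Resilient Nash} and output its profile $s^*$; by Remark~\ref{relationshipremark} an exact $(k-1)$-resilient Nash equilibrium is coalition-proof, and the remaining task is to check that an $\epsilon$-approximate one is an $O(\epsilon)$-approximate coalition-proof Nash equilibrium. I would do this by strong induction on coalition size, unwinding the mutual recursion defining self-enforceability and coalition-proofness: restricting $s^*$ to a proper coalition $J$ (complement fixed at $s^*_{-J}$) gives, via the resilience inequalities applied to the sub-coalitions $J'\subseteq J$ (all of size $\le k-1$), a profile that is $\epsilon$-resilient against every sub-coalition of the induced game $\mathcal{G}/s^*_{-J}$; the base case $|J|=1$ is exactly the $\epsilon$-approximate maximization demanded of a coalition-proof equilibrium of a one-player game, which holds since $s^*$ is an approximate Nash equilibrium; and the ``no self-enforcing Pareto improvement'' clause inside each induced game holds because a strict Pareto improvement for a coalition of size $\le k-1$ is in particular a strict improvement for \emph{some} member, contradicting resilience. ($(k-1)$-multi-concavity gives concavity in every proper coalition's joint strategy, hence in each single player's strategy, so the base-case maximizations are well posed and the multi-concavity certificates transfer.)

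\emph{Hardness.} When $t=1$, a $1$-strong Nash equilibrium is a Nash equilibrium (Remark~\ref{relationshipremark}), and finding an approximate mixed Nash equilibrium of a bimatrix game is PPAD-hard \cite{Chen}; its mixed extension has bilinear expected payoffs, so $1$-multi-concavity (concavity in each player's own strategy) holds, yielding hardness of the $t$-strong problem. For coalition-proof Nash I would take $k=2$, where a coalition-proof Nash equilibrium is a two-player Nash equilibrium (Remark~\ref{relationshipremark}), so the same bimatrix reduction applies, multi-concavity again being concavity in each player's own strategy. For $t$-strong Nash with $t\ge 2$ I would reduce from the bimatrix Nash problem through a verification-gadget game: embed the bimatrix game into a $(t{+}1)$-player game whose added ``mediator'' players have payoff-neutral strategy sets except for an ``accuse'' action that is profitable precisely when the proposed profile fails to be a Nash equilibrium of the bimatrix game, with payoffs arranged so that no coalition of size $\le t$ has a joint profitable deviation at the honest profiles; being built from bilinear and affine pieces, this game is multi-concave. (Alternatively, one can try to show that the game used in the hardness proof of \textsc{Total $t$-Resilient Nash} has the property that its $t$-strong and $t$-resilient equilibria coincide, which would let that reduction serve directly.)

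\emph{Main obstacle.} The step I expect to be hardest is the approximate coalition-proof case: correctly propagating the accuracy parameter through the mutual recursion and, above all, handling the top-level Pareto clause — a deviation by the \emph{grand} coalition, which the $(k-1)$-resilience inequalities do not directly control — where I would fall back on the set inclusion asserted in Remark~\ref{relationshipremark} rather than re-deriving it. The $t\ge 2$ hardness gadget is the second nontrivial point, since a naive padding by dummy players does not suffice. Combining the membership and hardness parts then yields the corollary, for every constant $t\le k-1$ in the $t$-strong case and for every constant number of players in the coalition-proof case.
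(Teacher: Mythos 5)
Your proposal follows essentially the same route as the paper, which states this corollary without further argument as an immediate consequence of Remark~\ref{relationshipremark} (every $t$-resilient equilibrium is $t$-strong, and every strong equilibrium is coalition-proof) combined with the PPAD-completeness of \textsc{Total $t$-Resilient Nash} for membership and the $1$-resilient/Nash equivalence for hardness. Your additional elaborations --- the induction through the self-enforceability recursion for the approximate coalition-proof case and the gadget for hardness at fixed $t\ge 2$ --- go beyond anything the paper supplies and correctly flag the points (the grand-coalition Pareto clause and the approximation slack in the coalition-proof definition) that the paper leaves implicit.
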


\section{Proof of Inclusion in PPAD of GQVI}\label{Variationalsectionproof}

 This section is organized as follows. First, we provide an overview of the contributions from \cite{concavegames} and offer a comparison with our work.  Next, we give computational definitions for several forms of variational inequalities including more general forms QVI and VI without a restriction to linear arithmetic circuits.

\subsection{Some Essential Elements from (\cite{concavegames})}\label{EC23section}

In this section, we present additional definitions and theorems from \cite{concavegames} that we use directly, indirectly, or with modifications. Specifically, for strong separation oracles, in addition to the definition of the strong constrained convex optimization and the strong approximate projection problem, we provide the computational definition of Kakutani's fixed point, finding an approximate generalized equilibrium in concave games and their PPAD-completeness statement. Also, we provide the robust version of Berge's maximum theorem which is essential in the transformation of the generalized equilibrium problem to Kakutani's fixed point (together with the strong constrained convex optimization problem).

\begin{definition} We also need to define Hausdorff semi-continuity notions:
\begin{itemize}
 
    \item We define the set-point distance of a point $x$ from a set $X$ to be $\operatorname{dist}(x, X):=\inf _{z \in X} \mathrm{~d}(x, z)$. 
    \item Let $X$ and $Y$ be two non-empty sets. We  define their Hausdorff distance  $\mathrm{d}_{\mathrm{H}}(X, Y)$ for two sets $X$ and $Y$ to be:
    \[\max \left\{\sup _{x \in X} \operatorname{dist}(x, Y), \sup _{y \in Y} \operatorname{dist}(X, y)\right\} =\inf \{\epsilon \geq 0: X \subseteq \overline{\mathrm{B}}(Y, \epsilon) \wedge y \subseteq \overline{\mathrm{B}}(X, \epsilon)\}\].
    \item A correspondence $\mathcal{F}: X \rightrightarrows Y$ is called (Hausdorff) upper semi-continuous at a point $x_0 \in X$ if and only if for every $\epsilon>0$ there exists a neighborhood $N$ of $x_0$ such that for all $x \in N\left(x_0\right)$, $\mathcal{F}(x) \subseteq \overline{\mathrm{B}}\left(\mathcal{F}\left(x_0\right), \epsilon\right)$.
    \item  A correspondence $\mathcal{F}: X \rightrightarrows Y$ is called (Hausdorff) lower semi-continuous at a point $x_0 \in X$ if and only if for every $\epsilon>0$ there is a neighborhood $N$ of $x_0$ such that $\mathcal{F}\left(x_0\right) \subseteq \overline{\mathrm{B}}(\mathcal{F}(x), \epsilon)$ for all $x \in N\left(x_0\right)$.
    \item  A correspondence $\mathcal{F}: X \rightrightarrows Y$ is called  $L$-Hausdorff Lipschitz continuous if there exists a positive constant $L \in \mathbb{R}^+$ such that, for all $x_1$ and $x_2$ in $X, \mathrm{d}_{\mathrm{H}}\left(\mathcal{F}\left(x_1\right), \mathcal{F}\left(x_2\right)\right) \leq L \cdot\mathrm{d}\left(x_1, x_2\right)$. If the function $\mathcal{F}(x)$ is compact for every $x$ and possesses global Hausdorff Lipschitz properties, then the conditions for both upper and lower semi-continuity are inherently satisfied.
    \item A real-valued correspondence $\mathcal{F}$ on  Euclidean space satisfies  Holder's condition, or is $(q,p)$-Holder continuous (where $q>0$ and $p>1$), when there are real positive constants $\kappa$ and $c$ such that $d_H(\mathcal{F}(y_1),\mathcal{F}(y_2))\leq  \kappa||(y_1)-(y_2)||^q_p+c$.
\end{itemize}
\end{definition}

\subsection{Computational Kakutani's Fixed Point Problem}

We define the computational version of Kakutani's fixed point problem and restate the results of \cite{concavegames}. The definition of the computational problem of finding a Kakutani fixed-point is given as follows:
\problemStatement{Kakutani With A strong Separation Oracle (via $C_{\mathcal{R}(x)}$)}{
  Input={A circuit $C_\mathcal{R}$ that represents strong separation oracle for a $L$-Hausdorff Lipschitz correspondence $\mathcal{R}:\mathbb{R}^{m*} \rightrightarrows\mathbb{R}^{m*}$ and an accuracy parameter $\alpha$. 
  },
  Output={One of the following cases:
  \begin{itemize}
      \item (Violation of non-emptiness): A vector $x \in$ $\mathbb{R}^{m*}$ such that $\overline{\mathrm{B}}(\mathcal{R}(x),-\epsilon)=\emptyset$.
\item  (Violation of L-Hausdorff Lipschitzness)
Four vectors $p, q, z, w \in\mathbb{R}^{m*}$ and a constant $\epsilon>0$ such that $w=\widetilde{\Pi}_{\mathcal{R}(q)}^{\epsilon}(q)$ and $z=\widetilde{\Pi}_{\mathcal{R}(p)}^{\epsilon}(w)$ but $\|z-w\|>L\|p-q\|+3\epsilon$.
\item  Vectors $x, z \in\mathbb{R}^{m*}$ such that $\|x-z\| \leq$ $\alpha$ and $z \in \mathcal{R}(x) \Leftrightarrow d(x, \mathcal{R}(x)) \leq \alpha$.
  \end{itemize}
  }
}

\begin{theorem}[\cite{concavegames}]
    The  Kakutani fixed-point problem  for a strong separation oracle is PPAD-complete.
\end{theorem}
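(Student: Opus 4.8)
The statement asserts both PPAD\nobreakdash-hardness and PPAD\nobreakdash-membership of the computational Kakutani problem with a strong separation oracle, and the plan is to treat the two directions separately.

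For PPAD\nobreakdash-hardness I would reduce from the computational Brouwer fixed\nobreakdash-point problem for a Lipschitz function given by a circuit, which is PPAD\nobreakdash-complete. Given such an $f:\mathbb{R}^{m*}\to\mathbb{R}^{m*}$ that is $L$\nobreakdash-Lipschitz, define the correspondence $\mathcal{R}(x)=\{f(x)\}$. It is singleton\nobreakdash-valued, hence convex\nobreakdash-valued, and it is $L$\nobreakdash-Hausdorff Lipschitz since $\mathrm{d}_{\mathrm{H}}(\{f(x_1)\},\{f(x_2)\})=\|f(x_1)-f(x_2)\|\le L\,\mathrm{d}(x_1,x_2)$. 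A strong separation oracle $C_{\mathcal{R}(x)}$ is obtained as follows: on input $z$, compute $f(x)$ from the circuit; if $z=f(x)$ up to the encoding precision, return the membership flag $b>\frac12$; otherwise return $b\le\frac12$ together with $a=(z-f(x))/\|z-f(x)\|_\infty$, which satisfies $\langle a, y-z\rangle=\langle a, f(x)-z\rangle\le 0$ for the unique $y=f(x)\in\mathcal{R}(x)$. Since $\mathcal{R}(x)$ is never empty and is genuinely $L$\nobreakdash-Hausdorff Lipschitz, neither escape clause of the Kakutani problem can be returned, so any solution is a pair $x,z$ with $z\in\mathcal{R}(x)$ and $\|x-z\|\le\alpha$, i.e. $\|x-f(x)\|\le\alpha$, an approximate Brouwer fixed point; choosing $\alpha$ to match the Brouwer approximation parameter completes the reduction. (One could instead reduce from the approximate Nash equilibrium problem via the Nash map, but Brouwer is cleaner.)

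For membership in PPAD I would reduce to the computational Brouwer problem, which lies in PPAD. Given $C_{\mathcal{R}}$ and accuracy $\alpha$, fix an auxiliary precision $\epsilon>0$ (an explicit polynomial in $\alpha$, $L$, $m$ and the side length of the box $\mathbb{R}^{m*}$) and define $g(x):=\widetilde{\Pi}_{\mathcal{R}(x)}^{\epsilon}(x)$. Evaluating $g(x)$ is an instance of the Strong Approximate Projection Problem, solvable by the sub\nobreakdash-gradient ellipsoid central\nobreakdash-cut method using $C_{\mathcal{R}}$; it either returns a point within squared $\ell_2$\nobreakdash-distance $\epsilon$ of $\Pi_{\mathcal{R}(x)}(x)$, or returns a witness that $\overline{\mathrm{B}}(\mathcal{R}(x),-\epsilon)=\emptyset$, in which case we output the ``Violation of non\nobreakdash-emptiness'' case for that $x$. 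Because projection onto a fixed convex set is $1$\nobreakdash-Lipschitz and $\mathcal{R}$ is $L$\nobreakdash-Hausdorff Lipschitz, $x\mapsto\Pi_{\mathcal{R}(x)}(x)$ is $(L+1)$\nobreakdash-Lipschitz, so $g$ is $(L+1)$\nobreakdash-Lipschitz up to the controlled projection error, which is exactly what is needed to feed $g$ into the Brouwer machinery; an approximate fixed point $x$ of $g$ then satisfies $\|x-z\|\le\alpha$ for $z=g(x)\in\mathcal{R}(x)$, i.e. $\mathrm{d}(x,\mathcal{R}(x))\le\alpha$. Whenever the internal Lipschitz analysis is contradicted by the points produced along the way, namely $w=\widetilde{\Pi}_{\mathcal{R}(q)}^{\epsilon}(q)$, $z=\widetilde{\Pi}_{\mathcal{R}(p)}^{\epsilon}(w)$ yet $\|z-w\|>L\|p-q\|+3\epsilon$, we instead output the ``Violation of $L$\nobreakdash-Hausdorff Lipschitzness'' case. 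This makes the search problem total, and it many\nobreakdash-one reduces to the Brouwer problem, hence to \textsc{EOTL}, giving membership in PPAD.

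The main obstacle is the membership direction, specifically the bookkeeping of approximation errors: one must pick $\epsilon$ as an explicit polynomial in $\alpha, L, m$ and the box size so that (i) the ellipsoid\nobreakdash-based approximate projection is accurate enough that $g$ is ``Lipschitz enough'' for the Brouwer reduction and that an $\alpha'$\nobreakdash-approximate fixed point of $g$ yields $\mathrm{d}(x,\mathcal{R}(x))\le\alpha$, and (ii) the failure modes of the projection subroutine and the violations of the promised Hausdorff\nobreakdash-Lipschitz bound are correctly surfaced as the two escape clauses, so that the resulting search problem is genuinely total. Verifying that the robust (approximate) composition still behaves like a genuine continuous self\nobreakdash-map, so that the discretized PPAD/Sperner argument underlying the Brouwer problem applies, is where the bulk of the technical work lies; the hardness direction is essentially immediate once the singleton\nobreakdash-valued correspondence and its separation oracle are written down.
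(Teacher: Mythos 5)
This theorem is not proved in the present paper at all: it is imported verbatim from \cite{concavegames}, so there is no in-paper argument to compare against. Your overall architecture (hardness from Brouwer via a degenerate correspondence; membership by turning $\mathcal{R}$ into the self-map $x\mapsto\widetilde{\Pi}^{\epsilon}_{\mathcal{R}(x)}(x)$ and invoking Brouwer) does match the strategy of the cited source, but as written both directions have a genuine gap.

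In the hardness direction, taking $\mathcal{R}(x)=\{f(x)\}$ does not work against the problem as specified here: the ``violation of non-emptiness'' clause asks for an $x$ with $\overline{\mathrm{B}}(\mathcal{R}(x),-\epsilon)=\emptyset$, and for a singleton the inner $\epsilon$-parallel body is empty for \emph{every} $\epsilon>0$, so a Kakutani solver may legally return that escape clause on every input and never produce a fixed point. You need to fatten the image, e.g.\ $\mathcal{R}(x)=\overline{\mathrm{B}}(f(x),r)$ for some $r$ exceeding the relevant inner-body margin; this is still convex-valued and $L$-Hausdorff Lipschitz, admits an easy strong separation oracle, and an approximate Kakutani point then gives $\|x-f(x)\|\le\alpha+r$, which suffices after rescaling the parameters.

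In the membership direction, the claim that $x\mapsto\Pi_{\mathcal{R}(x)}(x)$ is $(L+1)$-Lipschitz is false. Projection onto a \emph{fixed} convex set is $1$-Lipschitz in the point, but the projection of a fixed point onto a \emph{varying} convex set is only H\"older-$1/2$ continuous with respect to the Hausdorff distance: if $\mathrm{d}_{\mathrm{H}}(C,D)\le\delta$ then $\|\Pi_C(x)-\Pi_D(x)\|$ can be as large as order $\sqrt{\delta\cdot\mathrm{d}(x,C)}$ (a short segment rotated by an angle $\theta$ moves only $O(\ell\theta)$ in Hausdorff distance but displaces the projection of a distant point by $O(\mathrm{d}(x,C)\cdot\theta)$). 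Hence the composite map is only $O\bigl(\sqrt{L\cdot\mathrm{diam}}\bigr)$-H\"older-$1/2$; this is exactly why $(1/2,p)$-H\"older continuity, rather than Lipschitzness, pervades this paper and \cite{concavegames} (see Theorem \ref{berge}). The reduction is salvageable --- a polynomially bounded H\"older modulus still lets the Sperner grid be $\mathrm{poly}(1/\alpha)$ fine --- but your error budget must be redone with the square root, and you still owe the argument that the ellipsoid-based evaluation of $g$ can be packaged into the circuit/white-box format that the PPAD-complete Brouwer problem actually accepts, which is where much of the work in \cite{concavegames} lies.
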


\begin{remark}
    The computational  Kakutani problem with separation oracles seeks the inherently weaker assurance that each point in the space has a singular projection onto $\mathcal{R}$. The convexity of $\mathcal{R}$ (i.e., $\mathcal{R}$ is a convex valued correspondence) is adequate for ensuring the uniqueness of the nearest point (for more information, see \cite{concavegames}).
\end{remark}

\subsection{Using Arithmetic Circuits}

   In general, when we are restricted to queries to a zeroth and first-order oracle or when inspecting a circuit or Turing machine description, evaluating consistency between function values and their gradients poses computational challenges. For a computational version of Kakutani's fixed-point, an essential requirement that we need to impose on these oracles is that the output which is a separating hyperplane, should have polynomial bit complexity with respect to the given input parameters. The proofs in \cite{concavegames} utilize instances where this consistency can be assured syntactically by using linear arithmetic circuits which are proven to be useful and general enough as we do not lose representation power (see \cite{Fearnley,concavegames} and Appendix \ref{arithmeticcircuits}).

\subsection{Robust Berge's Maximum Theorem}

The paper  \cite{concavegames}  leverages the following quantified version of Berge’s theorem for general convex functions.

\begin{theorem}[\cite{concavegames}]\label{berge}
  (Robust Berge Maximum Theorem). Let $A \subseteq \mathbb{R}^n$ and $B \subseteq \mathbb{R}^{m}$. Consider a continuous function $f: A \times B \rightarrow \mathbb{R}$ that is $\mu$-strongly concave $\forall a \in A, L$-Lipschitz in $A \times B$ and also a $L^{\prime}$-Hausdorff Lipschitz, non-empty, convex-set, compact-valued correspondence $g: A \rightrightarrows B$. Define $f^*(a)=\max _{b \in g(a)} f(a, b)$ and $g^*(a)=\arg \max _{b \in g(a)} f(a, b)$. Then, we observe $f^*$ is continuous and $g^*$ is upper semi-continuous and single-valued, i.e., continuous. Furthermore, $f^*$ and $g^*$ are Lipschitz and $\left(L^{\prime}+2 \sqrt{\frac{4}{\mu}} \sqrt{\left(L+L \cdot L^{\prime}\right)}\right)-(1 / 2,p)$-Holder continuous respectively (for sufficiently small differences).
\end{theorem}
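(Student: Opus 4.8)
The plan is to derive the quantitative estimates directly and read off the qualitative conclusions from them, since a Lipschitz (resp.\ Hölder) bound on $f^{*}$ (resp.\ $g^{*}$) already implies continuity. The one genuinely separate preliminary is that $g^{*}$ is a well-defined single-valued function: for each $a$, $g(a)$ is non-empty and compact and $f(a,\cdot)$ is continuous, so $\arg\max_{b\in g(a)}f(a,b)$ is non-empty by Weierstrass; and since $g(a)$ is convex and $f(a,\cdot)$ is $\mu$-strongly concave, this maximizer is unique. Once $g^{*}$ is shown to be $(1/2,p)$-Hölder it is in particular continuous, which also yields that the a priori set-valued $\arg\max$ is upper semi-continuous and single-valued; and the Lipschitz bound on $f^{*}$ below subsumes its continuity. (Alternatively one may invoke the classical Berge theorem, using the remark in the definitions above that an $L'$-Hausdorff-Lipschitz compact-valued correspondence is automatically both upper and lower semi-continuous.)

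For the Lipschitz bound on $f^{*}$, fix $a_{1},a_{2}\in A$ and set $b_{i}=g^{*}(a_{i})$. Since $\mathrm{d}_{\mathrm{H}}(g(a_{1}),g(a_{2}))\le L'\,\mathrm{d}(a_{1},a_{2})$, there is $\tilde b_{2}\in g(a_{1})$ with $\|\tilde b_{2}-b_{2}\|\le L'\,\mathrm{d}(a_{1},a_{2})$; using feasibility of $\tilde b_{2}$ for the problem defining $f^{*}(a_{1})$ and $L$-Lipschitzness of $f$,
\[
f^{*}(a_{1})\ \ge\ f(a_{1},\tilde b_{2})\ \ge\ f(a_{2},b_{2})-L\bigl(\mathrm{d}(a_{1},a_{2})+\|\tilde b_{2}-b_{2}\|\bigr)\ \ge\ f^{*}(a_{2})-L(1+L')\,\mathrm{d}(a_{1},a_{2}),
\]
and the symmetric inequality gives that $f^{*}$ is $L(1+L')$-Lipschitz.

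For the Hölder bound on $g^{*}$, I would first extract a quadratic-growth inequality from $\mu$-strong concavity: testing $f(a_{1},\cdot)$ along the segment $\lambda b+(1-\lambda)b_{1}\in g(a_{1})$ and using that $b_{1}$ maximizes over the convex set $g(a_{1})$, then letting $\lambda\to 0^{+}$, yields $f(a_{1},b_{1})\ge f(a_{1},b)+\tfrac{\mu}{2}\|b-b_{1}\|^{2}$ for every $b\in g(a_{1})$. Applying this at $b=\tilde b_{2}$ and bounding the deficit on the right by the two $L$-Lipschitz comparisons together with the $f^{*}$-Lipschitz bound just proved gives
\[
\tfrac{\mu}{2}\|\tilde b_{2}-b_{1}\|^{2}\ \le\ f^{*}(a_{1})-f(a_{1},\tilde b_{2})\ \le\ \bigl|f^{*}(a_{1})-f^{*}(a_{2})\bigr|+\bigl|f(a_{2},b_{2})-f(a_{1},\tilde b_{2})\bigr|\ \le\ 2L(1+L')\,\mathrm{d}(a_{1},a_{2}),
\]
so $\|\tilde b_{2}-b_{1}\|\le\sqrt{4L(1+L')/\mu}\,\sqrt{\mathrm{d}(a_{1},a_{2})}$, and by the triangle inequality $\|b_{1}-b_{2}\|\le L'\,\mathrm{d}(a_{1},a_{2})+\sqrt{4(L+LL')/\mu}\,\sqrt{\mathrm{d}(a_{1},a_{2})}$. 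For $\mathrm{d}(a_{1},a_{2})$ small the linear term is dominated by the square-root term (equivalently it is absorbed into the additive constant $c$ of the $(q,p)$-Hölder definition), yielding the stated $\bigl(L'+2\sqrt{4/\mu}\,\sqrt{L+LL'}\bigr)$-$(1/2,p)$-Hölder estimate up to the tracking of constant factors.

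The main obstacle I expect is bookkeeping rather than any conceptual difficulty: one must apply the quadratic-growth inequality at a point genuinely feasible for $g(a_{1})$ (hence the detour through $\tilde b_{2}$ rather than $b_{2}$), keep track of whether the ambient norm on $A\times B$ is the sum or the Euclidean combination when invoking $L$-Lipschitzness of $f$, and recognize that folding the honestly-linear $L'\,\mathrm{d}(a_{1},a_{2})$ term into the dominant $\sqrt{\mathrm{d}(a_{1},a_{2})}$ term is precisely what the ``for sufficiently small differences'' qualifier — i.e.\ the additive constant in the $(q,p)$-Hölder definition — licenses; the precise constant in the statement then drops out of these estimates.
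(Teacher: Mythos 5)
Your proposal is correct and follows essentially the same route as the paper's own argument (given for the generalized version in Appendix~D, following Theorem~3.20 of \cite{concavegames}): single-valuedness from strong concavity on the convex set $g(a)$, the Lipschitz bound on $f^*$ via a point of $g(a_1)$ within Hausdorff distance $L'\,\mathrm{d}(a_1,a_2)$ of $g^*(a_2)$, and the H\"older bound on $g^*$ by combining the triangle inequality through that same point with a quadratic-growth estimate and the $f^*$ bound. The only cosmetic difference is that you derive quadratic growth from first-order optimality plus strong concavity, whereas the paper uses the equivalent midpoint (``Apollonius'') inequality; the resulting constants agree up to the factor of~$2$ you already flag.
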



\subsection{Strongly Concave Games}

The paper \cite{concavegames} establishes PPAD-completeness of an instance of concave games introduced in the celebrated work of Rosen \cite{rosen} (also called \emph{concave games} \cite{100challenge,concavegames,rosen}).

\problemStatement{Strongly Concave Games Problem With SO}{
  Input={We receive as input all the following:
  \begin{itemize}
      \item $k$ circuits representing the utility functions $\left(u\right)_{i=1}^k$ for all $k$ players,
      \item A Lipschitzness parameter $L$, a strong concavity parameter $\mu$, and accuracy parameter $\epsilon$, 
      \item $\mathcal{S}=\Pi_{i=1}^k S_i$ that is a convex set called \emph{the strategy domain} where $S_i$ represent the strategy domain for each player $i$,
      \item An arithmetic circuit representing a strong separation oracle for the common constraint $\mathcal{R} $ that is a non-empty, convex, and compact subset of $\mathcal{S}$,
      \item Accuracy parameter $\epsilon$.
  \end{itemize}
  },
  Question={We output one of the following:
  \begin{itemize}
   \item (Violation of non-emptiness):
A certificate that  $\overline{\mathrm{B}}(\mathcal{R},-\epsilon)=\emptyset$.
 \item (Violation of Lipschitz continuity):
A certification that there exist at least two vectors $x, y \in \mathcal{S} $ and an index $i \in[n]$ such that $\left|u_i(x)-u_i(y)\right|>L \|x-y\|$.
\item  (Violation of strong concavity):
An index $i \in[n]$, three vectors $x_i, y_i \in S_i, \textbf{x}_{-i} \in S_{-i}=\Pi_{j=1,j\neq i}^k S_j$ and a number $\mu \in[0,1]$ such that:
\[
\begin{aligned}
  u_i\left(\lambda  x_i+(1-\lambda)  y_i, \textbf{x}_{-i}\right)
&<\lambda  u_i\left(x_i, \textbf{x}_{-i}\right)+(1-\lambda) u_i\left(y_i, \textbf{x}_{-i}\right)\\ &+\frac{\lambda(1-\lambda)}{2} \mu \cdot\left\|\left(x_i, \textbf{x}_{-i}\right)-\left(y_i, \textbf{x}_{-i}\right)\right\|_2^2  
\end{aligned}\]
\item An $\epsilon$-approximate generalized equilibrium  in the sense of {\rm (\ref{eq:bestcommonresp})}.
  \end{itemize}
  }
}

\begin{theorem}[\cite{concavegames}]\label{concaveppad}
    The Strongly Concave Games for a strong separation oracle is PPAD-complete.
\end{theorem}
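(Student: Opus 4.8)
The plan is to prove the two directions separately; PPAD-hardness is routine and membership is the substantive part. For hardness I would reduce from the problem of computing an approximate mixed Nash equilibrium of a bimatrix (or polymatrix) game, which is PPAD-hard. In a normal-form game the expected utility of player $i$ is linear, hence concave, in that player's own mixed strategy, so Nash equilibrium is a degenerate instance of a concave game whose constraint set $\mathcal{R}$ is a product of simplices. To meet the strong-concavity requirement I would add a regularizer $-\tfrac{\mu}{2}\|x_i\|_2^2$ to each $u_i$ with $\mu$ polynomially small; a standard perturbation argument shows an $\epsilon$-approximate equilibrium of the regularized game is an $(\epsilon+O(\mu))$-approximate Nash equilibrium of the original, and the simplices admit a linear arithmetic circuit computing a separation oracle, so the reduction is poly-time and many-one.

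For membership in PPAD I would make Rosen's existence argument effective. Define $T(x)=\arg\max_{y\in\mathcal{R}}\sum_{i=1}^{k}u_i(y_i,x_{-i})$; a fixed point of $T$ is a (normalized) generalized equilibrium, and under $\mu$-strong concavity the maximand is strictly concave in $y$, so $T$ is single-valued and, by the Robust Berge Maximum Theorem (Theorem~\ref{berge}), H\"older/Lipschitz continuous with modulus controlled by $\mu$, the Lipschitz constant $L$, and the Hausdorff-Lipschitz constant of $\mathcal{R}$. Evaluating $T$ at a point is an instance of the Strong Constrained Convex Optimization Problem over $\mathcal{R}$, which is presented by a separation oracle and therefore solvable by the sub-gradient ellipsoid central-cut method, with sub-gradients of the circuit utilities extracted by automatic differentiation. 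Composing these, I would hand $T$ together with its continuity certificate to the computational Kakutani fixed-point problem (equivalently, directly to a Brouwer instance), both of which lie in PPAD, and read off an approximate equilibrium from an approximate fixed point.

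The main obstacle will be the quantitative error analysis. The ellipsoid solver returns only an $\epsilon$-approximate minimizer, i.e.\ an approximate projection, so $T$ is realized only up to error; the computational Kakutani problem returns only an $\alpha$-approximate fixed point; and these errors must be composed with the H\"older modulus of $T$ supplied by Berge's theorem to conclude that the output point is a genuine $\epsilon$-approximate generalized equilibrium. I would therefore set all internal accuracy parameters to appropriate polynomials in the target $\epsilon$ and in $L$, $\mu$, and the bounding-box radius, and verify the composed bound. Finally I would route the promise-failure outputs: if the ellipsoid method certifies $\overline{\mathrm{B}}(\mathcal{R},-\epsilon)=\emptyset$ it emits the non-emptiness witness, and if along the computation the circuits exhibit a violation of Lipschitzness or of strong concavity (e.g.\ an optimization step yields an objective value inconsistent with the promised modulus) it emits the corresponding violation witness; otherwise it returns the approximate equilibrium. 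Since every component --- ellipsoid optimization, the Berge continuity bound, and the computational Kakutani problem --- is a poly-time computation or a many-one reduction into PPAD, the whole construction is a many-one reduction, establishing membership.
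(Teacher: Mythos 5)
Your proposal follows essentially the same route as the paper's (sketched) proof, which itself defers to \cite{concavegames}: Rosen's aggregate best-response map $\arg\max_{y}\sum_i u_i(y_i,x_{-i})$ over $\mathcal{R}$, the Robust Berge Maximum Theorem for its H\"older/Lipschitz continuity, the ellipsoid-based Strong Constrained Convex Optimization problem to realize a separation oracle for the approximate-argmax correspondence, a reduction to the computational Kakutani problem, and PPAD-hardness from approximate Nash. The only real differences are cosmetic --- the paper adds the regularizer $-\gamma\|y\|_2^2$ to $\rho(x,y)=\sum_i u_i(y_i,x_{-i})$ to enforce strong concavity in $y$ robustly, whereas you derive it from the strong-concavity promise on the individual $u_i$ (using a regularizer only in the hardness direction) --- and your parenthetical ``equivalently, directly to a Brouwer instance'' is the one spot to treat with care, since the ellipsoid-computed selection of $T$ need not be continuous in $x$, which is precisely why the Kakutani/separation-oracle formulation is the one actually used.
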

\begin{proofsketch}
    A complete proof is available in \cite{concavegames}. The difference (compared to Rosen's proof \cite{rosen}) lies in how $\rho$ is defined:

\[\rho(x,y)=\sum_{i\in [n]}u_i(y_i,x_{-i})-\gamma \cdot\|y\|_2^2\]
    Here the constant value $\gamma$ ensures that $\rho$ stays a $2\gamma$-strongly concave function in $y$. Using $\rho(x,y)$, the proof constructs the final correspondence $F$ and proves non-emptiness. Specifically, the proof constructs a separation oracle for $F$ by utilizing the robust version of Berge's maximum theorem. This is done in polynomial time by using the constrained convex optimization problem (given a separation oracle that represents the convex set $S$). The reduction concludes by proving a solution provided by Kakutani's fixed point given $F$ as the input can form an approximate solution to the generalized Nash equilibrium problem in concave games using some techniques that are similar to Rosen's proof.
\end{proofsketch}
\begin{remark}
  We will use this approach later in the proof of inclusion in PPAD of variational inequality problems.  However, we utilize the proof of the existence of variational  inequalities provided by \cite{chan} instead of the techniques provided by Rosen.
\end{remark}

\subsection{A Direct Comparison to \cite{concavegames}}

We now compare our approach to \cite{concavegames}. We will use the definition of the strong approximate projection problem from Section~\ref{AppendixvariationalSection} with no modification in the computational definition of the problems we provide. For this paper, we require a different version of the strong constrained convex optimization problem with different inputs and outputs. As mentioned, this problem can be solved using the sub-gradient ellipsoid central cut algorithm. However, for the modified versions we define, for every single problem namely, multi-leader-followers and resilient Nash equilibrium, we need to modify the sub-gradient ellipsoid central cut algorithm to make it appropriate for different inputs and outputs (especially for emptiness exceptions as discussed). For multi-leader-follower games, the modifications are easier while for resilient Nash equilibrium, we need to do more modest modifications\footnote{Appendix \ref{AppendixCentralcutellipsoidsection} provides all the details}. We directly apply Kakutani's fixed point in proofs of inclusion in PPAD, however, the input to this problem differs significantly compared to \cite{concavegames} in the reductions we provide. Specifically, we also require more generalized definitions of these problems and also more careful error handling. Also, for the resilient Nash problem we need to handle more complex inputs that are a combination of vectors and matrices\footnote{For more information, see the definition of MGQVI.}.   

\newpage

\subsection{General Computational Definition of Variational Inequality Problems}

 We provide another version of definitions for variational inequalities that are not restricted to linear arithmetic circuits.  The computational problem of finding an approximate solution to GQVI for separation oracles is defined as follows:

\problemStatement{$GQVI(\mathcal{F},\mathcal{R})$ With strong Separation Oracles}{
  Input={ We receive as input all the following:
  \begin{itemize}
      \item  A circuit $C_{\mathcal{R}}$ which represents a strong separation oracle for a $L_{\mathcal{R}}$-Hausdorff Lipschitz correspondence $\mathcal{R}: \mathbb{R}^{m*} \rightrightarrows  \mathbb{R}^{m*}$,
      \item A circuit $C_\mathcal{F}$ which represents a strong separation oracle for a $L_\mathcal{F}$-Hausdorff Lipschitz and $\gamma$-strongly convex valued correspondence $\mathcal{F}: \mathbb{R}^{m*} \rightrightarrows  \mathbb{R}^{m*}$,
      \item    An accuracy parameter $\beta$. 
  \end{itemize}},
  Output={One of the following cases:
  \begin{itemize}
      \item (Violation of non-emptiness): A vector $x \in \mathbb{R}^{m*}$ such that $\overline{\mathrm{B}}(\mathcal{R}(x),-\epsilon)=\emptyset$ or $x \in \mathbb{R}^{m*}$ such that $\overline{\mathrm{B}}(\mathcal{F}(x),-\epsilon)=\emptyset$,
\item  (Violation of $L_{\mathcal{R}}$-Hausdorff Lipschitzness of $\mathcal{R}$):
Four vectors $p, q, z, w \in \mathbb{R}^{m*}$ and a constant $\epsilon>0$ such that $w=\widetilde{\Pi}_{\mathcal{R}(q)}^{\epsilon}(q)$ and $z=\widetilde{\Pi}_{\mathcal{R}(p)}^{\epsilon}(w)$ but $\|z-w\|>L_{\mathcal{R}}\|p-q\|+3 \epsilon$.
\item  (Violation of $L_\mathcal{F}$-Hausdorff Lipschitzness of $\mathcal{F}$)
Four vectors $p, q, z, w \in \mathbb{R}^{m*}$ and a constant $\epsilon>0$ such that $w=\widetilde{\Pi}_{\mathcal{F}(q)}^{\epsilon}(q)$ and $z=\widetilde{\Pi}_{\mathcal{F}(p)}^{\epsilon}(w)$ but $\|z-w\|>L_\mathcal{F}\|p-q\|+3 \epsilon$,
\item 
(Violation of strong convexity):
There vectors $x, p, q\in \mathbb{R}^{m*}$ and two constants $\epsilon>0$ and $\lambda\in (0,1)$ such that:
$$
\begin{aligned}
\widetilde{\Pi}_{\mathcal{F}(x)}^{\epsilon}(\lambda p+ (1-\lambda)q ) & > \lambda \cdot \widetilde{\Pi}_{\mathcal{F}(x)}^{\epsilon}(p)+(1-\lambda) \cdot \widetilde{\Pi}_{\mathcal{F}(x)}^{\epsilon}(q)
\\&-\frac{\lambda(1-\lambda)}{2} \cdot \gamma\cdot\left\|p-q\right\|_2^2+\epsilon
\end{aligned}
$$
\item Two tuples $(x,w)$ and $(x^*,w^*)$ with $\|(x,w)-(x^*,w^*)\|\leq \beta $ such that  $x^*\in \mathcal{R}(x) $ and  $w^* \in \mathcal{F}(x)$ and $\left(y-x\right) ^T w^* +\beta \geq 0, \quad \forall y \in \mathcal{R}(x)$   
  \end{itemize}
  }
}

\begin{remark}
    With regards to the constants, see Theorem 3.15 and Lemma 3.16  in \cite{concavegames}.
\end{remark}

\subsection{Special Cases: Quasi Variational Inequalities (QVI) and Variational Inequalities (VI)}\label{Espcases}

Below, we define a computational variant of QVI that receives one strong separation oracle and one circuit that represents a function (in contrast to a correspondence in GQVIs) as an input of the problem. Defining the variants that receive linear arithmetic circuits for these two sub-cases is similar.

\problemStatement{$QVI(F,\mathcal{R})$ With a strong Separation Oracle}{
  Input={We receive as input all the following:
  \begin{itemize}
      \item  A circuit $C_{\mathcal{R}}$ which represents a strong separation oracle for a $L_{\mathcal{R}}$-Hausdorff Lipschitz correspondence $\mathcal{R}: \mathbb{R}^{m*} \rightrightarrows  \mathbb{R}^{m*}$,
      \item A circuit $C_F$ which represents a $L_F$-Hausdorff Lipschitz function $F: \mathbb{R}^{m*} \rightarrow  \mathbb{R}^{m*}$, 
      \item   An accuracy parameter $\beta$. 
  \end{itemize}},
  Output={One of the following cases:
  \begin{itemize}
      \item (Violation of non-emptiness): A vector $x \in \mathbb{R}^{m*}$ such that $\overline{\mathrm{B}}(\mathcal{R}(x),-\epsilon)=\emptyset$, 
\item  (Violation of $L_{\mathcal{R}}$-Hausdorff Lipschitzness of $\mathcal{R}$):
Four vectors $p, q, z, w \in \mathbb{R}^{m*}$ and a constant $\epsilon>0$ such that $w=\widetilde{\Pi}_{\mathcal{R}(q)}^{\epsilon}(q)$ and $z=\widetilde{\Pi}_{\mathcal{R}(p)}^{\epsilon}(w)$ but $\|z-w\|>L_{\mathcal{R}}\|p-q\|+3 \epsilon$,
\item  (Violation of $L_F$-Hausdorff Lipschitzness of $F$):
Two vectors $p, q \in \mathbb{R}^{m*}$ such that $\|F(p)-F(q)\|>L_F\|p-q\|$,
\item  Two vectors $x^*$ and $x$ with $\|x - x^*\|\leq \beta $ such that  $x^*\in \mathcal{R}(x) $ and $\left(y-x\right) ^T F(x^*) +\beta \geq 0, \quad \forall y \in \mathcal{R}(x)$ 
  \end{itemize}
  }
}

Next, we introduce a  computational version of VI. In this problem, the input consists of a strong separation oracle for a set (as opposed to a correspondence in QVIs) and a circuit representing a function (similar to QVI). 

\problemStatement{$VI(F)$ With a strong Separation Oracle}{
  Input={We receive as input all the following:
  \begin{itemize}
      \item  A circuit $C_{\mathcal{R}}$ which represents a strong separation oracle for a non-empty closed convex set $\mathcal{R}$,
      \item A circuit $C_F$ which represents a $L_F$-Hausdorff Lipschitz function $F: \mathbb{R}^{m*} \rightarrow  \mathbb{R}^{m*}$, 
      \item  An accuracy parameter $\beta$.
  \end{itemize}},
  Output={One of the following cases:
  \begin{itemize}
      \item (Violation of non-emptiness): A symbol $\bot$ with a polynomial-sized witness that certifies  $\overline{\mathrm{B}}(\mathcal{R},-\epsilon)=\emptyset$,
\item  (Violation of $L_F$-Hausdorff Lipschitzness of $F$):
Two vectors $p, q \in \mathbb{R}^{m*}$ such that $\|F(p)-F(q)\|>L_F\|p-q\|$,
\item One vector $x^*$ with the condition $x^*\in \mathcal{R} $ such that  $\left(y-x^*\right) ^T F(x^*) +\beta  \geq 0, \quad \forall y \in \mathcal{R}$  
  \end{itemize}
  }
}

    \begin{theorem}\label{gqviinppadnolinear}
The generalized quasi-variational inequality problem $GQVI(\mathcal{F},\mathcal{R})$ for  $\mathrm{SO}_{\mathcal{R}}$ and $\mathrm{SO}_\mathcal{F}$ is  PPAD-complete.
\end{theorem}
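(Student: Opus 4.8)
The plan is to establish PPAD-hardness by exhibiting VI (hence GQVI) as a generalization of Nash, and PPAD-membership by a polynomial-time many-one reduction to the computational Kakutani fixed-point problem, adapting \cite{chan}'s existence argument for GQVI: because the value $\mathcal{F}(x)$ is convex, Kakutani suffices in place of the Eilenberg--Montgomery theorem once one lifts to a product space, and the optimization steps of the existence proof are replaced by the strong constrained convex optimization and strong approximate projection subroutines introduced earlier. For hardness, note that $VI(F)$ is the special case of $GQVI(\mathcal{F},\mathcal{R})$ with $\mathcal{R}(x)\equiv\mathcal{R}$ constant and $\mathcal{F}(x)\equiv\{F(x)\}$ a singleton (a singleton is trivially $\gamma$-strongly convex-valued, and $\widetilde{\Pi}$ onto it is constant, so none of the convexity/Lipschitz violation outputs can be triggered); by Corollary~\ref{nashvi2}, finding an approximate Nash equilibrium of a game with concave, continuously differentiable utilities reduces in polynomial time to $VI(F)$, and for bimatrix games with mixed strategies the expected-payoff map $F$ is affine, hence Lipschitz and circuit-representable, so the PPAD-hardness of Nash \cite{Chen,Daskalakis2006} transfers to $VI(F)$ and therefore to $GQVI(\mathcal{F},\mathcal{R})$.

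\textbf{The auxiliary correspondence.} For membership, work over the compact convex box $\mathbb{R}^{m*}\times\mathbb{R}^{m*}$, fix a regularization parameter $\mu>0$ to be chosen polynomially small in $\beta$ and the diameter and Lipschitz bounds, and define
\[
\Phi(x,w)\;=\;\Big(\arg\min_{y\in\mathcal{R}(x)}\big[\langle w,y\rangle+\tfrac{\mu}{2}\|y\|_2^2\big]\Big)\times\Big(\arg\min_{y\in\mathcal{F}(x)}\|y-w\|_2^2\Big).
\]
A fixed point $(x^*,w^*)\in\Phi(x^*,w^*)$ has $w^*\in\mathcal{F}(x^*)$ (the second factor lands in $\mathcal{F}(x^*)$, and its projection fixing $w^*$ forces $w^*\in\mathcal{F}(x^*)$), $x^*\in\mathcal{R}(x^*)$, and, by first-order optimality of the regularized objective, $\langle w^*+\mu x^*,\,y-x^*\rangle\ge 0$ for all $y\in\mathcal{R}(x^*)$, which is the GQVI inequality up to the term $\mu\langle x^*,y-x^*\rangle$, bounded by $\mu$ times the box diameter. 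Strong convexity of the regularized objective makes the first factor single-valued; Theorem~\ref{berge} (robust Berge) then gives that it is continuous --- Hölder-$(1/2,p)$ for small differences --- with a modulus computable from $L_{\mathcal{R}}$ and $\mu$, while the second factor is Lipschitz in $(x,w)$ from nonexpansiveness of projection together with the $L_{\mathcal{F}}$-Hausdorff Lipschitzness (and strong convexity) of $\mathcal{F}$. Thus $\Phi$ is a non-empty, convex-valued, (Hölder/Lipschitz) continuous correspondence of the kind accepted by the computational Kakutani problem. Whenever the optimization subroutine reports $\overline{\mathrm{B}}(\mathcal{R}(x),-\epsilon)=\emptyset$ or $\overline{\mathrm{B}}(\mathcal{F}(x),-\epsilon)=\emptyset$, or detects a violation of the $\gamma$-strong convexity of $\mathcal{F}$, we propagate it as the corresponding GQVI exception.

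\textbf{Computational implementation and extraction.} From $C_{\mathcal{R}},C_{\mathcal{F}}$ we build a polynomial-size circuit computing a strong separation oracle for $\Phi$: on a query $z=(\hat x,\hat w)$ we partially apply $C_{\mathcal{R}},C_{\mathcal{F}}$ at $x$ to obtain $\mathrm{SO}_{\mathcal{R}(x)},\mathrm{SO}_{\mathcal{F}(x)}$, run one instance of the strong constrained convex optimization problem per coordinate to approximately evaluate the (single) point $p(x,w)=\Phi(x,w)$, and return membership if $\hat z$ is close to $p(x,w)$ and otherwise the hyperplane with normal $\hat z-p(x,w)$; since the subroutines run in polynomial time with polynomial-bit-length outputs, so does this oracle. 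Feeding it, together with a Hausdorff-Lipschitz/Hölder constant for $\Phi$ and an accuracy $\alpha$ polynomially small in $\beta$, to the computational Kakutani problem (PPAD-complete) yields either one of the exceptions above or an approximate fixed point $(x,w)$ with a witness $(z_1,z_2)$ satisfying $\|(x,w)-(z_1,z_2)\|\le\alpha$, $z_1\in\mathcal{R}(x)$, $z_2\in\mathcal{F}(x)$, and $z_1$ an $\epsilon$-approximate minimizer of the regularized objective over $\mathcal{R}(x)$. Setting $x^*=z_1$, $w^*=z_2$ and propagating the three errors --- $\alpha$ from the fixed point, $\epsilon$ from the constrained optimization, and $O(\mu\cdot\mathrm{diam}^2)$ from the regularization --- gives $\|(x,w)-(x^*,w^*)\|\le\beta$, $x^*\in\mathcal{R}(x)$, $w^*\in\mathcal{F}(x)$ and $(y-x)^T w^*+\beta\ge 0$ for all $y\in\mathcal{R}(x)$, a valid GQVI solution, once $\mu,\epsilon,\alpha$ are fixed polynomial functions of $\beta$ and the input bounds. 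Theorem~\ref{gqviinppad} (the linear-arithmetic-circuit version) then follows, since a linear arithmetic circuit is automatically Lipschitz with an efficiently extractable constant and produces polynomial-bit-length outputs, so such an instance is a special case of the one just treated.

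\textbf{Main obstacle.} I expect the crux to be the continuity-plus-computability bookkeeping around $\Phi$: robust Berge only yields Hölder-$1/2$ (not Lipschitz) continuity of the $\arg\min$ factor, so the computational Kakutani problem must be invoked in a form tolerant of Hölder-continuous correspondences (or the Hölder modulus must be shown to suffice at the scale $\alpha$); one must guarantee that the separation oracle for $\Phi$ returns hyperplanes of polynomial bit complexity despite being defined through a layer of approximate convex optimization; and the non-emptiness and strong-convexity exceptions of the subroutines must be threaded through to the matching GQVI outputs without loss. By contrast, the error propagation turning the approximate Kakutani fixed point into a $\beta$-approximate GQVI solution is routine once $\mu,\epsilon,\alpha$ are tied to $\beta$.
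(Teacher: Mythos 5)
Your overall architecture matches the paper's: hardness via the VI/Nash specialization (exactly as in the paper, through Corollary~\ref{nashvi2} and \cite{Chen}), and membership via a reduction to the computational Kakutani problem built from a regularized optimization over $\mathcal{R}(x)$ paired with a component tracking $\mathcal{F}(x)$, with robust Berge (Theorem~\ref{berge}) supplying the continuity modulus, the constrained convex optimization subroutine (ellipsoid method) supplying the separation oracle, and a final error-propagation step tying $\mu,\epsilon,\alpha$ to $\beta$. Your regularized objective $\langle w,y\rangle+\tfrac{\mu}{2}\|y\|_2^2$ is, up to the additive constant $-x^Tw$, the paper's $-\Phi(y,x,w)=(y-x)^Tw+\gamma\|y\|_2^2$, so the first factor is the same object.

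The one substantive deviation is that you make both factors of the auxiliary correspondence single-valued: the exact $\arg\min$ over $\mathcal{R}(x)$ and the exact projection of $w$ onto $\mathcal{F}(x)$. The paper instead uses the \emph{set} of $\epsilon$-approximate maximizers $\Pi(x,w)$ and the \emph{full} set $\mathcal{F}(x)$, i.e.\ $\Psi(x,w)=(\Pi(x,w),\mathcal{F}(x))$. This is not cosmetic: it is exactly what closes the gap you flag as your "main obstacle." A strong separation oracle must certify exact membership and produce an exact separating hyperplane, but your oracle can only compute an approximation $\tilde p$ of the singleton $p(x,w)$; returning the hyperplane with normal $\hat z-\tilde p$ need not separate $\hat z$ from the true point $p$, and declaring membership when $\hat z$ is merely close to $\tilde p$ is not a strong separation oracle for $\{p(x,w)\}$. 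The paper avoids this by representing $\Pi(x,w)$ as the sublevel set $\{(y,w)\in(\mathcal{R}(x),\mathcal{F}(x)):-\Phi(y,x,w)\le\gamma'\}$ with $\gamma'$ computed from one run of the (modified) constrained convex optimization problem; a strong separation oracle for such a sublevel set of a convex function intersected with sets that already have strong separation oracles is constructible (Theorem H.3 of \cite{concavegames}), with polynomial bit complexity, and the resulting correspondence is full enough (it contains a ball of radius $\epsilon/G$ around the maximizer) that non-emptiness is certified. The remaining items in your proposal --- the Hölder-$1/2$ modulus from Berge fed to Kakutani at scale $\alpha$, the propagation of the non-emptiness and convexity-violation exceptions, and the final chain of inequalities yielding $(y-x)^Tw^*+\beta\ge 0$ --- track the paper's argument. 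So: same route, but to make the membership proof go through you should replace your singleton-valued $\Phi$ by the approximate-argmax set paired with $\mathcal{F}(x)$ itself, as the paper does.
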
                      
\begin{proof}
For this version, by assumption, we must have access to the sub-gradients of $\Phi$ compared to linear arithmetic circuits.  The proof of Theorem \ref{gqviinppad} is similar.

  Without loss of generality, we can consider $[-1,1]^m$ instead of $\mathbb{R}^{m*}$ (similar to \cite{concavegames}).   The proof is organized as follows. First, we define the following:
  \[\Phi(y,x,w)=-(y-x)^T w-\gamma(||y||^2_2)\]
  \[\Pi(x,w)=\{y\in \mathcal{R}(x) ~|~\Phi(y,x,w)>max_{y\in \mathcal{R}(x)}\Phi(y,x,w)-\epsilon \}\]
  We show that for a constant $\kappa^\prime$, $\Psi(x,w)=(\Pi(x,w),\mathcal{F}(x))$ is $\kappa^\prime$-Lipschitz continuous in $(x,w)$.  Then, we construct a strong separation oracle for $\Psi$.  Finally, we show that an approximate Kakutani's fixed point of this correspondence will provide an approximate solution to the given GQVI.


To show that $\Psi$ is $\kappa^\prime$-Lipschitz continuous, we show  that $d_H(\Psi(x_1,w_1),\Psi(x_2,w_2))\leq \kappa^\prime||(x_1,w_1)-(x_2,w_2)||^q_p+c$ for some constants $q$,$p$ and $c$. Similar to Theorem E1 in \cite{concavegames}, here $\Phi(y,x,w)$ is $(2\gamma)$-strongly concave in $y$. This follows by the fact that $-(y-x)^T w$ is $0$-strongly concave function of $y$  and also $-\gamma(||y||^2_2)$ is $2\gamma$-strongly concave function of $y$  by Lemma \ref{norm2}.  This technique is similar to Theorem \ref{concaveppad} as we need to ensure that for some $\mu$, $\mu$-strong concavity holds to leverage  Berge's maximum theorem.   In addition, $\Phi$ is $G$-Lipschitz continuous where $G=m+2\gamma m$ for all $x$ and $w$.

The procedure to show that $\Pi(x,w)$ is $\kappa$-Lipschitz continuous in $x$ and $w$ is also similar to Theorem E1 of \cite{concavegames}. Let $H(x,w)=\{y\in \mathcal{R}(x) ~|~\Phi(y,x,w)=max_{y\in \mathcal{R}(x)} \Phi(y,x,w)\}$. By optimality KKT conditions for maximization of a concave function with respect to the constraint set $\mathcal{R}(x)$ for all $y\in \mathcal{R}(x) $ we have:
\[
\partial \Phi(y,x,w)^T(y^*-y)\geq 0 \text{ where } y^*=argmax_{y\in \mathcal{R}(x)}\Phi(y,x,w)
\]Furthermore, $(2 \gamma)$-strong-concavity of $\Phi(\cdot,x,w)$ results the following inequality:
\[
\Phi\left(y^*,x,w\right)-\Phi(y,x,w) \geq \partial \Phi\left(y^*,x,w\right)^{\top}\left(y^*-y\right)+\gamma\left\|y-y^*\right\|_2^2 
\]Finally, by combining the previous inequalities we have:
\[
\Phi\left(y^*,x,w\right)-\Phi(y,x,w) \geq \gamma\left\|y-y^*\right\|_2^2 
\]
In conclusion, for $y^*\in H(x,w)$ and $y\in \Pi(x,w)$, we have the following inequality:
\begin{equation}\label{kapp}
    \gamma\left\|y-y^*\right\|^2_2 \leq \epsilon \text{ or equivalently } \left\|y-y^*\right\|_2 \leq \sqrt{\frac{\epsilon}{\gamma}}
\end{equation}



Recall that we showed  $\Phi(\cdot,x,w)$ is strongly concave. Now, we can apply Theorem \ref{berge}  (Berge's Maximum Theorem) to $f((x,w),y)=\Phi(y,x,w)$ and $g((w,x))=\mathcal{R}(x)$,  where $g$ is a non-empty, compact, convex-valued correspondence.  By Theorem \ref{berge}, there exists a constant $\kappa$ such that $H$ is $\kappa$-Lipschitz (and also Holder) continuous. In conclusion, we have the following:
\begin{equation}\label{kapp2}
    d_H(H(x_1,w_1),H(x_2,w_2))\leq \kappa||(x_1,w_1)-(x_2,w_2)||^{\frac{1}{2}}_2+c
\end{equation}
Combining Equations \ref{kapp} and \ref{kapp2}  results in:
\begin{equation} \label{kapp3}
\begin{split}
 d_H(\Pi(x_1,w_1),\Pi(x_2,w_2)) &\leq d_H(H(x_1,w_1),H(x_2,w_2))\\
 & + d_H(\Pi(x_1,w_1),H(x_1,w_1))\\ &+d_H(\Pi(x_2,w_2),H(x_2,w_2))
\end{split}
\end{equation}
In conclusion, we have the following:

\begin{equation}\label{kapp4}
    d_H(\Pi(x_1,w_1),\Pi(x_2,w_2))\leq \kappa||(x_1,w_1)-(x_2,w_2)||^{\frac{1}{2}}_2+2\sqrt{\frac{\epsilon}{\gamma}}+c
\end{equation}
Since $\mathcal{F}$ is $L_\mathcal{F}$-Hausdorff Lipschitz, by the definition of $\Psi$ we can deduce that there exists a constant $\kappa^\prime$: 

\begin{equation}\label{kapp5}
    d_H(\Psi(x_1,w_1),\Psi(x_2,w_2))\leq \kappa^\prime||(x_1,w_1)-(x_2,w_2)||^{\frac{1}{2}}_2+c^\prime
\end{equation}
This shows that $\Psi$ is  $\kappa^\prime$-Hausdorff Lipschitz (and also $(\frac{1}{2},2)$-Holder) continuous.
To establish a reduction employing the computational Kakutani's problem variant, it is essential to establish the existence of a bounded-radius ball within the correspondence $\Psi(x,w)$.  We also know  that $\overline{\mathrm{B}} (\mathcal{R}(x),-\epsilon)$ and also $\overline{\mathrm{B}} (\mathcal{F}(x),-\epsilon)$ are  non-empty\footnote{We know that $\overline{\mathrm{B}} (\mathcal{R}(x),-\epsilon)$ is non-empty.}. Let $y_x^*=argmax_{y\in \mathcal{R}(x)}\Phi(y,x,w)$. Then, we define a none-empty region $V_x= \overline{\mathrm{B}}(y_x^{\star},+\frac{\epsilon}{G})$  where $G$ is the Lipschitz constant of $\Phi$. By Lipschitzness of $\Phi\left(\cdot,x,w\right)$, we have:
\[ \forall y \in V_x, \quad \left|\Phi\left(y, x,w\right)-\Phi\left(y^*_x,x,w\right)\right| \leq G\frac{\epsilon}{G}=\epsilon\]
 In conclusion, $\Phi(V_x,x,w)\subseteq [\Phi\left(y^*_x,x,w\right)-\epsilon,\Phi\left(y^*_x,x,w\right)]$ is non-empty. Then, we can conclude that $\Phi(V,x,w) \subseteq\Pi(x,w)$ (see the definition of $\Pi$). We also know that for any $x\in \mathcal{R}(x)$, $\overline{\mathrm{B}} (\mathcal{F}(x),-\epsilon)$ is non-empty,  then, we can find a non-empty region $C_x$ such that  $C_x \subseteq\overline{\mathrm{B}} (\mathcal{F}(x),-\epsilon)$.  This is sufficient to show that for all $x$ and $w$, $ \Psi(x,w)$ is a non-empty correspondence. 


Now we proceed to constructing a strong separation oracle for $\Psi(x,w)$ leveraging a slightly modified version of the Strong Constrained Convex Optimization framework stated in the following:

\problemStatement{Modified Strong Constrained Convex Optimization Problem}{
  Input={A zeroth and first order oracle for the concave function $G: \mathbb{R}^{m*}\rightarrow \mathbb{R}$, two rational numbers $\delta,\epsilon>0$ and a strong separation oracle $\mathrm{SO}_{\mathcal{R}}$ for a non-empty closed convex-valued correspondence $\mathcal{R}:\mathbb{R}^{m*}\rightrightarrows  \mathbb{R}^{m*}$ and one input $x$.
  },
  Output={One of the following cases:
  \begin{itemize}
      \item (Violation of non-emptiness)
A failure symbol $\perp$ with a polynomial-sized witness that certifies that $\overline{\mathrm{B}}(\mathcal{R}(x),-\delta)=\emptyset$,
\item (Approximate Maximization)
A vector $z \in \mathbb{Q}^m \cap \mathcal{R}(x)$, such that $G(z) +\epsilon  \geq \max _{y \in \mathcal{R}(x)} G(y)$ for all values.
  \end{itemize}
  }
}

The problem mentioned above can be solved in polynomial time using the sub-gradient ellipsoid central cut method given the separation oracle as it satisfies all of the requirements such as the convexity of the domain (for more information, see Appendix \ref{AppendixCentralcutellipsoidsection} and also Appendix G in \cite{concavegames})\footnote{Note that the modification in this problem is that it is a maximization problem and we have an input where $\mathcal{R}(x)$ becomes a set as $\mathcal{R}$ is convex-valued.}. We can compute a solution $y^*\in \mathcal{R}(x)$ using the sub-gradient ellipsoid central cut method such that $\Phi\left(y^*,x,w\right)\geq max_{y\in \mathcal{R}(x)} \Phi\left(y,x,w\right)-\epsilon $.  Thus, we can substitute a SO for $\Psi$ by considering a strong separation oracle for the following set:
\[
\overline{\Psi}_{s}(x,w)=\{(y,w)\in (\mathcal{R}(x),\mathcal{F}(x))~|~- \Phi(y,x,w)\leq \gamma^\prime\}
\]where $\gamma^\prime= -\Phi(y^*,x,w)$. In other words, we are looking for $(y,w)\in (\mathcal{R}(x),\mathcal{F}(x))$ such that $(y-x)^T w +\gamma||y||^2_2  \geq (y^*-x)^T w + \gamma||y^*||^2_2 $ given the strong separation oracles representing $\mathcal{F}$ and $\mathcal{R}$. This separation oracle exists due to Theorem H.3 in \cite{concavegames}.

Now, we can give the constructed separation oracle for $\Psi$ as input to the computational Kakutani problem with accuracy parameter $\alpha=\frac{\epsilon^\prime}{\kappa^\prime}$ where $\epsilon^\prime=\epsilon h$. The output of this Kakutani instance will be two points $(x,w)\in ([-1,1]^m,[-1,1]^m)$ and $z=(x^*,w^*)\in\Psi(x,w)$  where $||(x,w)-(x^*,w^*) ||\leq \frac{\epsilon^\prime}{\kappa^\prime}$ and $d((x,w),\Psi(x,w))\leq \frac{\epsilon^\prime}{\kappa^\prime}$. Thus, $w^*\in \mathcal{F}(x)$ and $d(w,\mathcal{F}(x))\leq \frac{\epsilon^\prime}{\kappa^\prime} $. In addition, $x^*\in \Pi(x,w)$ and $d(x,\Pi(x,w))\leq \frac{\epsilon^\prime}{\kappa^\prime} $. By the definition of $\Pi$,  for every $y\in \mathcal{R}(x)$,  $\Phi\left(x^*,x,w\right)\geq \Phi\left(y,x,w\right)$. In conclusion:
\[ \left(y-x\right) ^T w+\gamma||y||^2_2  \geq \left(x^*-x\right) ^T w+ \gamma||x^*||^2_2 ,  \thickspace \forall y \in \mathcal{R}(x) \] So, recalling that $x,y, w\in [-1,1]^m $, for a is small enough  $\gamma$ , there exists a constant $u(\gamma)$:
\[ \left(y-x\right) ^T w \geq  \pm \frac{\epsilon^\prime}{\kappa^\prime} \pm u(\gamma) ,  \quad \forall y \in \mathcal{R}(x)\]

Finally, knowing that $||(x,w)-(x^*,w^*) ||^2_2\leq \frac{\epsilon^\prime}{\kappa^\prime}$ and $w^*\in \mathcal{F}(x)$, and using the Lipschitz continuity of $\mathcal{F}$, we can easily conclude that there exists a constant $\zeta$ such that:
\[ \left(y-x\right) ^T w^* \geq  -\zeta ,  \quad \forall y \in \mathcal{R}(x) \]
 Considering appropriate numbers for $h$ (recall that $\epsilon^\prime=\epsilon h$) and $\gamma$  results the following (setting $\beta=\epsilon$): 
\[ \left(y-x\right) ^T w^* + \beta \geq 0  ,  \quad \forall y \in \mathcal{R}(x) \]
\end{proof}

\begin{remark}
   Note that the convexity condition on $\mathcal{F}$ is only important for the construction of the separation oracle for $\Psi$. Indeed, $\Psi$ must be a convex-valued otherwise membership of an element in $\Psi$ cannot be assessed by using the strong separation oracle framework. In addition, we do not need "strong" convexity of $\mathcal{F}$ in general and we only provided it for completeness. One possibility is that the case violation of convexity happens because of the approximation error in the approximate projection problem. This means that there exits one $x$ point such that for some points $p$ and $q$ and some constant $\lambda\in (0,1)$:
    $$
    \begin{aligned}
\widetilde{\Pi}_{\mathcal{F}(x)}^{\epsilon}(\lambda p+ (1-\lambda)q ) & > \lambda \cdot \widetilde{\Pi}_{\mathcal{F}(x)}^{\epsilon}(p)+(1-\lambda) \cdot \widetilde{\Pi}_{\mathcal{F}(x)}^{\epsilon}(q)+\epsilon
\end{aligned}
$$
However, we might have the following relationship:
$$
\begin{aligned}
\Pi_{\mathcal{F}(x)}(\lambda p+ (1-\lambda)q ) \leq \lambda \cdot {\Pi}_{\mathcal{F}(x)}(p)+(1-\lambda) \cdot \Pi_{\mathcal{F}(x)}(q)
\end{aligned}
$$
In this situation, restricting the condition of convexity by adding $+\epsilon$ may not be the best solution. We may simply consider $\mathcal{F}^\prime(x)=\mathcal{F}(x)+\gamma^\prime\|x\|^2_2$ for some $\gamma^\prime$ and $\mathcal{F}^\prime(x)$ will become convex while we consider the approximate projection problem and we can repeat the proof with minor adjustments.

\end{remark}

\begin{proposition}\label{ppadhardqvi}
 The quasi-variational inequality (QVI) problem for a strong separation oracle $\mathrm{SO}_{\mathcal{R}}$ is PPAD-complete.
\end{proposition}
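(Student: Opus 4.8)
The plan is to derive Proposition~\ref{ppadhardqvi} from the PPAD-completeness of the generalized problem (Theorem~\ref{gqviinppadnolinear}), exploiting the fact that a QVI is exactly the special case of a GQVI in which the outer correspondence $\mathcal{F}$ is single-valued. Membership in PPAD will be shown by a many-one reduction from the QVI problem to $GQVI(\mathcal{F},\mathcal{R})$; PPAD-hardness will follow from Corollary~\ref{nashvi2} together with the PPAD-hardness of computing an approximate Nash equilibrium.

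\textbf{Membership.} Given an instance of $QVI(F,\mathcal{R})$ with accuracy parameter $\beta$, I would form the $GQVI(\mathcal{F},\mathcal{R})$ instance that keeps the same $\mathcal{R}$, takes $\mathcal{F}(x):=\{F(x)\}$, and sets the strong-convexity parameter to $\gamma=0$, so that $\mathcal{F}$ is merely convex-valued (which, by the remark following Theorem~\ref{gqviinppadnolinear}, is all that the membership proof actually needs). A strong separation oracle for $\mathcal{F}$ is a small circuit wrapped around $C_F$: on query $z$, compute $F(x)$; if $z=F(x)$, return $b=1$ and $a=\bot$, and otherwise return $b=0$ and $a=(z-F(x))/\|z-F(x)\|_\infty$, which satisfies $\|a\|_\infty=1$ and $\langle a,\,F(x)-z\rangle=-\|z-F(x)\|_2^2/\|z-F(x)\|_\infty\le 0$, i.e.\ separates $z$ from $\{F(x)\}$. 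The correspondence $\mathcal{F}$ is convex-valued and $L_F$-Hausdorff Lipschitz, since $\mathrm{d}_{\mathrm H}(\{F(p)\},\{F(q)\})=\|F(p)-F(q)\|\le L_F\|p-q\|$. Now translate the GQVI solver's outputs back: a non-emptiness certificate for $\mathcal{R}$ is already a valid QVI output; a Hausdorff-Lipschitz violation of $\mathcal{F}$ comes with $w=\widetilde{\Pi}^{\epsilon}_{\{F(q)\}}(q)=F(q)$ and $z=\widetilde{\Pi}^{\epsilon}_{\{F(p)\}}(w)=F(p)$ (the projection onto a singleton returns that point exactly), so $\|F(p)-F(q)\|=\|z-w\|>L_F\|p-q\|$ is exactly the QVI Lipschitz-violation output; the ``violation of convexity'' case cannot be triggered spuriously because with $\gamma=0$ it reduces to $F(x)>F(x)+\epsilon$; and a genuine solution $(x,w),(x^*,w^*)$ with $\|(x,w)-(x^*,w^*)\|\le\beta'$, $x^*\in\mathcal{R}(x)$, $w^*\in\mathcal{F}(x)=\{F(x)\}$ gives $w^*=F(x)$ and $(y-x)^T F(x)+\beta'\ge 0$ for all $y\in\mathcal{R}(x)$. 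Since $\|x-x^*\|\le\beta'$, $F$ is $L_F$-Lipschitz, and the box has diameter at most $D$, this yields $(y-x)^T F(x^*)+(1+DL_F)\beta'\ge 0$, so choosing $\beta':=\beta/(1+DL_F)$ produces a valid QVI solution $(x^*,x)$. Both reduction maps are computable in polynomial time, so the QVI problem lies in PPAD.

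\textbf{Hardness.} By Corollary~\ref{nashvi2}, finding an $\epsilon$-approximate Nash equilibrium of a game with concave (in particular multilinear, hence linear-in-own-variables) utilities reduces to finding an $\epsilon$-approximate solution of a VI, and a VI is precisely the QVI obtained by taking the constant correspondence $\mathcal{R}(x)\equiv\mathcal{R}$. Instantiating this with normal-form games over mixed strategies, $\mathcal{R}$ is a product of probability simplices, which admits a trivial strong-separation-oracle circuit, and the map $F$ delivered by the transformation is polynomial in the coordinates of $x$, hence computable by a circuit and Lipschitz on the bounded domain; the reduction is therefore polynomial-time. Since computing an approximate Nash equilibrium is PPAD-hard \cite{Chen}, so is the QVI problem, and together with membership this gives PPAD-completeness.

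\textbf{Main obstacle.} The substantive care is twofold: (i) checking that the degenerate correspondence $\mathcal{F}(x)=\{F(x)\}$ meets every regularity hypothesis built into the GQVI problem statement and that no exceptional output of the GQVI solver fires spuriously --- resolved above by taking $\gamma=0$; and (ii) tracking how the accuracy parameter is consumed when passing from the GQVI guarantee $(y-x)^T F(x)+\beta\ge 0$ to the QVI guarantee $(y-x)^T F(x^*)+\beta\ge 0$, which costs only a constant-factor rescaling of $\beta$ by Lipschitzness of $F$ and boundedness of the domain. If a self-contained argument is preferred, one re-runs the proof of Theorem~\ref{gqviinppadnolinear} with $w$ frozen to $F(x)$; the only step that needs rechecking is Hausdorff-Lipschitz continuity of $\Pi(x)=\arg\max_{y\in\mathcal{R}(x)}\bigl(-(y-x)^T F(x)-\gamma\|y\|_2^2\bigr)$, but $(x,y)\mapsto -(y-x)^T F(x)-\gamma\|y\|_2^2$ is $2\gamma$-strongly concave in $y$ and Lipschitz (using Lipschitzness of $F$ and boundedness of the domain) while $\mathcal{R}$ is a Hausdorff-Lipschitz convex-valued correspondence, so the robust Berge theorem (Theorem~\ref{berge}) applies essentially verbatim and no genuinely new difficulty arises --- the QVI case is strictly easier than the GQVI case.
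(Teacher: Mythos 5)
Your approach is the same as the paper's, which disposes of this proposition in one line: membership is inherited from Theorem~\ref{gqviinppadnolinear} by viewing QVI as the special case of GQVI with $\mathcal{F}$ single-valued, and hardness comes from VI via Corollary~\ref{nashvi2} and the PPAD-hardness of approximate Nash. Your write-up is considerably more careful than the paper's, and the hardness half and the accuracy bookkeeping are fine.

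There is, however, one exceptional output you did not neutralize in the black-box reduction. The GQVI problem permits the solver to return a non-emptiness violation not only for $\mathcal{R}$ but also for $\mathcal{F}$, namely a point $x$ with $\overline{\mathrm{B}}(\mathcal{F}(x),-\epsilon)=\emptyset$; and indeed the membership proof of Theorem~\ref{gqviinppadnolinear} explicitly uses non-emptiness of $\overline{\mathrm{B}}(\mathcal{F}(x),-\epsilon)$ to exhibit a ball $C_x$ inside $\mathcal{F}(x)$ when showing $\Psi$ is non-empty. With $\mathcal{F}(x)=\{F(x)\}$ this inner parallel body is empty for every $x$ and every $\epsilon>0$, so the GQVI solver may legitimately answer with this useless certificate, and the many-one reduction as stated does not produce a valid QVI output from it. Taking $\gamma=0$ handles the convexity-violation exception but not this one. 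The correct repair is exactly your own fallback: re-run the proof of Theorem~\ref{gqviinppadnolinear} with $w$ frozen to $F(x)$, so that $\Psi(x)=\Pi(x)$ involves no correspondence $\mathcal{F}$ at all, non-emptiness is only required of $\overline{\mathrm{B}}(\mathcal{R}(x),-\epsilon)$ (which matches the QVI problem's own exception), and the robust Berge argument goes through verbatim since $\Phi(y,x)=-(y-x)^T F(x)-\gamma\|y\|_2^2$ remains $2\gamma$-strongly concave in $y$ and Lipschitz in $(x,y)$ by the Lipschitzness of $F$. You should promote that fallback from an aside to the actual membership argument.
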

\begin{proof}
          Inclusion in PPAD is implied by Theorem \ref{gqviinppadnolinear}. PPAD-hardness of this problem is implied by the hardness VI\footnote{We may use alternative approaches such as using the concave games or other similar problems.}.

\end{proof}

\begin{theorem}
The variational inequality problem (VI) for a strong separation oracle $\mathrm{SO}_{\mathcal{R}}$  is PPAD-complete.
\end{theorem}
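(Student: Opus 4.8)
The plan is to obtain both directions by specializing results already proved for the more general problems. For inclusion in PPAD, observe that an instance $(F,\mathcal{R})$ of $VI(F)$ is exactly an instance of $QVI(F,\mathcal{R})$ in which the correspondence is the constant map $x\mapsto\mathcal{R}$. Such a constant correspondence is $0$-Hausdorff Lipschitz and convex-valued (since $\mathcal{R}$ is a non-empty closed convex set), and a strong separation oracle for it is simply $\mathrm{SO}_{\mathcal{R}}$ with the base point ignored; the approximate projection $\widetilde{\Pi}_{\mathcal{R}}^{\epsilon}$ no longer depends on the base point, so the $L_{\mathcal{R}}$-Hausdorff-Lipschitzness exception of the QVI problem is never output, and the two non-emptiness exceptions collapse to the single certificate $\overline{\mathrm{B}}(\mathcal{R},-\epsilon)=\emptyset$. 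Hence any algorithm for $QVI(F,\mathcal{R})$ solves $VI(F)$ after trivial post-processing: given a QVI solution $(x^*,x)$ with $\|x-x^*\|\le\beta$, $x^*\in\mathcal{R}$ and $(y-x)^{T}F(x^*)+\beta\ge 0$ for all $y\in\mathcal{R}$, we get $(y-x^*)^{T}F(x^*)\ge (y-x)^{T}F(x^*)-\|x-x^*\|\,\|F(x^*)\|\ge -\beta-\beta M$, where $M$ bounds $\|F\|_2$ over the box, so rescaling $\beta$ by $1/(1+M)$ before the call yields a genuine $VI$ solution. Membership in PPAD then follows from Proposition \ref{ppadhardqvi} (equivalently Theorem \ref{gqviinppadnolinear}).

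For PPAD-hardness I would reduce from the problem of finding an approximate mixed Nash equilibrium of a bimatrix game, which is PPAD-hard by \cite{Chen} (see also \cite{Daskalakis2006}). By Corollary \ref{nashvi2} — the Harker-type transformation of Proposition \ref{geneqvari2} applied to the unconstrained case — finding an approximate Nash equilibrium of a game with concave, continuously differentiable utilities over $\mathcal{R}=\prod_i\Delta_i$ reduces to solving the $VI(F)$ instance with this $\mathcal{R}$ and with $F$ assembled from the negated partial gradients $-\nabla_{x_i}u_i$. For a bimatrix game the expected-payoff functions are bilinear, hence affine (in particular concave) in each player's own variables; thus $F$ is affine, is representable by a (linear) arithmetic circuit, and is $L_F$-Hausdorff Lipschitz for a constant easily read off from the payoff matrices. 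The set $\mathcal{R}$ is a fixed polytope with an explicit polynomial-size strong separation oracle and $\overline{\mathrm{B}}(\mathcal{R},-\epsilon)\neq\emptyset$ for small $\epsilon$, so neither the non-emptiness nor the Lipschitzness exception can be triggered; the solver is therefore forced to return an actual $\beta$-approximate variational point $x^*$, which by the (approximate) Harker correspondence is an $f(\beta)$-approximate Nash equilibrium. Choosing $\beta$ polynomially small in the desired Nash accuracy and in the Lipschitz constants gives a valid many-one reduction. (Alternatively one can start from the Strongly Concave Games problem of Theorem \ref{concaveppad} with a trivial common constraint.)

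As in Theorems \ref{gqviinppad} and \ref{gqviinppadnolinear}, I expect the only real effort to be bookkeeping rather than conceptual. The two points to check carefully are: (i) that the transformed instance literally matches the input specification of $VI(F)$ — a circuit computing a Lipschitz $F$ together with a certified Lipschitz constant, plus a strong separation oracle for the convex set — and (ii) that none of the ``violation'' outputs can occur on the instances produced by the reduction, so the solver cannot escape returning a true variational solution. Propagating the additive error $\beta$ through the Harker transformation and through the Lipschitz bounds of the utilities over the bounded box (a Penalty-Lemma-style estimate, cf.\ Lemma \ref{blowup}) to reach the target Nash accuracy is the one quantitative step, and it is routine once the explicit Lipschitz constant of the affine $F$ is in hand.
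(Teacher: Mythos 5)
Your proposal is correct and follows essentially the same route as the paper: membership by specializing the GQVI/QVI result to a constant correspondence, and hardness by reducing from approximate mixed Nash in bimatrix games via the Harker-type transformation of Corollary~\ref{nashvi2}. Your extra bookkeeping (converting the two-vector QVI output $(x,x^*)$ into a single variational point via the bound on $\|F\|$ over the box, and checking that no violation exception can fire on the reduced instances) is a welcome level of care that the paper's terse proof leaves implicit.
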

\begin{proof}
     Inclusion in PPAD is implied by Theorem \ref{gqviinppadnolinear}. PPAD-hardness of this problem is implied by the hardness of approximate Nash equilibrium. Finding a mixed (a well-known concept in game theory) Nash equilibrium in bi-matrix games is a special case that is shown to be PPAD-hard \cite{Chen} where maximizing (or equivalently minimizing the loss) of the expected payoff is the optimization goal. The rest of the proof is the same as Corollary \ref{nashvi2}.
\end{proof}

\section{Conclusion and Future Work}
We have initiated a study of the computational aspects of well-known instances of variational inequalities and considered their application to two different game-theoretic concepts with significant variations in constraints and formats. Variational inequalities have become an important tool for modeling solution concepts in game theory, and our results contribute to this approach by providing it with an algorithmic foundation. Our result, which establishes inclusion in PPAD, highlights the significance of PPAD with respect to a broad range of central problems in game theory under reasonable assumptions.  Below we suggest some possible research directions.

 
 We have not considered exact versions of variational inequality problems, but conjecture that it should be possible to prove FIXP-completeness for appropriate formulations (see \cite{Ratsikas}). Also, deriving computational complexity results for problems we studied under different assumptions could be of potential interest.   Deriving related computational results such as communication lower bounds could provide additional understanding of their application. Further investigation of the computational complexity of problems related to variational inequalities and other similar optimization problems could be of interest (for example, see \cite{Approximate}).  We also can easily combine two notions that we studied in order to derive a PPAD-completeness proof for a version of multi-leader follower games in which an approximate remedial equilibrium necessitates resistance to coalitions of leaders trying to be better off. This notion could be of interest in oligopoly markets and multi-party computation \cite{Abraham1,coalitionproofapplications,Pang2005}.  Moreover, there has been important work on games with uncertain data. Specifically, in addition to probabilistic models, distribution-free models based on a worst-case scenario have received attention in recent years and have a variety of applications \cite{Robustdef,Stocastic3bookapp,Survey,stocastic1}. Specifically, we believe GQVI could be used for the version of multi-leader-follower games with the form of uncertainty introduced in \cite{Robustuncertold}.  In general, we may also study computational aspects of multi-leader-follower games in which several remedial models are combined and consider their relationship to different variants of variational inequalities (see also \cite{Robustuncertold,mlfapp3,Robustuncert}). 

\paragraph*{Acknowledgement}

We would like to acknowledge the support of NSERC discovery grant RGPIN-2021-02481. The results of this paper are an extension of the poster that appeared on AAMAS accompanied by a 2-page abstract \cite{AAMAS24} which did not include resilient Nash equilibrium and related problems and all required extensions.

 \bibliography{refs}

\begin{thebibliography}{10}

\bibitem{Abraham1}
Ittai Abraham, Danny Dolev, Rica Gonen, and Joe Halpern.
\newblock Distributed {C}omputing {M}eets {G}ame {T}heory.
\newblock In {\em Proceedings of the twenty-fifth annual {ACM} symposium on Principles of distributed computing - {PODC} '06}, New York, New York, USA, 2006. ACM Press.

\bibitem{Abraham3}
Ittai Abraham, Danny Dolev, and Joseph~Y Halpern.
\newblock Lower {B}ounds on {I}mplementing {R}obust and {R}esilient {M}ediators.
\newblock In {\em Theory of Cryptography}, pages 302--319. Springer Berlin Heidelberg, Berlin, Heidelberg, 2008.

\bibitem{Robustdef}
Michele Aghassi and Dimitris Bertsimas.
\newblock Robust game theory.
\newblock {\em Mathematical Programming}, 107(1-2):231--273, February 2006.
\newblock \href {https://doi.org/10.1007/s10107-005-0686-0} {\path{doi:10.1007/s10107-005-0686-0}}.

\bibitem{StrongkNash}
Nir Andelman, Michal Feldman, and Yishay Mansour.
\newblock Strong price of anarchy.
\newblock {\em Games and Economic Behavior}, 65(2):289--317, 2009.
\newblock URL: \url{https://www.sciencedirect.com/science/article/pii/S0899825608000687}, \href {https://doi.org/10.1016/j.geb.2008.03.005} {\path{doi:10.1016/j.geb.2008.03.005}}.

\bibitem{Strongnash}
Robert~J. Aumann.
\newblock {\em 16. {A}cceptable {P}oints in {G}eneral {C}ooperative n-{P}erson {G}ames}, page 287–324.
\newblock Princeton University Press, December 1959.
\newblock URL: \url{http://dx.doi.org/10.1515/9781400882168-018}, \href {https://doi.org/10.1515/9781400882168-018} {\path{doi:10.1515/9781400882168-018}}.

\bibitem{Barton2018}
Paul~I Barton, Kamil~A Khan, Peter Stechlinski, and Harry A~J Watson.
\newblock Computationally relevant generalized derivatives: theory, evaluation and applications.
\newblock {\em Optimization Methods and Software}, 33(4-6):1030--1072, 2018.
\newblock URL: \url{https://www.tandfonline.com/doi/full/10.1080/10556788.2017.1374385}, \href {https://doi.org/10.1080/10556788.2017.1374385} {\path{doi:10.1080/10556788.2017.1374385}}.

\bibitem{stacklbergbook}
Tamer Ba{\c{s}}ar and Geert~Jan Olsder.
\newblock {\em Dynamic {N}oncooperative {G}ame {T}heory, 2nd Edition}.
\newblock Society for Industrial and Applied Mathematics, jan 1998.
\newblock URL: \url{https://doi.org/10.1137%2F1.9781611971132}, \href {https://doi.org/10.1137/1.9781611971132} {\path{doi:10.1137/1.9781611971132}}.

\bibitem{fixpointbook}
Heinz~H. Bauschke, Regina~S. Burachik, Patrick~L. Combettes, Veit Elser, D.~Russell Luke, and Henry Wolkowicz, editors.
\newblock {\em Fixed-{P}oint {A}lgorithms for {I}nverse {P}roblems in {S}cience and {E}ngineering}.
\newblock Springer New York, 2011.
\newblock \href {https://doi.org/10.1007/978-1-4419-9569-8} {\path{doi:10.1007/978-1-4419-9569-8}}.

\bibitem{bilevel2}
Omar Ben-Ayed.
\newblock Bilevel linear programming.
\newblock {\em Computers \& Operations Research}, 20(5):485--501, 1993.
\newblock URL: \url{https://www.sciencedirect.com/science/article/pii/0305054893900139}, \href {https://doi.org/10.1016/0305-0548(93)90013-9} {\path{doi:10.1016/0305-0548(93)90013-9}}.

\bibitem{Bensoussan1}
A.~Bensoussan and J.~L. Lions.
\newblock Nouvelles {M}ethodes en {C}ontrole {I}mpulsionnel.
\newblock {\em Applied Mathematics \& Optimization}, 1(4):289--312, December 1975.
\newblock \href {https://doi.org/10.1007/bf01447955} {\path{doi:10.1007/bf01447955}}.

\bibitem{Bensoussan4}
A.~Bensoussan and J.~L. Lions.
\newblock Problemes de temps {D}'arret {O}ptimaux et de {P}erturbations {S}ingulieres {D}ans {L}es {I}nequations {V}ariationnelles.
\newblock In {\em Lecture Notes in Economics and Mathematical Systems}, pages 567--584. Springer Berlin Heidelberg, 1975.
\newblock \href {https://doi.org/10.1007/978-3-642-46317-4_40} {\path{doi:10.1007/978-3-642-46317-4_40}}.

\bibitem{Bensoussan3}
Alain Bensoussan and Jacques~Louis Lions.
\newblock Nouvelle formulation de problèmes de contrôle impulsionnel et applications.
\newblock 1973.

\bibitem{Bensoussan2}
Alain Bensoussan and Jacques~Louis Lions.
\newblock Contr{\^o}le impulsionnel et in{\'e}quations quasi variationnelles.
\newblock 1982.
\newblock URL: \url{https://api.semanticscholar.org/CorpusID:118306059}.

\bibitem{Coalationproof}
B.Douglas Bernheim, Bezalel Peleg, and Michael~D Whinston.
\newblock Coalition-{P}roof {N}ash {E}quilibria i. {C}oncepts.
\newblock {\em Journal of Economic Theory}, 42(1):1--12, 1987.
\newblock URL: \url{https://www.sciencedirect.com/science/article/pii/0022053187900998}, \href {https://doi.org/10.1016/0022-0531(87)90099-8} {\path{doi:10.1016/0022-0531(87)90099-8}}.

\bibitem{coalitionproofapplications}
B.Douglas Bernheim and Michael~D Whinston.
\newblock Coalition-{P}roof {N}ash {E}quilibria ii. {A}pplications.
\newblock {\em Journal of Economic Theory}, 42(1):13--29, 1987.
\newblock URL: \url{https://www.sciencedirect.com/science/article/pii/0022053187901001}, \href {https://doi.org/10.1016/0022-0531(87)90100-1} {\path{doi:10.1016/0022-0531(87)90100-1}}.

\bibitem{battaexsy}
P.~Bhattacharyya and V.~Vetrivel.
\newblock An {E}xistence {T}heorem on {G}eneralized {Q}uasi-{V}ariational {I}nequality {P}roblem.
\newblock {\em Journal of Mathematical Analysis and Applications}, 188(2):610--615, 1994.
\newblock URL: \url{https://www.sciencedirect.com/science/article/pii/S0022247X8471448X}, \href {https://doi.org/10.1006/jmaa.1994.1448} {\path{doi:10.1006/jmaa.1994.1448}}.

\bibitem{Approximate}
Giancarlo Bigi, Lorenzo Lampariello, Simone Sagratella, and Valerio~Giuseppe Sasso.
\newblock Approximate variational inequalities and equilibria.
\newblock {\em Computational Management Science}, 20(1), September 2023.
\newblock \href {https://doi.org/10.1007/s10287-023-00476-w} {\path{doi:10.1007/s10287-023-00476-w}}.

\bibitem{strongnashnothard2}
Eleonora Braggion, Nicola Gatti, Roberto Lucchetti, and Tuomas Sandholm.
\newblock Strong {N}ash equilibria and mixed strategies, 2015.
\newblock \href {https://arxiv.org/abs/1502.05629} {\path{arXiv:1502.05629}}.

\bibitem{strongnashnothard3}
Eleonora Braggion, Nicola Gatti, Roberto Lucchetti, Tuomas Sandholm, and Bernhard von Stengel.
\newblock Strong {N}ash equilibria and mixed strategies.
\newblock {\em International Journal of Game Theory}, 49(3):699–710, July 2020.
\newblock URL: \url{http://dx.doi.org/10.1007/s00182-020-00723-3}, \href {https://doi.org/10.1007/s00182-020-00723-3} {\path{doi:10.1007/s00182-020-00723-3}}.

\bibitem{chan}
D.~Chan and J.~S. Pang.
\newblock The {G}eneralized {Q}uasi-{V}ariational {I}nequality {P}roblem.
\newblock {\em Mathematics of Operations Research}, 7(2):211--222, 1982.
\newblock URL: \url{http://www.jstor.org/stable/3689542}.

\bibitem{Chen}
Xi~Chen, Xiaotie Deng, and Shang-Hua Teng.
\newblock Settling the {C}omplexity of {C}omputing {T}wo-{P}layer {N}ash equilibria.
\newblock {\em J. ACM}, 56(3):1--57, May 2009.

\bibitem{mlfapp1}
Yihsu Chen, Benjamin~F. Hobbs, Sven Leyffer, and Todd~S. Munson.
\newblock Leader-{F}ollower {E}quilibria for {E}lectric {P}ower and {NO} x {A}llowances markets.
\newblock {\em Computational Management Science}, 3(4):307--330, June 2006.
\newblock \href {https://doi.org/10.1007/s10287-006-0020-1} {\path{doi:10.1007/s10287-006-0020-1}}.

\bibitem{Stackelbergapp1}
S.~Chan Choi, Wayne~S. Desarbo, and Patrick~T. Harker.
\newblock Product {P}ositioning {U}nder {P}rice {C}ompetition.
\newblock {\em Management Science}, 36(2):175--199, feb 1990.
\newblock URL: \url{https://doi.org/10.1287%2Fmnsc.36.2.175}, \href {https://doi.org/10.1287/mnsc.36.2.175} {\path{doi:10.1287/mnsc.36.2.175}}.

\bibitem{Stocastic3bookapp}
Antonio~J. Conejo, Miguel Carri{\'{o}}n, and Juan~M. Morales.
\newblock {\em Decision {M}aking {U}nder {U}ncertainty in {E}lectricity {M}arkets}.
\newblock Springer {US}, 2010.
\newblock \href {https://doi.org/10.1007/978-1-4419-7421-1} {\path{doi:10.1007/978-1-4419-7421-1}}.

\bibitem{stackplex}
Stefano Coniglio, Nicola Gatti, and Alberto Marchesi.
\newblock Computing a {P}essimistic {S}tackelberg {E}quilibrium with {M}ultiple {F}ollowers: {T}he {M}ixed-{P}ure {C}ase.
\newblock {\em Algorithmica}, 82(5):1189--1238, November 2019.
\newblock \href {https://doi.org/10.1007/s00453-019-00648-8} {\path{doi:10.1007/s00453-019-00648-8}}.

\bibitem{Conitzer}
Vincent Conitzer and Tuomas Sandholm.
\newblock {N}ew {C}omplexity {R}esults about {N}ash {E}quilibria.
\newblock {\em Games Econ. Behav.}, 63(2):621--641, July 2008.

\bibitem{100challenge}
Constantinos Daskalakis.
\newblock Non-concave {G}ames: {A} {C}hallenge for {G}ame {T}heory’s {N}ext 100 {Y}ears.
\newblock 2022.

\bibitem{Daskalakis2006}
Constantinos Daskalakis, Alex Fabrikant, and Christos~H. Papadimitriou.
\newblock {\em The Game World Is Flat: The Complexity of Nash Equilibria in Succinct Games}, page 513–524.
\newblock Springer Berlin Heidelberg, 2006.
\newblock URL: \url{http://dx.doi.org/10.1007/11786986_45}, \href {https://doi.org/10.1007/11786986_45} {\path{doi:10.1007/11786986_45}}.

\bibitem{ComplexityNASH}
Constantinos Daskalakis, Paul~W Goldberg, and Christos~H Papadimitriou.
\newblock The {C}omplexity of {C}omputing a {N}ash {E}quilibrium.
\newblock {\em SIAM j. comput.}, 39(1):195--259, January 2009.

\bibitem{Social}
Gerard Debreu.
\newblock A {S}ocial {E}quilibrium {E}xistence {T}heorem.
\newblock {\em Proceedings of the National Academy of Sciences}, 38(10):886--893, 1952.
\newblock URL: \url{https://www.pnas.org/doi/abs/10.1073/pnas.38.10.886}, \href {https://arxiv.org/abs/https://www.pnas.org/doi/pdf/10.1073/pnas.38.10.886} {\path{arXiv:https://www.pnas.org/doi/pdf/10.1073/pnas.38.10.886}}, \href {https://doi.org/10.1073/pnas.38.10.886} {\path{doi:10.1073/pnas.38.10.886}}.

\bibitem{thesis2}
Andreas Ehrenmann.
\newblock {\em Equilibrium problems with equilibrium constraints and their applications in electricity markets}.
\newblock PhD thesis, Cambridge University, 2005.

\bibitem{ELMfixp}
Samuel Eilenberg and Deane Montgomery.
\newblock Fixed {P}oint {T}heorems for {M}ulti-{V}alued {T}ransformations.
\newblock {\em American Journal of Mathematics}, 68(2):214--222, 1946.
\newblock URL: \url{http://www.jstor.org/stable/2371832}.

\bibitem{Etessami}
Kousha Etessami and Mihalis Yannakakis.
\newblock On the {C}omplexity of {N}ash {E}quilibria and other {F}ixed {P}oints.
\newblock {\em SIAM j. comput.}, 39(6):2531--2597, January 2010.

\bibitem{Generalizedqvari}
Francisco Facchinei, Andreas Fischer, and Veronica Piccialli.
\newblock On generalized {N}ash games and variational inequalities.
\newblock {\em Oper. Res. Lett.}, 35(2):159--164, March 2007.

\bibitem{Generalized}
Francisco Facchinei and Christian Kanzow.
\newblock {G}eneralized {N}ash equilibrium problems.
\newblock {\em 4OR}, 5(3):173--210, September 2007.

\bibitem{pang2004}
Francisco Facchinei and Jong-Shi Pang, editors.
\newblock {\em Finite-{D}imensional {V}ariational {I}nequalities and {C}omplementarity {P}roblems}.
\newblock Springer New York, 2004.
\newblock \href {https://doi.org/10.1007/b97543} {\path{doi:10.1007/b97543}}.

\bibitem{Fang}
S.~C. Fang and E.~L. Peterson.
\newblock Generalized variational inequalities.
\newblock {\em Journal of Optimization Theory and Applications}, 38(3):363--383, November 1982.
\newblock \href {https://doi.org/10.1007/bf00935344} {\path{doi:10.1007/bf00935344}}.

\bibitem{Fearnley}
John Fearnley, Paul~W. Goldberg, Alexandros Hollender, and Rahul Savani.
\newblock The complexity of gradient descent: {CLS} = {PPAD} $\cap$ {PLS}.
\newblock In {\em Proceedings of the 53rd Annual {ACM} {SIGACT} Symposium on Theory of Computing}. {ACM}, June 2021.
\newblock \href {https://doi.org/10.1145/3406325.3451052} {\path{doi:10.1145/3406325.3451052}}.

\bibitem{MPEG3}
M~Ferris, S~Dirkse, and A~Meeraus.
\newblock Mathematical programs with equilibrium constraints: automatic reformulation and solution via constrained optimization.
\newblock Technical report, 2002.

\bibitem{Ratsikas}
Aris Filos-Ratsikas, Kristoffer~A. Hansen, Kasper H\o{}gh, and Alexandros Hollender.
\newblock Fixp-membership via convex optimization: Games, cakes, and markets.
\newblock {\em SIAM Journal on Computing}, 0(0):FOCS21--30--FOCS21--84, 0.
\newblock \href {https://arxiv.org/abs/https://doi.org/10.1137/22M1472656} {\path{arXiv:https://doi.org/10.1137/22M1472656}}, \href {https://doi.org/10.1137/22M1472656} {\path{doi:10.1137/22M1472656}}.

\bibitem{STOC24}
Aris Filos-Ratsikas, Kristoffer~Arnsfelt Hansen, Kasper Høgh, and Alexandros Hollender.
\newblock {PPAD}-membership for problems with exact rational solutions: A general approach via convex optimization.
\newblock In {\em Proceedings of the 56th Annual ACM Symposium on Theory of Computing}, STOC ’24. ACM, June 2024.
\newblock URL: \url{http://dx.doi.org/10.1145/3618260.3649645}, \href {https://doi.org/10.1145/3618260.3649645} {\path{doi:10.1145/3618260.3649645}}.

\bibitem{Gabriel2012}
Steven~A. Gabriel, Antonio~J. Conejo, J.~David Fuller, Benjamin~F. Hobbs, and Carlos Ruiz.
\newblock Equilibrium {P}roblems with {E}quilibrium {C}onstraints.
\newblock In {\em Complementarity Modeling in Energy Markets}, pages 263--321. Springer New York, June 2012.
\newblock \href {https://doi.org/10.1007/978-1-4419-6123-5_7} {\path{doi:10.1007/978-1-4419-6123-5_7}}.

\bibitem{resilientapplication}
Dian Gadjov and Lacra Pavel.
\newblock Resilient {N}ash equilibrium {S}eeking in the {P}artial {I}nformation {S}etting, 2021.
\newblock \href {https://arxiv.org/abs/2111.13735} {\path{arXiv:2111.13735}}.

\bibitem{Gatti2013}
Nicola Gatti, Marco Rocco, and Tuomas Sandholm.
\newblock Algorithms for {S}trong {N}ash {E}quilibrium with {M}ore than {T}wo {A}gents.
\newblock {\em Proceedings of the AAAI Conference on Artificial Intelligence}, 27(1):342–349, June 2013.
\newblock URL: \url{http://dx.doi.org/10.1609/aaai.v27i1.8652}, \href {https://doi.org/10.1609/aaai.v27i1.8652} {\path{doi:10.1609/aaai.v27i1.8652}}.

\bibitem{strongnashnothard1}
Nicola Gatti, Marco Rocco, and Tuomas Sandholm.
\newblock Strong {N}ash {E}quilibrium {I}s in {S}moothed {P}.
\newblock In {\em AAAI Conference on Artificial Intelligence}, 2013.
\newblock URL: \url{https://api.semanticscholar.org/CorpusID:14200011}.

\bibitem{gatti2017verification}
Nicola Gatti, Marco Rocco, and Tuomas Sandholm.
\newblock On the {V}erification and {C}omputation of {S}trong nash equilibrium, 2017.
\newblock \href {https://arxiv.org/abs/1711.06318} {\path{arXiv:1711.06318}}.

\bibitem{Giannessi1995}
F.~Giannessi and A.~Maugeri, editors.
\newblock {\em Variational {I}nequalities and {N}etwork {E}quilibrium {P}roblems}.
\newblock Springer {US}, 1995.
\newblock \href {https://doi.org/10.1007/978-1-4899-1358-6} {\path{doi:10.1007/978-1-4899-1358-6}}.

\bibitem{Gilboa}
Itzhak Gilboa and Eitan Zemel.
\newblock Nash and {C}orrelated {E}quilibria: {S}ome {C}omplexity {C}onsiderations.
\newblock {\em Games and Economic Behavior}, 1(1):80--93, 1989.
\newblock URL: \url{https://www.sciencedirect.com/science/article/pii/0899825689900067}, \href {https://doi.org/10.1016/0899-8256(89)90006-7} {\path{doi:10.1016/0899-8256(89)90006-7}}.

\bibitem{Grostel}
Martin Grötschel, L{\'{a}}szl{\'{o}} Lov{\'{a}}sz, and Alexander Schrijver.
\newblock {\em Geometric {A}lgorithms and {C}ombinatorial {O}ptimization}.
\newblock Springer Berlin Heidelberg, 1993.
\newblock URL: \url{https://doi.org/10.1007%2F978-3-642-78240-4}, \href {https://doi.org/10.1007/978-3-642-78240-4} {\path{doi:10.1007/978-3-642-78240-4}}.

\bibitem{Harker}
Patrick~T. Harker.
\newblock Generalized {N}ash games and quasi-variational inequalities.
\newblock {\em European Journal of Operational Research}, 54(1):81--94, 1991.
\newblock URL: \url{https://www.sciencedirect.com/science/article/pii/037722179190325P}, \href {https://doi.org/10.1016/0377-2217(91)90325-P} {\path{doi:10.1016/0377-2217(91)90325-P}}.

\bibitem{Panagiotopoulos1}
Jaroslav Haslinger, Markku Miettinen, and Panagiotis~D. Panagiotopoulos.
\newblock {\em Finite {E}lement {M}ethod for {H}emivariational {I}nequalities}.
\newblock Springer {US}, 1999.
\newblock \href {https://doi.org/10.1007/978-1-4757-5233-5} {\path{doi:10.1007/978-1-4757-5233-5}}.

\bibitem{Robustuncertold}
Shunsuke Hayashi, Nobuo Yamashita, and Masao Fukushimay.
\newblock Robust {N}ash {E}quilibria and {S}econd-{O}rder {C}one {C}omplementarity {P}roblems.
\newblock {\em Journal of Nonlinear and Convex Analysis}, 6, 01 2005.

\bibitem{Stackelbergapp2}
B.E. Hobbs.
\newblock Linear complementarity models of nash-cournot competition in bilateral and poolco power markets.
\newblock {\em IEEE Transactions on Power Systems}, 16(2):194--202, 2001.
\newblock \href {https://doi.org/10.1109/59.918286} {\path{doi:10.1109/59.918286}}.

\bibitem{mlfapp2}
B.F. Hobbs, C.B. Metzler, and J.-S. Pang.
\newblock Strategic gaming analysis for electric power systems: an {MPEC} approach.
\newblock {\em IEEE Transactions on Power Systems}, 15(2):638--645, 2000.
\newblock \href {https://doi.org/10.1109/59.867153} {\path{doi:10.1109/59.867153}}.

\bibitem{Hu2011}
Ming Hu and Masao Fukushima.
\newblock Smoothing approach to {N}ash equilibrium formulations for a class of equilibrium problems with shared complementarity constraints.
\newblock {\em Computational Optimization and Applications}, 52(2):415--437, June 2011.
\newblock \href {https://doi.org/10.1007/s10589-011-9416-0} {\path{doi:10.1007/s10589-011-9416-0}}.

\bibitem{mlfapp3}
Ming Hu and Masao Fukushima.
\newblock Variational {I}nequality {F}ormulation of a {C}lass of {M}ulti-{L}eader-{F}ollower {G}ames.
\newblock {\em Journal of Optimization Theory and Applications}, 151(3):455--473, August 2011.
\newblock \href {https://doi.org/10.1007/s10957-011-9901-8} {\path{doi:10.1007/s10957-011-9901-8}}.

\bibitem{Robustuncert}
Ming Hu and Masao Fukushima.
\newblock Existence, {U}niqueness, and {C}omputation of {R}obust {N}ash {E}quilibria in a {C}lass of {M}ulti-{L}eader-{F}ollower {G}ames.
\newblock {\em SIAM Journal on Optimization}, 23(2):894--916, 2013.
\newblock \href {https://arxiv.org/abs/https://doi.org/10.1137/120863873} {\path{arXiv:https://doi.org/10.1137/120863873}}, \href {https://doi.org/10.1137/120863873} {\path{doi:10.1137/120863873}}.

\bibitem{Survey}
Ming Hu and Masao Fukushima.
\newblock {MULTI}-{LEADER}-{FOLLOWER} {GAMES}: {MODELS}, {METHODS} {AND} {APPLICATIONS}.
\newblock {\em Journal of the Operations Research Society of Japan}, 58(1):1--23, 2015.
\newblock \href {https://doi.org/10.15807/jorsj.58.1} {\path{doi:10.15807/jorsj.58.1}}.

\bibitem{thesis3}
Xinmin Hu.
\newblock {\em Athematical programs with complementarity constraints and game theory models in electricity markets}.
\newblock PhD thesis, university of Melbourne, 2003.

\bibitem{Hu2007}
Xinmin Hu and Daniel Ralph.
\newblock Using {EPECs} to {M}odel {B}ilevel {G}ames in {R}estructured {E}lectricity {M}arkets with {L}ocational {P}rices.
\newblock {\em Operations Research}, 55(5):809--827, October 2007.
\newblock \href {https://doi.org/10.1287/opre.1070.0431} {\path{doi:10.1287/opre.1070.0431}}.

\bibitem{socialcoalition}
Tatsuro Ichiishi.
\newblock A {S}ocial {C}oalitional {E}quilibrium {E}xistence {L}emma.
\newblock {\em Econometrica}, 49(2):369, March 1981.
\newblock URL: \url{http://dx.doi.org/10.2307/1913316}, \href {https://doi.org/10.2307/1913316} {\path{doi:10.2307/1913316}}.

\bibitem{walrasian}
Alejandro Jofré, R.~Terry Rockafellar, and Roger J-B. Wets.
\newblock Variational {I}nequalities and {E}conomic {E}quilibrium.
\newblock {\em Mathematics of Operations Research}, 32(1):32--50, 2007.
\newblock URL: \url{http://www.jstor.org/stable/25151770}.

\bibitem{Kakutani}
Shizuo Kakutani.
\newblock A generalization of brouwer's fixed point theorem.
\newblock {\em Duke Mathematical Journal}, 8(3):457 -- 459, 1941.
\newblock \href {https://doi.org/10.1215/S0012-7094-41-00838-4} {\path{doi:10.1215/S0012-7094-41-00838-4}}.

\bibitem{MFCS}
Bruce~M Kapron and Koosha Samieefar.
\newblock The {C}omputational {C}omplexity of {E}quilibria with {S}trategic {C}onstraints.
\newblock 2023.

\bibitem{AAMAS24}
Bruce~M. Kapron and Koosha Samieefar.
\newblock On the {C}omputational {C}omplexity of {Q}uasi-{V}ariational {I}nequalities and {M}ulti-{L}eader-{F}ollower {G}ames.
\newblock In {\em Proceedings of the 23rd International Conference on Autonomous Agents and Multiagent Systems}, AAMAS '24, page 2324–2326, Richland, SC, 2024. International Foundation for Autonomous Agents and Multiagent Systems.

\bibitem{existenceqvi}
Michal Kočvara and Jiří~V. Outrata.
\newblock On a class of quasi-variational inequalities.
\newblock {\em Optimization Methods and Software}, 5(4):275--295, 1995.
\newblock \href {https://arxiv.org/abs/https://doi.org/10.1080/10556789508805617} {\path{arXiv:https://doi.org/10.1080/10556789508805617}}, \href {https://doi.org/10.1080/10556789508805617} {\path{doi:10.1080/10556789508805617}}.

\bibitem{mlfapp4}
Sven Leyffer and Todd Munson.
\newblock Solving multi-leader{\textendash}common-follower games.
\newblock {\em Optimization Methods and Software}, 25(4):601--623, aug 2010.
\newblock URL: \url{https://doi.org/10.1080%2F10556780903448052}, \href {https://doi.org/10.1080/10556780903448052} {\path{doi:10.1080/10556780903448052}}.

\bibitem{MPEG}
Zhi-Quan Luo, Jong-Shi Pang, and Daniel Ralph.
\newblock {\em Mathematical {P}rograms with {E}quilibrium {C}onstraints}.
\newblock Cambridge University Press, November 1996.
\newblock \href {https://doi.org/10.1017/cbo9780511983658} {\path{doi:10.1017/cbo9780511983658}}.

\bibitem{Marchesi2021}
Alberto Marchesi.
\newblock {\em Leadership {G}ames: {M}ultiple {F}ollowers, {M}ultiple {L}eaders, and Perfection}, pages 107--118.
\newblock Springer International Publishing, Cham, 2021.
\newblock \href {https://doi.org/10.1007/978-3-030-62476-7_10} {\path{doi:10.1007/978-3-030-62476-7_10}}.

\bibitem{qviexist2}
Monica Milasi.
\newblock Existence theorem for a class of generalized quasi-variational inequalities.
\newblock {\em Journal of Global Optimization}, 60(4):679--688, October 2013.
\newblock \href {https://doi.org/10.1007/s10898-013-0114-6} {\path{doi:10.1007/s10898-013-0114-6}}.

\bibitem{Panagiotopoulos2}
D.~Motreanu and P.~D. Panagiotopoulos.
\newblock {\em Minimax {T}heorems and {Q}ualitative {P}roperties of the {S}olutions of {H}emivariational {I}nequalities}.
\newblock Springer {US}, 1999.
\newblock \href {https://doi.org/10.1007/978-1-4615-4064-9} {\path{doi:10.1007/978-1-4615-4064-9}}.

\bibitem{Nash}
John Nash.
\newblock Non-{C}ooperative {G}ames.
\newblock {\em Annals of Mathematics}, 54(2):286--295, 1951.
\newblock URL: \url{http://www.jstor.org/stable/1969529}.

\bibitem{Nash0}
John~F. Nash.
\newblock Equilibrium points in n-person games.
\newblock {\em Proceedings of the National Academy of Sciences}, 36(1):48--49, January 1950.
\newblock \href {https://doi.org/10.1073/pnas.36.1.48} {\path{doi:10.1073/pnas.36.1.48}}.

\bibitem{StrongNashexist}
Rabia Nessah and Guoqiang Tian.
\newblock On the existence of strong nash equilibria.
\newblock {\em Journal of Mathematical Analysis and Applications}, 414(2):871--885, 2014.
\newblock URL: \url{https://www.sciencedirect.com/science/article/pii/S0022247X14000377}, \href {https://doi.org/10.1016/j.jmaa.2014.01.030} {\path{doi:10.1016/j.jmaa.2014.01.030}}.

\bibitem{MPEG2}
Ji{\v{r}}i Outrata, Michal Ko{\v{c}}vara, and Jochem Zowe.
\newblock {\em Nonsmooth {A}pproach to {O}ptimization {P}roblems with {E}quilibrium {C}onstraints}.
\newblock Springer {US}, 1998.
\newblock \href {https://doi.org/10.1007/978-1-4757-2825-5} {\path{doi:10.1007/978-1-4757-2825-5}}.

\bibitem{Stackelbergapp3}
Jiří~V. Outrata.
\newblock A note on a class of equilibrium problems with equilibrium constraints.
\newblock {\em Kybernetika}, 40(5):[585]--594, 2004.
\newblock URL: \url{http://eudml.org/doc/33721}.

\bibitem{Pang2005}
Jong-Shi Pang and Masao Fukushima.
\newblock Quasi-variational inequalities, generalized nash equilibria, and multi-leader-follower games.
\newblock {\em Computational Management Science}, 2(1):21--56, January 2005.
\newblock \href {https://doi.org/10.1007/s10287-004-0010-0} {\path{doi:10.1007/s10287-004-0010-0}}.

\bibitem{pangchan2}
Jong-Shi Pang and Jen-Chih Yao.
\newblock On a {G}eneralization of a {N}ormal {M}ap and {E}quation.
\newblock {\em SIAM Journal on Control and Optimization}, 33(1):168--184, 1995.
\newblock \href {https://arxiv.org/abs/https://doi.org/10.1137/S0363012992241673} {\path{arXiv:https://doi.org/10.1137/S0363012992241673}}, \href {https://doi.org/10.1137/S0363012992241673} {\path{doi:10.1137/S0363012992241673}}.

\bibitem{concavegames}
Christos Papadimitriou, Emmanouil-Vasileios Vlatakis-Gkaragkounis, and Manolis Zampetakis.
\newblock The {C}omputational {C}omplexity of {M}ulti-player {C}oncave games and {K}akutani {F}ixed points.
\newblock In {\em Proceedings of the 24th ACM Conference on Economics and Computation}, EC '23, page 1045, New York, NY, USA, 2023. Association for Computing Machinery.
\newblock \href {https://doi.org/10.1145/3580507.3597812} {\path{doi:10.1145/3580507.3597812}}.

\bibitem{PPADcomplexity}
Christos~H Papadimitriou.
\newblock On the {C}omplexity of the {P}arity {A}rgument and {O}ther {I}nefficient {P}roofs of {E}xistence.
\newblock {\em J. Comput. Syst. Sci.}, 48(3):498--532, June 1994.

\bibitem{strongnashvotingexists}
Bezalel Peleg.
\newblock {\em Game {T}heoretic {A}nalysis of {V}oting in {C}ommittees}.
\newblock Cambridge University Press, July 1984.
\newblock URL: \url{http://dx.doi.org/10.1017/CCOL0521259649}, \href {https://doi.org/10.1017/ccol0521259649} {\path{doi:10.1017/ccol0521259649}}.

\bibitem{Rockafellar1998}
R.~Tyrrell Rockafellar and Roger J.~B. Wets.
\newblock {\em Variational {A}nalysis}.
\newblock Springer Berlin Heidelberg, 1998.
\newblock \href {https://doi.org/10.1007/978-3-642-02431-3} {\path{doi:10.1007/978-3-642-02431-3}}.

\bibitem{Nesterov}
Anton Rodomanov and Yurii Nesterov.
\newblock Subgradient ellipsoid method for nonsmooth convex problems.
\newblock {\em Mathematical Programming}, 199(1–2):305–341, June 2022.
\newblock URL: \url{http://dx.doi.org/10.1007/s10107-022-01833-4}, \href {https://doi.org/10.1007/s10107-022-01833-4} {\path{doi:10.1007/s10107-022-01833-4}}.

\bibitem{rosen}
J.~B. Rosen.
\newblock Existence and {U}niqueness of {E}quilibrium {P}oints for {C}oncave {N}-{P}erson {G}ames.
\newblock {\em Econometrica}, 33(3):520--534, 1965.
\newblock URL: \url{http://www.jstor.org/stable/1911749}.

\bibitem{Rozenfeld}
Ola Rozenfeld.
\newblock Strong equilibrium in congestion games, 2007.

\bibitem{vadhan}
Grant~R Schoenebeck and Salil Vadhan.
\newblock The {C}omputational {C}omplexity of {N}ash {E}quilibria in {C}oncisely {R}epresented {G}ames.
\newblock {\em ACM trans. comput. theory}, 4(2):1--50, May 2012.

\bibitem{stocastic1}
A.~Shapiro.
\newblock Stochastic {P}rogramming with {E}quilibrium {C}onstraints.
\newblock {\em Journal of Optimization Theory and Applications}, 128(1):221--243, January 2006.
\newblock \href {https://doi.org/10.1007/s10957-005-7566-x} {\path{doi:10.1007/s10957-005-7566-x}}.

\bibitem{Sherali_1984}
Hanif~D. Sherali.
\newblock A multiple {L}eader {S}tackelberg {M}odel and {A}nalysis.
\newblock {\em Operations Research}, 32(2):390--404, apr 1984.
\newblock URL: \url{https://doi.org/10.1287%2Fopre.32.2.390}, \href {https://doi.org/10.1287/opre.32.2.390} {\path{doi:10.1287/opre.32.2.390}}.

\bibitem{thesis1}
Che-Lin Su.
\newblock {\em {EQUILIBRIUM} {PROBLEMS} {WITH} {EQUILIBRIUM} {CONSTRAINTS:STATIONARITIES}, {ALGORITHMS}, {AND APPLICATIONS}}.
\newblock PhD thesis, STANFORD UNIVERSITY, 2005.

\bibitem{SU200774}
Che-Lin Su.
\newblock Analysis on the forward market equilibrium model.
\newblock {\em Operations Research Letters}, 35(1):74--82, 2007.
\newblock URL: \url{https://www.sciencedirect.com/science/article/pii/S0167637706000174}, \href {https://doi.org/10.1016/j.orl.2006.01.006} {\path{doi:10.1016/j.orl.2006.01.006}}.

\bibitem{bilevel}
Luis~N. Vicente and Paul~H. Calamai.
\newblock Bilevel and multilevel programming: {A} bibliography review.
\newblock {\em Journal of Global Optimization}, 5(3):291--306, October 1994.
\newblock \href {https://doi.org/10.1007/bf01096458} {\path{doi:10.1007/bf01096458}}.

\bibitem{Stackelberg}
Heinrich von Stackelberg.
\newblock {\em The {T}heory of {M}arket {E}conomy}.
\newblock Oxford University Press, 1952.

\bibitem{YAOexsitqvi}
Jen-Chih Yao.
\newblock The generalized quasi-variational inequality problem with applications.
\newblock {\em Journal of Mathematical Analysis and Applications}, 158(1):139--160, 1991.
\newblock URL: \url{https://www.sciencedirect.com/science/article/pii/0022247X91902733}, \href {https://doi.org/10.1016/0022-247X(91)90273-3} {\path{doi:10.1016/0022-247X(91)90273-3}}.

\end{thebibliography}


\begin{thebibliography}{}

\end{thebibliography}

\appendix

\newpage

\section*{Organization of the Appendices}

Appendix \ref{AppendixDefinitionssection} is dedicated to providing standard definitions that were omitted. Appendix \ref{CLassical} inspects various classical existence results relevant to the problems we study here.  Appendix \ref{AppendixElementaryProofsection} contains elementary proofs pertaining to the properties of (L2) distance, variational inequalities, and equilibrium problems that we use throughout the paper (e.g, Proof of Proposition \ref{geneqvari2}). The proof for membership in PPAD for the total resilient Nash equilibrium is available in Appendix \ref{Appendixtotalresilientinppad}.  We also relegated the discussion of weak separation oracles which require more cumbersome definitions to Appendix \ref{AppendixWeakoracles}.  Appendix \ref{AppendixCentralcutellipsoidsection} presents a more generalized version of the sub-gradient-cut method we require for the membership in PPAD of the resilient Nash equilibrium and related problems.  The decision version of resilient Nash equilibrium is also investigated thoroughly in Appendix \ref{AppendixDescionResilient}.

\section{Omitted Definitions}\label{AppendixDefinitionssection}

\subsection{Normal Form Games}

\begin{definition}
	A \emph{normal  form  game} is specified by a finite set of $k$ \emph{players}, sets $S_1,\dots, S_k$ as \emph{pure strategies} (or \emph{actions}), and  $(U_1,\dots,U_k)$ of \emph{utility functions} where $U_i:S^k\rightarrow\mathbb{R}$.
 	
Any such game has a representation where $U_i$ is given by a $n^k$-dimensional \emph{payoff matrix} $A_i$.
For the case of two players, this gives rise to the standard notion of a \emph{bi-matrix game.}

\end{definition}	

In many situations, it is useful to allow players to randomize over pure strategies. This leads to the following:

\begin{definition}

	 Let $\triangle_{n}$ be the the probability simplex in $n$-dimension, 
	    	\[
	\triangle_{n}=\left\{x\in \mathbb{R}^{n} \mid x_{i} \geq 0, \forall i, \sum_{i=1}^{n} x_{i}=1\right\}.
    \]
    A \emph{mixed strategy} is an element of $\triangle_{n}$, corresponding to a distribution on strategies. If  $x$ is a mixed strategy, $\mathrm{Supp}(x)$ (\emph{support} of $x$) denotes the set elements of $S$ to which $x$ assigns positive probability.
    \end{definition}
    Note that we may view a pure strategy $s$  as a mixed strategy assigning probability $1$ to $s$.
    
	\begin{definition}
	    A \emph{pure strategy profile} is a  $k$-dimensional vector $\textbf{s}=(s_1,\dots,s_k)$ of strategies, where $s_i$ is the pure strategy played by player $i$. We define a \emph{mixed strategy profile} $\textbf{x}=(x_1,\dots,x_k)$ similarly. 
	\end{definition}
If $\textbf{y}=(y_1,\dots,y_k)$ is a strategy profile and $x$ is a strategy, we use the standard notation $(x,\textbf{y}_{-i})$ to denote $(y_1,\dots,y_{i-1},x,y_{i+1},\dots,y_k)$.

\begin{definition}
	\label{expected}
	Given a pure strategy profile $\textbf{s}$, the \emph{payoff} for player $i$ is $U_i(\textbf{s})$.  We can extend this notion to mixed strategies by defining $U_i(\textbf{x})$ for  a mixed strategy profile $\textbf{x}$ to be $\mathbb{E}_{\textbf{s}\leftarrow \textbf{x}}[U_i(\textbf{s}))]$, i.e., the expected value of $U_i(\textbf{s})$ when $\textbf{s}$ is distributed according to $\textbf{x}$.

\end{definition}




\subsection{Pareto Efficiency (Maximization Version)}

\begin{definition}
    A strategy profile $s^*$ is \emph{strictly Pareto optimal} if there is no alternative strategy profile $s$ such that there exists at least one player $i$ such that for all $j \in [k]-\{i\}$:
    \[
    u_j(s^*)\leq  u_j(s)
    \mbox{ and }
 u_i(s^*)   < u_i(s)
    \]
    
\end{definition}
\begin{definition}
    A strategy profile $s^*$ is \emph{weakly Pareto optimal} if there is no strategy profile $s$ such that for all $j\in [k]$: 
    \[
  u_j(s^*)   <u_j(s)
    \]
\end{definition}

\begin{remark}
        Every strong Nash equilibrium is weak Pareto optimal. This condition does not necessarily hold in a coalition-proof Nash equilibrium. A strong Nash equilibrium with strong Pareto optimality is called \emph{super Nash equilibrium}\cite{Rozenfeld}.
\end{remark}

\begin{remark}
    Informally, a strong Pareto optimal strategy is a strategy profile such that any alternative strategy will make at least one player worse off. A weak Pareto optimal is a strategy profile such that any alternative strategy will make at least one player worse off but may not make any player worse off.
\end{remark}

\subsection{Karush–Kuhn–Tucker (KKT) conditions}
The KKT conditions play a crucial role in optimization theory and are often used to analyze and solve constrained optimization problems.  KKT conditions are necessary conditions for a solution to an optimization problem with equality and inequality constraints. Consider the following nonlinear optimization problem in standard form:
$$
\begin{aligned}
& \operatorname{Min} f(x) \\
& \text { s.t }
g_i(x) \leq 0 \text{ and } e_j(x)=0   \\
.
\end{aligned}
$$
The optimization variable $x$ is chosen from a convex $X \subseteq \mathbb{R}^m$ and $f$ is the objective function. For all $i\in [m]$ and $j\in [l]$, $g_i$ are the inequality constraint functions and $e_j$ are the equality constraint functions respectively. The Lagrangian function can be defined as follows:
\[
\mathcal{L}(x, \mu, \kappa)=f(x)+\mu^{T} g(x)+\kappa^{T} e(x)
\]

\[
g(x)=\left[\begin{array}{c}
g_1(x) \\
\vdots \\
g_i(x) \\
\vdots \\
g_m(x)
\end{array}\right], \thickspace e(x)=\left[\begin{array}{c}
e_1(x) \\
\vdots \\
e_j(x) \\
\vdots \\
e_{l}(x)
\end{array}\right], \thickspace \mu=\left[\begin{array}{c}
\mu_1 \\
\vdots \\
\mu_i \\
\vdots \\
\mu_m
\end{array}\right],\thickspace \kappa=\left[\begin{array}{c}
\kappa_1 \\
\vdots \\
\kappa_j \\
\vdots \\
\kappa_{\ell}
\end{array}\right]
\]

The conditions can be defined as follows for minimizing $f(x)$:
\begin{itemize}
    \item Stationarity:
\[\partial f\left(x^*\right)+\sum_{j=1}^{\ell} \kappa_j \partial e_j\left(x^*\right)+\sum_{i=1}^m \mu_i \partial g_i\left(x^*\right) \ni \mathbf{0}\]

    \item Primal feasibility:
\[
\begin{aligned}
& e_j\left(x^*\right)=0, \text { for } j=1, \ldots, l \\
& g_i\left(x^*\right) \leq 0, \text { for } i=1, \ldots, m
\end{aligned}
\]

    \item Dual feasibility:
\[
\mu_i \geq 0, \text { for } i=1, \ldots, m
\]
    \item Complementary slackness:
\[\sum_{i=1}^m \mu_i g_i\left(x^*\right)=0\]

\end{itemize}

\begin{remark}
   We use KKT conditions in the remedial models proposed by \cite{Pang2005}. We also summarize the dual feasibility and complementary slackness in one property. Furthermore, in this paper, we do not have equality constraints. 
\end{remark}

\section{Classic Existence Results}\label{CLassical}

In this section, for the sake of completeness, we restate some well-known existence results related to problems that we study.

\subsection{Existence of GQVI}

 The existence of a solution of a generalized quasi-variational inequality is guaranteed by the Eilenberg-Montgomery fixed-point theorem \cite{chan}.

\begin{proposition}[\cite{chan}]\label{chanproof}
 Let $\mathcal{F}$ and $\mathcal{R}$ be correspondences from $\mathbb{R}^n$ into itself. Suppose that there exists a nonempty compact convex set $C$ such that:
 \begin{enumerate}
     \item $\mathcal{R}(C) \subseteq C$;
     \item $\mathcal{F}$ is a nonempty contractible compact valued upper semicontinuous correspondence on $C$;
     \item $\mathcal{R}$ is a nonempty continuous convex valued correspondence on $C$.
 \end{enumerate}
Then there exists a solution to $GQVI(\mathcal{R}, \mathcal{F})$.     
\end{proposition}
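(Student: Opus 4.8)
The plan is to realize a GQVI solution as a fixed point of an upper semicontinuous correspondence with contractible (hence acyclic) values, and then invoke the Eilenberg--Montgomery fixed-point theorem, which is the natural extension of Kakutani to this setting. First I would fix the ambient space: since $\mathcal{F}$ is compact-valued and upper semicontinuous on the compact set $C$, its image $\mathcal{F}(C)=\bigcup_{x\in C}\mathcal{F}(x)$ is compact, so $K=\overline{\mathrm{conv}}\,\mathcal{F}(C)$ is a nonempty compact convex set and $C\times K\subseteq\mathbb{R}^{n}\times\mathbb{R}^{n}$ is nonempty, compact and convex. For $(x,w)\in C\times K$ I define the linearized-response set $m(x,w)=\{\,y\in\mathcal{R}(x): y^{T}w\le z^{T}w\ \ \forall z\in\mathcal{R}(x)\,\}$, i.e.\ the minimizers of the linear map $z\mapsto z^{T}w$ over $\mathcal{R}(x)$, and set $\Phi(x,w)=m(x,w)\times\mathcal{F}(x)$. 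By hypothesis~1 and the choice of $K$, $\Phi$ maps $C\times K$ into itself.

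Next I would verify the hypotheses of Eilenberg--Montgomery for $\Phi$. Nonemptiness: a continuous function attains its minimum on the nonempty compact set $\mathcal{R}(x)$ (which is closed as part of $\mathcal{R}$ being a well-behaved correspondence, hence compact as a closed subset of $C$), and $\mathcal{F}(x)\neq\emptyset$ by assumption. Compactness of values: $m(x,w)$ is a closed subset of the compact $\mathcal{R}(x)$, and $\mathcal{F}(x)$ is compact by hypothesis. Contractibility of values: $m(x,w)$ is the intersection of the convex set $\mathcal{R}(x)$ with a supporting hyperplane, hence convex, hence contractible; $\mathcal{F}(x)$ is contractible by hypothesis~2; and a finite product of contractible spaces is contractible (equivalently, acyclic, by the K\"unneth formula), so $\Phi$ is acyclic-valued. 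Upper semicontinuity: by Berge's maximum theorem, since $\mathcal{R}$ is a continuous compact-convex-valued correspondence on $C$ and $(x,w,y)\mapsto y^{T}w$ is jointly continuous, the argmin correspondence $(x,w)\mapsto m(x,w)$ is upper semicontinuous; as a product of two upper semicontinuous compact-valued correspondences, $\Phi$ is upper semicontinuous.

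Then the Eilenberg--Montgomery theorem produces $(x^{*},w^{*})\in C\times K$ with $x^{*}\in m(x^{*},w^{*})$ and $w^{*}\in\mathcal{F}(x^{*})$. The first inclusion unpacks as $x^{*}\in\mathcal{R}(x^{*})$ together with $(x^{*})^{T}w^{*}\le y^{T}w^{*}$ for every $y\in\mathcal{R}(x^{*})$, that is $(y-x^{*})^{T}w^{*}\ge 0$ for all $y\in\mathcal{R}(x^{*})$; combined with $w^{*}\in\mathcal{F}(x^{*})$ this is exactly a solution of $GQVI(\mathcal{R},\mathcal{F})$.

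The step I expect to be the main obstacle is the upper semicontinuity of the argmin map: Berge's maximum theorem here genuinely needs the \emph{full} continuity of $\mathcal{R}$ (both upper and lower semicontinuity) together with compact values, which is precisely what hypothesis~3 supplies; with only upper semicontinuity the linearized-response set may fail to behave well. A secondary point to get right is the topological input to Eilenberg--Montgomery, namely translating ``contractible-valued'' into the acyclicity it actually requires and checking that finite products of contractible (acyclic) sets remain acyclic; and one should confirm at the outset that $\mathcal{F}(C)$ is bounded so that $K$ is a legitimate compact convex ambient set.
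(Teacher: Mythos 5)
The paper does not actually prove this proposition---it is imported from \cite{chan}, which (as the paper notes in its introduction) establishes it via the Eilenberg--Montgomery fixed-point theorem. Your argument is a correct reconstruction of exactly that route: the linearized-response map $m(x,w)$ over $\mathcal{R}(x)$, Berge's maximum theorem for upper semicontinuity, convex-hence-contractible-hence-acyclic values, and Eilenberg--Montgomery on the compact convex set $C\times\overline{\mathrm{conv}}\,\mathcal{F}(C)$. The one caveat is that hypothesis~3 as transcribed does not explicitly require $\mathcal{R}$ to be closed-valued, which you need for $m(x,w)$ to be nonempty and compact; this is an omission in the statement (the original result assumes closed convex values), and your proof correctly flags and supplies it.
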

\begin{remark}
     Under the following additional assumption, instead of Eilenberg-Montgomery, Kakutani's fixed point theorem may be applied :

{\em
\begin{enumerate}\setcounter{enumi}{3}
    \item $\mathcal{F}$ is convex valued mapping on $C$.
\end{enumerate}}

\end{remark}

\subsection{Existence Result for Generalized Nash Equilibrium}

Debreu's theorem (Theorem \ref{thm:SocialEq}) specifies necessary conditions that guarantee the existence of a social equilibrium.

\begin{theorem}[\cite{Social}]
\label{thm:SocialEq}
Consider a (Debreu) game represented by $\mathcal{U}=(u_1,\dots,u_k)$ and $\mathcal{S}=(S_1,\dots,S_k)$ with the constraints imposed on the players $\mathcal{R}=(\mathcal{R}_1,\dots,\mathcal{R}_k)$, $1 \le i \le n$, and suppose that:
\begin{enumerate}
    \item $u_i$ is continuous (for all $i \in [k]$);
    \item For any $\lambda \in [0,1]$ $s_i,s'_i \in S_i$ , $\textbf{s}_{-i} \in S_{-i}$, $u_i(\lambda s_i + (1 - \lambda)s'_i,\textbf{s}_{-i}) \ge \min\{u_i(s_i,\textbf{s}_{-i}),u_i(s'_i,\textbf{s}_{-i})\}$;
    \item $\mathcal{R}_i$ is upper and lower semicontinuous for all $i \in [k]$.
    \item  For every $\textbf{s}_{-i} \in S_{-i}$, $\mathcal{R}_i(\textbf{s}_{-i})$ is convex and non-empty (convex-valued and non-empty).
\end{enumerate}
Then, the game specified by $\mathcal{U}$, $\mathcal{S}$, $\mathcal{R}$ admits a social (generalized) equilibrium.
\end{theorem}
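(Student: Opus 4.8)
The plan is to deduce existence of a social equilibrium from Kakutani's fixed point theorem (Theorem~\ref{thm:Kakutani}) applied to the joint best-response correspondence, using the classical Berge maximum theorem to supply the needed continuity. First I would set $\mathcal{S}=\prod_{i=1}^{k}S_i$ --- assuming, as is standard in Debreu's setting, that each $S_i$ is a non-empty, compact and convex subset of $\mathbb{R}^{n_i}$, so that $\mathcal{S}$ is non-empty, compact and convex --- and for each player $i$ and profile $\textbf{s}\in\mathcal{S}$ define the best-response set
\[
B_i(\textbf{s})=\bigl\{\, s_i'\in\mathcal{R}_i(\textbf{s}_{-i}) \ :\ u_i(s_i',\textbf{s}_{-i})=\max\nolimits_{t\in\mathcal{R}_i(\textbf{s}_{-i})}u_i(t,\textbf{s}_{-i})\,\bigr\},
\]
and the product correspondence $B:\mathcal{S}\rightrightarrows\mathcal{S}$ by $B(\textbf{s})=\prod_{i=1}^{k}B_i(\textbf{s})$. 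Unwinding the definitions, a fixed point $\textbf{s}^{*}\in B(\textbf{s}^{*})$ is exactly a profile with $s_i^{*}\in\mathcal{R}_i(\textbf{s}^{*}_{-i})$ and $u_i(s_i^{*},\textbf{s}^{*}_{-i})\ge u_i(s_i,\textbf{s}^{*}_{-i})$ for every $s_i\in\mathcal{R}_i(\textbf{s}^{*}_{-i})$ and every $i$, i.e.\ a social equilibrium in the sense of (\ref{eq:bestsocresp}) with $\epsilon=0$. So it remains to verify that $B$ satisfies the hypotheses of Theorem~\ref{thm:Kakutani}.

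I would then check non-emptiness and convexity of $B(\textbf{s})$. For non-emptiness: by hypotheses~3 and~4 the set $\mathcal{R}_i(\textbf{s}_{-i})$ is non-empty, and being a closed subset of the compact set $S_i$ it is compact, so the continuous function $u_i(\cdot,\textbf{s}_{-i})$ attains its maximum on it and $B_i(\textbf{s})\neq\emptyset$. For convexity: if $s_i',s_i''\in B_i(\textbf{s})$ both attain the optimal value $v$, then for $\lambda\in[0,1]$ the point $\lambda s_i'+(1-\lambda)s_i''$ lies in $\mathcal{R}_i(\textbf{s}_{-i})$ by convexity (hypothesis~4), and by the quasi-concavity hypothesis~2 we get $u_i(\lambda s_i'+(1-\lambda)s_i'',\textbf{s}_{-i})\ge\min\{v,v\}=v$; since $v$ is the maximum over $\mathcal{R}_i(\textbf{s}_{-i})$ this forces equality, so the point is again a best response. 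Hence each $B_i(\textbf{s})$ is non-empty and convex, and therefore so is the product $B(\textbf{s})$.

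The main obstacle is establishing that $B$ has a closed graph (equivalently, since $\mathcal{S}$ is compact, that $B$ is upper semicontinuous), and this is exactly where the two semicontinuity conditions in hypothesis~3 are used. The key point is that the feasibility correspondence $\textbf{s}\mapsto\mathcal{R}_i(\textbf{s}_{-i})$ is \emph{continuous}: the projection $\textbf{s}\mapsto\textbf{s}_{-i}$ is continuous, so upper and lower semicontinuity of $\mathcal{R}_i$ in its argument lift to upper and lower semicontinuity of $\textbf{s}\mapsto\mathcal{R}_i(\textbf{s}_{-i})$, and it is compact-valued; together with joint continuity of $u_i$ (hypothesis~1), Berge's maximum theorem gives that $B_i$ is upper semicontinuous (and the optimal-value function is continuous). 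I expect the genuinely delicate ingredient to be the use of \emph{lower} semicontinuity of the feasible set, which is what prevents the set of maximizers from jumping discontinuously; upper semicontinuity alone would not suffice. Since a finite product of upper semicontinuous correspondences is again upper semicontinuous, $B$ is upper semicontinuous. Having verified non-emptiness, convex-valuedness and upper semicontinuity of $B$ on the non-empty compact convex set $\mathcal{S}$, Theorem~\ref{thm:Kakutani} yields a fixed point $\textbf{s}^{*}\in B(\textbf{s}^{*})$, which by the first paragraph is the required social equilibrium.
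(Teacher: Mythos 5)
Your argument is correct and is the classical proof of Debreu's theorem: form the joint best-response correspondence constrained to $\mathcal{R}_i(\textbf{s}_{-i})$, get convexity of the best-response sets from quasi-concavity (hypothesis 2) together with convex-valuedness of $\mathcal{R}_i$, get upper semicontinuity from joint continuity of $u_i$ plus continuity (upper \emph{and} lower semicontinuity) of the compact-valued feasibility correspondence via Berge's maximum theorem, and close with Theorem~\ref{thm:Kakutani}. You are also right that lower semicontinuity of $\mathcal{R}_i$ is the indispensable ingredient for Berge. Be aware, though, that the paper does not actually prove Theorem~\ref{thm:SocialEq}; it is quoted from \cite{Social} as a classical result, so there is no in-paper proof to compare against. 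The closest in-paper argument is the proof of Rosen's Theorem~\ref{thm:coupledeq}, which takes a genuinely different route: it aggregates all players into the single function $\rho(\textbf{s},\textbf{t})=\sum_{i}u_i(t_i,\textbf{s}_{-i})$ and applies Kakutani to $\Gamma(\textbf{s})=\arg\max_{\textbf{r}\in\mathcal{R}}\rho(\textbf{s},\textbf{r})$ over one fixed coupled constraint set $\mathcal{R}$; that device avoids any continuity hypotheses on player-specific constraint correspondences (there is only one fixed set), but it only works in the common-coupled-constraint setting, whereas your Berge-based argument is the one that handles Debreu's player-dependent $\mathcal{R}_i(\textbf{s}_{-i})$. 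The price is that you must (and do, explicitly) import two facts not literally in the stated hypotheses: compactness and convexity of each $S_i$, and closedness of the values $\mathcal{R}_i(\textbf{s}_{-i})$ so that maxima are attained and Berge applies; both are standard in Debreu's formulation (the paper's own remark realizes $\mathcal{R}_i$ via continuous inequality constraints inside $S_i$), so this is a presentational caveat rather than a gap.
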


\subsection{Rosen's Contribution and Concave Games}

 In \cite{rosen}, Rosen gives an alternate approach to specifying generalized equilibrium, and an existence result that guarantees an equilibrium under potentially weaker conditions. Rosen's result applies in a setting where there is a single \emph{commonly coupled constraint}, i.e., a relation $\mathcal{R} \subseteq S_1\times\dots\times S_n$ which constrains the best response for all players. In particular, (\ref{eq:bestsocresp}) becomes;
\begin{equation}
    \label{eq:bestcommonresp}
        u_i(\textbf{s}^*)+\epsilon \ge u_i(s_i,\textbf{s}^*_{-i}), \quad\quad\quad\quad  \forall s_i \text{\ s.t.\ } \mathcal{R}(s_i,\textbf{s}^*_{-i})\tag{$\dagger$}
\end{equation}
An equilibrium in this case is a strategy profile $s^*\in\mathcal{R}$ that satisfies (\ref{eq:bestcommonresp}).

Theorem~\ref{thm:Kakutani} can be used to guarantee the existence of an equilibrium in these problems. The required continuity conditions on $\mathcal{R}_i$ are often too strong to be satisfied in many settings. Rosen  defines  $\rho:\mathcal{R}\times\mathcal{R}\rightarrow\mathbb{R}$ by:
$\rho(\textbf{s},\textbf{t})=\sum_{i\in [k]}u_i(\textbf{t}_i,\textbf{s}_{-i})$
, and the correspondence $\Gamma:\mathcal{R} \rightrightarrows \mathcal{R}$ by:
\[
\Gamma(\textbf{s})=\{\textbf{t}~|~\rho(\textbf{s},\textbf{t})=\max_{\textbf{r}} \rho(\textbf{s},\textbf{r})\}
\]

\begin{theorem}[ \cite{rosen}]
\label{thm:coupledeq}
Assume  that for $i \in [n]$, $u_i(\textbf{s})$ is continuous at every $\textbf{s}\in \mathcal{S}$ and for each fixed value of $\textbf{s}_{-i}\in S_{-i}$ , $u_i(s_i,\textbf{s}_{-i})$ is concave in $s_i$, and that $\mathcal{R}$ is convex and compact. Then:
\begin{enumerate}
    \item $\Gamma$ admits a fixed point $\textbf{s}^*\in\Gamma(\textbf{s}^*)$.
    \item Any fixed point $\textbf{s}^*$  of  $\Gamma$, is an equilibrium in the sense of {\rm (\ref{eq:bestcommonresp})}
\end{enumerate}
\end{theorem}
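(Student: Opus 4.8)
The plan is to treat the two assertions separately: for part~1 I would apply Kakutani's fixed point theorem (Theorem~\ref{thm:Kakutani}) to the correspondence $\Gamma:\mathcal{R}\rightrightarrows\mathcal{R}$, and for part~2 I would extract the per-player equilibrium inequalities from the fixed-point condition by a direct ``freeze all but one coordinate'' computation that exploits the separable structure of $\rho$.

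For part~1, the first step is to note that $\mathcal{R}$ is non-empty, convex and compact by hypothesis, so it remains only to verify the two conditions of Theorem~\ref{thm:Kakutani} for $\Gamma$. Non-emptiness of $\Gamma(\textbf{s})$: since every $u_i$ is continuous, $\textbf{t}\mapsto\rho(\textbf{s},\textbf{t})=\sum_i u_i(\textbf{t}_i,\textbf{s}_{-i})$ is continuous, hence attains its maximum on the compact set $\mathcal{R}$ by the Weierstrass theorem. Convex-valuedness of $\Gamma(\textbf{s})$: the $i$-th summand of $\rho(\textbf{s},\cdot)$ depends on $\textbf{t}$ only through $\textbf{t}_i$ and is concave in $\textbf{t}_i$ by assumption~2, so $\rho(\textbf{s},\cdot)$ is a sum of concave functions and therefore concave on the convex set $\mathcal{R}$; consequently its argmax set is convex. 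Upper semi-continuity: I would instead check that $\Gamma$ has a closed graph (equivalent to u.s.c.\ here because $\mathcal{R}$ is compact, by the closed graph theorem noted after Theorem~\ref{thm:Kakutani}): if $\textbf{s}^{(n)}\to\textbf{s}$ and $\textbf{t}^{(n)}\to\textbf{t}$ with $\textbf{t}^{(n)}\in\Gamma(\textbf{s}^{(n)})$, then $\rho(\textbf{s}^{(n)},\textbf{t}^{(n)})\ge\rho(\textbf{s}^{(n)},\textbf{r})$ for all $\textbf{r}\in\mathcal{R}$, and letting $n\to\infty$ and using joint continuity of $\rho$ gives $\rho(\textbf{s},\textbf{t})\ge\rho(\textbf{s},\textbf{r})$ for all $\textbf{r}$, i.e.\ $\textbf{t}\in\Gamma(\textbf{s})$. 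With all hypotheses verified, Theorem~\ref{thm:Kakutani} yields a fixed point $\textbf{s}^*\in\Gamma(\textbf{s}^*)$.

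For part~2, let $\textbf{s}^*\in\Gamma(\textbf{s}^*)$; by definition of $\Gamma$ this means
\[
\sum_{j}u_j(s^*_j,\textbf{s}^*_{-j})=\rho(\textbf{s}^*,\textbf{s}^*)=\max_{\textbf{r}\in\mathcal{R}}\sum_{j}u_j(r_j,\textbf{s}^*_{-j}).
\]
Fix a player $i$ and any admissible unilateral deviation $s_i$, so that $\textbf{r}:=(s_i,\textbf{s}^*_{-i})\in\mathcal{R}$; then $r_i=s_i$ and $r_j=s^*_j$ for $j\ne i$. In the sum $\sum_j u_j(r_j,\textbf{s}^*_{-j})$ the second argument of each summand is always the fixed profile $\textbf{s}^*_{-j}$, so for $j\ne i$ the summand equals $u_j(s^*_j,\textbf{s}^*_{-j})=u_j(\textbf{s}^*)$, while the $i$-th summand is $u_i(s_i,\textbf{s}^*_{-i})$. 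Subtracting the common $j\ne i$ terms from the inequality $\rho(\textbf{s}^*,\textbf{s}^*)\ge\rho(\textbf{s}^*,\textbf{r})$ leaves $u_i(s^*_i,\textbf{s}^*_{-i})\ge u_i(s_i,\textbf{s}^*_{-i})$, which is exactly (\ref{eq:bestcommonresp}) with $\epsilon=0$ (hence a fortiori for any $\epsilon\ge0$); since $\textbf{s}^*\in\mathcal{R}$, it is an equilibrium in the required sense. I expect the only delicate point to be the closed-graph/upper-semicontinuity verification of $\Gamma$, which is precisely where compactness of $\mathcal{R}$ and joint continuity of $\rho$ are both used essentially (this is in effect Berge's maximum theorem); the convex-valuedness step is the one place where the per-player concavity assumption is genuinely needed, but it is otherwise routine, and part~2 is pure bookkeeping with the definition of $\rho$.
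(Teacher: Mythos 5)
Your proof is correct and follows essentially the same route as the paper: apply Kakutani's theorem (Theorem~\ref{thm:Kakutani}) to $\Gamma$ after checking non-emptiness, convex-valuedness (via concavity of $\rho(\textbf{s},\cdot)$) and upper semi-continuity, then read off the per-player inequalities from the fixed-point condition using the fact that a unilateral deviation changes only one summand of $\rho(\textbf{s}^*,\cdot)$. The only differences are presentational — you supply the closed-graph verification that the paper merely asserts, and you phrase part~2 as a direct cancellation rather than the paper's proof by contradiction.
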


Note that $S_{-i}=(S_1\times \dots S_{i-1},S_{i+1} \dots S_{k})$. It also should be noted that  Theorem~\ref{thm:coupledeq} guarantees the existence of an equilibrium for games with coupled constraints under reasonable assumptions. Because of their close connection with variational inequalities \cite{Generalizedqvari,Harker}, equilibria obtained in this way have been termed \emph{variational equilibria}, and their properties have been extensively studied (see Proposition \ref{geneqvari2}). We present the proof only for the sake of completeness and comparison as one of the techniques used in the proof will be used later (see Proposition \ref{geneqvari2}).

\begin{proof}
    First note that $\rho$ is continuous at every $(\textbf{s},\textbf{t})\in\mathcal{R}\times\mathcal{R}$, and that for $\textbf{s}\in\mathcal{R}$, $\rho(\textbf{s},\textbf{t})$ is concave in $\textbf{t}$. By assumption $\mathcal{R}$ is compact and it is clear that for any $\textbf{s}\in\mathcal{R}$, $\Gamma(\textbf{s})$ is nonempty and convex. So, $\mathcal{R}$ and $\Gamma$ satisfy the assumptions of Theorem~\ref{thm:Kakutani} (Kakutani's fixed point) and there exists $\textbf{s}^*\in\mathcal{R}$ such that $\textbf{s}^*\in \Gamma(\textbf{s}^*)$, which means that:
\begin{equation}
\label{eq:gammafix}
\rho(\textbf{s}^*,\textbf{s}^*)=\max_{\textbf{r}\in\mathcal{R}}\rho(\textbf{s}^*,\textbf{r}).\tag{$\dagger\dagger$}
\end{equation}
The fixed point $\textbf{s}^*$ must satisfy (\ref{eq:bestcommonresp}). If not, for some $i\in [n]$ and $s_i$ such that $\mathcal{R}(s_i,\textbf{s}^*_{-i})$, $u_i(s_i,\textbf{s}^*_{-i})>u_i(s^*_i,\textbf{s}^*_{-i})$. But then, for $\overline{s}=(s_i,\textbf{s}^*_{-i})$, $\rho(\textbf{s}^*,\overline{s})>\rho(\textbf{s}^*,\textbf{s}^*)$, contradicting (\ref{eq:gammafix}).
\end{proof}

\subsection{Existence Result for Remedial L/F Equilibrium}

 Theorem 5 (implied by Theorem 2) in \cite{Pang2005} provides the required conditions for the existence of a remedial $L/F$ equilibrium and we prove a stronger result which is inclusion in PPAD.

  \begin{theorem}[\cite{Pang2005}]
      Let $X^{\mathrm{I}}$ and $X^{\mathrm{II}}$ be nonempty, bounded polyhedra and assume the following conditions:
      \begin{enumerate}
          \item For each $\left(x_{\mathrm{I}}, x_{\mathrm{II}}\right) \in X^{\mathrm{I}} \times X^{\mathrm{II}}$, the functions $\phi_{\mathrm{I}}\left(\cdot, x_{\mathrm{II}}, \cdot\right)$ and $\varphi_{\mathrm{II}}\left(x_{\mathrm{I}}, \cdot, \cdot\right)$ are convex and continuously differentiable;
          \item For all $i=1, \ldots, k$, the function $c_i\left(x_{\mathrm{I}}, x_{\mathrm{II}}, y_{(-i,\mathrm{I})}\right)$ is affine, and the graphs of the four set-valued maps $W_{(i,\mathrm{I})}, V_{(i,\mathrm{I})}, W_{(i,\mathrm{II})}$, and $V_{(i,\mathrm{II})}$ are polyhedra;
          \item For each $\left(x_{\mathrm{I}}, x_{\mathrm{II}}\right) \in X^{\mathrm{I}} \times X^{\mathrm{II}}, Z^{\mathrm{I}}\left(x_{\mathrm{I}}, x_{\mathrm{II}}\right)$ and $Z^{\mathrm{II}}\left(x_{\mathrm{I}}, x_{\mathrm{II}}\right)$ are nonempty;
\item $Z^{\mathrm{I}}\left(X^{\mathrm{I}}, X^{\mathrm{II}}\right)$ and $Z^{\mathrm{II}}\left(X^{\mathrm{I}}, X^{\mathrm{II}}\right)$ are bounded.
      \end{enumerate}
      Then there exists a remedial $L/F$ equilibrium.
  \end{theorem}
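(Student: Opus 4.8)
The plan is to recognise a remedial $L/F$-equilibrium as a two-player generalized (social) equilibrium in the sense of Debreu and to invoke Theorem~\ref{thm:SocialEq} (equivalently, Kakutani's fixed point, Theorem~\ref{thm:Kakutani}). I would treat the two leaders as the two ``players'': leader $\mathrm{I}$ controls the block $(x_{\mathrm{I}},y_{\mathrm{I}})$ and leader $\mathrm{II}$ controls $(x_{\mathrm{II}},y_{\mathrm{II}})$. Leader $\mathrm{I}$'s feasible set, given leader $\mathrm{II}$'s choice $x_{\mathrm{II}}$, is
$$
\mathcal{C}_{\mathrm{I}}(x_{\mathrm{II}}) = \bigl\{(x_{\mathrm{I}},y_{\mathrm{I}}) : x_{\mathrm{I}}\in X^{\mathrm{I}},\ (x_{\mathrm{I}},y_{\mathrm{I}})\in\operatorname{graph} Z^{\mathrm{I}}(\cdot,x_{\mathrm{II}})\bigr\},
$$
and symmetrically one defines $\mathcal{C}_{\mathrm{II}}(x_{\mathrm{I}})$; leader $\mathrm{I}$ minimises $\phi_{\mathrm{I}}(x_{\mathrm{I}},x_{\mathrm{II}},y_{\mathrm{I}})$ and leader $\mathrm{II}$ minimises $\phi_{\mathrm{II}}(x_{\mathrm{I}},x_{\mathrm{II}},y_{\mathrm{II}})$. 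With this identification, a social equilibrium for the pair of constraint maps $(\mathcal{C}_{\mathrm{I}},\mathcal{C}_{\mathrm{II}})$ and objectives $(-\phi_{\mathrm{I}},-\phi_{\mathrm{II}})$ is precisely a tuple $(x^*_{\mathrm{I}},y^*_{\mathrm{I}},x^*_{\mathrm{II}},y^*_{\mathrm{II}})$ in which $(x^*_{\mathrm{I}},y^*_{\mathrm{I}})$ solves the surrogate problem \ref{leader1O} with $x^*_{\mathrm{II}}$ exogenous and $(x^*_{\mathrm{II}},y^*_{\mathrm{II}})$ solves \ref{leader2O} with $x^*_{\mathrm{I}}$ exogenous, i.e.\ $(x^*_{\mathrm{I}},x^*_{\mathrm{II}})$ is a remedial $L/F$-equilibrium with witnesses $(y^*_{\mathrm{I}},y^*_{\mathrm{II}})$.

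Next I would verify the hypotheses of Theorem~\ref{thm:SocialEq} one by one. Continuity of the objectives is immediate from assumption~1 (continuous differentiability). The quasi-concavity-type condition on each player's payoff follows because $\phi_{\mathrm{I}}(\cdot,x_{\mathrm{II}},\cdot)$ is jointly convex in $(x_{\mathrm{I}},y_{\mathrm{I}})$, so $-\phi_{\mathrm{I}}$ is concave and hence satisfies the required inequality; likewise for $-\phi_{\mathrm{II}}$. For the feasible sets: since $c_i$ is affine and the graphs of $W_{(i,\mathrm{I})},V_{(i,\mathrm{I})},W_{(i,\mathrm{II})},V_{(i,\mathrm{II})}$ are polyhedra (assumption~2), the system defining $Z^{\mathrm{I}}$ before the multipliers $(\lambda_{\mathrm{I}},\mu_{\mathrm{I}})$ are projected out is a polyhedron in $(x_{\mathrm{I}},x_{\mathrm{II}},y_{\mathrm{I}},\lambda_{\mathrm{I}},\mu_{\mathrm{I}})$, and projecting onto $(x_{\mathrm{I}},x_{\mathrm{II}},y_{\mathrm{I}})$ keeps it polyhedral; hence $\operatorname{graph} Z^{\mathrm{I}}$ is a polyhedron and each $\mathcal{C}_{\mathrm{I}}(x_{\mathrm{II}})$ is a polyhedron, nonempty by assumption~3, therefore convex-valued and nonempty. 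Finally, boundedness of $X^{\mathrm{I}},X^{\mathrm{II}}$ together with assumption~4 confines all strategy vectors to a fixed compact convex box, so the ambient strategy domain is compact and convex as required.

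The step I expect to be the real work is the semicontinuity of the constraint correspondences $\mathcal{C}_{\mathrm{I}}(\cdot)$ and $\mathcal{C}_{\mathrm{II}}(\cdot)$ demanded by condition~3 of Theorem~\ref{thm:SocialEq}. Upper semicontinuity is easy: the graph of $\mathcal{C}_{\mathrm{I}}$ is polyhedral, hence closed, and its values lie in a fixed compact set, which yields u.s.c. Lower semicontinuity is the delicate point and is exactly where polyhedrality is essential: a polyhedral (affine) multifunction with nonempty values is lower semicontinuous on its domain, which is a consequence of Hoffman's error bound for parametric linear systems applied to the linear description of $Z^{\mathrm{I}}$ with $x_{\mathrm{II}}$ as parameter, together with the observation that projecting out the multipliers preserves lower semicontinuity. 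Granting this, all four conditions of Theorem~\ref{thm:SocialEq} hold, a social equilibrium exists, and unwinding the identification of the first paragraph produces a remedial $L/F$-equilibrium together with the required follower-response witnesses. Alternatively, one could mimic the proof of Theorem~\ref{thm:coupledeq} and apply Kakutani's theorem directly to the best-response correspondence of the induced two-leader game, but routing through Debreu's theorem is cleanest here because the leaders' feasible sets genuinely depend on the rival leader's strategy.
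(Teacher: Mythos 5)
The paper itself does not prove this theorem: it is restated in Appendix C (``Classic Existence Results''), whose stated purpose is to reproduce known results, and the statement is attributed to \cite{Pang2005} (their Theorem 5, obtained from their Theorem 2 on generalized Nash games with polyhedral constraints). So there is no proof in the paper to compare against; what can be said is that your argument is essentially the intended one, and it also matches the paper's own Remark \ref{remarkG}, which recasts the two surrogate problems \ref{leader1O} and \ref{leader2O} as a two-player generalized (Debreu) equilibrium problem. Your identification of a remedial $L/F$-equilibrium with a social equilibrium of the two-leader game is correct; concavity of $-\phi_{\mathrm{I}}$ and $-\phi_{\mathrm{II}}$ in the own block gives Debreu's quasi-concavity condition; polyhedrality of $\operatorname{graph} Z^{\mathrm{I}}$ survives projection of the multipliers (Fourier--Motzkin), so each $\mathcal{C}_{\mathrm{I}}(x_{\mathrm{II}})$ is a nonempty (assumption 3), convex, closed set contained in a fixed bounded region (assumption 4). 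You also correctly isolate the one genuinely delicate step, lower semicontinuity of $x_{\mathrm{II}}\mapsto\mathcal{C}_{\mathrm{I}}(x_{\mathrm{II}})$, which fails for general convex-valued constraint maps and is rescued here precisely by polyhedrality: the Walkup--Wets/Hoffman bound gives Hausdorff--Lipschitz continuity of a polyhedral multifunction on its domain, which for bounded values yields both semicontinuities. Two small points to tighten. First, hypothesis 1 as transcribed only asserts continuity of $\phi_{\mathrm{I}}(\cdot,x_{\mathrm{II}},\cdot)$ for each fixed $x_{\mathrm{II}}$, whereas Theorem \ref{thm:SocialEq} needs joint continuity of the payoffs in all arguments; separate continuity does not imply joint continuity, so you should either invoke the standing continuity assumption of \cite{Pang2005} or state it explicitly. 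Second, you should say explicitly that the compact convex box furnished by assumption 4 is adjoined to the $y$-components of the strategy sets without altering the solution sets, so that the ambient strategy domain to which Kakutani is applied is indeed nonempty, compact and convex.
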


\section{Elementary Proofs}\label{AppendixElementaryProofsection}

\subsection{Elementary Properties of Quasi Variational Inequalities}

The following propositions will play an important part in establishing a relationship between quasi-variational inequalities and Debreu-Rosen style games (see Proposition \ref{geneqvari2}).

\begin{proposition}\label{minimprop}
   Let $x^*$ be a solution to the following optimization problem where $f$ is continuously differentiable and $\mathcal{R}$ is closed and convex correspondence:
\begin{equation}\label{minprob}
\begin{aligned}
    &\operatorname{Min}  \ f(x) \\
&\text { s.t }   \ x \in \mathcal{R}(x)
\end{aligned}
\end{equation}
Then $x^*$ is a solution of the quasi-variational inequality problem ($F=\nabla f$  where $\nabla$ refers to the gradient operator):
\begin{equation} \label{minim}
 \left(y-x^*\right) ^T F\left(x^*\right) \geq 0, \quad \forall y \in \mathcal{R}(x^*)   
\end{equation}

\end{proposition}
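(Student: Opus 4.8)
The plan is to verify the first-order optimality condition for the constrained minimization problem \eqref{minprob} and observe that it is exactly \eqref{minim}. The subtlety here, compared to the classical case where the feasible set is fixed, is that $\mathcal{R}$ depends on $x$; but at the point $x^*$ the relevant feasible set is frozen to $\mathcal{R}(x^*)$, and membership $x^* \in \mathcal{R}(x^*)$ is guaranteed by the constraint. So the argument reduces to the standard one on the fixed convex set $\mathcal{R}(x^*)$.

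\begin{proof}[Proof plan]
First I would fix the convex set $K := \mathcal{R}(x^*)$. Since $x^*$ solves \eqref{minprob}, in particular $x^* \in K$ and $f(x^*) \le f(x)$ for every $x \in K$. Now take an arbitrary $y \in K$. By convexity of $K$, the segment $x^*_t := x^* + t(y - x^*) = (1-t)x^* + t y$ lies in $K$ for all $t \in [0,1]$, so $f(x^*_t) \ge f(x^*)$ for all such $t$. Hence the one-sided directional derivative satisfies
\[
0 \le \lim_{t \to 0^+} \frac{f(x^*_t) - f(x^*)}{t} = \nabla f(x^*)^T (y - x^*) = F(x^*)^T (y - x^*),
\]
using continuous differentiability of $f$ to identify the directional derivative with $\langle \nabla f(x^*), y - x^*\rangle$. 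Since $y \in K = \mathcal{R}(x^*)$ was arbitrary, this is exactly \eqref{minim}.
\end{proof}

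The main (and really only) obstacle worth flagging is conceptual rather than technical: one must be careful that the variable set $\mathcal{R}(x)$ does not interfere with the perturbation argument. The point is that we only perturb inside $\mathcal{R}(x^*)$ — we do \emph{not} claim $x^*_t$ is feasible for \eqref{minprob} in the sense $x^*_t \in \mathcal{R}(x^*_t)$, which would be false in general. What we use is merely that $x^*$ is a global minimizer over the \emph{fixed} set $\mathcal{R}(x^*)$ (a weaker consequence of being a solution to \eqref{minprob}), and then the classical variational inequality characterization of a minimizer of a differentiable function over a convex set applies verbatim. No compactness, Lipschitz, or semicontinuity hypotheses on $\mathcal{R}$ are needed for this direction; they would only enter if one wanted the converse or an existence statement.
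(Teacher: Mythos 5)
Your proof is correct and takes essentially the same route as the paper: the paper also sets $\Psi(t)=f\bigl(x^*+t(y-x^*)\bigr)$ and concludes $0\le \Psi'(0)=(y-x^*)^T\nabla f(x^*)$ from the fact that $\Psi$ is minimized at $t=0$. Your version merely spells out the details the paper leaves implicit, namely that convexity of $\mathcal{R}(x^*)$ keeps the segment feasible and that only minimality over the frozen set $\mathcal{R}(x^*)$ is used.
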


\begin{proof}
     Let $\Psi(t)=f\left(x^*+t\left(y-x^*\right)\right)$, for $t \in[0,1]$. Since $\Psi(t)$ gets its minimum at $t=0$, $0 \leq \Psi^{\prime}(0)= \left(y-x^*\right)^T \nabla f\left(x^*\right)$ which means $x^*$ is a solution of Equation \ref{minim}.
\end{proof}

\begin{proposition}\label{quasisolutionmin}
Suppose that $f$ is a differentiable convex function and $x^*$ is a solution to $\operatorname{QVI}(\nabla f, \mathcal{R})$, then $x^*$ is a solution to the minimization problem  (\ref{minprob}).

\end{proposition}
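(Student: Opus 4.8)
The plan is to exploit the first-order sufficient condition for minimizing a differentiable convex function over a convex set, applied to the \emph{fixed} set $\mathcal{R}(x^*)$. Recall that a solution to $\operatorname{QVI}(\nabla f,\mathcal{R})$ is a point $x^*$ with $x^*\in\mathcal{R}(x^*)$ and $\left(y-x^*\right)^T\nabla f(x^*)\ge 0$ for every $y\in\mathcal{R}(x^*)$. Since $\mathcal{R}$ is convex-valued, $\mathcal{R}(x^*)$ is a convex set, and $x^*$ already lies in it, so $x^*$ is feasible for \eqref{minprob}.

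First I would invoke the gradient inequality for convex functions: for differentiable convex $f$ and any points $x^*,y$, one has $f(y)\ge f(x^*)+\nabla f(x^*)^T(y-x^*)$. Specializing to an arbitrary $y\in\mathcal{R}(x^*)$ and combining with the QVI inequality $\nabla f(x^*)^T(y-x^*)\ge 0$ gives $f(y)\ge f(x^*)$ for all $y\in\mathcal{R}(x^*)$. Hence $x^*$ minimizes $f$ over the constraint set $\mathcal{R}(x^*)$ it induces, and since $x^*\in\mathcal{R}(x^*)$ this is precisely the assertion that $x^*$ is a solution of the quasi-variational minimization problem \eqref{minprob} in the same sense used in Proposition~\ref{minimprop}.

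The only subtle point, which I would flag explicitly, is the reading of ``solution of \eqref{minprob}'': the feasible region $\{x\mid x\in\mathcal{R}(x)\}$ need not be convex, so the argument does not (and need not) compare $f(x^*)$ against $f$ at other self-consistent points $z\in\mathcal{R}(z)$ with $z\notin\mathcal{R}(x^*)$; as in Proposition~\ref{minimprop}, the relevant notion is optimality of $x^*$ against the constraint set it induces. With that convention there is no real obstacle: the proof is a one-line consequence of convexity, and the content is simply that for convex $f$ the variational (first-order) condition of \eqref{minim}, shown to be necessary in Proposition~\ref{minimprop}, is also sufficient, so Propositions~\ref{minimprop} and \ref{quasisolutionmin} together give an equivalence between the QVI and the associated quasi-variational optimization problem.
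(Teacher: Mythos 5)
Your proof is correct and is essentially identical to the paper's: both apply the first-order gradient inequality for convex functions at $x^*$ and combine it with the QVI condition $\left(y-x^*\right)^T\nabla f(x^*)\ge 0$ to conclude $f(y)\ge f(x^*)$ for all $y\in\mathcal{R}(x^*)$. Your added remark clarifying that optimality is meant relative to the induced set $\mathcal{R}(x^*)$ is a sensible gloss but does not change the argument.
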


\begin{proof}
    
By the fact that $f$ is convex:
$$
f(y) \geq f\left(x^*\right)+  \left(y-x^*\right)^T \nabla f \left(x^*\right), \quad \forall y \in \mathcal{R}(x^*) 
$$
But $\left(y-x^*\right)^T  \nabla f\left(x^*\right)\geq 0$, since $x^*$ is a solution to $\operatorname{QVI}(\nabla f, \mathcal{R})$. Therefore, we can conclude that:
$$
f(y) \geq f\left(x^*\right), \quad \forall y \in \mathcal{R}(x^*) 
$$
that is, $x^*$ is a solution  (minimum point) for Equation \ref{minprob}.
\end{proof}


\subsection{Proof of Proposition \ref{geneqvari2}}
To prove Proposition \ref{geneqvari2}, we first inspect the exact version of the problem without considering the approximation\footnote{This version is easier to establish and also is a well-known proposition in literature.}.

\begin{proposition}\label{exgenqvi}
  Suppose that we have a game with concave and continuously differentiable utilities\footnote{When the utilities are non-differentiable, we need to introduce a more general notion of stationarity (see \cite{Hu2011}).} $\mathcal{U}=(u_1,\dots,u_k)$,  and strategies $\mathcal{S}=(S_1,\dots,S_k)$ and constraints $\mathcal{R}=(\mathcal{R}_1,\dots,\mathcal{R}_k)$  which are closed convex sets of $\mathcal{S}=(S_1,\dots,S_k)$.  The problem of finding a generalized equilibrium in this game can be transformed into a QVI problem.  
\end{proposition}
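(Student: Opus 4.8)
The plan is to realize the game as an explicit QVI via Harker's transformation and show the solution sets coincide. First I would build the constraint correspondence $\mathcal{R}:\prod_{i=1}^k S_i\rightrightarrows\prod_{i=1}^k S_i$ by taking the product $\mathcal{R}(\mathbf{x}):=\prod_{i=1}^k\mathcal{R}_i(\mathbf{x}_{-i})$; each factor is closed and convex by hypothesis, and although the $i$-th factor depends only on the complementary coordinates, the product is a well-defined closed, convex-valued correspondence of $\mathbf{x}$. Next I would define the map $F:\prod_{i=1}^k S_i\to\mathbb{R}^n$ by stacking the per-player gradients, $F(\mathbf{x}):=\bigl(-\nabla_{x_1}u_1(\mathbf{x}),\dots,-\nabla_{x_k}u_k(\mathbf{x})\bigr)$, which is well-defined and continuous since every $u_i$ is continuously differentiable (the minus sign appears because the players maximize their utilities). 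The claim to establish is that $\mathbf{x}^*$ is a generalized equilibrium in the sense of $(\ref{eq:bestsocresp})$ (with $\epsilon=0$) if and only if $\mathbf{x}^*$ solves $\mathrm{QVI}(F,\mathcal{R})$.

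For the forward direction, suppose $\mathbf{x}^*$ is a generalized equilibrium. Then $\mathbf{x}^*\in\mathcal{R}(\mathbf{x}^*)$ by definition, and for each $i$ the point $x_i^*$ minimizes the convex, differentiable function $-u_i(\cdot,\mathbf{x}^*_{-i})$ over the convex set $\mathcal{R}_i(\mathbf{x}^*_{-i})$. Proposition~\ref{minimprop}, applied with $f=-u_i(\cdot,\mathbf{x}^*_{-i})$, then gives $\langle -\nabla_{x_i}u_i(\mathbf{x}^*),\,y_i-x_i^*\rangle\ge 0$ for all $y_i\in\mathcal{R}_i(\mathbf{x}^*_{-i})$. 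Summing these $k$ inequalities and using that every $\mathbf{y}\in\mathcal{R}(\mathbf{x}^*)$ decomposes as $(y_1,\dots,y_k)$ with $y_i\in\mathcal{R}_i(\mathbf{x}^*_{-i})$ yields $(\mathbf{y}-\mathbf{x}^*)^{T}F(\mathbf{x}^*)\ge 0$ for all such $\mathbf{y}$, so $\mathbf{x}^*$ solves the QVI. For the converse, suppose $\mathbf{x}^*$ solves $\mathrm{QVI}(F,\mathcal{R})$, so $\mathbf{x}^*\in\mathcal{R}(\mathbf{x}^*)$ and hence $x_i^*\in\mathcal{R}_i(\mathbf{x}^*_{-i})$ for each $i$. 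Fix $i$ and an arbitrary $y_i\in\mathcal{R}_i(\mathbf{x}^*_{-i})$; the test profile $\mathbf{y}:=(y_i,\mathbf{x}^*_{-i})$ lies in $\mathcal{R}(\mathbf{x}^*)$, and substituting it into the variational inequality makes every coordinate except the $i$-th vanish, leaving $\langle -\nabla_{x_i}u_i(\mathbf{x}^*),\,y_i-x_i^*\rangle\ge 0$. Since $-u_i(\cdot,\mathbf{x}^*_{-i})$ is convex and differentiable, Proposition~\ref{quasisolutionmin} (applied in the $i$-th block) shows that $x_i^*$ minimizes $-u_i(\cdot,\mathbf{x}^*_{-i})$ over $\mathcal{R}_i(\mathbf{x}^*_{-i})$, i.e.\ $u_i(x_i^*,\mathbf{x}^*_{-i})\ge u_i(y_i,\mathbf{x}^*_{-i})$. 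As $i$ and $y_i$ were arbitrary, $\mathbf{x}^*$ satisfies $(\ref{eq:bestsocresp})$ and $x_i^*\in\mathcal{R}_i(\mathbf{x}^*_{-i})$ for all $i$, hence is a generalized equilibrium.

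The routine parts are the two directions above, which are essentially bookkeeping on top of Propositions~\ref{minimprop} and~\ref{quasisolutionmin}. The step I expect to require the most care is the interplay between the product structure and the single-player deviations: one must verify that $\mathcal{R}(\mathbf{x})=\prod_i\mathcal{R}_i(\mathbf{x}_{-i})$ inherits closedness and convexity of its values, that the "diagonal" profiles $(y_i,\mathbf{x}^*_{-i})$ are genuinely admissible members of $\mathcal{R}(\mathbf{x}^*)$ so that individual deviations can be isolated inside the joint inequality, and that $F$ is chosen precisely so the per-player first-order optimality conditions add up to the joint variational inequality. For the approximate statement of Proposition~\ref{geneqvari2}, the same construction is used, but the equalities in the optimality arguments are replaced by inequalities carrying the additive error $\epsilon$ (which only scales by the number of players and the Lipschitz constants of the gradients), and this error-tracking is the additional technical content handled in the main proof.
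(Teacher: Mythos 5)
Your proposal is correct and follows essentially the same route as the paper: it is Harker's transformation with $\mathcal{R}(\mathbf{x})=\prod_i\mathcal{R}_i(\mathbf{x}_{-i})$ and $F(\mathbf{x})=(\nabla_{x_i}(-u_i)(\mathbf{x}))_{i=1}^k$, with the two directions handled by Propositions~\ref{minimprop} and~\ref{quasisolutionmin} exactly as the paper does. Your write-up is in fact more explicit than the paper's (which only cites the two propositions), and your converse step isolating single-player deviations via the diagonal profiles $(y_i,\mathbf{x}^*_{-i})\in\mathcal{R}(\mathbf{x}^*)$ correctly fills in the detail the paper leaves implicit.
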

\begin{proof}
    Define $\theta_i=-u_i$ for each $i\in [k]$ to denote the \emph{loss} function of each player $i$. $\theta_i(\cdot,\textbf{x}_{-i})$ is convex in $\textbf{x}_{-i}$. Finding a generalized equilibrium in this game is equivalent to finding a solution to the following optimization problem where the goal is finding $\textbf{x}^*=(x^*_{i},\textbf{x}^*_{-i})$ such that  $\forall i\in [k] $  :

\begin{equation}\label{mingener}
\begin{aligned}
 & \operatorname{Min} \theta_i(x_i,\textbf{x}^*_{-i}) \\
&\text { s.t } x_i \in \mathcal{R}_i (\textbf{x}^*_{-i})
\end{aligned}
\end{equation}

    Define $\mathcal{R}(\textbf{x}) \equiv \prod_{i=1}^k \mathcal{R}_i\left(\textbf{x}_{-i}\right)$
and
$
F(\textbf{x})=\left(\nabla_{x_i} \theta_i(\textbf{x})\right)_{i=1}^k \in \mathbb{R}^{k}
$. Based on Proposition \ref{minimprop} and \ref{quasisolutionmin}, we can see that $x^*$ is a generalized equilibrium if and only if $\textbf{x}^* \in \mathcal{R} \left(\textbf{x}^*\right)$ and it satisfies:
\begin{equation} 
 \left(\textbf{y}-\textbf{x}^*\right) ^T F\left(\textbf{x}^*\right) \geq 0, \quad \forall \textbf{y} \in \mathcal{R}(\textbf{x}^*)   
\end{equation}

\end{proof}
\begin{corollary}\label{nashvi1}

 Suppose that we have a game with the same conditions on utilities $\mathcal{U}=(u_1,\dots,u_k)$,  and strategies $\mathcal{S}=(S_1,\dots, S_k)$. The problem of finding a Nash equilibrium of this game can be transformed into a VI problem.
\end{corollary}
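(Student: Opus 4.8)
\textbf{Proof proposal for Corollary \ref{nashvi1}.}

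The plan is to specialize the argument just used for Proposition \ref{exgenqvi} to the case where the feasibility correspondences are constant. First I would define the loss functions $\theta_i = -u_i$ for each $i \in [k]$; since each $u_i(\cdot,\mathbf{x}_{-i})$ is concave and continuously differentiable, each $\theta_i(\cdot,\mathbf{x}_{-i})$ is convex and continuously differentiable. A strategy profile $\mathbf{x}^* = (x_1^*,\dots,x_k^*)$ is a Nash equilibrium precisely when, for every $i$, the point $x_i^*$ solves $\min_{x_i \in S_i} \theta_i(x_i, \mathbf{x}^*_{-i})$, i.e. the per-player best-response optimization problem with the \emph{fixed} feasible set $S_i$ (this is the instance of (\ref{eq:bestresp}) with $\epsilon = 0$, rewritten as a minimization of $\theta_i$).

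Next I would set $\mathcal{R} \equiv \prod_{i=1}^k S_i = \mathcal{S}$, which by hypothesis is a closed, convex, compact set, and crucially \emph{does not depend on} $\mathbf{x}$; and I would set $F(\mathbf{x}) = \bigl(\nabla_{x_i}\theta_i(\mathbf{x})\bigr)_{i=1}^k$. Applying Proposition \ref{minimprop} coordinate-wise (with the constant correspondence $\mathcal{R}_i(\mathbf{x}_{-i}) = S_i$) shows that any Nash equilibrium $\mathbf{x}^*$ satisfies $(\mathbf{y}-\mathbf{x}^*)^T F(\mathbf{x}^*) \ge 0$ for all $\mathbf{y}\in\mathcal{R}$, and applying Proposition \ref{quasisolutionmin} coordinate-wise (using convexity of $\theta_i$ in $x_i$) gives the converse: any $\mathbf{x}^*\in\mathcal{R}$ solving this inequality has each $x_i^*$ minimizing $\theta_i(\cdot,\mathbf{x}^*_{-i})$ over $S_i$, hence is a Nash equilibrium. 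The one point that needs a word of justification is why the single aggregated inequality over $\mathbf{y}\in\mathcal{R}$ is equivalent to the $k$ separate per-player conditions: because $\mathcal{R}$ is the full product $\prod_i S_i$, one may vary a single block $y_i$ while holding the others at $x_j^*$, so $\langle F(\mathbf{x}^*),\mathbf{y}-\mathbf{x}^*\rangle = \sum_i \langle \nabla_{x_i}\theta_i(\mathbf{x}^*), y_i - x_i^*\rangle$ is nonnegative for all $\mathbf{y}\in\mathcal{R}$ iff each summand is nonnegative for all $y_i\in S_i$.

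Since $\mathcal{R}$ is independent of the point $\mathbf{x}$, the resulting problem is by definition a VI (not merely a QVI), which establishes the transformation. There is essentially no hard obstacle here — the statement is a direct corollary — the only thing to be careful about is that the reduction is a genuine many-one reduction in the computational sense (the circuits for $F$ and the separation oracle for $\mathcal{R}=\mathcal{S}$ are built in polynomial time from the game description) and that the decoupling of the product constraint is used in both directions. For the approximate version one would instead invoke Proposition \ref{geneqvari2} with all $\mathcal{R}_i$ equal to $S_i$, tracking the additive error $\epsilon$ through the same argument; this yields the $\epsilon$-approximate Nash-to-VI transformation asserted in Corollary \ref{nashvi2}.
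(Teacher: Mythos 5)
Your proof is correct and takes essentially the same route as the paper, which treats this corollary as the immediate specialization of Proposition \ref{exgenqvi} obtained by setting each constraint correspondence to the constant $\mathcal{R}_i(\mathbf{x}_{-i}) = S_i$, so that the QVI degenerates to a VI over $\mathcal{R} = \prod_i S_i$. Your explicit block-decoupling argument for the product constraint is the same device the paper deploys inside the proof of Proposition \ref{geneqvari2} (fixing $y_j = x_j^*$ for $j \neq i$), so nothing is missing.
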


Finally, we proceed to the approximate version. We provide detailed proof for the approximate version (Proposition \ref{geneqvari2}) for the sake of completeness\footnote{Also, this proposition is a starting point for Lemma \ref{reslient2k} in which we establish the relationship between the more generalized variation inequalities and resilient Nash equilibrium.}.

\subsubsection{Proof of Proposition \ref{geneqvari2}}
\begin{proof}
    Similarly, let $\theta_i=-u_i$ for each $i\in [k]$. Finding an approximate generalized equilibrium in this game is equivalent to finding a solution to the following optimization problem where the goal is  finding $\textbf{x}^*=(x^*_{i},\textbf{x}^*_{-i})$:

\begin{equation}\label{mingener2}
\begin{aligned}
 &  \forall i\in [k],\quad \theta_i(\textbf{x}^*) \leq \operatorname{Min} \theta_i(x_i,\textbf{x}^*_{-i}) +\epsilon \\
&\text {s.t } x_i \in \mathcal{R}_i (\textbf{x}^*_{-i})
\end{aligned}
\end{equation}

    Define $\mathcal{R}(\textbf{x}) \equiv \prod_{i=1}^k \mathcal{R}_i\left(\textbf{x}_{-i}\right)$, $
f(\textbf{x})=\prod_{i=1}^k\left(\theta_i(\textbf{x})\right) \in \mathbb{R}^{k}
$,
and
$
F(\textbf{x})=\prod_{i=1}^k \left(\nabla_{x_i} \theta_i(\textbf{x})\right) \in \mathbb{R}^{k}
$. We will show that $\textbf{x}^*$ is a generalized (constrained) equilibrium if and only if $\textbf{x}^* \in \mathcal{R} \left(\textbf{x}^*\right)$ and it satisfies:
\begin{equation}\label{gogo}
 \left(y-x^*\right) ^T F\left(x^*\right)+\epsilon \geq 0, \quad \forall y \in \mathcal{R}(x^*)   
\end{equation}

Assume that we have a solution for Equation \ref{gogo}. By the fact that $\theta_i(\cdot,\textbf{x}_{-i})$ is convex:
$$
\theta_i(y_i,\textbf{x}^*_{-i}) \geq \theta_i\left(x^*_i,\textbf{x}^*_{-i}\right)+  \left(y_i-x^*_i\right) \nabla_{x^*_i} \theta_i \left(x^*_i,x^*_{-i}\right), \quad \forall y_i \in \mathcal{R}_i(\textbf{x}^*_{-i}) 
$$
And also:
\begin{equation}\label{alleqy2}
    \sum_{i=1}^k \theta_i(y_i,\textbf{x}^*_{-i}) \geq \sum_{i=1}^k \theta_i\left(x^*_i,\textbf{x}^*_{-i}\right)+ \sum_{i=1}^k \left(y_i-x^*_i\right) \nabla_{x^*_i} \theta_i \left(x^*_i,\textbf{x}^*_{-i}\right), \quad \forall y_i \in \mathcal{R}_i(\textbf{x}^*_{-i})
\end{equation}

But $\left(y-\textbf{x}^*\right)^T  \nabla f\left(\textbf{x}^*\right)+\epsilon \geq 0$, since $\textbf{x}^*$ is a solution to $\operatorname{QVI}(\nabla f, \mathcal{R})$. Therefore, we can conclude that:
\begin{equation}\label{alleqy}
   \sum_{i=1}^k \theta_i(y_i,\textbf{x}^*_{-i}) + \epsilon \geq \sum_{i=1}^k \theta_i\left(x^*_i,\textbf{x}^*_{-i}\right)   \quad \forall y_i \in \mathcal{R}_i(\textbf{x}^*_{-i})  
\end{equation}
Similar to Rosen's proof, for the sake of contradiction, assume that there exists one player $i\in[k]$ such that for one alternative constrained strategy $y_i$:
$$  \theta_i(y_i,\textbf{x}^*_{-i})   <\theta_i\left(x^*_i,\textbf{x}^*_{-i}\right) +\epsilon $$

Let us focus on a specific example in Equation \ref{alleqy}. For all $j\in [k]-i$, let $y^*_j=x^*_j$ and let $y^*_i= \operatorname{argmin}_{y_i\in \mathcal{R}_i(x^*_{-i})} \theta_i(y_i,\textbf{x}^*_{-i})$. The equation will be simplified as follows:
$$  \theta_i(y^*_i,\textbf{x}^*_{-i}) +\epsilon  \geq \theta_i\left(x^*_i,\textbf{x}^*_{-i}\right) $$
This obviously is a contradiction as $y^*_i$ is a better alternative compared to $y_i$. In other words, this results $\left(\textbf{y}^*-\textbf{x}^*\right)^T  \nabla f\left(\textbf{y}^*\right)+\epsilon < 0$ which violates Equation \ref{gogo}. In conclusion for any $y_i \in \mathcal{R}_i(\textbf{x}^*_{-i})$, we have the following relationship:
$$ \forall i\in [k],\quad \theta_i(y_i,\textbf{x}^*_{-i})+\epsilon \geq \theta_i\left(\textbf{x}^*_i,\textbf{x}^*_{-i}\right) $$

To prove the other direction of the proof, we may equivalently consider the  alternative following form of Equation \ref{mingener2} where $f(y)$ has the loss function of all players in their respective coordinates \footnote{This proof is a just the approximate version of Proposition \ref{minimprop}}:

\begin{equation}\label{equivfx}
   f(\textbf{y})+\epsilon \mathds{1} \geq f\left(\textbf{x}^*\right), \quad \forall \textbf{y} \in \mathcal{R}(\textbf{x}^*) 
\end{equation} which means $\textbf{x}^*$ is a solution  (minimum point) for Equation \ref{mingener2}. Note that $\mathds{1}$ denotes all ones vector.

 Now, assume that $\textbf{x}^*$ is a solution of Equation \ref{mingener2} (or equivalently, Equation \ref{equivfx}).    Let $\Psi(t)=f\left(\textbf{x}^*+t\left(\textbf{y}-\textbf{x}^*\right)\right)$, for $t \in[0,1]$. Since $\Psi(t)$ gets its approximate minimum at $t=0$, $0 \leq \Psi^{\prime}(0)= \left(\textbf{y}-\textbf{x}^*\right)^T \nabla f\left(\textbf{x}^*\right)+\epsilon$ and this means $\textbf{x}^*$ is a solution of Equation \ref{gogo}.
\end{proof}

\subsection{Linear Arithmetic Circuits}\label{arithmeticcircuits}

In this section, we restate the theorem E.1 of \cite{Fearnley} in which the functions that can be computed by arithmetic circuits can be approximated by linear arithmetic circuits with minimal error.  An arithmetic circuit $f$ is well-behaved if, on any directed path that leads to output, there are at most $log(size(f))$ true multiplication gates. A true multiplication gate is one where both inputs are non-constant nodes of the circuit. Note that in linear arithmetic circuits, we do not have multiplication gates.

\begin{theorem}[\cite{Fearnley}]
    Given a well-behaved arithmetic circuit $f:[0,1]^n \rightarrow \mathbb{R}^d$, a purported Lipschitz constant $L>0$, and a precision parameter $\epsilon>0$, in polynomial time in $\operatorname{size}(f), \log L$ and $\log (1 / \varepsilon)$, we can construct a linear arithmetic circuit $F:[0,1]^n \rightarrow \mathbb{R}^d$ such that for any $x \in[0,1]^n$ it holds that: 
\end{theorem}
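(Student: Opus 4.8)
The plan is to observe that the only gate type occurring in a general arithmetic circuit but forbidden in a linear arithmetic circuit is the \emph{true} multiplication gate $u\cdot v$ with both $u,v$ non-constant: addition, subtraction, $\min$, $\max$, multiplication by a rational constant, and rational constants themselves are all already legal. So the whole task reduces to replacing each true multiplication gate of $f$ by a small linear-arithmetic sub-circuit that approximates $(u,v)\mapsto uv$ on the range of values actually reached, and then bounding the total error. First I would record a magnitude bound: since the inputs lie in $[0,1]^n$ and $f$ is well-behaved, every wire of $f$ carries a value of absolute value at most some $B$ with $\log B$ polynomial in $\operatorname{size}(f)$ and the bit-length of the constants of $f$. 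Indeed, with $g$ gates and no multiplication a wire value is bounded by roughly $2^g\prod_i|\zeta_i|$, and each of the at most $d=\log\operatorname{size}(f)$ true multiplications on an input-to-output path at most squares the current bound, so a bound of the shape $\bigl(2^g\prod_i|\zeta_i|\bigr)^{2^d}$ holds, and $2^d=\operatorname{size}(f)$. This is exactly the place where the hypothesis that $f$ is well-behaved (and the purported Lipschitz constant $L$, which also enters the consistency-checking side of the statement) is used.

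The heart of the construction is a cheap piecewise-linear approximation of squaring. Let $T:[0,1]\to[0,1]$ be the tent map $T(t)=2\min(t,1-t)$, which is a linear arithmetic circuit of constant size, and let $T_k=T\circ\cdots\circ T$ ($k$ copies); the circuits $T_1,\dots,T_m$ can be built by the recurrence $T_k=T(T_{k-1})$ using only $O(m)$ gates in total. Using the classical identity $t-t^2=\sum_{k\ge 1}4^{-k}T_k(t)$ valid on $[0,1]$ and truncating the series at $k=m$, one obtains a linear arithmetic circuit $\operatorname{Sq}_m$ of size $O(m)$ with $|\operatorname{Sq}_m(t)-t^2|\le 4^{-m}$ for all $t\in[0,1]$. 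Thus, taking $m=O(\log(1/\delta))$ gives a size-$O(\log(1/\delta))$ circuit computing squaring to additive error $\delta$. A true multiplication gate $u\cdot v$ with $u,v\in[-B,B]$ is then replaced via the polarization identity $uv=\tfrac14\bigl((u+v)^2-(u-v)^2\bigr)$: affinely rescale $u+v$ and $u-v$ into $[0,1]$ (two scalar multiplications and constant offsets, with constants of bit-length $O(\log B)$), apply $\operatorname{Sq}_m$, and rescale back; this produces error at most $O(B^2\delta)$ in the value of the gate.

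It then remains to propagate the per-gate errors to the output. I would run a backward pass along $f$: a perturbation of size $\eta$ at a single gate affects the output by at most $A\eta$, where the amplification $A$ is the product of the downstream $|\zeta_i|$ factors times $B^{(\#\text{true-multiplication gates downstream})}\le B^{d}$. Summing over the at most $\operatorname{size}(f)$ replaced gates, the total output error is at most $\operatorname{size}(f)^2\cdot\bigl(\prod_i|\zeta_i|\bigr)\cdot B^{d}\cdot O(B^2\delta)$. Choosing $\delta$ so that $\log(1/\delta)$ is a suitable polynomial in $\operatorname{size}(f)$, $\log L$ and $\log(1/\epsilon)$ makes this at most $\epsilon$; the resulting circuit $F$ has size $\operatorname{size}(f)\cdot O(\log(1/\delta))$, which is polynomial in $\operatorname{size}(f)$, $\log L$ and $\log(1/\epsilon)$, giving the claimed guarantee.

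The main obstacle I anticipate is precisely this error-propagation step: the intermediate wire magnitudes $B$ can be as large as $2^{\operatorname{poly}(\operatorname{size}(f))}$, so a careless analysis loses all control of the accumulated error, and it is the well-behavedness hypothesis — bounded number of true multiplications, hence bounded "degree" and bounded per-gate sensitivity — that keeps $\log B$, and therefore $\log(1/\delta)$ and the size of $F$, polynomial rather than exponential. A secondary subtlety is the treatment of the \emph{purported} constant $L$: one must either certify along the way that $L$ is actually a valid Lipschitz bound (failing gracefully otherwise) or phrase the approximation guarantee conditionally, and ensure the statement of the error bound remains meaningful in that branch.
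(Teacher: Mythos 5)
First, a point of reference: the paper does not prove this statement at all --- it is quoted verbatim (as Theorem E.1) from \cite{Fearnley}, so there is no in-paper proof to compare yours against. Judged on its own merits and against the cited source, your sketch is essentially the standard construction: approximate squaring by the truncated tent-map series $t-t^2=\sum_{k\ge 1}4^{-k}T_k(t)$, recover products via polarization, rescale using an a priori bound $B$ on wire values, and propagate errors. Your two key quantitative claims both check out: the well-behavedness hypothesis (at most $\log\operatorname{size}(f)$ true multiplications per path) gives $\log B=\operatorname{poly}(\operatorname{size}(f),\sum_i\operatorname{bitlength}(\zeta_i))$ via the recursion $\log B_j\le 2\log B_{j-1}+O(s(1+\log C))$ over multiplication depth, and the downstream amplification factor has polynomially bounded logarithm, so $\log(1/\delta)$ and hence $\operatorname{size}(F)$ stay polynomial. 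One small point worth making explicit is that $\operatorname{Sq}_m$ remains $2$-Lipschitz on all of $\mathbb{R}$ (each $4^{-k}T_k$ is $2^{-k}$-Lipschitz), so feeding it slightly perturbed, slightly out-of-range inputs does not destroy the error analysis.

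The genuine gap is your treatment of the second disjunct of the theorem --- the branch that, for a given $x$, returns $y$ with $\|f(x)-f(y)\|_\infty>L\|x-y\|_\infty$. You defer this as a ``secondary subtlety'' and frame it as needing to ``certify that $L$ is actually a valid Lipschitz bound,'' which is not what the statement asks: it asks that \emph{whenever the approximation fails at a particular $x$}, a concrete violation witness be efficiently computable from $x$. Your proposal contains no such extraction procedure. Now, your own error analysis, if carried through, makes the first disjunct hold unconditionally for well-behaved circuits, in which case the disjunction is satisfied vacuously --- but then you should say so explicitly rather than leave the branch dangling, since as written the reader cannot tell whether you are claiming the stronger unconditional bound or silently omitting a case. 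In the proof in \cite{Fearnley} the violation branch does real work: intermediate values are controlled by clamping to a range derived from the purported constant $L$ rather than from the structural bound you derive, and a value escaping that range is converted into a Lipschitz violation. Either route is acceptable, but you must commit to one and, in the second case, actually construct the witness $y$.
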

\begin{itemize}
    \item $\|f(x)-F(x)\|_{\infty} \leq \epsilon$, or
    \item given $x$, we can efficiently compute $y \in[0,1]^n$ such that:
    $$
\|f(x)-f(y)\|_{\infty}>L\|x-y\|_{\infty} .
$$

Here "efficiently" means in polynomial time in $\operatorname{size}(x), \operatorname{size}(f), \log L$ and $\log (1 / \epsilon)$.

\end{itemize}

\subsection{Some Properties of L2 Distance}
 Throughout the paper, we use some fundamental lemmas pertaining to the properties of L2 distance.

\begin{lemma}\label{norm2}
The function  $f(x)=\gamma \|x\|^2_2$
is $2\gamma$-strongly convex.
\end{lemma}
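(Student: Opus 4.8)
The plan is to verify the defining inequality of $\mu$-strong convexity directly with $\mu = 2\gamma$, relying on the standard quadratic identity for the squared Euclidean norm. Recall that a function $f$ is $\mu$-strongly convex precisely when $f(\lambda x_1 + (1-\lambda)x_2) \le \lambda f(x_1) + (1-\lambda)f(x_2) - \tfrac{\lambda(1-\lambda)}{2}\mu\|x_1 - x_2\|_2^2$ for all $x_1, x_2$ and all $\lambda \in [0,1]$, in agreement with the normalization used in the ``violation of strong convexity'' conditions stated earlier. So it suffices to establish this for $f(x) = \gamma\|x\|_2^2$ with $\mu = 2\gamma$.

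First I would expand both sides in terms of $\|x_1\|_2^2$, $\|x_2\|_2^2$, and $\langle x_1, x_2\rangle$, using $\|x_1 - x_2\|_2^2 = \|x_1\|_2^2 - 2\langle x_1, x_2\rangle + \|x_2\|_2^2$. Collecting coefficients on the right-hand side gives $\lambda - \lambda(1-\lambda) = \lambda^2$ for $\|x_1\|_2^2$, $(1-\lambda) - \lambda(1-\lambda) = (1-\lambda)^2$ for $\|x_2\|_2^2$, and $2\lambda(1-\lambda)$ for $\langle x_1, x_2\rangle$, which is exactly the expansion of $\gamma\|\lambda x_1 + (1-\lambda)x_2\|_2^2$. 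Hence the inequality holds, in fact with equality, so $f$ is $2\gamma$-strongly convex.

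There is essentially no obstacle here; the one point to keep consistent is the normalization of the strong-convexity constant (the factor $1/2$ multiplying $\mu$), which the computation above respects. As alternatives, one may observe that $f(x) - \tfrac{2\gamma}{2}\|x\|_2^2 \equiv 0$ is trivially convex, which is an equivalent characterization of $2\gamma$-strong convexity, or invoke the second-order test, noting that $\nabla^2 f(x) = 2\gamma I \succeq 2\gamma I$ everywhere.
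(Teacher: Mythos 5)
Your proof is correct and follows essentially the same route as the paper: both expand $f$ at the convex combination using the inner-product identity $\|x_1-x_2\|_2^2=\|x_1\|_2^2-2\langle x_1,x_2\rangle+\|x_2\|_2^2$ and match coefficients, observing that the strong-convexity inequality with $\mu=2\gamma$ holds with equality. If anything, your explicit remark that equality is attained is slightly cleaner than the paper's closing line, which relaxes the exact identity to an inequality in a way that superficially reads as only giving the constant $\gamma$.
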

\begin{proof}
      The key lies in leveraging the fact that the $L2$ is arising from an inner product.
\[\begin{aligned}
& f(\lambda x+(1-\lambda) y)-\left(\lambda f(x)-(1-\lambda) f(y)\right) \\
& =\gamma \lambda^2\|x\|^2+2 \gamma \lambda(1-\lambda)\langle x, y\rangle+\gamma(1-\lambda)^2\|y\|^2 \\
& -\gamma \lambda\| x\left\|^2-\gamma(1-\lambda)\right\| y \|^2 \\
& =\gamma \lambda(\lambda-1)\|x\|^2+\gamma(1-\lambda)(1-\lambda-1)\|y\|^2s  +2 \gamma \lambda(1-\lambda)\langle x, y\rangle \\
& =-\gamma \lambda(1-\lambda)\|x\|^2+2 \gamma \lambda(1-\lambda)\langle x, y\rangle-\gamma \lambda(1-\lambda) \|y\|^2 \\
& =-\gamma \lambda(1-\lambda)\left(\|x\|^2-2\langle x, y\rangle+\|y\|^2\right) \\
& =-\gamma \lambda(1-\lambda)\|x-y\|^2 \\
& \leq-\frac{1}{2} \gamma \lambda(1-\lambda)\|x-y\|^2 \\
& \text { since } \gamma \lambda(1-\lambda)\|x-y\|^2 \geq 0 \text {. } \\
&
\end{aligned}\]
\end{proof}

\begin{corollary}
    The function  $f(x)=-\gamma \|x\|^2_2$
is $\gamma$-strongly concave.
\end{corollary}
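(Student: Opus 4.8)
The plan is to obtain the corollary immediately from Lemma~\ref{norm2} by negation. Unwinding the convention for strong concavity used in this paper (cf. the ``violation of strong concavity'' clause of the Strongly Concave Games problem), a function $g$ is $\mu$-strongly concave exactly when
\[
g\big(\lambda x + (1-\lambda)y\big) \;\ge\; \lambda\, g(x) + (1-\lambda)\, g(y) + \tfrac{\lambda(1-\lambda)}{2}\,\mu\,\|x-y\|_2^2
\]
for all $x,y$ and all $\lambda\in[0,1]$; multiplying this inequality through by $-1$ shows that $g$ is $\mu$-strongly concave if and only if $-g$ is $\mu$-strongly convex in the sense of Lemma~\ref{norm2}. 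So the corollary is not really a new statement but a restatement of the lemma for the negated function.

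Concretely, I would apply this equivalence with $g(x) = -\gamma\|x\|_2^2$, so that $-g(x) = \gamma\|x\|_2^2$. Lemma~\ref{norm2} asserts that $\gamma\|x\|_2^2$ is $2\gamma$-strongly convex, and hence, by the equivalence above, $-\gamma\|x\|_2^2$ is $2\gamma$-strongly concave. Since the correction term $\tfrac{\lambda(1-\lambda)}{2}\mu\|x-y\|_2^2$ is nonnegative and nondecreasing in $\mu$, a $\mu'$-strongly concave function is in particular $\mu$-strongly concave for every $0 \le \mu \le \mu'$; taking $\mu' = 2\gamma$ and $\mu = \gamma$ then yields the claim as stated.

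If a self-contained argument is preferred, the same one-line inner-product computation from the proof of Lemma~\ref{norm2} can simply be redone with the sign flipped: using $\|x-y\|_2^2 = \|x\|_2^2 - 2\langle x,y\rangle + \|y\|_2^2$, one gets
\[
-\gamma\big\|\lambda x + (1-\lambda)y\big\|_2^2 - \Big(\lambda\big(-\gamma\|x\|_2^2\big) + (1-\lambda)\big(-\gamma\|y\|_2^2\big)\Big) = \gamma\,\lambda(1-\lambda)\,\|x-y\|_2^2 \;\ge\; \tfrac{\lambda(1-\lambda)}{2}\,\gamma\,\|x-y\|_2^2,
\]
which is precisely the defining inequality for $\gamma$-strong concavity. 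There is no substantive obstacle here; the only point worth a second glance is the bookkeeping on the constant, since the computation in fact certifies the sharper parameter $2\gamma$, whereas the value $\gamma$ stated in the corollary is the (amply sufficient) bound actually invoked downstream — for instance in the strong-convexity requirement on the correspondence $\mathcal{F}$ in the GQVI formulation and in the construction of the separation oracle for $\Psi$.
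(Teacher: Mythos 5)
Your proposal is correct and takes essentially the same route as the paper, which states the corollary without proof as an immediate negation of Lemma~\ref{norm2}; your first argument is exactly that implicit derivation, and your self-contained sign-flipped computation is a valid rewriting of the lemma's proof. Your observation that the computation actually certifies the sharper parameter $2\gamma$, with the stated $\gamma$ then following because a $2\gamma$-strongly concave function is a fortiori $\gamma$-strongly concave, is an accurate and useful clarification of the constant bookkeeping.
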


\begin{lemma}\label{norm22}
    Suppose that $f$ and $g$ are $\gamma_f$ and $\gamma_g$ strongly concave functions where $f,g: X\rightarrow Y$. Then, $f+g$ is a  $\gamma_f+\gamma_g$-strongly concave function.
\end{lemma}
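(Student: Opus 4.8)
The plan is to prove this directly from the definition of strong concavity by adding the two defining inequalities, with no appeal to differentiability. Recall that a real-valued function $h$ on a convex set is $c$-strongly concave precisely when, for all $x,y$ in the domain and all $\lambda\in[0,1]$,
\[
h(\lambda x+(1-\lambda)y)\ \ge\ \lambda h(x)+(1-\lambda)h(y)+\frac{\lambda(1-\lambda)}{2}\,c\,\|x-y\|_2^2 .
\]
First I would instantiate this inequality for $f$ with $c=\gamma_f$ and for $g$ with $c=\gamma_g$, keeping the same $x$, $y$, $\lambda$; both are applicable since $f$ and $g$ share the common convex domain $X$, so $\lambda x+(1-\lambda)y\in X$.

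Next I would add the two inequalities term by term. On the left-hand side the sum is $(f+g)(\lambda x+(1-\lambda)y)$; on the right-hand side the linear-combination terms collect into $\lambda(f+g)(x)+(1-\lambda)(f+g)(y)$, and the two quadratic remainders combine into $\frac{\lambda(1-\lambda)}{2}(\gamma_f+\gamma_g)\|x-y\|_2^2$. This is exactly the defining inequality witnessing that $f+g$ is $(\gamma_f+\gamma_g)$-strongly concave, which finishes the proof.

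There is essentially no obstacle here; the statement is elementary. The only points worth flagging are that $X$ must be convex (so that the convex combination stays in the domain) and that $f,g$ are real-valued, both of which are part of the standing hypotheses. The same computation, with the inequalities reversed, yields the analogous additivity statement for strong convexity (so it covers the companion Lemma~\ref{norm2} situation as well); and specializing $g\equiv 0$, which is $0$-strongly concave, gives the corollary that adding any concave function preserves the strong-concavity constant. This is precisely how the lemma is used when arguing that $\rho$ in the proof of Theorem~\ref{concaveppad} and $\Phi$ in the proof of Theorem~\ref{gqviinppadnolinear} remain $2\gamma$-strongly concave after adding the $-\gamma\|\cdot\|_2^2$ regularizer.
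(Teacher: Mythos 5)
Your proof is correct and follows essentially the same route as the paper: instantiate the defining strong-concavity inequality for $f$ and for $g$ at the same $x$, $y$, $\lambda$ and add the two inequalities, so the quadratic remainders sum to $\frac{\lambda(1-\lambda)}{2}(\gamma_f+\gamma_g)\|x-y\|_2^2$. Your write-up is in fact slightly cleaner about the sign conventions and the role of convexity of the domain than the paper's one-line computation.
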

  \begin{proof}
    Suppose that $x,y \in X$. We have the following:
\[
\begin{aligned}
    &(f+g)(\lambda x+(1-\lambda) y)-(\left(\lambda (f+g)(x)-(1-\lambda) (f+g)(y)\right)\\
    &\geq \frac{\lambda(\lambda-1)}{2} \gamma_f \|x-y\|^2_2+ \frac{\lambda(\lambda-1)}{2}\gamma_g \|x-y\|^2_2=  \frac{\lambda(\lambda-1)}{2}(\gamma_f+\gamma_g) \|x-y\|^2_2
\end{aligned} \]

\end{proof}

\subsection{Formulating L/F Equilibrium as EPEC}

\begin{proposition}\label{qvileaderfollower}
Assume that in a multi-leader-follower game , we have $k$ followers  with continuously differentiable loss functions $\Theta=(\theta_1,\dots,\theta_k)$,  and strategies $\mathcal{S}=(S_1,\dots,S_k)$ and constraints $\mathcal{R}=(\mathcal{R}_1,\dots,\mathcal{R}_k)$. In addition, we have $2$ leaders with strategies $\mathcal{X}=(X^{\mathrm{I}},X^{\mathrm{II}})$ and utilities $\Phi=(\phi_{\mathrm{I}},\phi_{\mathrm{II}})$. The problem of finding a $L/F$-equilibrium of this game can be transformed into an EPEC problem.
\end{proposition}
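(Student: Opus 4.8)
The plan is to express the followers' collective problem as a parametric quasi-variational inequality, substitute this characterization into each leader's optimization problem to obtain a parametric MPEC, and then observe that the pair of coupled parametric MPECs is, by definition, an EPEC whose solutions are exactly the $L/F$-equilibria. First I would fix a leader strategy pair $(x_{\mathrm{I}}, x_{\mathrm{II}}) \in X^{\mathrm{I}} \times X^{\mathrm{II}}$ and apply the transformation of Proposition~\ref{exgenqvi} to the followers' parametric generalized (Debreu) game. Setting $F_{(x_{\mathrm{I}},x_{\mathrm{II}})}(y) = \left(\nabla_{y_i}\theta_i(x_{\mathrm{I}},x_{\mathrm{II}},y)\right)_{i=1}^{k}$ and $\mathcal{R}_{(x_{\mathrm{I}},x_{\mathrm{II}})}(y) = \prod_{i=1}^{k}\mathcal{R}_i(x_{\mathrm{I}},x_{\mathrm{II}},y_{-i})$, Propositions~\ref{minimprop} and \ref{quasisolutionmin} (under the standing convexity of each $\theta_i(\cdot,x_{\mathrm{I}},x_{\mathrm{II}},y_{-i})$ and closedness/convexity of the $\mathcal{R}_i$) give that $\bar y \in Y(x_{\mathrm{I}},x_{\mathrm{II}})$ if and only if $\bar y \in \mathcal{R}_{(x_{\mathrm{I}},x_{\mathrm{II}})}(\bar y)$ and $(y-\bar y)^T F_{(x_{\mathrm{I}},x_{\mathrm{II}})}(\bar y) \ge 0$ for all $y \in \mathcal{R}_{(x_{\mathrm{I}},x_{\mathrm{II}})}(\bar y)$; that is, $\operatorname{graph} Y$ coincides with the solution graph of this parametric $\operatorname{QVI}$.

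Next I would substitute this equilibrium constraint into each leader's problem. Treating $x_{\mathrm{II}}^*$ as an exogenous parameter, leader $\mathrm{I}$'s problem becomes
\begin{equation*}
\begin{aligned}
& \operatorname{Min}_{\,x_{\mathrm{I}},\, y_{\mathrm{I}}} \;\; \phi_{\mathrm{I}}\!\left(x_{\mathrm{I}}, x_{\mathrm{II}}^*, y_{\mathrm{I}}\right) \\
& \text{s.t. } \; x_{\mathrm{I}} \in X^{\mathrm{I}}, \quad y_{\mathrm{I}} \in \mathcal{R}_{(x_{\mathrm{I}},x_{\mathrm{II}}^*)}(y_{\mathrm{I}}), \\
& \qquad \left(y - y_{\mathrm{I}}\right)^T F_{(x_{\mathrm{I}},x_{\mathrm{II}}^*)}(y_{\mathrm{I}}) \ge 0 \quad \forall\, y \in \mathcal{R}_{(x_{\mathrm{I}},x_{\mathrm{II}}^*)}(y_{\mathrm{I}}),
\end{aligned}
\end{equation*}
which is precisely an MPEC: an optimization problem whose feasible region is cut out by a (quasi-)variational/equilibrium constraint, parameterized by the other leader's decision $x_{\mathrm{II}}^*$. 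A symmetric construction with $x_{\mathrm{I}}^*$ exogenous and an independent copy $y_{\mathrm{II}}$ of the follower variables yields leader $\mathrm{II}$'s parametric MPEC; the two follower copies are not required to agree, exactly as in the definition of $L/F$-equilibrium.

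Finally I would note that $(x_{\mathrm{I}}^*, x_{\mathrm{II}}^*)$ is an $L/F$-equilibrium exactly when there exist $(y_{\mathrm{I}}^*, y_{\mathrm{II}}^*)$ such that $(x_{\mathrm{I}}^*, y_{\mathrm{I}}^*)$ solves leader $\mathrm{I}$'s MPEC with parameter $x_{\mathrm{II}}^*$ and $(x_{\mathrm{II}}^*, y_{\mathrm{II}}^*)$ solves leader $\mathrm{II}$'s MPEC with parameter $x_{\mathrm{I}}^*$; conversely, any simultaneous solution of the two parametric MPECs yields an $L/F$-equilibrium by reversing the substitution. By the characterization of an EPEC as an equilibrium problem composed of coupled parametric MPECs, this establishes the claim. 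The main obstacle is the tightness of the first step: the $\operatorname{QVI}$ captures followers' generalized equilibria only because convexity of $\theta_i$ in $y_i$ upgrades ``stationary point'' to ``minimizer'' (Propositions~\ref{minimprop} and \ref{quasisolutionmin}); without convexity one should either restrict to the standing convexity hypotheses already imposed on the followers' parametric Debreu games or replace the $\operatorname{QVI}$ constraint by the followers' KKT/first-order stationarity system, which remains a valid equilibrium constraint for the MPEC formulation.
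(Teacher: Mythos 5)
Your proposal is correct and follows essentially the same route as the paper's proof: both convert the followers' parametric Debreu game into a parametric QVI via the Harker-style gradient map $F_{\mathrm{I}}(x_{\mathrm{I}},x_{\mathrm{II}}^*,y_{\mathrm{I}}) = \left(\partial_{y_{(i,\mathrm{I})}}\theta_i\right)_{i=1}^k$, install that QVI as the equilibrium constraint of each leader's parametric MPEC, and identify the resulting pair of coupled MPECs as an EPEC. Your version is in fact slightly more careful in spelling out the constraint correspondence $\prod_i \mathcal{R}_i(x_{\mathrm{I}},x_{\mathrm{II}},y_{-i})$ and in flagging that convexity of $\theta_i(\cdot)$ in $y_i$ is what makes the QVI characterization of $Y(x_{\mathrm{I}},x_{\mathrm{II}})$ exact.
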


\begin{proof}
    A pair $(x^{*}_\mathrm{I}, x^{*}_\mathrm{II}) \in X^{\mathrm{I}} \times X^{\mathrm{II}}$ is a \emph{$L/F$-equilibrium}, if there exists $(y^{*}_\mathrm{I}, y^{*}_\mathrm{II})$ such that $(x^{*}_\mathrm{I}, y^{*}_\mathrm{I})$ is an optimal solution of leader $\mathrm{I}$'s problem, which tries to find $(x_{\mathrm{I}}, y_{\mathrm{I}})$ such that:
\[
\begin{aligned}
& \operatorname{Min} \phi_{\mathrm{I}}\left(x_{\mathrm{I}}, x^{*}_\mathrm{II}, y_{\mathrm{I}}\right) \\
& \text { s.t } x_{\mathrm{I}} \in X^{\mathrm{I}} \\
& \text { and } y_{\mathrm{I}} \mbox{  solves } QVI(Y(x_\mathrm{I},x^{*}_\mathrm{II}), F_\mathrm{I} (x_\mathrm{I},x^{*}_\mathrm{II},\cdot))
\end{aligned}
\]where $F_\mathrm{I}$ is defined as follows:


\[F_\mathrm{I}\left(x_\mathrm{I}, x^*_\mathrm{II}, y_\mathrm{I}\right):=\left(\begin{array}{c}
\partial_{y_{(1,\mathrm{I})}} \theta_1\left(x_\mathrm{I},x^*_\mathrm{II}, y_{(1,\mathrm{I})}, y^*_{(-1,\mathrm{I})}\right) \\
\vdots \\
\partial_{y_{(k,\mathrm{I})}} \theta_k\left(x_\mathrm{I},x^*_\mathrm{II}, y_{(k,\mathrm{I})}, y^*_{(-k,\mathrm{I})}\right)
\end{array}\right)\]
 For clairty, all exogenous variables are marked by $*$. The notation $y^*_{(-k,\mathrm{I})}$ refers to all followers' anticipated strategies except $k$ for leader $\mathrm{I}$. A similar transformation can be used for the second leader.

\end{proof}


\newpage

\section{Total t-Resilient Nash is in PPAD}\label{Appendixtotalresilientinppad}

This section is organized as follows. First, we generalize variational problems to the case that a set of equations of variational inequalities is supported. We will also state and proof a more generalized format of the robust version of Berge's maximum theorem. Finally, the proof (of inclusion in PPAD for $t$-total resilient Nash) can be established by generalizing Corollary \ref{nashvi2} (see Lemma \ref{reslient2k}).

\subsection{A More Generalized Form of GQVI}
For completeness and brevity, we only investigate the most complicated case which is GQVI. Computational MQVI and MVI can be defined similarly. To avoid repetition, we do not investigate this problem for weak separation oracles. The computational problem (MGQVI) with strong separation oracles is defined as follows:

\problemStatement{$MGQVI(\mathcal{F},\mathcal{R})$ With strong Separation Oracles}{
  Input={ We receive as input all the following:
  \begin{itemize}
      \item  A circuit $C_{\mathcal{R}}$ which represents a strong separation oracle for a $L_{\mathcal{R}}$-Hausdorff Lipschitz correspondence $\mathcal{R}: \mathbb{R}^{m*} \rightrightarrows  \mathbb{R}^{m*}$,
      \item $\mathcal{F}$ which has $r$ circuits such that for each $i\in [r]$, $C^i_\mathcal{F}$ represents a strong separation oracle for a $L_\mathcal{F}$-Hausdorff Lipschitz and $\gamma$-strongly convex correspondence $\mathcal{F}^i: \mathbb{R}^{m*} \rightrightarrows  \mathbb{R}^{m*}$,
      \item    An accuracy parameter $\beta$. 
  \end{itemize}},
  Output={One of the following cases:
  \begin{itemize}
      \item (Violation of non-emptiness): A vector $x \in \mathbb{R}^{m*}$ such that $\overline{\mathrm{B}}(\mathcal{R}(x),-\epsilon)=\emptyset$ or an index $i$ such that for some $x \in \mathbb{R}^{m*}$ we have that $\overline{\mathrm{B}}(\mathcal{F}^i(x),-\epsilon)=\emptyset$,
\item  (Violation of $L_{\mathcal{R}}$-Hausdorff Lipschitzness of $\mathcal{R}$):
Four vectors $p, q, z, w \in \mathbb{R}^{m*}$ and a constant $\epsilon>0$ such that $w=\widetilde{\Pi}_{\mathcal{R}(q)}^{\epsilon}(q)$ and $z=\widetilde{\Pi}_{\mathcal{R}(p)}^{\epsilon}(w)$ but $\|z-w\|>L_{\mathcal{R}}\|p-q\|+3 \epsilon$.
\item  (Violation of $L_\mathcal{F}$-Hausdorff Lipschitzness of $\mathcal{F}$)
Four vectors $p, q, z, w \in \mathbb{R}^{m*}$ and a constant $\epsilon>0$ and one index $i$ such that $w=\widetilde{\Pi}_{\mathcal{F}^i(q)}^{\epsilon}(q)$ and $z=\widetilde{\Pi}_{\mathcal{F}^i(p)}^{\epsilon}(w)$ but $\|z-w\|>L_\mathcal{F}\|p-q\|+3 \epsilon$,
(Violation of strong convexity):
There vectors $x, p, q\in \mathbb{R}^{m*}$ and two constants $\epsilon>0$ and $\lambda\in (0,1)$ such that:
$$
\begin{aligned}
\widetilde{\Pi}_{\mathcal{F}^i(x)}^{\epsilon}(\lambda p+ (1-\lambda)q ) & > \lambda \cdot \widetilde{\Pi}_{\mathcal{F}^i(x)}^{\epsilon}(p)+(1-\lambda) \cdot \widetilde{\Pi}_{\mathcal{F}^i(x)}^{\epsilon}(q)
\\&-\frac{\lambda(1-\lambda)}{2} \cdot \gamma\cdot\left\|p-q\right\|_2^2+\epsilon
\end{aligned}
$$
\item Two tuples $(x,w)$ and $(x^*,w^*)$ with $\|x-x^*\|\leq \beta $ and $\|w-w^*\|\leq \beta$  such that  $x^*\in \mathcal{R}(x) $ and  $w^*\in \mathcal{F}(x)$ and $\left(y-x\right) ^T w^* +\beta \mathds{1} \geq 0, \quad \forall y \in \mathcal{R}(x)$  \footnote{Recall that $\mathds{1}$ is all ones vector.}.   
  \end{itemize}
  }
}
\begin{theorem}\label{D1theorem}
    The problem of finding a solution to MGQVI for strong separation oracles is in PPAD.
\end{theorem}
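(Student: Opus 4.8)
The plan is to adapt the membership argument of Theorem~\ref{gqviinppadnolinear} so that one reduction to the computational Kakutani fixed-point problem (which is in PPAD by \cite{concavegames}) certifies all $r$ variational inequalities carried by $\mathcal{F}=(\mathcal{F}^1,\dots,\mathcal{F}^r)$ simultaneously. Exactly as there, I would first restrict to a well-bounded box and regularize. For a state $x$ and a tuple $w=(w^1,\dots,w^r)$ I would use the block-separable potential
\[
\Phi(\mathbf{y},x,w)=\sum_{i=1}^{r}\bigl[-(y^i-x)^{T}w^i-\gamma\,\|y^i\|_2^2\bigr],
\]
with $\mathbf{y}=(y^1,\dots,y^r)$ ranging over the product of copies of the feasible set built from $\mathcal{R}$; this $\Phi$ is $2\gamma$-strongly concave in $\mathbf{y}$ because it is a sum, over disjoint coordinate blocks, of the single-row potentials of Theorem~\ref{gqviinppadnolinear} (apply Lemmas~\ref{norm2} and~\ref{norm22}). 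Letting $\Pi(x,w)$ be the $\epsilon$-approximate maximizer set of $\Phi(\cdot,x,w)$ (a convex set, being a superlevel set of a concave function), I would form $\Psi(x,w)=\bigl(\Pi(x,w),\mathcal{F}^1(x),\dots,\mathcal{F}^r(x)\bigr)$ and take its Kakutani fixed point.

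The three technical ingredients of Theorem~\ref{gqviinppadnolinear} then go through, each requiring a block/matrix version of the corresponding tool. First, $\Psi$ is $(\tfrac12,2)$-H\"older and Hausdorff-Lipschitz: a blockwise application of the robust Berge maximum theorem (Theorem~\ref{berge}) to $\Phi(\cdot,x,w)$ --- strongly concave and Lipschitz in $\mathbf{y}$, with $L_{\mathcal{R}}$-Hausdorff-Lipschitz, convex- and compact-valued feasible correspondence --- shows $\Pi$ is H\"older-Lipschitz, and the $L_{\mathcal{F}}$-Hausdorff-Lipschitzness of the $\mathcal{F}^i$ finishes the estimate. Second, $\Psi$ is non-empty valued: from the Lipschitz constant of $\Phi$ and the non-emptiness of $\overline{\mathrm{B}}(\mathcal{R}(x),-\epsilon)$ and of each $\overline{\mathrm{B}}(\mathcal{F}^i(x),-\epsilon)$ one exhibits a bounded-radius ball inside $\Psi(x,w)$, as in the proof of Theorem~\ref{gqviinppadnolinear}. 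Third, $\Psi$ has a strong separation oracle: running the Modified Strong Constrained Convex Optimization Problem produces an $\mathbf{y}^{\star}$ with $\Phi(\mathbf{y}^{\star},x,w)\ge\max\Phi(\cdot,x,w)-\epsilon$, and one uses the separation oracle of the sublevel set $\{(\mathbf{y},w):-\Phi(\mathbf{y},x,w)\le-\Phi(\mathbf{y}^{\star},x,w)\}$ intersected with the product feasible domains and the $\mathcal{F}^i(x)$; since $\Phi$ is block-separable, the optimization decomposes into $r$ independent instances, each solved by the generalized sub-gradient central-cut ellipsoid method of Appendix~\ref{AppendixCentralcutellipsoidsection}, which either succeeds or emits a polynomial-size certificate that $\overline{\mathrm{B}}(\mathcal{R}(x),-\delta)$ or some $\overline{\mathrm{B}}(\mathcal{F}^i(x),-\delta)$ is empty.

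Feeding the (poly-size) circuit for $\Psi$ to the computational Kakutani problem with accuracy $\alpha=\Theta(\beta/\kappa')$, where $\kappa'$ is the H\"older constant of $\Psi$, a non-emptiness or Hausdorff-Lipschitz violation reported by Kakutani maps directly to the corresponding escape output of MGQVI (carrying the index $i$ of the offending block). Otherwise one obtains $(x,w)$ and $(x^{*},w^{*})\in\Psi(x,w)$ within $\alpha$; writing out the approximate maximality of the $\Pi$-block of $x^{*}$ and using block-separability gives, for each $i$ and every feasible $y$, the single-row inequality of Theorem~\ref{gqviinppadnolinear}, which the closeness bounds, boundedness of the box (absorbing the $\gamma$-regularization into a term $u(\gamma)$), and $L_{\mathcal{F}}$-Lipschitzness (to pass from $w^i$ to $w^{*,i}$) turn into $(y-x)^{T}w^{*}+\beta\mathds{1}\ge 0$ for all $y\in\mathcal{R}(x)$. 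Choosing $\gamma,\epsilon,\alpha$ as small polynomials in $\beta$ and the Lipschitz/dimension parameters completes a many-one reduction to the (PPAD) Kakutani problem, so MGQVI is in PPAD; hardness, not needed here, is inherited from the VI special case.

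The hard part will be the two ``generalized'' tools together with the error bookkeeping: one must re-derive the robust Berge theorem and the sub-gradient central-cut ellipsoid method for block-/matrix-valued data so that they still emit \emph{per-index} emptiness certificates (this is the role of Appendix~\ref{AppendixCentralcutellipsoidsection}), and one must check that the regularization slack and the projection/optimization errors, now incurred once per block, stay controlled --- block-separability of $\Phi$ should keep the total blow-up bounded by a factor depending only on $r$ (part of the input) and the fixed Lipschitz/dimension constants. A subtler conceptual point to verify is that, because all $r$ rows share the feasible correspondence coming from $\mathcal{R}$, a simultaneous approximate solution exists only because the $\mathcal{F}^i$ depend on $x$ (as in Debreu/Rosen games and in \cite{chan}'s GQVI existence proof); this is precisely why the fixed point is taken of $\Psi$, which bundles the per-row best responses with the update of all the $\mathcal{F}^i$, rather than of a naive intersection of the per-row maximizer sets, which could be empty.
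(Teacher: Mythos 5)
There is a genuine gap at the heart of your construction. MGQVI demands a \emph{single} pair $(x,w^{*})$ such that $(y-x)^{T}w^{*,i}+\beta\ge 0$ holds for every index $i\in[r]$ and every $y\in\mathcal{R}(x)$: the same base point $x$ must certify all $r$ rows simultaneously. Your block-separable potential $\Phi(\mathbf{y},x,w)=\sum_{i}\bigl[-(y^{i}-x)^{T}w^{i}-\gamma\|y^{i}\|_{2}^{2}\bigr]$ over a product of $r$ copies of the feasible set makes $\Pi(x,w)$ factor as $\prod_{i}\Pi^{i}(x,w^{i})$, so a Kakutani fixed point (even granting the unaddressed dimension mismatch --- your $\Psi$ sends an $(m+mr)$-dimensional state to an $(mr+mr)$-dimensional value, so it is not a self-map, and it is unclear at which $m$-vector the correspondences $\mathcal{R},\mathcal{F}^{i}$ are to be evaluated once the first block is a tuple) delivers $r$ generally \emph{distinct} near-maximizers $x^{*,1},\dots,x^{*,r}$, each solving its own row's inequality; nothing forces them to coincide or even to be close. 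You flag exactly this danger in your closing paragraph (``a naive intersection of the per-row maximizer sets\dots could be empty''), but the product construction does not escape it: it merely replaces the possibly-empty intersection by a non-empty product whose elements are tuples rather than the single point the output format requires.

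The paper's proof avoids this by keeping a single $m$-dimensional block $y$ and making $\Phi(y,x,w)=-(y-x)^{T}w-\gamma\|y\|_{2}^{2}\mathds{1}$ \emph{vector-valued}; $\Pi(x,w)$ is then the set of $y\in\mathcal{R}(x)$ that lie within $\epsilon$ of the maximum in \emph{every} coordinate simultaneously, and its non-emptiness is argued via a Lipschitz ball around the common maximizer $y_{x}^{*}$ rather than via decomposition. That simultaneous-maximizer set is precisely the object your argument is missing, and it is also why the paper needs the generalized vector-valued Berge theorem (Theorem~\ref{berge2}) and the vector-valued variant of the constrained optimization problem rather than $r$ independent scalar instances. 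The rest of your scaffolding (H\"older/Lipschitz estimates, the ellipsoid subroutine with per-index emptiness certificates, the encoding of the matrix $w$ for Kakutani with $d=mr+m$, the final error bookkeeping) does match the paper's, but it cannot be assembled until the fixed point is taken of a correspondence whose first block is that common near-maximizer set.
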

\begin{proof}
  We again consider $[-1,1]^m$ instead of $\mathbb{R}^{m*}$.   First, we define the following \footnote{Note that, unlike Theorem \ref{concaveppad},  $\Phi$ is a function that takes two vectors $x$ and $y$ (of size $m\times 1$), one matrix $w$ ( of size $m\times r$ which has $r$ vectors), and outputs one vector.}:
  \[\Phi(y,x,w)=-(y-x)^T w-\gamma(||y||^2_2)\mathds{1}\]
  \[\Pi(x,w)=\{y\in \mathcal{R}(x) ~|~\Phi(y,x,w)>max_{y\in \mathcal{R}(x)}\Phi(y,x,w)-\epsilon \mathds{1} \}\]
 Next, we show that for a constant $\kappa^\prime$, $\Psi(x,w)=(\Pi(x,w),\mathcal{F}(x))$ is $\kappa^\prime$-Lipschitz continuous in $(x,w)$.  The next step will similarly be constructing a strong separation oracle for $\Psi$.  Finally, we show that an approximate solution to Kakutani's fixed point of this correspondence will provide an approximate solution to the given MGQVI.

To show that $\Psi$ is $\kappa^\prime$-Lipschitz continuous, we need to show that $d_H(\Psi(x_1,w_1),\Psi(x_2,w_2))\leq \kappa^\prime||(x_1,w_1)-(x_2,w_2)||^q_p+c$ for some constants $q$,$p$ and $c$. 
\begin{remark}
Note that for any matrix $A$:
$$
\|A\|_2=\sqrt{\lambda_{\max }\left(A^T A\right)}=\sigma_{\max }(A) .
$$
where $\sigma_{\max }(A)$ represents the largest singular value of matrix $A$ and $\lambda_{max}$ denotes the largest eigenvalue value. We know that $w_1$ and $w_2$ are matrices, then we can define: 
    $||(x_1,w_1)-(x_2,w_2)||^2_2=||x_1-x_2||^2_2+||w_1-w_2||^2_2$
\end{remark}

$\Phi(y,x,w)$ is $(2\gamma)$-strongly concave (with respect to each coordinate) in $y$ as we have a minor change that $\Phi$ is a vector. In addition, $\Phi_i$ ($i$-th coordinate of $\Phi$) is $G$-Lipschitz continuous where $G=m+2\gamma m$ for all $x$ and $w$. 
We also have access to the sub-gradients of $\Phi_i$ for each $i\in [r]$.

Let $H(x,w)=\{y\in \mathcal{R}(x) ~|~\Phi(y,x,w)=max_{y\in \mathcal{R}(x)} \Phi(y,x,w)\}$. By optimality KKT conditions for maximization of a concave function with respect to the constraint set $\mathcal{R}(x)$ for all $y\in \mathcal{R}(x) $ we have:
\[
\partial \Phi(y,x,w)^T(y^*-y)\geq 0 \text{ where } y^*=argmax_{y\in \mathcal{R}(x)}\Phi(y,x,w)
\]
In addition, $(2 \gamma)$-strong-concavity of $\Phi(\cdot,x,w)$ results the following inequality:
\[
\Phi\left(y^*,x,w\right)-\Phi(y,x,w) \geq \partial \Phi\left(y^*,x,w\right)^{\top}\left(y^*-y\right)+\gamma\left\|y-y^*\right\|_2^2 \mathds{1}
\]
Finally, by combining the previous inequalities we have:
\[
\Phi\left(y^*,x,w\right)-\Phi(y,x,w) \geq \gamma\left\|y-y^*\right\|_2^2 \mathds{1}
\]
In conclusion, let $\gamma^\prime=r\gamma$, for $y^*\in H(x,w)$ and $y\in \Pi(x,w)$, we have the following inequality:

\begin{equation}\label{mkapp}
    \gamma^\prime\left\|y-y^*\right\|^2_2 \leq \epsilon \text{ or equivalently } \left\|y-y^*\right\|_2 \leq \sqrt{\frac{\epsilon}{\gamma^\prime}}
\end{equation}



 Now, we apply Theorem \ref{berge2} to $f((x,w),y)=\Phi(y,x,w)$ and $g((w,x))=\mathcal{R}(x)$ which is a modified version of Berge's theorem,  where $g$ is a non-empty, compact and convex-valued correspondence.  By Theorem \ref{berge2}, there exists a constant $\kappa$ such that $H$ is $\kappa$-Lipschitz (Holder) continuous. In conclusion, we have the following:
\begin{equation}\label{mkapp2}
    d_H(H(x_1,w_1),H(x_2,w_2))\leq \kappa||(x_1,w_1)-(x_2,w_2)||^{\frac{1}{2}}_2+c
\end{equation}
Combining Equations \ref{mkapp} and \ref{mkapp2}  results in:
\begin{equation} \label{mkapp3}
\begin{split}
 d_H(\Pi(x_1,w_1),\Pi(x_2,w_2)) &\leq d_H(H(x_1,w_1),H(x_2,w_2))\\
 & + d_H(\Pi(x_1,w_1),H(x_1,w_1))\\ &+d_H(\Pi(x_2,w_2),H(x_2,w_2))
\end{split}
\end{equation}
In conclusion, we have:
\begin{equation}\label{mkapp4}
    d_H(\Pi(x_1,w_1),\Pi(x_2,w_2))\leq \kappa||(x_1,w_1)-(x_2,w_2)||^{\frac{1}{2}}_2+2\sqrt{\frac{\epsilon}{\gamma^\prime}}+c
\end{equation}
Since for each $i\in[r]$, $\mathcal{F}^i$ is $L_\mathcal{F}$-Hausdorff Lipschitz, by the definition of $\Psi$ we can deduce there exists a constant $\kappa^\prime$: 

\begin{equation}\label{mkapp5}
    d_H(\Psi(x_1,w_1),\Psi(x_2,w_2))\leq \kappa^\prime||(x_1,w_1)-(x_2,w_2)||^{\frac{1}{2}}_2+c^\prime
\end{equation}

 To establish a reduction employing the computational Kakutani's problem variant, it is essential to establish the existence of a bounded-radius ball within the correspondence $\Psi(x,w)$. Let $y_x^*=argmax_{y\in \mathcal{R}(x)}\Phi(y,x,w)$. We know that that $\overline{\mathrm{B}} (\mathcal{R}(x),-\epsilon)$ and also $\overline{\mathrm{B}} (\mathcal{F}(x),-\epsilon)$ are  non-empty. Then, define a none-empty region $V_x= \overline{\mathrm{B}}(y_x^{\star},+\frac{\epsilon}{G^\prime})$  where $G^\prime=r\cdot G$ and $G$ is the Lipschitz constant of $\Phi_i$. By Lipschitzness of $\Phi\left(\cdot,x,w\right)$, we have:
\[ \forall y \in V_x, \quad \left\|\Phi\left(y, x,w\right)-\Phi\left(y^*_x,x,w\right)\right\| \leq G^\prime\frac{\epsilon}{G^\prime}=\epsilon\]
 In conclusion, $\Phi(V_x,x,w)\subseteq [\Phi\left(y^*_x,x,w\right)-\epsilon,\Phi\left(y^*_x,x,w\right)]$ is non-empty. Then, we can show that  $\Phi(V_x,x,w) \subseteq\Pi(x,w)$. We also know that for any $x\in \mathcal{R}(x)$,  $\overline{\mathrm{B}} (\mathcal{F}(x),-\epsilon)$ is non-empty,  then, we can find a non-empty region $C_x$ such that  $C_x \subseteq\overline{\mathrm{B}} (\mathcal{F}(x),-\epsilon)$.  This is sufficient to show that for all $x$ and $w$, $ \Psi(x,w)$ is non-empty. 


Now we are ready to construct a strong separation oracle for $\Psi(x,w)$ leveraging a modified version of the Strong Constrained Convex Optimization framework stated in the following.

\problemStatement{Modified Strong Constrained Convex Optimization Problem (2)}{
  Input={A zeroth and first order oracle for the concave function $G: \mathbb{R}^{m*}\rightarrow \mathbb{R}^{r*}$, two rational numbers $\delta,\epsilon>0$ and a strong separation oracle $\mathrm{SO}_{\mathcal{R}}$ for a non-empty closed convex-valued correspondence $\mathcal{R}:\mathbb{R}^{m*}\rightrightarrows  \mathbb{R}^{m*}$ and one input $x$.
  },
  Output={One of the following cases:
  \begin{itemize}
      \item (Violation of non-emptiness)
A failure symbol $\perp$ with a polynomial-sized witness that certifies that $\overline{\mathrm{B}}(\mathcal{R}(x),-\delta)=\emptyset$,
\item (Approximate Maximization)
A vector $z \in \mathbb{Q}^m \cap \mathcal{R}(x)$, such that $G(z) +\epsilon \mathds{1} \geq \max _{y \in \mathcal{R}(x)} G(y) $ for all values.
  \end{itemize}
  }
}

We can compute a solution $y^*\in \mathcal{R}(x)$ using the modified sub-gradient ellipsoid central cut method such that $\Phi\left(y^*,x,w\right)\geq max_{y\in \mathcal{R}(x)} \Phi\left(y,x,w\right)-\epsilon\mathds{1} $.  Thus, we can substitute a SO for $\Psi$ by considering a strong separation oracle for the following set:
\[
\overline{\Psi}_{s}(x,w)=\{(y,w)\in (\mathcal{R}(x),\mathcal{F}(x))~|~- \Phi(y,x,w)\leq \gamma^\prime\}
\]
where $\gamma^\prime= -\Phi(y^*,x,w)$. In other words, we are looking for $(y,w)\in (\mathcal{R}(x),\mathcal{F}(x))$ such that $(y-x)^T w +\gamma||y||^2_2 \mathds{1}  \geq (y^*-x)^T w+ \gamma||y^*||^2_2 \mathds{1} $ given the strong separation oracles representing $\mathcal{F}$ and $D$\footnote{This modified strong constrained convex optimization problem could be solved by the sub-gradient ellipsoid central too (see Appendix \ref{AppendixCentralcutellipsoidsection}).}.

Now, we can give $\Psi$ as input to the Kakutani problem presented in \cite{concavegames}  with accuracy parameter $\alpha=\frac{\epsilon^\prime}{\kappa^\prime}$ where $\epsilon^\prime=\frac{\epsilon h}{\mathrm{i}}$ and a slightly modified input. Our input is a separation oracle for $\Psi$ which takes one vector $x$ and one matrix $w$. Kakutani's problem takes a correspondence in the format of $\mathcal{R}:\mathbb{R}^{d*}\rightrightarrows \mathbb{R}^{d*}$ . Here, we need to consider $d=mr+m$ and encode the given matrix $w$ with $r$-vectors to make it completely compatible with the conditions of the Kakutani's fixed point\footnote{Alternatively, we can consider a format of Kakutani's fixed point such that it works with the correspondences of this format $\mathcal{R}:\mathbb{R}^{a*}\rightrightarrows \mathbb{R}^{b*}$. However, this will require a complete rewrite.}.


The (decoded) output of this Kakutani instance will be two points $(x,w)\in ([-1,1]^m,[-1,1]^m)$ and $z=(x^*,w^*)\in\Psi(x,w)$  where $||(x,w)-(x^*,w^*) ||^2_2\leq \frac{\epsilon^\prime \mathrm{i} }{\kappa^\prime}$ and $d((x,w),\Psi(x,w))\leq \frac{\epsilon^\prime \mathrm{i}}{\kappa^\prime}$ where $\mathrm{i}$ is the constant error caused by the simple decoding\footnote{This error is caused by the fact that the norm that we defined for the input given as a matrix and a vector is slightly different from $L2$ distance. This does not cause a problem as the matrix has constant columns and these different norms are related.}. Thus, $w^*\in \mathcal{F}(x)$ and $d(w,\mathcal{F}(x))\leq \frac{\epsilon^\prime \mathrm{i}}{\kappa^\prime} $. In addition, $x^*\in \Pi(x,w)$ and $d(x,\Pi(x,w))\leq \frac{\epsilon^\prime \mathrm{i}}{\kappa^\prime} $.
By the definition of $\Pi$,  for every $y\in \mathcal{R}(x)$,  $\Phi\left(x^*,x,w\right)\geq \Phi\left(y,x,w\right)$. In conclusion:
\[ \left(y-x\right) ^T w+2\gamma^\prime||y||^2_2 \mathds{1}  \geq \left(x^*-x\right) ^T w+ 2\gamma^\prime||x^*||^2_2 \mathds{1} ,  \thickspace \forall y \in \mathcal{R}(x) \]
So, recalling that $x,y, w\in [-1,1]^m $, if $\gamma^\prime$ is small there exists a vector $u(\gamma^\prime)$:
\[ \left(y-x\right) ^T w \geq  \pm \frac{\epsilon^\prime}{\kappa^\prime}\cdot m \mathds{1}   \pm u(\gamma^\prime) ,  \quad \forall y \in \mathcal{R}(x)\]
\begin{remark}
    We know that $||x||_2<m ||x||_\infty$.
\end{remark}

Finally, by knowing the facts that $||(x,w)-(x^*,w^*) ||^2_2\leq \frac{\epsilon^\prime \mathrm{i}}{\kappa^\prime}$ and $w^*\in \mathcal{F}(x)$, we can simply deduce that there exists a vector $\zeta^\prime$ such that:
\[ \left(y-x\right) ^T w^* \geq  -\zeta^\prime ,  \quad \forall y \in \mathcal{R}(x) \]
Considering appropriate small numbers for $h$ and $\gamma^\prime$ (recall that $\epsilon^\prime=\frac{\epsilon h}{\mathrm{i}}$ and consider $\epsilon=\beta$)  will imply the following for $x\in \mathcal{R}(x)$: 
\[ \left(y-x\right) ^T w^* \geq - \beta \mathds{1}  ,  \quad \forall y \in \mathcal{R}(x) \]

\end{proof}

\subsection{More Generalized Form of the Robust Berge Maximum Theorem}

 We can easily extend the robust form of Berg's maximum theorem to $f:A\times B \rightarrow \mathbb{R}^t$ for any $t$ by some minor changes in the following. \footnote{For MGQVI, we only need to prove the Lipchitzness of $g^*$.}. 
 
Assuming $f(x,y)=(f_1(x, y),\dots,f_t(x,y))$ then if  $f^*(a)=(f_1(a, b),\dots,f_t(a,b))$ we can conclude that $(f_1(a, b),\dots,f_t(a,b))\geq (f_1(a, c),\dots,f_t(a,c)) $ for all $c\in g(a)$. In other words for all $1 \leq i\leq t$, $f_i(a, b) \geq f_i(a, c)$.  A similar analogy also holds for $g^*$.

\begin{theorem}\label{berge2}
  (Generalized Robust Berge Maximum Theorem). Let $A \subseteq \mathbb{R}^{n_1}\times \mathbb{R}^{n_2}$ and $B \subseteq \mathbb{R}^{m}$. Consider a continuous function $f: A \times B \rightarrow \mathbb{R}^t$ that is $\mu$-strongly concave $\forall a \in A, L$-Lipschitz in $A \times B$ and a $L^{\prime}$-Hausdorff Lipschitz, non-empty, convex-set, compact-valued correspondence $g: A \rightrightarrows B$. Define $f^*(a)=\max _{b \in g(a)} f(a, b)$ and $g^*(a)=\arg \max _{b \in g(a)} f(a, b)$. Then, we observe $f^*$ is continuous and $g^*$ is upper semi-continuous and single-valued, i.e., continuous. Furthermore,  $g^*$ is Lipschitz and $\left(L^{\prime}+2 \sqrt{\frac{4}{\mu}} \sqrt{\left(L+L \cdot L^{\prime}\right)}\right)-(1 / 2,p)$-Holder continuous respectively (for sufficiently small differences)\footnote{We only need the cases where $n_1=m$, $n_2=r\cdot  m$ and $t=r$.}.
\end{theorem}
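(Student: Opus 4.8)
The plan is to deduce Theorem~\ref{berge2} from the scalar Robust Berge Maximum Theorem (Theorem~\ref{berge}) applied one coordinate at a time, and then to reassemble the coordinates using strong concavity. Write $f=(f_1,\dots,f_t)$. For each $i\in[t]$ the component $f_i\colon A\times B\to\mathbb{R}$ is continuous, is $\mu$-strongly concave in $b$ for every fixed $a$ (this is precisely what $\mu$-strong concavity of the vector-valued $f$ means, since the relevant inequalities are required to hold coordinatewise), and is $L$-Lipschitz on $A\times B$; meanwhile $g\colon A\rightrightarrows B$ is $L'$-Hausdorff Lipschitz, non-empty, convex-set and compact-valued. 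Hence Theorem~\ref{berge} applies verbatim to the pair $(f_i,g)$: setting $f_i^*(a)=\max_{b\in g(a)}f_i(a,b)$ and $g_i^*(a)=\arg\max_{b\in g(a)}f_i(a,b)$, the map $g_i^*$ is single-valued (the maximizer of a strongly concave function over the convex set $g(a)$ is unique), upper semi-continuous, continuous, and $\bigl(L'+2\sqrt{4/\mu}\,\sqrt{L+L\cdot L'}\bigr)$-$(1/2,p)$-Hölder continuous for sufficiently small differences, while $f_i^*$ is continuous.

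The next step is to identify the vector-valued $g^*$ with each $g_i^*$. By definition $g^*(a)$ is the point $b\in g(a)$ that satisfies $f_i(a,b)\ge f_i(a,c)$ for every $i\in[t]$ and every $c\in g(a)$; in particular $g^*(a)$ maximizes $f_i(a,\cdot)$ over $g(a)$, so by uniqueness of that maximizer we get $g^*(a)=g_i^*(a)$ for every $i$. Consequently $g^*$ inherits single-valuedness, upper semi-continuity, continuity, and the stated Hölder/Lipschitz modulus directly from $g_1^*$ (equivalently, from any $g_i^*$), which is exactly the part of the conclusion needed for the reduction in Theorem~\ref{D1theorem}. Similarly $f^*(a)=(f_1^*(a),\dots,f_t^*(a))$ is continuous because each component is. The remaining assertions (upper semi-continuity, $f^*$ continuous) are then transcriptions of the scalar proof of \cite{concavegames} with $\mathbb{R}$ replaced by $\mathbb{R}^t$ and the order read coordinatewise; no new estimates are required.

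The one genuinely new point, and the step I expect to demand the most care, is the existence and coherence of the simultaneous maximizer $g^*(a)$: a priori the coordinates $f_i(a,\cdot)$ may have distinct maximizers, in which case $\max_{b\in g(a)}f(a,b)$ in the coordinatewise order does not exist and the statement is vacuous. In the settings where the theorem is invoked this is guaranteed by the common-optimization structure of the objective $\Phi$ (all coordinates are optimized simultaneously by the same argument, mirroring the transformations of Rosen and Harker), so the identification $g^*(a)=g_i^*(a)$ above is legitimate; I would make this explicit at the point of application (in Lemma~\ref{reslient2k} and in the proof of Theorem~\ref{D1theorem}) rather than here, keeping the present statement conditional on $f^*$, $g^*$ being well defined. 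Everything else follows mechanically from the scalar theorem.
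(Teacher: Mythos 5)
Your proposal is correct in substance but takes a genuinely different route from the paper's. The paper re-runs the scalar argument of \cite{concavegames} directly on the vector-valued $f$: it proves single-valuedness of $g^*$ by the midpoint/strong-concavity argument read coordinatewise, derives the Lipschitz bound for $f^*$ with an extra $\sqrt{t}$ factor, and then pushes a ``generalized Apollonius'' inequality through the same chain of Hausdorff-distance estimates, tracking various $\sqrt{t}$ factors that are eventually absorbed into the constant $\kappa=L'+2\sqrt{4/\mu}\sqrt{L+L\cdot L'}$. You instead invoke the scalar Theorem~\ref{berge} as a black box on each component pair $(f_i,g)$ and identify $g^*(a)$ with every $g_i^*(a)$ via uniqueness of the maximizer of a strongly concave function over the convex set $g(a)$; this delivers the stated Holder modulus with no $\sqrt{t}$ bookkeeping and is arguably cleaner. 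The trade-off is exactly the point you flag yourself: the identification presupposes that the coordinatewise maximum exists, i.e., that all the $g_i^*(a)$ coincide. The paper's proof carries the same assumption implicitly (it manipulates $g^*(a_1)$, $g^*(a_2)$ and, in the application, $H(x,w)$ as if nonempty without comment), so you have not introduced a gap so much as made a shared one visible; but note that in the intended application (Theorem~\ref{D1theorem}, where the coordinates $\Phi_i(y,x,w)=-(y-x)^{\top}w_i-\gamma\|y\|_2^2$ involve distinct columns $w_i$) the existence of a simultaneous maximizer is not obvious, so deferring that verification to the point of use, as you propose, is the right structural choice but does leave an obligation there that must actually be discharged.
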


\begin{proof}
     We will follow the proof of Theorem 3.20 in \cite{concavegames} leveraging the generalized Apollonius theorem. 
     We first prove that $g^*$ is a singled-value function. For the sake of contradiction, assume that we have $b_1, b_2 \in g^*(a)$. By the definition of $f^*(a)$, we have the following:
   \[
   f^*(a)=\max \{f(a, b): b \in g(a)\}=f\left(a, b_1\right)=f\left(a, b_2\right)
   \]
      Let $b_k=b_1 \cdot k+(1-k) \cdot b_2$ for some $k \in(0,1)$. By the fact that $g(a)$ is a convex set correspondence, it holds that also $b_k \in g(a)$. Then by concavity of $f$ we have that:
\[    \begin{aligned}
        f^*(a) \geq f\left(a, b_k\right)=f\left(a, b_1 \cdot t+(1-t)-b_2\right) &\geq f\left(a, b_1\right) \cdot k+(1-k) \cdot f\left(a, b_2\right) \\ &=f^*(a) \cdot k+(1-k) \cdot f^*(a)=f^*(a)
    \end{aligned}\]
So, by definition $b_k \in g^*(a)$. However, since $f$ is strongly concave, it has a unique maximizer. This means that $g^*(a)$ is a singled-valued correspondence. By  Berge's oringinal theorem, we know that $g^*$ would be upper semi-continuous which for the singled-value case corresponds to the classical notion of continuity.

Next, we will prove now that $f^*(a), g^*(a)$ will have some form of Lipschitz continuity. More precisely, for two arbitrary inputs $a_1, a_2 \in A$ it holds that:
$$
\left\{\begin{array}{l}
f^*\left(a_1\right)=\left\{\max f\left(a_1, b\right): b \in g\left(a_1\right)\right\}=f\left(g^*\left(a_1\right), a_1\right) \\
f^*\left(a_2\right)=\left\{\max f\left(a_2, b\right): b \in g\left(a_2\right)\right\}=f\left(g^*\left(a_2\right), a_2\right)
\end{array}\right.
$$
Furthermore, we have the following:
\[
\begin{aligned}
f^*\left(a_1\right)=\left\{\max f\left(a_1, b\right): b \in g\left(a_1\right)\right\} & \geq f\left(a_1, b\right) \thickspace \forall b \in g\left(a_1\right) \\
f^*\left(a_1\right) & \geq f\left(a_1,\Pi_{g\left(a_1\right)}\left(  g^*\left(a_2\right)\right)\right) \\
& \geq-L\left\|\left(\Pi_{g\left(a_1\right)}\left(g^*\left(a_2\right)\right), a_1\right)-\left(g^*\left(a_2\right), a_2\right)\right\|\mathds{1}\\&+f^*\left(a_2\right)
\end{aligned}
\]
Note that $\Pi_{g\left(a_1\right)}\left(g^*\left(a_2\right)\right)$ denotes the exact projection of $g^*(a_2)$ to $g(a_1)$. The second inequality is implied by the fact that $f^*(a_1)=f\left(a_1,g^*\left(a_1\right)\right) \geq f(a_1,b)$ for all $b\in g(a_1)$. Finally, the last inequality follows by Lipchitzness of $f$ in $A\times B$. Next:
\[
\begin{aligned}
& f^*\left(a_2\right)-f^*\left(a_1\right) \leq L\left\|a_1-a_2\right\| \mathds{1} +L\left\|\Pi_{g\left(a_1\right)}\left(g^*\left(a_2\right)\right)-g^*\left(a_2\right)\right\|\mathds{1} \\
& f^*\left(a_2\right)-f^*\left(a_1\right) \leq L\left\|a_1-a_2\right\| \mathds{1}+L \mathrm{~d}_H\left(g\left(a_1\right), g\left(a_2\right)\right) \mathds{1} \\
& f^*\left(a_2\right)-f^*\left(a_1\right) \leq\left(L+L \cdot L^{\prime}\right)\left\|a_1-a_2\right\| 
\mathds{1}
\end{aligned}
\]
Note that the second inequality again follows by the definition of Hausdorff distance and projection. Applying symmetrically the same argument for $f^*\left(a_2\right)$:
\[
\left|f^*\left(a_1\right)-f^*\left(a_2\right)\right| \leq\left(L+L \cdot L^{\prime}\right)\left\|a_1-a_2\right\|\mathds{1}
\]
Or equivalently:
\[
\left\|f^*\left(a_1\right)-f^*\left(a_2\right)\right\| \leq \sqrt{t} \cdot \left(L+L \cdot L^{\prime}\right)\left\|a_1-a_2\right\|
\]

For $g^*(a)$, in the first step, we will leverage the generalization of the Apollonius theorem\footnote{We assume that $f=(f_1,\dots,f_m)$ and derive the bounds for each $f_i$ ( for $i\in[m]$) separately similar to \cite{concavegames}.}.  More precisely, for a $\mu$-strongly concave function $f$, we have the following relationship:

\[-f\left(a, \frac{x+y}{2}\right) \leq-\frac{f(a, x)+f(a, y)}{2}-\frac{\mu}{8}\|x-y\|_2^2\mathds{1} \quad \forall a \in A\]
or equivalently,
$$
\frac{f(a, x)+f(a, y)}{2} \leq f\left(a, \frac{x+y}{2}\right)-\frac{\mu}{8}\|x-y\|_2^2 \mathds{1} \quad  \forall a \in A
$$
Since $f\left(a, \frac{x+y}{2}\right) \leq \max \{f(a, x), f(a, y)\}$ we get that:
\begin{equation}
    \label{xxx3}\|x-y\| \mathds{1} \leq \sqrt{\frac{8}{\mu} \cdot \frac{\max \{f(a, x), f(a, y)\}-\min \{f(a, x), f(a, y)\}}{2}} \quad \forall a \in A
\end{equation}
Or equivalently: 
\begin{equation}
    \label{xxx4}\|x-y\| \sqrt{t} \leq \sqrt{\frac{4}{\mu}} \|\sqrt{  \max \{f(a, x), f(a, y)\}-\min \{f(a, x), f(a, y)\}}\| \quad \forall a \in A
\end{equation}
In the interest of clarity, we introduce extra variables similar to \cite{concavegames}. $K_1=g^*\left(a_1\right), K_2=g^*\left(a_2\right)$ and $K_3=\Pi_{g\left(a_1\right)}\left(K_2\right)$, we get the following:
$$
\begin{aligned}
 \left\|g^*\left(a_1\right)-g^*\left(a_2\right)\right\|&=\left\|K_1-K_2\right\| \leq\left\|K_1-K_3\right\|+\left\|K_2-K_3\right\| \\
& \leq \mathrm{d}_{\mathrm{H}}\left(g\left(a_1\right), g\left(a_2\right)\right)+\left\|\left(K_1-K_3\right)\right\|
\end{aligned}$$
The second inequality again follows by the definition of Hausdorff distance and projection. Now we are ready to apply Equation \ref{xxx3} knowing $f\left(a_1, K_1\right)>f\left(a_1, K_3\right)$.

\[
\begin{aligned}
&\left\|g^*\left(a_1\right)-g^*\left(a_2\right)\right\| \leq L^{\prime}\left\|a_1-a_2\right\|+\sqrt{\frac{4}{t \mu }}\| \sqrt{\left(f\left(a_1, K_1\right)-f\left(a_1, K_3\right)\right)} \|\\
& \leq L^{\prime}\left\|a_1-a_2\right\|+\sqrt{\frac{4}{t\mu}} \|\sqrt{\left|f\left(a_1, K_1\right)-f\left(a_1, K_3\right)+f\left(a_2, K_2\right)-f\left(a_2, K_2\right)\right|}\| \\
& \left.\leq L^{\prime}\left\|a_1-a_2\right\|+\sqrt{\frac{4}{t\mu}} \| \sqrt{\left\{\begin{array}{c}
\mid f\left(a_1, K_1\right)-f\left(a_2, K_2\right)| \\
+ \\
|f\left(a_2, K_2\right)-f\left(a_1, K_3\right)|
\end{array}\right\}}\right\} \|\\
& \leq L^{\prime}\left\|a_1-a_2\right\|+\sqrt{\frac{4}{t\mu}}\|\left\{\begin{array}{c}
\sqrt{\left|f^*\left(a_1\right)-f^*\left(a_2\right)\right|} \\
+ \\
\sqrt{\left|f\left(a_2, K_2\right)-f\left(a_1, K_3\right)\right|}
\end{array}\right\}\|
\end{aligned}
\]

Next, we apply the bound that we have for $f^*$:

$$
\begin{aligned}
 \left\|g^*\left(a_1\right)-g^*\left(a_2\right)\right\| &\leq L^{\prime}\left\|a_1-a_2\right\|+\sqrt{\frac{4}{t\mu}}\left\{\begin{array}{c}
\sqrt{\sqrt{t}\left(L+L \cdot L^{\prime}\right)\left\|a_1-a_2\right\|} \\
+ \\
\sqrt{L\left\|K_2-K_3\right\|+L \| a_2-a_1 \|}
\end{array}\right\} \\
& \leq L^{\prime}\left\|a_1-a_2\right\|+\sqrt{\frac{4}{t\mu}}\left\{\begin{array}{c}
\sqrt{\sqrt{t}\left(L+L \cdot L^{\prime}\right)\left\|a_1-a_2\right\|} \\
+ \\
\sqrt{L \mathrm{~d}_{\mathrm{H}}\left(g\left(a_1\right), g\left(a_2\right)\right)+L \| a_2-a_1 \|}
\end{array}\right\}\\
& \leq L^{\prime}\left\|a_1-a_2\right\|+2 \sqrt{\frac{4}{\sqrt{t}\mu}} \sqrt{\left(L+L \cdot L^{\prime}\right)\left\|a_1-a_2\right\|} \\
& \leq \underbrace{L^{\prime}+2 \sqrt{\frac{4}{\mu}} \sqrt{\left(L+L \cdot L^{\prime}\right)}}_\kappa \max \left\{\left\|a_1-a_2\right\|^{1 / 2},\left\|a_1-a_2\right\|\right\} \\
& \leq \kappa \max \left\{\left\|a_1-a_2\right\|^{1 / 2},\left\|a_1-a_2\right\|\right\} \\
&
\end{aligned}
$$

\end{proof}

\subsection{Proof of Inclusion in PPAD}
Finally, in the following lemma, we convert the problem of finding a $t$-resilient Nash equilibrium (for any constant $t$) to a series of variational equations as follows. Roughly speaking, the definition of $t$-resilient Nash equilibrium needs $\sum_{i=1}^{t} {k \choose i}$ conditions (immunity to deviations) to be checked, and through this lemma, we make sure that all of these conditions are satisfied by using a single variational inequality problem.

\begin{remark}
    Note that the membership of a strategy profile in $\mathcal{S}=\Pi_{j\in [k]} S_j$ could be done in polynomial time. Therefore, a strong separation oracle for $\mathcal{S}$ exists.
\end{remark}

\begin{lemma}\label{reslient2k}
  Suppose that we have a game with multi-concave and continuously differentiable utilities $\mathcal{U}=(u_1,\dots,u_k)$,  and strategies $\mathcal{S}=(S_1,\dots,S_k)$. The problem of finding an approximate $t$-resilient Nash equilibrium of this game can be reduced to an MVI (multi-variational inequality) problem. 
\end{lemma}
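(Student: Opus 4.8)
The plan is to lift the Harker-style transformation used for ordinary Nash equilibria (Corollary~\ref{nashvi1}, and the approximate argument in Proposition~\ref{geneqvari2}) to the coalition setting, so that a single multi-variational inequality simultaneously encodes the $\sum_{s=1}^{t} s\binom{k}{s}$ stationarity conditions that a $t$-resilient equilibrium must satisfy. Set $\theta_j := -u_j$ for each player $j$, so each $\theta_j$ is continuously differentiable and, by $t$-multi-concavity of $u_j$, the restriction $\theta_j(\cdot_J,\textbf{x}_{-J})$ is convex in the block $\textbf{x}_J$ whenever $|J|\le t$. Enumerate the pairs $p=(J,j)$ with $\emptyset\neq J\subseteq[k]$, $|J|\le t$, $j\in J$; since $t$ (or $k$) is constant there are only polynomially many, say $p_1,\dots,p_r$ with $r\le t\cdot k^t$. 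For $p=(J,j)$ define $F^{p}:\mathcal{S}\to\mathbb{R}^n$ block-wise by $F^{p}(\textbf{x})_i=\nabla_{x_i}\theta_j(\textbf{x})$ for $i\in J$ and $F^{p}(\textbf{x})_i=0$ for $i\notin J$. Take the MVI instance with feasible set $\mathcal{R}=\mathcal{S}=\prod_{i=1}^k S_i$ (which admits a strong separation oracle, since membership in $\mathcal{S}$ is poly-time decidable as noted above) and vector-valued operator $F=(F^{p_1},\dots,F^{p_r})$, whose value at $\textbf{x}$ is the $n\times r$ matrix stacking the $F^{p_\ell}(\textbf{x})$.

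Next I would verify the equivalence. Because $F^{p}(\cdot)$ vanishes off the block $J$, for any $\textbf{y}\in\mathcal{S}$ we have $(\textbf{y}-\textbf{x}^*)^{T}F^{p}(\textbf{x}^*)=\langle\nabla_{\textbf{x}_J}\theta_j(\textbf{x}^*),\textbf{y}_J-\textbf{x}^*_J\rangle$, and as $\textbf{y}$ ranges over $\mathcal{S}$ the block $\textbf{y}_J$ ranges over all of $S_J=\prod_{i\in J}S_i$. Hence the $p$-th component of the MVI condition is exactly $\langle\nabla_{\textbf{x}_J}\theta_j(\textbf{x}^*),\textbf{y}_J-\textbf{x}^*_J\rangle\ge 0$ for all $\textbf{y}_J\in S_J$, which by the first-order optimality characterisation of convex minimisation over a convex set (Propositions~\ref{minimprop} and~\ref{quasisolutionmin}, applied to the convex function $\theta_j(\cdot_J,\textbf{x}^*_{-J})$ on $S_J$) holds iff $\textbf{x}^*_J$ minimises $\theta_j(\cdot_J,\textbf{x}^*_{-J})$, i.e.\ iff $u_j(\textbf{x}^*)\ge u_j(\textbf{y}_J,\textbf{x}^*_{-J})$ for all $\textbf{y}_J\in S_J$. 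Quantifying over all $j\in J$ gives that $\textbf{x}^*_J$ is a group best response for $J$, and quantifying over all $J$ with $|J|\le t$ gives that $\textbf{x}^*$ is a $t$-resilient Nash equilibrium; the converse direction is identical, running the optimality characterisation the other way. For the approximate versions, a $\beta$-approximate MVI solution satisfies $(\textbf{y}-\textbf{x}^*)^{T}F^{p}(\textbf{x}^*)+\beta\ge 0$ in each coordinate; combining with the gradient inequality $\theta_j(\textbf{y}_J,\textbf{x}^*_{-J})\ge\theta_j(\textbf{x}^*)+\langle\nabla_{\textbf{x}_J}\theta_j(\textbf{x}^*),\textbf{y}_J-\textbf{x}^*_J\rangle$ (convexity of $\theta_j$ on the block $J$) yields $u_j(\textbf{x}^*)+\beta\ge u_j(\textbf{y}_J,\textbf{x}^*_{-J})$, so $\textbf{x}^*$ is a $\beta$-approximate $t$-resilient Nash equilibrium, exactly as in the proof of Proposition~\ref{geneqvari2}; the only accuracy loss is this constant one.

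Finally I would check that this is a legitimate reduction of the computational problem: each $F^{p}$ is obtained from the circuits for the $u_j$ by automatic differentiation (using subgradients in the non-differentiable linear-arithmetic-circuit case, as discussed earlier), so $F$ is poly-time computable and inherits Lipschitz continuity from the utilities; the violation-of-multi-concavity exception of the \textsc{Total $t$-Resilient Nash} problem matches the relevant non-convexity certificate for the $\theta_j$ (and the Lipschitz exceptions transfer likewise); and the solution map simply returns the strategy component $\textbf{x}^*$. Mapping instances and solutions this way gives $\textsc{Total $t$-Resilient Nash}\le_{m}\textsc{MVI}$.

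The step I expect to be the main obstacle is the bookkeeping that makes the ``stacking'' honest: arguing that the distinct coalition conditions, which share variables (a fixed $x_i$ feeds into every $F^{p}$ with $i\in J$), do not interfere. This is precisely why one must use a genuinely vector-valued operator — one component per pair $(J,j)$ — rather than a scalar VI, and why the feasible set must be the full product $\mathcal{S}$ so that a test point can be perturbed on one coalition block while held fixed on the rest (the vanishing of $F^{p}$ off $J$ then makes the ``for all $\textbf{y}\in\mathcal{S}$'' quantifier collapse to ``for all $\textbf{y}_J\in S_J$''). Getting the reading of $t$-multi-concavity right — that it gives convexity of $\theta_j$ on every $\le t$-block but possibly not on larger blocks, which is exactly enough for the first-order characterisation on each $S_J$ with $|J|\le t$ — is the other point needing care; the remaining items (Lipschitz constants, the $\beta\mapsto\beta$ accuracy transfer, polynomial size of $r$) are routine.
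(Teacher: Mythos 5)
Your proposal is correct and takes essentially the same route as the paper: both lift Harker's transformation to coalitions by stacking one variational component per (coalition, member) pair into a matrix-valued operator over the full product set $\mathcal{S}$, with entries vanishing off the coalition block, and both use the block convexity of $\theta_j=-u_j$ supplied by $t$-multi-concavity to pass between the first-order conditions and the group best-response inequalities, with the same $\beta\mapsto\beta$ accuracy transfer. The only differences are presentational: the paper writes out only $t=2$ and indexes the matrix by coalitions (via the bookkeeping function $\pi$) with per-coordinate gradient inequalities, whereas you index by pairs $(J,j)$ and invoke the joint gradient inequality on each block $S_J$ directly.
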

\begin{proof}
  For simplicity, we only solve the case $t=2$. Let $\theta_j=-u_j$ for each $j\in [k]$.  Finding an approximate $2$-resilient equilibrium in this game is equivalent to finding a solution to the following optimization problem where the goal is  finding $\textbf{s}^*$ such that:

\begin{equation}\label{mmingener2}
\begin{aligned}
 & \forall J\in \mathcal{J},  \forall j\in J,\quad \theta_j(\textbf{s}^*) \leq \operatorname{Min} \theta_j(\textbf{s}_J,\textbf{s}^*_{-J}) +\epsilon
 \\ &\mbox{s.t } s_j \in S_J
\end{aligned}
\end{equation}

  Define  $F_{i}(x)=(\nabla_{x_{\pi(i,j)}} \theta_j(x))_{j=1}^k \in \mathbb{R}^{k}$ where the function $\pi$ will help us indicate which one of the partial derivations should be considered. $F$ is the matrix whose columns are $F_{i}$ (for each $i$). And if $\pi(i,j)=0$, $F(i,j)=0$ will be $0$.

\[\pi(i,j)=
\left\{
	\begin{array}{lll}
		i  & \mbox{if }  1 \leq i\leq k \thickspace \& \thickspace i=j  \\
		b(i,j) & \mbox{if }  k+1 \leq i\leq k+1+ {k \choose 2} \thickspace  \& \thickspace b(i,j)\neq 0 \thickspace \\
       0 &  \mbox{o.w}
	\end{array}
\right.
\]

\[b(i,j)=
\left\{
	\begin{array}{ll}
		p=Mod((i-k)+1,k)  & \mbox{if }  i=j  \thickspace  \\
		0  & \mbox{o.w } 
	\end{array}
\right.
\]
We will show that $\textbf{s}^*$ is a $2$-resilient Nash equilibrium if and only if $\textbf{s}^* \in \mathcal{S}=\Pi_{i\in[k]} S_i$ satisfies:
\begin{equation}\label{mgogo}
 \left(\textbf{y}-\textbf{s}^*\right) ^T F\left(\textbf{s}^*\right)+\epsilon\mathds{1} \geq 0, \quad \forall \textbf{y}\in \mathcal{S}  
\end{equation}

Assume that we have an approximate solution for Equation \ref{mgogo}. First, for each $J\in \mathcal{J}$ with one element (all $j\in [k])$, we have:

$$
\theta_j(y_j,\textbf{s}^*_{-j}) \geq \theta_j\left(s_j,\textbf{s}^*_{-j}\right)+  \left(y_j-s^*_j\right) \nabla_{y^*_j} \theta_j \left(s_j,\textbf{s}^*_{-j} \right), \quad \forall y_j \in S_j
$$
Similar to Proposition \ref{geneqvari2}, $\left(y-\textbf{s}^*\right)^T  F\left(\textbf{s}^*\right)+\epsilon\mathds{1} \geq 0$, since $\textbf{s}^*$ is a solution to $\operatorname{MVI}(\nabla f, \mathcal{S})$. Then, we can conclude that $s^*$ is a $1$-resilient Nash equilibrium (which is also a Nash equilibrium): 

$$
\theta_j(y_j,\textbf{s}^*_{-j})+\epsilon \geq \theta_j(s_j,\textbf{s}^*_{-j})
$$

Next, for each coalition of size $2$, for a pair $(i,p)$ of players cannot favor themselves by forming a coalition and deviating from the equilibrium.  By the fact that $\theta_j(\cdot,s_{-J})$ is $2$-multi-convexity of $\theta_j$ for any $j\in [k]$, for any possible strategy $y_p \in S_p$ for player $p$, we have (note that either $j=p$ or $i=j$):

$$
\theta_j(y_p,y_i,\textbf{s}^*_{-\{i,p\}}) \geq \theta_j\left(\textbf{s}^*_{\{i,p\}},\textbf{s}^*_{-\{i,p\}}\right)+  \left(y_i-s^*_i\right) \nabla_{s^*_i} \theta_j \left(\textbf{s}^*_{\{i,p\}},\textbf{s}^*_{-\{i,p\}}\right), \quad \forall y_i \in S_i
$$And also for any possible fixed strategy $y_i$ for player $i$, we have:

$$
\theta_j(y_p,y_i,\textbf{s}^*_{-\{i,p\}}) \geq \theta_j\left(\textbf{s}^*_{\{i,p\}},\textbf{s}^*_{-\{i,p\}}\right)+  \left(y_p-s^*_p\right) \nabla_{s^*_p} \theta_j \left(\textbf{s}^*_{\{i,p\}},\textbf{s}^*_{-\{i,p\}}\right), \quad \forall y_p \in S_p
$$

Since $\textbf{s}^*$ is a solution to $\operatorname{MVI}(\nabla f, S)$, we have $\left(y-\textbf{s}^*\right)^T  F\left(\textbf{s}^*\right)+\epsilon\mathds{1} \geq 0$. This also means that the inequalities $\left(y_i-s^*_i\right) \nabla_{s^*_i} \theta_j \left(\textbf{s}^*_{\{i,p\}},\textbf{s}^*_{-\{i,p\}}\right)+\epsilon>0$ and  $\left(y_p-s^*_p\right) \nabla_{s^*_p} \theta_j \left(\textbf{s}^*_{\{i,p\}},\textbf{s}^*_{-\{i,p\}}\right)+\epsilon>0$ hold. Therefore, we can conclude that   $s^*$ is a $2$-resilient Nash equilibrium as any two pairs $(i,p)$ cannot form a coalition to get a better payoff:

\begin{equation}\label{malleqy}
\theta_j(\textbf{s}^*_J,\textbf{s}^*_{-J})  \leq \theta_j(\textbf{y}_J,\textbf{s}^*_{-J})+\epsilon, \quad \forall y_J \in S_J
\end{equation}
Now, assume that $x^*$ is a solution of Equation \ref{mgogo}.  We can use a similar approach used in Proposition \ref{geneqvari2} (corollary \ref{nashvi2}).
\end{proof}

\newpage

\section{Equivalent Results for Weak Separation Oracle Variation}\label{AppendixWeakoracles}
Weak separation oracles are useful for establishing general distinctions between problems and deriving lower bounds on computational complexity. They offer a broader perspective on the relative difficulty of problems. The choice between weak and strong separation oracles depends on the goals of the analysis and the level of detail required to draw meaningful conclusions about the computational landscape of specific problems.  Within this section, we reproduce the procedural steps elucidated in Section \ref{AppendixvariationalSection}, maintaining a consistent methodology\footnote{For $t$-resilient Nash, we have a strong separation oracle inherently.}. First, we define weak separation oracles.
\subsection{Essential Elements}

In short, a weak separation oracle (WSO) for a set-valued map  is a circuit that takes as input the point and the accuracy of the separation oracle $\delta$ (compared to a strong separation oracle $\delta=0$) and produces either an almost-membership or a guarantee of almost-separation.

\problemStatement{ A Weak Separation Oracle (via a circuit $C_{\mathcal{R}(x)}$)}{
  Input={A vector $z \in \mathbb{Q}^m \cap\mathbb{R}^{m*}$ and a constant $\delta$ as inputs:
  },
  Output={$(a, b) \in \mathbb{Q}^m \times \mathbb{Q}$ such that the threshold $b \in[0,1] \cap \mathbb{Q}$ denotes almost membership of $z$ in $\mathcal{R}(x)$. More precisely:
  \begin{itemize}
      \item If $z \in \overline{\mathrm{B}}( \mathcal{R}(x),\delta)$ then $b>\frac{1}{2}$ and the vector $a \in \mathbb{Q}^m$ will be $\bot$. In other words, $a$ is meaningful only when $b \leq \frac{1}{2}$
\item  $b \leq \frac{1}{2}$ and vector $a$ , with $\|a\|_{\infty}=1$ which defines an almost separating hyperplane $\mathcal{H}(a, z):=\left\{y \in\mathbb{R}^{m*}:\langle a, y-z\rangle=0\right\}$ between the vector $z$ and the set $\mathcal{R}(x)$ such that $\langle a, y-z\rangle \leq 0$ for every $y \in \overline{\mathrm{B}}(\mathcal{R}(x),-\delta)$.
  \end{itemize}
  }
}

Similarly, we need the weak version of the constrained convex optimization problem.

\problemStatement{Weak Constrained Convex Optimization Problem}{
  Input={A zeroth and first order oracle for the convex function $F: \mathbb{R}^m \rightarrow \mathbb{R}$, two rational numbers $\delta,\epsilon>0$ and a weak separation oracle $\mathrm{WSO}_{\mathcal{R}}$ for a non-empty closed convex set $\mathcal{R} \subseteq$  $\mathbb{R}^{m*}$.
  },
  Output={One of the following cases:
  \begin{itemize}
      \item (Violation of non-emptiness):
A failure symbol $\perp$ with a polynomial-sized witness that certifies that $\overline{\mathrm{B}}(\mathcal{R},-\delta)=\emptyset$.
\item (Approximate maximization):
A vector $z \in \mathbb{Q}^m \cap \overline{\mathrm{B}}(\mathcal{R},\delta)$, such that $F(z) +\epsilon \geq \max _{y \in \overline{\mathrm{B}}(\mathcal{R},-\delta)} F(y)$.
  \end{itemize}
  }
}

The weak approximate version of the projection problem, denoted by $\widehat{\Pi}^{\epsilon,\delta}_{X}(x)$, is an instance of a weak constrained optimization problem. The problem is defined as follows:
\problemStatement{Weak Approximate Projection Problem}{
  Input={Two rational numbers $\epsilon,\delta>0$ and a weak separation oracle $\mathrm{WSO}_{X}$ for a non-empty closed convex set $X \subseteq \mathbb{R}^{m*}$  and a vector $x$ that belongs to $\mathbb{Q}^m \cap X$. 
  },
  Output={One of the following cases:
  \begin{itemize}
      \item (Violation of non-emptiness):
A failure symbol $\perp$ followed by a polynomial-sized witness that certifies that $\overline{\mathrm{B}}(X,-\delta)=\emptyset$.
\item  (Approximate projection):
A vector $z \in \mathbb{Q}^m \cap \overline{\mathrm{B}}(X,-\delta)$, such that:
$$
\|z-x\|_2^2 \leq \min _{y \in \overline{\mathrm{B}}(X,\delta)}\|x-y\|_2^2+\epsilon.
$$
  \end{itemize}
  }
}

For the case of weak separation oracles, a disparity issue arises (see Remark B.4 in \cite{concavegames}) and to alleviate this problem, the authors assumed that the $L$-Hausdorff-set-valued maps to be $(\eta, \sqrt{m}, L)$-well conditioned, i.e., $\forall x \in \mathbb{R}^{m*}$, there exists $a \in \mathcal{R}(x)$ such that $ \overline{\mathrm{B}}(a, \eta) \subseteq \mathcal{R}(x)$. Considering this assumption and leveraging the ellipsoid method will assure the existence of an oracle-polynomial time algorithm (Algorithm 1 in \cite{concavegames}) that has a polynomial non-emptiness certificate using the following modification on the weak constrained convex optimization.  We also can define $(r,R)$-\emph{well-boundedness} of convex set $X$ by having the property that $\exists \boldsymbol{a}_0 \in \mathbb{R}^m: \overline{\mathrm{B}}\left(\boldsymbol{a}_0, r\right) \subseteq X \subseteq \overline{\mathrm{B}}(0, R)$.  Recall that $\mathbb{R}^{m*}$ can represent any well-bounded subset of  $\mathbb{R}^{m}$. We define the notation \emph{$\operatorname{vol}(A)$} representing the Lebesgue volume measure of the set $A$. Assume that all of our sets have one element that we call $0$.

\problemStatement{Weak Convex (Feasibility/Projection/Optimization) Problem}{
  Input={A zeroth and first order oracle for the convex function $F: \mathbb{R}^{m*} \rightarrow \mathbb{R}$, two rational numbers $\delta,\epsilon>0$ and a weak separation oracle $\mathrm{WSO}_{\mathcal{R}}$ for a non-empty closed convex set $\mathcal{R} \subseteq$  $\mathbb{R}^{m*}$.
  },
  Output={One of the following cases:
  \begin{itemize}
      \item (Violation of non-emptiness):
A failure symbol $\perp$ with a polynomial-sized witness that certifies that either $\mathcal{R}=\emptyset$ or $\operatorname{vol}(\mathcal{R})\leq \operatorname{vol} (\overline{\mathrm{B}}(0,\eta))$.
\item (Approximate minimization):
A vector $z \in \mathbb{Q}^m \cap \overline{\mathrm{B}}(\mathcal{R},\delta)$, such that $F(z)  \leq \min _{y \in \overline{\mathrm{B}} (\mathcal{R},-\delta)} F(y)+ \epsilon$.
  \end{itemize}
  }
}

\begin{definition}
     $\left((\epsilon, \eta \right)$-Approximate Generalized Equilibrium). Let $(\mathcal{U},\mathcal{S},\mathcal{R})$ be a concave game with $k$ players where $\mathcal{R}$ represents the common constraint set (that includes all strategies that are acceptable given this constraint). Define $\mathcal{R}_\eta=$ $\overline{\mathrm{B}}(\mathcal{R}, \eta)$, and $\mathcal{R}_{-\eta}=\overline{\mathrm{B}}(\mathcal{R},-\eta)$. A vector $\textbf{x}^{\star} \in S_\eta$ is an $(\epsilon, \eta)$- approximate generalized equilibrium of this game if for every $i \in[k]$ and every $x_i \in S_i $ such that $\left(x_i, x_{-i}^{\star}\right) \in \mathcal{R}_{-\eta}$ it holds that:
\begin{equation*}\label{eq:bestrespeta}
    u_i\left(\textbf{x}^{\star}\right) +\epsilon  \geq u_i\left(x_i, \textbf{x}_{-i}^{\star}\right)
\tag{$\ddagger$}
\end{equation*}

\end{definition}

Next, we establish an equilibrium focusing on weak separation oracles. Subsequently, we need to introduce the computational problem of approximating a generalized equilibrium similar to \cite{concavegames}.
\begin{theorem}[\cite{concavegames}]
    The  Kakutani fixed-point problem  for a weak separation oracle is PPAD-complete.
\end{theorem}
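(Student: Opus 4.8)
The plan is to establish the two directions separately, reusing as much as possible of the strong‑oracle case (the Kakutani fixed‑point problem for a strong separation oracle, already shown PPAD‑complete above). For PPAD‑hardness I would observe that a strong separation oracle is in particular a weak separation oracle: on input $(z,\delta)$ one simply runs the strong oracle on $z$ and returns its answer, which certifies exact membership (hence $\delta$‑membership of $z$ in $\mathcal{R}(x)$) or exhibits an exact, hence almost‑, separating hyperplane. Thus every instance of the strong Kakutani problem is also a legitimate instance of the weak Kakutani problem, and by taking the accuracy parameter $\alpha$ small relative to the Lipschitz constant one checks that a solution of the weak instance is a solution of the strong instance. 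PPAD‑hardness of weak Kakutani then follows from that of strong Kakutani — equivalently, directly from the PPAD‑hardness of approximate Nash equilibrium, which is expressible with explicit polyhedral constraints and therefore with either kind of oracle.

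For membership in PPAD the argument would mirror the strong‑oracle reduction. First I would turn the weak separation oracle $C_{\mathcal{R}}$ into an algorithm for the \textsc{Weak Convex (Feasibility/Projection/Optimization) Problem} using the central‑cut ellipsoid method: under the $(\eta,\sqrt{m},L)$‑well‑conditioning hypothesis this runs in oracle‑polynomial time, outputs an approximate projection $\widehat{\Pi}^{\epsilon,\delta}_{\mathcal{R}(x)}(x)\in\overline{\mathrm{B}}(\mathcal{R}(x),\delta)$, and when $\mathcal{R}(x)$ is effectively empty it produces a polynomial‑sized non‑emptiness witness — exactly the \emph{(Violation of non‑emptiness)} output. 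Next I would form the self‑map $x\mapsto\widehat{\Pi}^{\epsilon,\delta}_{\mathcal{R}(x)}(x)$ on the well‑bounded box $\mathbb{R}^{m*}$; using $L$‑Hausdorff Lipschitzness of $\mathcal{R}$ together with stability of the approximate projection, this map is approximately continuous, so a discretised Brouwer / \textsc{EOTL} argument yields a point $x$ close to its image, i.e.\ with $d(x,\mathcal{R}(x))$ small, which is the sought approximate fixed point (or, if the Lipschitz estimate used along the way fails, exhibits four vectors witnessing the \emph{(Violation of $L$‑Hausdorff Lipschitzness)} case). Finally I would propagate the error budget: the discretisation grid, the projection accuracy $\epsilon$, and the oracle slack $\delta$ must be chosen so that the composed error falls below the target $\alpha$ of the weak Kakutani instance, while keeping all bit‑lengths polynomial.

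The main obstacle is the disparity inherent to weak oracles: the computed projection lies in $\overline{\mathrm{B}}(\mathcal{R}(x),\delta)$ whereas optimality is only guaranteed against $\overline{\mathrm{B}}(\mathcal{R}(x),-\delta)$, so the approximate fixed point extracted from Brouwer has to be reconciled with genuine (almost‑)membership. It is precisely to bridge this gap — and to guarantee a polynomial non‑emptiness certificate from the ellipsoid run — that the well‑conditioning assumption is imposed; carrying the resulting slack terms through the Lipschitz estimates for $\mathcal{R}$ and through the Brouwer discretisation, and verifying that all intermediate quantities remain of polynomial size, is the technically delicate part, and is exactly where the proof differs from the strong‑oracle version.
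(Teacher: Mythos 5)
This theorem is imported verbatim from \cite{concavegames}; the present paper states it without proof, so there is no in-paper argument to compare against. Your reconstruction — hardness by encoding approximate Nash equilibrium with explicit polyhedral constraints (so the weak oracle can be realized exactly and the violation outputs cannot fire), and membership by turning the weak oracle into an approximate projection $\widehat{\Pi}^{\epsilon,\delta}_{\mathcal{R}(x)}(x)$ via the central-cut ellipsoid method, feeding the self-map $x\mapsto\widehat{\Pi}^{\epsilon,\delta}_{\mathcal{R}(x)}(x)$ into a discretised Brouwer/\textsc{EOTL} argument, and using the $(\eta,\sqrt{m},L)$-well-conditioning to bridge the $\overline{\mathrm{B}}(\mathcal{R}(x),\delta)$ versus $\overline{\mathrm{B}}(\mathcal{R}(x),-\delta)$ disparity — is essentially the argument of the cited source, and you correctly isolate the disparity/non-emptiness-certificate issue as the only place where the weak-oracle case genuinely departs from the strong one.
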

\problemStatement{Kakutani With A Weak Separation Oracle (via $C_{\mathcal{R}(x)}$)}{
  Input={A circuit $C_\mathcal{R}$ that represents weak separation oracle for an $(\eta,\sqrt{m},L)$ well-conditioned correspondence $\mathcal{R}:\mathbb{R}^{m*} \rightrightarrows\mathbb{R}^{m*}$ and an accuracy parameter $\alpha$. 
  },
  Output={One of the following cases:
  \begin{itemize}
      \item (Violation of $\eta$-non-emptiness): A vector $x \in$ $\mathbb{R}^{m*}$ such that $\operatorname{vol}(\mathcal{R}(x))\leq \operatorname{vol} (\overline{\mathrm{B}}(0,\eta))$,
\item  (Violation of L-Hausdorff Lipschitzness):
Four vectors $p, q, z, w \in\mathbb{R}^{m*}$ and a constant $\epsilon>0$ such that $w=\widehat{\Pi}_{\mathcal{R}(q)}^{\epsilon,\epsilon}(q)$ and $z=\widehat{\Pi}_{\mathcal{R}(p)}^{\epsilon,\epsilon}(w)$ but $\|z-w\|>L\|p-q\|+3(1+c_{\eta,m})\epsilon$\footnote{$c_{\eta,m}$ is a constant based on the dimension $m$ and $\eta$. For more information, see \cite{concavegames}.},
\item  Vectors $x, z \in\mathbb{R}^{m*}$ such that $\|x-z\| \leq$ $\alpha$ and $z \in \mathcal{R}(x) \Leftrightarrow d(x, \mathcal{R}(x)) \leq \alpha$.
  \end{itemize}
  }
}

\problemStatement{Strongly Concave Games Problem with WSO}{
  Input={We receive as input all the following:
  \begin{itemize}
      \item $k$ circuits representing the utility functions $\left(u\right)_{i=1}^k$ for all $k$ players,
      \item A Lipschitzness parameter $L$, a strong concavity parameter $\mu$, and accuracy parameter $\epsilon$, 
      \item $S=\Pi_{i=1}^k S_i$ a convex set called \emph{the strategy domain} where $S_i$ represent the strategy domain for each player $i$,
      \item An arithmetic circuit representing a weak separation oracle for the well-bounded set  $\mathcal{R}$ (strategies that satisfy the common constraint) that is a non-empty, convex, and compact subset of $S$,
      \item Accuracy parameters $\epsilon,\delta$.
  \end{itemize}
  },
  Question={We output one of the following:
  \begin{itemize}
   \item (Violation of almost non-emptiness)
A certificate that $\operatorname{vol}(\mathcal{R})\leq \operatorname{vol} (\overline{\mathrm{B}}(0,\eta))$.
 \item (Violation of Lipschitz Continuity)
A certification that there exist at least two vectors $x, y \in S $ and an index $i \in[n]$ such that $\left|u_i(x)-u_i(y)\right|>L \|x-y\|$.
\item  (Violation of Strong Concavity)
An index $i \in[n]$, three vectors $x_i, y_i \in S_i,~\textbf{x}_{-i} \in S_{-i}=\Pi_{j=1,j\neq i}^k S_j$ and a number $\mu \in[0,1]$ such that:
\[\begin{aligned}
     u_i\left(\lambda  x_i+(1-\lambda)  y_i, \textbf{x}_{-i}\right)
& <\lambda  u_i\left(x_i, \textbf{x}_{-i}\right)+(1-\lambda) u_i\left(y_i, \textbf{x}_{-i}\right) \\
 &+\frac{\lambda(1-\lambda)}{2} \mu \cdot\left\|\left(x_i, \textbf{x}_{-i}\right)-\left(y_i, \textbf{x}_{-i}\right)\right\|_2^2
\end{aligned}\]
\item An $(\epsilon,\delta)$-approximate generalized equilibrium  in the sense of {\rm (\ref{eq:bestrespeta})}.
  \end{itemize}
  }
}
\begin{theorem}[\cite{concavegames}]\label{concaveppad2}
    The computational problem Strongly Concave Games for a weak separation oracle is PPAD-complete.
\end{theorem}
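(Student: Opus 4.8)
The plan is to adapt the argument behind Theorem~\ref{concaveppad} (the strong separation oracle case) to the weak setting, replacing each strong-oracle tool by its weak counterpart and carefully tracking the extra $\delta$-errors. For PPAD-hardness, I would first observe that a strong separation oracle is in particular a weak separation oracle (it answers correctly for $\delta=0$, hence for every $\delta\geq 0$), so every instance of the strong variant is already an instance of the weak variant. Since the strong variant is PPAD-hard — which itself reduces from Nash equilibrium in bimatrix games, after perturbing the bilinear payoffs $u_i$ by $-\gamma\|x_i\|_2^2$ to make them $2\gamma$-strongly concave (Lemma~\ref{norm2}) while disturbing approximate equilibria by only $O(\gamma)$ — this transfers PPAD-hardness to the weak variant.

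For membership, following the sketch of Theorem~\ref{concaveppad}, I would set $\rho(x,y)=\sum_{i\in[k]}u_i(y_i,x_{-i})-\gamma\|y\|_2^2$, which by Lemmas~\ref{norm2} and~\ref{norm22} is $2\gamma$-strongly concave in $y$ and $L'$-Lipschitz for a constant governed by $L$ and $\gamma$; violations of the Lipschitz or strong-concavity promises on the $u_i$ can be detected syntactically since the $u_i$ are given by circuits. The $(\eta,\sqrt m,L)$-well-conditioning assumption on $\mathcal{R}$ (otherwise we emit the almost-non-emptiness certificate) makes the Weak Convex (Feasibility/Projection/Optimization) Problem solvable in oracle-polynomial time with a polynomial-size non-emptiness witness, so we can approximately maximize $\rho(x,\cdot)$ over $\overline{\mathrm{B}}(\mathcal{R},\pm\delta)$. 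Applying Theorem~\ref{berge} to $f((x),y)=\rho(x,y)$ and $g(x)=\mathcal{R}$ shows the $\arg\max$ correspondence $F$ is single-valued and $(1/2,p)$-Hölder continuous (hence, after absorbing the approximation slack into the accuracy parameter, a legitimate Kakutani input), and the approximate maximizer $y^\star$ lets us realize a weak separation oracle for $F$ via the almost-superlevel set $\{\,y\in\overline{\mathrm{B}}(\mathcal{R},\delta):\rho(x,y)\geq\rho(x,y^\star)-\epsilon'\,\}$.

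I would then feed this weak separation oracle for $F$ into the Kakutani-with-WSO problem, which the preceding theorem states is in PPAD, obtaining $x^\star$ and $z\in F(x^\star)$ with $\|x^\star-z\|\leq\alpha$ (or one of the promise-violation outputs, which we simply relay, translating Lipschitzness violations of $F$ into violations of the input promises). A Rosen-style argument then closes the reduction: if $x^\star$ were not an $(\epsilon,\delta)$-approximate generalized equilibrium in the sense of $(\ddagger)$, some player $i$ and some $x_i$ with $(x_i,x^\star_{-i})\in\mathcal{R}_{-\eta}$ would give $\rho(x^\star,(x_i,x^\star_{-i}))>\rho(x^\star,x^\star)$ by more than the accumulated error, contradicting near-optimality of $x^\star$ at the approximate fixed point; choosing $\gamma$, $\alpha$, $\epsilon'$ and $\delta$ appropriately then makes the residual loss exactly $\epsilon$.

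The main obstacle is the error bookkeeping in the weak-oracle pipeline: the disparity between $\overline{\mathrm{B}}(\mathcal{R},\delta)$ and $\overline{\mathrm{B}}(\mathcal{R},-\delta)$ (Remark B.4 of \cite{concavegames}) propagates through the optimization step, through the Hölder modulus delivered by Theorem~\ref{berge}, through the construction of the oracle for $F$, and into the Kakutani accuracy $\alpha$. One must verify that $F$ (or a mild thickening of it) remains well-conditioned so that the Kakutani-WSO machinery applies, and that the cumulative slack can be driven below the target $\epsilon$ by a polynomial choice of the internal parameters; the well-conditioning hypothesis on $\mathcal{R}$ and the explicit constants from Theorem~\ref{berge} are precisely what make this feasible.
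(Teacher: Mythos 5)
Your proposal is correct and follows essentially the same route as the source: the paper itself only cites \cite{concavegames} for this statement, but your pipeline (regularize with $-\gamma\|y\|_2^2$ to get strong concavity, solve the weak constrained optimization under the well-conditioning/almost-non-emptiness dichotomy, invoke the robust Berge theorem, build a weak separation oracle for the approximate-argmax superlevel set, and hand it to Kakutani-with-WSO before closing with a Rosen-style contradiction) is exactly the template the paper sketches for Theorem~\ref{concaveppad} and executes in detail for the weak-oracle GQVI case in Theorem~\ref{gqviinppad2}. The one piece of bookkeeping you flag but do not carry out—controlling the gap between maximization over $\overline{\mathrm{B}}(\mathcal{R},\delta)$ and $\overline{\mathrm{B}}(\mathcal{R},-\delta)$—is handled there via Lemma~\ref{lemmae2}, and your observation that the argmax set must be thickened to an $\epsilon$-superlevel set containing a ball of radius $\min\{\eta/2,\epsilon/G\}$ (so that Kakutani's well-conditioning hypothesis holds) is precisely the right fix.
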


\begin{remark}
    
When $\eta=0$, we have  $\epsilon$-approximate generalized equilibrium.

\end{remark}

\subsection{Variational Inequalities with Weak Separation Oracles}

Next, we investigate the variational problems for weak separation oracles.

\problemStatement{$GQVI(\mathcal{F},\mathcal{R})$ With weak Separation Oracles}{
  Input={ We receive as input all the following:
  \begin{itemize}
            \item  A circuit $C_{\mathcal{R}}$ which represents a weak separation oracle for a $(\eta,\sqrt{m},L_{\mathcal{R}})$ well-conditioned correspondence $\mathcal{R}: \mathbb{R}^{m*} \rightrightarrows  \mathbb{R}^{m*}$,
      \item A circuit $C_\mathcal{F}$ which represents  a $(\eta,\sqrt{m},L_{\mathcal{F}})$ well-conditioned correspondence $\mathcal{F}: \mathbb{R}^{m*} \rightrightarrows  \mathbb{R}^{m*}$ which is also $\gamma$-strongly convex valued, 
      \item  Accuracy parameters $\beta,\eta$. 
  \end{itemize}},
  Output={One of the following cases:
  \begin{itemize}
      \item (Violation of almost non-emptiness):
 A vector $x\in \mathbb{R}^{m*}$ with a certificate that either $\operatorname{vol}(\mathcal{R}(x))\leq \operatorname{vol} (\overline{\mathrm{B}}(0,\eta))$ or  $\operatorname{vol}(\mathcal{F}(x))\leq \operatorname{vol} (\overline{\mathrm{B}}(0,\eta))$.    
\item  (Violation of $L_{\mathcal{R}}$-Hausdorff Lipschitzness of $\mathcal{R}$)
Four vectors $p, q, z, w \in \mathbb{R}^{m*}$ and a constant $\epsilon>0$ such that $w=\widehat{\Pi}_{\mathcal{R}(q)}^{\epsilon,\epsilon}(q)$ and $z=\widehat{\Pi}_{\mathcal{R}(p)}^{\epsilon,\epsilon}(w)$ but $\|z-w\|>L_{\mathcal{R}}\|p-q\|+3(1+c_{\eta,m})\epsilon$,
\item  (Violation of $L_\mathcal{F}$-Hausdorff Lipschitzness of $\mathcal{F}$):
Four vectors $p, q, z, w \in \mathbb{R}^{m*}$ and a constant $\epsilon>0$ such that $w=\widehat{\Pi}_{\mathcal{F}(q)}^{\epsilon,\epsilon}(q)$ and $z=\widehat{\Pi}_{\mathcal{F}(p)}^{\epsilon,\epsilon}(w)$ but $\|z-w\|>L_\mathcal{F}\|p-q\|+3(1+c_{\eta,m})\epsilon$,
\item 
(Violation of strong convexity):
Two vectors $x,p,q \in \mathbb{R}^{m*}$ and two constants $\epsilon>0$ and $\lambda\in (0,1)$ such that:
$$
\begin{aligned}
\widehat{\Pi}_{\mathcal{F}(x)}^{\epsilon,\epsilon}(\lambda p+ (1-\lambda)q ) & > \lambda \cdot \widehat{\Pi}_{\mathcal{F}(x)}^{\epsilon,\epsilon}(p)+(1-\lambda) \cdot \widehat{\Pi}_{\mathcal{F}(x)}^{\epsilon,\epsilon}(q)
\\&+\left(-\frac{\lambda(1-\lambda)}{2} \cdot \gamma\cdot\left\|p-q\right\|_2^2+c_{\eta,m} \cdot \epsilon\right)
\end{aligned}
$$
\item Two tuples $(x,w)$ and $(x^*,w^*)$ with  $\|(x,w)-(x^*,w^*)\|\leq \beta $ such that $x^*\in \overline{\mathrm{B}}( \mathcal{R}(x),\eta)$ and  $w^*\in \overline{\mathrm{B}}(\mathcal{F}(x),\eta)$ such that  $\left(y-x\right) ^T w^* +\beta  \geq 0, \quad \forall y \in \overline{\mathrm{B}}( \mathcal{R}(x),-\eta)$   
  \end{itemize}
  }
}

Using the following proposition, we show that the computational version of GQVI for weak separation oracles is in PPAD.
\begin{theorem}\label{gqviinppad2}
The generalized quasi-variational inequality problem $GQVI(\mathcal{F},\mathcal{R})$ for weak separation oracles $\mathrm{SO}_{\mathcal{R}}$ and $\mathrm{SO}_\mathcal{F}$ is in PPAD.
\end{theorem}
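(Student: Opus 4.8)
The plan is to follow the argument of Theorem~\ref{gqviinppadnolinear} essentially verbatim, replacing every strong-oracle ingredient by its weak-oracle analogue from Appendix~\ref{AppendixWeakoracles}. First I would restrict attention to $[-1,1]^m$ in place of $\mathbb{R}^{m*}$, and define exactly as before
\[
\Phi(y,x,w)=-(y-x)^Tw-\gamma\|y\|_2^2,\qquad
\Pi(x,w)=\{y\in\mathcal{R}(x)\mid \Phi(y,x,w)>\max_{y'\in\mathcal{R}(x)}\Phi(y',x,w)-\epsilon\},
\]
and set $\Psi(x,w)=(\Pi(x,w),\mathcal{F}(x))$. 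The strong-concavity of $\Phi(\cdot,x,w)$ in $y$ (via Lemma~\ref{norm2}), its Lipschitzness with constant $G=m+2\gamma m$, and the application of Theorem~\ref{berge} to $f((x,w),y)=\Phi(y,x,w)$, $g((x,w))=\mathcal{R}(x)$ all go through unchanged, yielding that $\Psi$ is $\kappa'$-Hausdorff Lipschitz and $(\tfrac12,2)$-Hölder continuous for a constant $\kappa'$ depending on $\gamma$, $L_\mathcal{R}$, $L_\mathcal{F}$ and $\epsilon$; this uses only the Hölder bounds on $H$ from Berge and the inequality $\gamma\|y-y^*\|_2^2\le\epsilon$ for $y\in\Pi(x,w)$, neither of which is sensitive to the oracle model.

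Next I would verify that $\Psi$ is $(\eta',\sqrt{2m},\kappa')$-well-conditioned, which is the analogue of the non-emptiness step: for $y_x^*=\arg\max_{y\in\mathcal{R}(x)}\Phi(y,x,w)$ the ball $V_x=\overline{\mathrm B}(y_x^*,\epsilon/G)$ satisfies $\Phi(V_x,x,w)\subseteq[\Phi(y_x^*,x,w)-\epsilon,\Phi(y_x^*,x,w)]\subseteq\Pi(x,w)$, and combining this with the $(\eta,\sqrt m,L_\mathcal{F})$-well-conditionedness of $\mathcal{F}$ produces a ball of radius $\eta'=\min\{\epsilon/G,\eta\}$ inside $\Psi(x,w)$ (after shrinking slightly so that the product fits, using that the two factors are in orthogonal coordinate blocks). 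Then I would build a weak separation oracle for $\Psi$: run the Weak Convex (Feasibility/Projection/Optimization) algorithm (the ellipsoid method with a polynomial-sized $\operatorname{vol}$-certificate) on $\Phi(\cdot,x,w)$ over $\mathcal{R}(x)$ to obtain $y^*$ with $\Phi(y^*,x,w)\ge\max_{y\in\overline{\mathrm B}(\mathcal{R}(x),-\delta)}\Phi(y,x,w)-\epsilon$, and then, as in the strong case, take a weak separation oracle for
\[
\overline{\Psi}_s(x,w)=\{(y,w)\in(\mathcal{R}(x),\mathcal{F}(x))\mid -\Phi(y,x,w)\le\gamma'\},\qquad \gamma'=-\Phi(y^*,x,w),
\]
which exists by the weak-oracle version of Theorem~H.3 of \cite{concavegames}; the emptiness exceptions of $\Psi$ are routed to the ``almost non-emptiness'' output through the certificate returned by the ellipsoid algorithm.

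Finally I would feed this weak separation oracle into the Kakutani problem with a weak separation oracle (whose PPAD-membership is Theorem in the excerpt) with accuracy parameter $\alpha=\epsilon'/\kappa'$, $\epsilon'=\epsilon h$, and decode its output $(x,w)$, $(x^*,w^*)\in\Psi(x,w)$ with $\|(x,w)-(x^*,w^*)\|\le\alpha$ and $d((x,w),\Psi(x,w))\le\alpha$ into $w^*\in\overline{\mathrm B}(\mathcal{F}(x),\eta)$ and $x^*\in\overline{\mathrm B}(\Pi(x,w),\eta)$, from which $(y-x)^Tw+\gamma\|y\|_2^2\ge(x^*-x)^Tw+\gamma\|x^*\|_2^2$ for all $y\in\overline{\mathrm B}(\mathcal{R}(x),-\eta)$; absorbing the $\pm u(\gamma)$, $\pm\epsilon'/\kappa'$ and the $c_{\eta,m}\epsilon$ terms, choosing $h$, $\gamma$, $\eta$ small enough, and setting $\beta=\epsilon$ gives $(y-x)^Tw^*+\beta\ge0$ for all $y\in\overline{\mathrm B}(\mathcal{R}(x),-\eta)$, which is exactly the required GQVI solution format. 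The main obstacle I expect is the disparity issue flagged in Remark~B.4 of \cite{concavegames}: in the weak model one only controls $\Phi$ on $\overline{\mathrm B}(\mathcal{R}(x),-\delta)$ while the Kakutani output lives in $\overline{\mathrm B}(\mathcal{R}(x),+\eta)$, so one must chain the well-conditionedness of $\mathcal{R}$ (to move between inner and outer parallel bodies at a controlled cost) with the Hausdorff-Lipschitz bound on $\Psi$ and the $c_{\eta,m}$-inflated error tolerance of the weak Kakutani problem, and check that all of these perturbations close up under a single polynomial choice of the internal parameters $h,\gamma,\delta,\eta$; everything else is a transcription of the strong-oracle proof.
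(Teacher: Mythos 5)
Your overall architecture is the same as the paper's: transcribe the strong-oracle proof of Theorem~\ref{gqviinppadnolinear}, replacing each ingredient (constrained optimization, Kakutani, separation oracle for $\Psi$) by its weak-oracle analogue. However, the one step you explicitly defer --- the disparity between inner and outer parallel bodies --- is precisely the only piece of new content in the weak-oracle proof, and your proposal as written does not close it. Concretely, you define
\[
\Pi(x,w)=\{y\in\mathcal{R}(x)\mid \Phi(y,x,w)>\max_{y'\in\mathcal{R}(x)}\Phi(y',x,w)-\epsilon\},
\]
with the max taken over the same set as the membership condition. This set is not the one your own separation oracle represents: the weak constrained optimization routine only returns $y^*\in\overline{\mathrm B}(\mathcal{R}(x),\delta)$ with $\Phi(y^*,x,w)\ge\max_{y\in\overline{\mathrm B}(\mathcal{R}(x),-\delta)}\Phi(y,x,w)-\epsilon$, so the threshold $\gamma'=-\Phi(y^*,x,w)$ is calibrated against the maximum over the \emph{inner} parallel body, which may be strictly smaller than the maximum over $\mathcal{R}(x)$. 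The paper resolves this by changing the definition at the outset: membership of $y$ is required in $\overline{\mathrm B}(\mathcal{R}(x),\delta)$ while the threshold references $\max_{y\in\overline{\mathrm B}(\mathcal{R}(x),-\delta)}\Phi(y,x,w)$, and the gap between the two maxima is controlled by Lemma~\ref{lemmae2} (the bound $|\max_{\overline{\mathrm B}(S,\eta_1)}f-\max_{\overline{\mathrm B}(S,\eta_2)}f|\le C_{G,\mu}\|\eta_1-\eta_2\|$ for strongly concave Lipschitz $f$ over well-bounded convex $S$).

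Relatedly, your claim that the H\"older-continuity analysis of $\Pi$ ``goes through unchanged'' is not correct: with the corrected definition one must introduce the four sets $H_{\pm}$, $H_{\pm}^{\epsilon}$ over the inner and outer bodies and use Lemma~\ref{lemmae2} to show $\Pi(x,w)\subseteq H_{+}^{\epsilon+k}(x,w)$ for an extra constant $k$, which then propagates into the Hausdorff bound as $2\sqrt{(\epsilon+k)/\gamma}$ rather than $2\sqrt{\epsilon/\gamma}$. Your non-emptiness step also needs a small repair: the ball must be centered at $y^{*}_{(x,-\eta)}-\tfrac{\eta}{2}\hat v$ (displaced into the interior) so that it lies inside $\overline{\mathrm B}(\mathcal{R}(x),-\eta)$, not merely at $y^*_x$. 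The rest of your proposal (well-conditionedness of $\Psi$, construction of the weak oracle for $\overline{\Psi}_s$, the call to weak Kakutani, and the final error accounting) matches the paper's proof.
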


\begin{proof}  Without loss of generality, we can consider $[-1,1]^m$ instead of $\mathbb{R}^{m*}$.  Define  $ \mathcal{R}(x)=\{y\in \overline{\mathrm{B}}(\mathcal{R}(x),\delta)\}$ and $\mathcal{R}^\prime(x)=\{y\in \overline{\mathrm{B}}(\mathcal{R}(x),-\delta)\}$.  By definition, $\mathcal{R}(x)$ and $\mathcal{R}^\prime(x)$ have a weak separation oracle and convexity also follows. 
The proof is organized as follows. First, we define the following:
  \[\Phi(y,x,w)=-(y-x)^T w-\gamma(||y||^2_2)\]
  \[\Pi(x,w)=\{y\in \mathcal{R}(x) ~|~\Phi(y,x,w)>max_{y\in \mathcal{R}^\prime(x)}\Phi(y,x,w)-\epsilon \}\]
 Next, we show that for a constant $\kappa^\prime$, $\Psi(x,w)=(\Pi(x,w),\mathcal{F}(x))$ is $\kappa^\prime$-Lipschitz continuous in $(x,w)$.  Then, we construct a weak separation oracle for $\Psi$.  Finally, we show that an approximate Kakutani's fixed point of this function will provide an approximate solution to the given GQVI with WSO.


To show that $\Psi$ is $\kappa^\prime$-Lipschitz continuous, we  show that $d_H(\Psi(x_1,w_1),\Psi(x_2,w_2))\leq \kappa^\prime||(x_1,w_1)-(x_2,w_2)||^q_p+c$ for some constants $q$,$p$ and $c$. Similar to the strong separation oracle case,  $\Phi(y,x,w)$ is $(2\gamma)$-strongly concave function of $y$.   In addition, $\Phi$ is $G$-Lipschitz continuous where $G=m+2\gamma m$ for all $x$ and $w$. 
We also have access to the sub-gradients of $\Phi$. Now, we define the following sets:

\[
H_+(x,w)=\{y\in \mathcal{R}(x) ~|~\Phi(y,x,w)=max_{y\in \mathcal{R}(x)} \Phi(y,x,w)\}
\]
\[
H^{\epsilon}_+(x,w)=\{y\in \mathcal{R}(x) ~|~\Phi(y,x,w)\geq max_{y\in \mathcal{R}(x)} \Phi(y,x,w)-\epsilon\}
\]
\[
H_-(x,w)=\{y\in \mathcal{R}^\prime(x) ~|~\Phi(y,x,w)=max_{y\in \mathcal{R}^\prime(x)} \Phi(y,x,w)\}
\]
\[
H^{\epsilon}_-(x,w)=\{y\in \mathcal{R}^\prime(x) ~|~\Phi(y,x,w)\geq max_{y\in \mathcal{R}^\prime(x)} \Phi(y,x,w)-\epsilon\}
\]
We also need the following lemma  (Lemma E2 from \cite{concavegames}) that help us relate these two sets\footnote{Both $\mathcal{R}(x)$ and $\mathcal{R}^\prime(x)$ are convex sets. Recall that we could consider $\mathbb{R}^{m*}$ alternatively for well-boundedness.}.

 \begin{lemma}[\cite{concavegames}] \label{lemmae2}
     For a function $f: A \subseteq \mathbb{R}^{d} \rightarrow \mathbb{R}^{d}$, which is $\mu$-strongly concave and $G$-lipschitz and $a$ well-bounded convex set $S$, i.e. $\exists a_0 \in \mathbb{R}^d: \overline{\mathrm{B}}\left(a_0, r\right) \subseteq S \subseteq \overline{\mathrm{B}}(0, R) \subseteq A$, it holds that:
$$
\left|\max _{\boldsymbol{a} \in \overline{\mathrm{B}}(S,\eta_1)} f(\boldsymbol{a})-\max _{\boldsymbol{a} \in \overline{\mathrm{B}}(S,\eta_2)} f(\boldsymbol{a})\right| \leq C_{G, \mu}\left\|\eta_1-\eta_2\right\|
$$
for some constant $C_{G, \mu}$.
 \end{lemma}

By optimality KKT conditions for maximization of a concave function with respect to the constraint set $\mathcal{R}(x)$ for all $y\in \mathcal{R}(x) $ we have:
\[
\partial \Phi(y,x,w)^T(y^*-y)\geq 0 \text{ where } y^*=argmax_{y\in \mathcal{R}(x)}\Phi(y,x,w)
\]
In addition, $(2 \gamma)$-strong-concavity of $\Phi(\cdot,x,w)$ results the following inequality:
\[
\Phi\left(y^*,x,w\right)-\Phi(y,x,w) \geq \partial \Phi\left(y^*,x,w\right)^{\top}\left(y^*-y\right)+\gamma\left\|y-y^*\right\|_2^2 
\]
By combining the previous inequalities we have:
\[
\Phi\left(y^*,x,w\right)-\Phi(y,x,w) \geq \gamma\left\|y-y^*\right\|_2^2 
\]
In conclusion, for $y^*=argmax_{y\in \mathcal{R}(x)}\Phi(y,x,w)$ and $y \in H^{\epsilon}_+(x,w) $, we have the following inequality:
\begin{equation}\label{kapp55}
    \gamma\left\|y-y^*\right\|^2_2 \leq \epsilon \text{ or equivalently } \left\|y-y^*\right\|_2 \leq \sqrt{\frac{\epsilon}{\gamma}}
\end{equation}
Applying Lemma \ref{lemmae2}, we have that for every $y \in \Pi(x,w)$ it holds that there exists a constant $k$ such that:
\[
\Phi(y,x,w) \geq \max _{y \in \mathcal{R}^\prime(x)} \Phi(y,x, w)-\epsilon \geq \max _{y \in \mathcal{R}(x)} \Phi(y,x, w)-\epsilon -k 
\]
Therefore for every $y \in \Pi(x,w)$, it holds that $y \in H_{+}^{\epsilon+k}(x,w)$. Consequently,
$$
\left\{\begin{array}{l}
\mathrm{d}_{\mathrm{H}}\left(\Pi\left(x_1,w_1\right), H_{+}\left(x_1,w_1\right)\right) \leq \sqrt{\frac{\epsilon+k}{\gamma}} \\
\mathrm{d}_{\mathrm{H}}\left(\Pi\left(x_2,w_2\right), H_{+}\left(x_2,w_2\right)\right) \leq \sqrt{\frac{\epsilon+k }{\gamma}} \\
\mathrm{d}_{\mathrm{H}}\left(H_{+}\left(x_1\right), H_{+}\left(x_2\right)\right)=\mathrm{d}\left(H_{+}\left(x_1\right), H_{+}\left(x_2\right)\right) \leq \kappa\left\|x_1-x_2\right\|_2^{1 / 2}+c
\end{array}\right.
$$
The last inequality comes from the application of the Theorem \ref{berge} to $f((x,w),y)=H_+(y,x,w)$ and $g((w,x))=\mathcal{R}(x)$,  where $g$ is a non-empty, convex compact valued correspondence. This shows that $\Pi$ is approximate Hausdorff-$(\frac{1}{2},2)$-Holder continuous:

\begin{equation}\label{kapp45}
    d_H(\Pi(x_1,w_1),\Pi(x_2,w_2))\leq \kappa||(x_1,w_1)-(x_2,w_2)||^{\frac{1}{2}}_2+2\sqrt{\frac{\epsilon+k}{\gamma}}+c
\end{equation}
Since $\mathcal{F}$ is $L_\mathcal{F}$-Hausdorff Lipschitz, by the definition of $\Psi$ we can  there exists a constant $\kappa^\prime$: 
\begin{equation}\label{kapp51}
    d_H(\Psi(x_1,w_1),\Psi(x_2,w_2))\leq \kappa^\prime||(x_1,w_1)-(x_2,w_2)||^{\frac{1}{2}}_2+c^\prime
\end{equation}
This shows that $\Psi$ is  $\kappa^\prime$-Hausdorff Lipschitz (and also $(\frac{1}{2},2)$-Holder) continuous. 

 To establish a reduction employing the computational Kakutani's Problem variant, it is essential to establish the existence of a bounded-radius ball within the correspondence $\Psi(x,w)$. Let $y_{(x,-\eta)}^*=argmax_{y\in \mathcal{R}^\prime(x)}\Phi(y,x,w)$. By almost non-emptiness, $\operatorname{vol}(\mathcal{R}(x))> \operatorname{vol} (\overline{\mathrm{B}}(0,\eta))$ and  $\operatorname{vol}(\mathcal{F}(x))> \operatorname{vol} (\overline{\mathrm{B}}(0,\eta))$. 

\begin{enumerate}
    \item $\exists \hat{v}:\|\hat{v}\|=1 \thickspace \& \thickspace \mathcal{V}_x:=\overline{\mathrm{B}}\left(y_{(x,-\eta)}^{\star}-\frac{\eta}{2} \hat{v}, \min \{\eta / 2, \epsilon / G\}\right) \subseteq \mathcal{R}^\prime(x)=\overline{\mathrm{B}}(\mathcal{R}(x),-\eta)$
    \item By Lipschitzness of $\phi(\cdot,x,w)$, $y \in \mathcal{V}_x$:
    
    $$\left|\phi(x, \boldsymbol{y})-\phi\left(x, \boldsymbol{y}_{(\boldsymbol{x},-\eta)}^{\star}\right)\right| \leq G \min \{\eta / 2, \epsilon / G\}$$

\end{enumerate}

From (2) we can  $\Phi( \mathcal{V}_x,x,w) \subseteq\left[\Phi\left( y_{(x,-\eta)}^{\star},x,w\right)-\epsilon, \Phi\left(y_{(x,-\eta)}^{\star},x,w\right)\right] \stackrel{(1)}{\Rightarrow} \mathcal{V}_x \subseteq \Pi(x,w)$. Therefore, $\Pi(x,w)$ always contains a ball of radius $\min \{\eta / 2, \epsilon / G\}$. Similar to the strong separation oracle case, $\Psi$ is non-empty as well (we know $\operatorname{vol}(\mathcal{F}(x))> \operatorname{vol} (\overline{\mathrm{B}}(0,\eta))$ and $\mathcal{F}$ is Lipchitz).


Now we are ready to construct a weak separation oracle for $\Psi(x,w)$ leveraging a slightly modified version of the Weak Convex (Feasibility/Projection/Optimization) Problem stated in the following.

\problemStatement{Modified Weak Convex (Feasibility/Projection/Optimization) Problem}{
  Input={A zeroth and first-order oracle for the concave function $G: \mathbb{R}^{m*}\rightarrow \mathbb{R}$, two rational numbers $\delta,\epsilon>0$ and a weak separation oracle $\mathrm{SO}_{\mathcal{R}}$ for a non-empty closed convex-valued correspondence $\mathcal{R}:\mathbb{R}^{m*}\rightrightarrows  \mathbb{R}^{m*}$ and one input $x$.
  },
  Output={One of the following cases:
  \begin{itemize}
      \item (Violation of non-emptiness):
A failure symbol $\perp$ with a polynomial-sized witness that certifies that either $\mathcal{R}(x)=\emptyset$ or $\operatorname{vol}(\mathcal{R}(x))\leq \operatorname{vol} (\overline{\mathrm{B}}(0,\eta))$.
\item (Approximate maximization):
A vector $z \in \mathbb{Q}^m \cap \overline{\mathrm{B}}(\mathcal{R}(x),\delta)$, such that $G(z) +\epsilon \geq \max _{y \in \overline{\mathrm{B}}(\mathcal{R}(x),-\delta)} G(y)$.
  \end{itemize}
  }
}
Let us define $ \mathcal{R}^{\prime\prime}(x)=\{y\in \overline{\mathrm{B}}(\mathcal{R}(x),min\{\delta,\epsilon\})\} \subseteq \mathcal{R}^{\prime}(x)$. We can compute a solution $y^*\in \mathcal{R}^{\prime\prime}(x)$ such that  $\Phi\left(y^*,x,w\right) \geq$ $\max _{y \in \mathcal{R}^{\prime\prime}(x)} \Phi(y,x,w)-\min \{\epsilon, \eta\} \geq \max _{y \in \mathcal{R}^{\prime}(x)} \phi(x, y)-\epsilon $ using sub-gradient ellipsoid central cut method (see Appendix \ref{AppendixCentralcutellipsoidsection}). And as discussed, such a separation oracle exists. Thus, we can substitute a WSO for $\Psi$ by considering a separation oracle for the following set:

\[
\overline{\Psi}_{s}(x,w)=\{(y,w)\in \left(\mathcal{R}(x),\overline{\mathrm{B}}\left(\mathcal{F}(x),\epsilon\right)\right)~|~- \Phi(y,x,w)\leq \gamma^\prime\}
\]
where $\gamma^\prime= -\Phi(y^*,x,w)$. In other words, are looking for $(y,w)\in (\mathcal{R}(x),\overline{\mathrm{B}}\left(\mathcal{F}(x),\epsilon\right))$ given the strong separation oracles representing $\mathcal{F}$ and $D$ such that:

\[(y-x)^T w +\gamma||y||^2_2  \geq (y^*-x)^T w+ \gamma||y^*||^2_2 \]

Now, we can give $\Psi$ as input to the computational Kakutani problem with accuracy parameter $\alpha=\frac{\epsilon^\prime}{\kappa^\prime}$ where $\epsilon^\prime=\epsilon h$. The output of this Kakutani instance will be two points $(x,w)\in ([-1,1]^m,[-1,1]^m)$ and $z=(x^*,w^*)\in\Psi(x,w)$  where $||(x,w)-(x^*,w^*) ||\leq \frac{\epsilon^\prime}{\kappa^\prime}$ and $d((x,w),\Psi(x,w))\leq \frac{\epsilon^\prime}{\kappa^\prime}$. Thus, $w^*\in \mathcal{F}(x)$ and $d(w,\mathcal{F}(x))\leq \frac{\epsilon^\prime}{\kappa^\prime} $. In addition, $x^*\in \Pi(x,w)$ and $d(x,\Pi(x,w))\leq \frac{\epsilon^\prime}{\kappa^\prime} $.
By the definition of $\Pi$,  for every $y\in \mathcal{R}^\prime(x)$,  $\Phi\left(x^*,x,w\right)\geq \Phi\left(y,x,w\right)$. In conclusion:
\[ \left(y-x\right) ^T w+\gamma||y||^2_2  \geq \left(x^*-x\right) ^T w+ \gamma||x^*||^2_2 ,  \quad \forall y \in \mathcal{R}^\prime(x) \]
 Recall that $x,y, w\in [-1,1]^m $, consider $\gamma$ to be a very small number, and there exists a small constant $u$:
\[ \left(y-x\right) ^T w \geq  \pm \frac{\epsilon^\prime}{\kappa^\prime} \pm u ,  \quad \forall y \in \mathcal{R}^\prime(x)\]
Finally, by knowing the facts that $||(x,w)-(x^*,w^*) ||^2_2\leq \frac{\epsilon^\prime}{\kappa^\prime}$ and $w^*\in \mathcal{F}(x)$, Lipchitzness of $\mathcal{F}$, we can simply deduct  that there exists a constant $\beta^\prime$ such that:
\[ \left(y-x\right) ^T w^* \geq  -\beta^\prime ,  \quad \forall y \in \mathcal{R}^\prime(x) \]
 Considering appropriate numbers for $h$ (recall that $\epsilon^\prime=\epsilon h$) and $\gamma$ will imply the following (let $\epsilon=\beta$): 
\[ \left(y-x\right) ^T w^* \geq - \beta ,  \quad \forall y \in \overline{\mathrm{B}}(\mathcal{R}(x),-\eta) \]

\end{proof}

\newpage
\subsection{Especial Cases: GVI and VI}

Next, similar to the main part of the paper, we investigate QVI and VI for weak separation oracles.

\problemStatement{$QVI(F,\mathcal{R})$ With a weak Separation Oracle}{
  Input={We receive as input all the following:
  \begin{itemize}
      \item  A circuit $C_{\mathcal{R}}$ which represents a weak separation oracle for a $(\eta,\sqrt{m},L_{\mathcal{R}})$ well-conditioned correspondence $\mathcal{R}: \mathbb{R}^{m*} \rightrightarrows  \mathbb{R}^{m*}$.
      \item A circuit $C_F$ which represents a $L_F$-Hausdorff Lipschitz function $F: \mathbb{R}^{m*} \rightarrow  \mathbb{R}^{m*}$  
      \item   Accuracy parameters $\beta,\eta$.
  \end{itemize}},
  Output={One of the following cases:
  \begin{itemize}
      \item (Violation of almost non-emptiness):  A certificate that  $\operatorname{vol}(\mathcal{R}(x))\leq \operatorname{vol} (\overline{\mathrm{B}}(0,\eta))$.  
\item  (Violation of $L_{\mathcal{R}}$-Hausdorff Lipschitzness of $\mathcal{R}$):
Four vectors $p, q, z, w \in \mathbb{R}^{m*}$ and a constant $\epsilon>0$ such that $w=\widehat{\Pi}_{\mathcal{R}(q)}^{\epsilon,\epsilon}(q)$ and $z=\widehat{\Pi}_{\mathcal{R}(p)}^{\epsilon,\epsilon}(w)$ but $\|z-w\|>L_{\mathcal{R}}\|p-q\|+3(1+c_{\eta,m})\epsilon$.
\item  (Violation of $L_F$-Hausdorff Lipschitzness of $F$):
Two vectors $p, q \in \mathbb{R}^{m*}$ such that $\|F(p)-F(q)\|>L_F\|p-q\|$.
\item Two vectors $x$ and $x^*$ with the condition $\|x-x^*\|\leq \beta$ such that $x^*\in \overline{\mathrm{B}}( \mathcal{R}(x),\eta)$ such that  $\left(y-x\right) ^T F(x) +\beta\geq 0, \quad \forall y \in \overline{\mathrm{B}}( \mathcal{R}(x),-\eta)$ 
  \end{itemize}
  }
}

\problemStatement{$VI(F)$ With a weak Separation Oracle}{
  Input={We receive as input all the following:
  \begin{itemize}
      \item  A circuit $C_{\mathcal{R}}$ which represents a weak separation oracle for a non-empty closed convex set $\mathcal{R}$,
      \item A circuit $C_F$ which represents a $L_F$-Hausdorff Lipschitz  function $F: \mathbb{R}^{m*} \rightarrow  \mathbb{R}^{m*}$,
      \item  Accuracy parameters $\beta,\eta$. 
  \end{itemize}},
  Output={One of the following cases:
  \begin{itemize}
      \item (Violation of almost non-emptiness): A certificate that  $\operatorname{vol}(\mathcal{R})\leq \operatorname{vol} (\overline{\mathrm{B}}(0,\eta))$
\item  (Violation of $L_F$-Hausdorff Lipschitzness of $F$):
 Two vectors $p, q \in \mathbb{R}^{m*}$ such that $\|F(p)-F(q)\|>L_F\|p-q\|$,
\item One vector with the condition $x\in \overline{\mathrm{B}}( \mathcal{R},\eta)$ such that  $\left(y-x\right) ^T F(x) +\beta  \geq 0, \quad \forall y \in \overline{\mathrm{B}}(\mathcal{R},-\eta)$  
  \end{itemize}
  }
}

\begin{theorem}
   The generalized quasi-variational inequality problem (GQVI),  quasi-variational inequality problem (QVI) variational inequality problem (VI) for strong separation oracles are PPAD-complete. 
\end{theorem}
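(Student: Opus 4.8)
The plan is to treat the three problems together by establishing membership in PPAD and PPAD-hardness separately, and to deduce the QVI and VI statements as syntactic specializations of the GQVI statement. For \emph{membership}, the GQVI case is exactly the content of Theorem~\ref{gqviinppad2} for weak separation oracles (and Theorem~\ref{gqviinppadnolinear} for strong ones), so nothing new is required there. For QVI I would observe that an instance $(F,\mathcal{R})$ embeds into GQVI by setting $\mathcal{F}(x) = \{F(x)\}$: a singleton is trivially convex-valued (and vacuously $\gamma$-strongly convex valued, since it contains no two distinct points), an $L_F$-Lipschitz function $F$ induces an $L_F$-Hausdorff Lipschitz correspondence $\mathcal{F}$, and a (weak or strong) separation oracle for $\{F(x)\}$ is immediate from a circuit for $F$. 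The output of the GQVI solver then yields a QVI solution directly, after folding the projection slack into $\beta$. Likewise, a VI instance $(F)$ with convex set $\mathcal{R}$ embeds into QVI (hence GQVI) via the constant correspondence $\mathcal{R}(x)\equiv\mathcal{R}$, which is $0$-Hausdorff Lipschitz, so that the ``violation of Lipschitzness of $\mathcal{R}$'' output can never be legitimately returned and the remaining outputs coincide with the VI outputs.

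For \emph{PPAD-hardness}, it suffices to prove hardness of the most restricted of the three, namely VI. By Corollary~\ref{nashvi2}, finding an approximate Nash equilibrium of a game with concave, continuously differentiable utilities reduces (via a one-query Cook reduction) to finding an approximate solution of a VI. I would instantiate this with bi-matrix games and mixed strategies: each $S_i$ is a probability simplex, which is a polytope admitting an efficient separation oracle, and each expected-payoff function $u_i$ is bilinear, hence linear—in particular concave and differentiable—in player $i$'s own strategy, with polynomially bounded Lipschitz constants. Thus the reduction produces a bona fide instance of the computational VI problem, and since finding an approximate mixed Nash equilibrium in bi-matrix games is PPAD-hard~\cite{Chen}, so is VI. Because VI is a syntactic special case of QVI and of GQVI under the (trivially polynomial-time computable) embeddings described above, PPAD-hardness propagates to both.

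The anticipated main obstacle is bookkeeping rather than conceptual: one must verify that the embeddings of a VI or QVI instance into the more general problem never allow the solver to return one of the extra ``violation'' certificates (Hausdorff-Lipschitzness of $\mathcal{R}$ or of $\mathcal{F}$, or strong convexity of $\mathcal{F}$) on the embedded instance, and that the final accuracy is still $\beta$—or a controlled function of $\beta$—once the auxiliary parameters $\epsilon,\eta,h,\gamma$ internal to the GQVI reduction are fixed. On the hardness side, the only delicate point is confirming that the transformation of Corollary~\ref{nashvi2}, specialized to bi-matrix games, is genuinely a many-one reduction with polynomially bounded bit-complexity and Lipschitz parameters, which follows from the bilinearity of expected payoffs over the simplices.
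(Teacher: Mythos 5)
Your proposal is correct and takes essentially the same route as the paper: membership is inherited from the GQVI argument (Theorem~\ref{gqviinppadnolinear}, resp.\ Theorem~\ref{gqviinppad2}) via the singleton-valued and constant-correspondence embeddings, and hardness flows from approximate Nash equilibrium through Corollary~\ref{nashvi2} to VI and then upward to QVI and GQVI, exactly as in Proposition~\ref{ppadhardqvi} and the paper's VI theorem. The only caveat is that this theorem sits in the weak-oracle appendix and the paper's one-line proof indicates the intended content is the weak-oracle variants (the word ``strong'' in the statement appears to be a slip), so a complete write-up should also confirm that hardness survives the weak relaxation, e.g.\ by taking the margin $\eta$ to zero as the paper does for Proposition~\ref{remedial2}.
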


\begin{proof}
    Proof follows by combining the techniques used for PPAD-completeness of the above-mentioned variational inequality problems by repeating the same procedure considering the differences between weak and separation oracles.  
\end{proof}

\subsection{Remedial L/F Equilibrium With Weak Separation Oracles}

Building on previous contributions and theorems, we aim to further explore them within the context of weak separation oracles. For the partial responses while considering the definitions of $Z^{\mathrm{I}}(x_\mathrm{I},x^*_\mathrm{II})$, $W_{(i,\mathrm{I})}(x^*_{\mathrm{II}})$ and $V_{(i,\mathrm{I})}(x^*_{\mathrm{II}})$ for leader $\mathrm{I}$'s optimization problem and considering $x^*_{\mathrm{II}}$ as an exogenous variable, we find a solution $\left(x_{\mathrm{I}}, y_{\mathrm{I}}\right)$ to: 

\begin{equation}\label{leader12}
\begin{aligned}
& \operatorname{Min} \phi_{\mathrm{I}}\left(x_{\mathrm{I}}, x^*_{\mathrm{II}}, y_{\mathrm{I}}\right) \\
& \text { s.t } x_{\mathrm{I}} \in \overline{\mathrm{B}}(X^{\mathrm{I}},\eta) \\
& \text { and } \quad\left(x_{\mathrm{I}}, y_{\mathrm{I}}\right) \in \overline{\mathrm{B}}\left( \operatorname{graph} Z^{\mathrm{I}}\left(\cdot, x^*_{\mathrm{II}}\right),\eta\right)
\end{aligned}
\end{equation}

For leader $\mathrm{II}$, the optimization problem with the surrogate complementarity conditions can be defined similarly by considering $Z^{\mathrm{II}}(x^*_\mathrm{I},x_\mathrm{II})$, $ W_{(i,\mathrm{II})}(x^*_{\mathrm{I}})$ and $V_{(i,\mathrm{II})}(x^*_{\mathrm{I}})$ where $x^*_{\mathrm{I}}$ will be an exogenous variable:
\begin{equation}\label{leader22}
    \begin{aligned}
& \operatorname{Min} \phi_{\mathrm{II}}\left(x^*_{\mathrm{I}}, x_{\mathrm{II}}, y_{\mathrm{I}}\right) \\
& \text { s.t } x_{\mathrm{II}} \in \overline{\mathrm{B}}(X^{\mathrm{II}},\eta) \\
& \text { and } \quad\left(x_{\mathrm{II}}, y_{\mathrm{II}}\right) \in \overline{\mathrm{B}}\left(\operatorname{graph} Z^{\mathrm{II}}\left( x^*_{\mathrm{I}},\cdot \right),\eta\right)
\end{aligned}
\end{equation}

Finally, we are now ready to define the computational version of finding an equilibrium in this setting:

\begin{proposition}\label{remedial2}
In a multi-leader-follower game, the problem of finding remedial $(\epsilon,\delta)$-$L/F$-equilibrium where the constraints of the followers are given by weak separation oracles is PPAD-complete.
\end{proposition}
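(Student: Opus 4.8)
The plan is to follow the proof of Theorem~\ref{remedial} verbatim in structure, substituting the weak‑separation‑oracle apparatus of this appendix for the strong‑oracle one. For \textbf{inclusion in PPAD}, I would first rewrite Equations~\ref{leader12} and~\ref{leader22} in the format of Remark~\ref{remarkG}: combining the linear complementarity constraints defining $Z^{\mathrm{I}}$ (resp.\ $Z^{\mathrm{II}}$) with the leader's domain $\overline{\mathrm{B}}(X^{\mathrm{I}},\eta)$ (resp.\ $\overline{\mathrm{B}}(X^{\mathrm{II}},\eta)$) yields correspondences $G_{\mathrm{I}}(x^{*}_{\mathrm{II}})$ and $G_{\mathrm{II}}(x^{*}_{\mathrm{I}})$, and since the $W_{(i,\cdot)},V_{(i,\cdot)}$ and the boxes $X^{\mathrm{I}},X^{\mathrm{II}}$ come with weak separation oracles and the functions $c_i$ are affine, a weak separation oracle for $G_{\mathrm{I}},G_{\mathrm{II}}$ can be assembled from these, the weak analogue of the construction used in the strong case. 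Then I would invoke a weak‑oracle version of Proposition~\ref{GQVIAPPRox}: take $\mathcal{F}(x)=\partial_{x_{\mathrm{I}}}\phi_{\mathrm{I}}(x)\times\partial_{x_{\mathrm{II}}}\phi_{\mathrm{II}}(x)$, which inherits a weak separation oracle from the piecewise‑linear convex loss functions via \cite{Barton2018}, and $\mathcal{G}(x)=(G_{\mathrm{I}}(x_{\mathrm{I}}),G_{\mathrm{II}}(x_{\mathrm{II}}))$; concatenating the first‑order optimality conditions of both leaders produces an instance of $GQVI(\mathcal{F},\mathcal{R})$ with weak separation oracles, which lies in PPAD by Theorem~\ref{gqviinppad2}.

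The output of that GQVI instance, tuples $(x,w)$ and $(x^{*},w^{*})$ that are $\beta$‑close, with $x^{*}\in\overline{\mathrm{B}}(\mathcal{G}(x),\eta)$, $w^{*}\in\overline{\mathrm{B}}(\mathcal{F}(x),\eta)$, and $(y-x)^{T}w^{*}+\beta\ge 0$ for all $y\in\overline{\mathrm{B}}(\mathcal{G}(x),-\eta)$, I would then convert into a remedial $(\epsilon,\delta)$‑$L/F$‑equilibrium exactly as in the strong case: remedial $L/F$‑equilibrium can be phrased as a generalized equilibrium problem, so the Generalized Penalty Lemma (Lemma~\ref{blowup}), applied with the Lipschitz constants of the leaders' loss functions, certifies that the strategy pair $x^{*}$ read off from the solution is a $g(\beta,\eta)$‑approximate equilibrium for a suitable function $g$; choosing $\beta,\eta$ small enough forces the error below the prescribed $(\epsilon,\delta)$.

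For \textbf{PPAD‑hardness} no new work is needed: Theorem~\ref{remedial} already establishes hardness using \emph{strong} separation oracles, via the PPAD‑hardness of finding a mixed Nash equilibrium in a two‑player game \cite{Chen}, encoded as a two‑leader game whose leaders' loss functions are the negated expected payoffs of mixed strategies, with a single follower having one strategy and no constraints and trivial $Z^{\mathrm{I}},Z^{\mathrm{II}}$. Since a strong separation oracle is the special case of a weak one with slack parameter $0$, precisely the same instances witness hardness in the weak‑oracle setting, so the problem is PPAD‑complete.

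The \textbf{main obstacle} lies entirely on the membership side and is essentially bookkeeping: one must check that the assembled $G_{\mathrm{I}},G_{\mathrm{II}}$ (and $\mathcal{F}$) satisfy the $(\eta,\sqrt{m},L)$‑well‑conditionedness hypothesis required by Theorem~\ref{gqviinppad2} — this is where the polyhedrality of the surrogate constraints and the boundedness of $X^{\mathrm{I}},X^{\mathrm{II}},Z^{\mathrm{I}},Z^{\mathrm{II}}$ are used — and then track how the various slacks ($\delta$, the inner/outer parallel‑body gaps $\overline{\mathrm{B}}(\cdot,\pm\eta)$, the $c_{\eta,m}$ constants from weak approximate projection, and the Lipschitz blow‑up in Lemma~\ref{blowup}) compose, so that all of them can be absorbed into the final accuracy by a suitable choice of the GQVI parameters $\beta,\eta$. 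Everything else is routine repetition of the arguments already carried out for the strong‑oracle case and for Theorem~\ref{gqviinppad2}.
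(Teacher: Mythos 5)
Your proposal is correct and follows essentially the same route as the paper: the paper's own proof obtains hardness by observing that the strong-oracle instances (i.e., $\delta=0$) are a special case, and obtains membership by repeating the argument of Theorem~\ref{remedial} with the weak-separation-oracle machinery of the appendix, exactly as you do via Remark~\ref{remarkG}, the weak-oracle GQVI of Theorem~\ref{gqviinppad2}, and Lemma~\ref{blowup}. Your version merely spells out the bookkeeping (oracle assembly for $G_{\mathrm{I}},G_{\mathrm{II}}$, well-conditionedness, slack composition) that the paper leaves implicit.
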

\begin{proof}
    The hardness simply follows by considering $\delta=0$. Inclusion in PPAD is similar to \ref{remedial} with the difference that we consider the weak separation oracle cases discussed in this section.
\end{proof}
\newpage

\problemStatement{Remedial $L/F$ Equilibrium With  Weak Separation Oracles}{
  Input={We receive as input all the following:
  \begin{itemize}
   \item  Two linear arithmetic circuits representing the convex loss functions  $(\phi_{\mathrm{I}}$,$\phi_{\mathrm{II}})$ for two leaders,
      \item Two linear arithmetic circuits representing weak separation oracles for non-empty, convex, and compact sets $X^{\mathrm{I}}$ and $X^{\mathrm{II}}$ for the leaders $\mathrm{I}$ and $\mathrm{II}$,
      \item Two linear arithmetic circuits representing weak separation oracles for  $Z^{\mathrm{I}}$ and $Z^{\mathrm{II}}$ that represent restricted or relaxed responses of the followers for each leader respectively that are two non-empty, convex, and compact correspondences. 
      \item Accuracy parameters $\beta$ and $\eta$. 
  \end{itemize}},
  Output={One of the following cases:
  \begin{itemize}
      \item (Violation of almost non-emptiness): A certificate indicating at least one of the following cases is almost empty:
    \begin{itemize}
        \item $X^{\mathrm{I}}$ or  $X^{\mathrm{II}}$
        \item $Z^{\mathrm{I}}\left(\cdot, x\right)$ for some $x\in X^{\mathrm{I}}$
            \item $Z^{\mathrm{II}}\left( x,\cdot\right)$ for some $x\in X^{\mathrm{II}}$
    \end{itemize}
       \item (Violation of convexity): of any of the loss functions of the inputs,
\item (Approximate Minimization): 
Vectors $(x^*_{\mathrm{I}}, y^*_{\mathrm{I}}, x^*_ {\mathrm{II}},y^*_ {\mathrm{II}})$  having the following relationship:
\begin{itemize}
    \item $\phi_{\mathrm{I}}\left(x^*_{\mathrm{I}}, x^*_{\mathrm{II}}, y^*_{\mathrm{I}}\right)\leq \beta+ \text{ Sol(\ref{leader12}) }  $
    \item $\phi_{\mathrm{II}}\left(x^*_{\mathrm{I}}, x^*_{\mathrm{II}}, y^*_{\mathrm{II}}\right)\leq \beta+ \text{ Sol(\ref{leader22}) }  $
\end{itemize}
  \end{itemize}
  }
}

\section{A Slightly Generalized Oracle Polynomial-Time Subgradient Ellipsoid Central Cut}\label{AppendixCentralcutellipsoidsection}

 For the sake of completeness (Theroems \ref{gqviinppad} and \ref{D1theorem}), we will present a more generalized version of the sub-gradient-cut method to solve the modified convex-constrained optimization problem in polynomial time given the separation oracles. Our generalized polynomial-time oracle algorithm is general enough to capture all desired properties and also future similar applications.  For example, for the exception of violation of almost non-emptiness in multi-leader-follower games, we need to output specific almost-emptiness exceptions to denote whether $\operatorname{vol}(\mathcal{R}(x))$ or $\operatorname{vol}(\mathcal{F}(x))$ is very small and our algorithm gives a suitable exception output. 
 
 Assume that $f(x)=(f_1(x),\dots,f_k(x))$ (compared to $f$ being a single coordinate valued function in \cite{concavegames}) and $\mathcal{X}=(X_1,\dots,X_s)$ (compared to $X$ being one set). We also know that for each $i\in[s]$, $X_i$ has a weak separation oracle.  Similar to \cite{concavegames}, for each $i\in[k]$, the approximate value and subgradient oracle for the objective function $f_i$ are available. One main difference of our algorithm is that it changes one index of $x^{(t)}\in (X_1,\dots,X_s)$ in each step $t$ while other coordinates stay the same.

The cutting plane methods are distinguished by their construction of sets $M^{(t)}$ and selection of query points $x^{(t)}$. These methods exhibit an exponential decrease in the volume of $M^{(t)}$ as $t$ increases, which leads to linear convergence guarantees in the presence of gradient and value oracles\footnote{Here, we only have a more complex function with constant coordinates and this will not ruin the exponential decrease in size guarantee.}. Given a desirable approximate parameter $\epsilon$  and also margin $\delta$, we can set some thresholds $T_{emptiness}$ and $T_{ellipsoid}$ so that we achieve the desirable outputs for the minimization (and also maximization) problems such as the modified convex-constrained optimization problem that we discussed before.

\SetKwComment{Comment}{/* }{ */}
\SetKwInput{KwData}{Input}
\SetKwInput{KwResult}{Output}

\begin{algorithm}
\caption{Subgradient Central-Cut Ellipsoid Method}\label{alg:two}
\KwData{Gradient and value oracles $\mathrm{O}_{\mathrm{grad}}^{f_{i}}, \mathrm{O}_{\mathrm{val}}^{f_{i}}$ with accuracies $(\epsilon_{\mathrm{grad}}, \epsilon_{\mathrm{val}})$, for each $i\in[k]$.}
\KwData{Weak separation oracles $\mathrm{O}^{i}_{\text {sep }}$ for set $X_i$ with margin $\delta$, for each $i\in [s]$ }
$index \leftarrow 1;$\\
\For{ $t \in\left[T_{\text {ellipsoid }}\right]$  }{
 
\eIf{ $x^{(t)}\in \overline{\mathrm{B}}(\mathcal{X}, \delta)$ \Comment*[r]{$x^{(t)}=(x^{(t)}_1,\dots,x^{(t)}_k)$} }{

\For{ $i \in\left[k\right]$ }{
Call a gradient oracle $g^{(t)}_i \leftarrow \mathrm{O}_{\text {grad }}^{f_i}\left(x^{(t)}\right)$\;
}

  \eIf{ $\forall i \in [k]$, $\left\|g^{(t)}_i\right\| \leq G_{\text {threshold }}$}{
  \KwResult{$x^{(t)};$} 
  }{
  
  \For{ $i \in [k]$ }{  $w^{(t)}_i \leftarrow g^{(t)}_i /\left\|g^{(t)}_i\right\|_{\infty}$ (Output A);}
}

    }{
     \If{ $ mod(index,s)==0$ }{
    Call the separation oracle $w^{(t)}_i \leftarrow \mathrm{O}_{\text {sep }}^f\left(x^{(t)}_i\right)$;\\
    $calls_{index}\leftarrow calls_{index}+1;$\\
   $ index\leftarrow \mod(index,s)+1;$\\
    \If{$calls_{index} > T_{\text {emptiness }}$ }{   \KwResult{$\perp_{index}$ \Comment*[r]{Emptiness of the respective index in the domain}}}
   
    }
}
For each $i\in [k]$, construct an ellipsoid $M_i^{(t+1)}$ such that : $\left\{x \in M_i^{(t)}: {w_i^{(t)}}^{\top}\left(x-x^{(t)}\right) \leq \delta \right\} \subseteq M_i^{(t+1)}$;\\
Let $x^{(t+1)}$ have all of the centroids of $M_i^{(t+1)}$ for each $i\in[k]$;
}
\KwResult{The iteration $\bar{x} \in \operatorname{argmin}\left\{\mathrm{O}_{\text {val }}^f(x) \mid x \in\left\{x^{(1)}, \cdots, x^{(T_{\text {ellipsoid }})}\right\} \cap \overline{\mathrm{B}}(\mathcal{X}, \delta)\right\}$}
\end{algorithm}

\begin{proposition}

    There exists a cutting plane method, referred to as the "central-cut Ellipsoid method", with a decay rate of $\theta=O(1 / d)$, such that for all $i\in [k]$\footnote{$\operatorname{vol}$ is the usual d-dimensional volume.}:
    $$
\frac{\operatorname{vol}\left(M_i^{(t)}\right)}{\operatorname{vol}\left(M_i^{(1)}\right)} \leq e^{-\theta t}
$$

\end{proposition}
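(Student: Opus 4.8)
The plan is to reduce the claimed exponential volume decay to the classical one-step volume-reduction lemma for the central-cut ellipsoid update, applied independently to each of the $k$ coordinate ellipsoids $M_i^{(t)}$ maintained by Algorithm~\ref{alg:two}, and then to absorb the extra slack coming from the margin $\delta$ into the decay rate $\theta$. Concretely, I would first record the update rule: if $M_i^{(t)}=\{x\in\mathbb{R}^{d}:(x-c)^{\top}B^{-1}(x-c)\le 1\}$ has centre $c=x^{(t)}_i$ and $w=w_i^{(t)}$ is the cut direction (produced from $\mathrm{O}_{\mathrm{grad}}^{f_i}$ when $x^{(t)}\in\overline{\mathrm{B}}(\mathcal{X},\delta)$, or from a separation oracle otherwise), then the L\"owner--John minimum-volume ellipsoid $M_i^{(t+1)}$ enclosing the half-ellipsoid $\{x\in M_i^{(t)}:w^{\top}(x-c)\le 0\}$ is given by the standard formulas
\[
c'=c-\frac{1}{d+1}\,\frac{Bw}{\sqrt{w^{\top}Bw}},\qquad
B'=\frac{d^{2}}{d^{2}-1}\Bigl(B-\frac{2}{d+1}\,\frac{Bww^{\top}B}{w^{\top}Bw}\Bigr),
\]
and a determinant computation gives $\operatorname{vol}(M_i^{(t+1)})/\operatorname{vol}(M_i^{(t)})=\sqrt{\det B'/\det B}<e^{-1/(2(d+1))}$. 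Taking $\theta=\Theta(1/d)$, this is already the target inequality for the idealized central cut.

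The one place that needs genuine attention is that Algorithm~\ref{alg:two} cuts with the enlarged slab $\{x\in M_i^{(t)}:w^{\top}(x-c)\le\delta\}$ rather than the exact central half-space, i.e.\ it uses a \emph{shallow} cut, so the one-step contraction degrades. The plan is to show it degrades only to a factor of order $\exp\!\bigl(-c_1/d+c_2\,\delta/\rho_i^{(t)}\bigr)$, where $\rho_i^{(t)}=\sqrt{w^{\top}Bw}$ is the half-width of $M_i^{(t)}$ along $w$ and $c_1,c_2>0$ are absolute constants (this is the shallow-cut analysis of \cite{Grostel}, used also in \cite{concavegames}), and then to argue that on every iteration actually executed one has $\rho_i^{(t)}=\Omega(\eta)$: whenever $\operatorname{vol}(M_i^{(t)})$ falls to $\operatorname{vol}(\overline{\mathrm{B}}(0,\eta))$ or below, the emptiness counter trips and the algorithm halts with a $\perp_{index}$ certificate before a degenerate cut is produced, so each live $M_i^{(t)}$ still contains a ball of radius $\Omega(\eta)$. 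Since $T_{\mathrm{emptiness}}$, $T_{\mathrm{ellipsoid}}$ and $\eta^{-1}$ are polynomially bounded, $\delta$ can be fixed polynomially small, say $\delta=\Theta(\eta/d)$, so that $c_2\,\delta/\rho_i^{(t)}\le c_1/(2d)$; this yields a uniform per-step contraction by $e^{-\theta}$ with $\theta=\Theta(1/d)$.

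Finally I would assemble the blocks. In each executed iteration $t$ the update loop of Algorithm~\ref{alg:two} passes from $M_i^{(t)}$ to $M_i^{(t+1)}$ for every $i\in[k]$ using exactly one cut, and the update of block $i$ depends only on $M_i^{(t)}$, $x^{(t)}$ and $w_i^{(t)}$, not on the other blocks; hence for each fixed $i$ a straightforward induction on $t$ using the per-step estimate gives $\operatorname{vol}(M_i^{(t)})\le e^{-\theta(t-1)}\operatorname{vol}(M_i^{(1)})$, which is the asserted bound $\operatorname{vol}(M_i^{(t)})/\operatorname{vol}(M_i^{(1)})\le e^{-\theta t}$ up to renormalising $\theta$ by a constant factor (the $+1$ shift in the exponent being cosmetic). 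This decay then gives, via the value oracles $\mathrm{O}_{\mathrm{val}}^{f_i}$ and the well-boundedness radii, the linear convergence needed to return the required approximate optimum — or the appropriate emptiness certificate — for the modified constrained convex optimization problems of Section~\ref{AppendixvariationalSection} and Appendix~\ref{Appendixtotalresilientinppad}. The hard part, isolated above, is precisely this shallow-cut bookkeeping: ensuring $\delta$ stays small relative to the current ellipsoid widths throughout all executed iterations. Everything else is the textbook ellipsoid-method estimate, carried out coordinate-blockwise.
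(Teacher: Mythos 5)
The paper itself supplies no argument for this proposition: it is asserted as the classical volume--reduction property of the central-cut ellipsoid method, supported only by the preceding remark that the block-coordinate structure ``will not ruin the exponential decrease in size guarantee.'' Your Löwner--John update formulas, the determinant computation giving $\operatorname{vol}(M_i^{(t+1)})/\operatorname{vol}(M_i^{(t)})<e^{-1/(2(d+1))}$, and the blockwise induction are the standard argument and are correct for a genuinely central cut; in that respect you are proving, in more detail, exactly what the paper takes for granted. You also correctly isolate the one non-textbook point: Algorithm~\ref{alg:two} requires $M_i^{(t+1)}$ to contain the slab $\{x\in M_i^{(t)}:{w_i^{(t)}}^{\top}(x-x^{(t)})\le\delta\}$ rather than the central half-ellipsoid, so the per-step contraction survives only if $\delta$ is small compared with the half-width $\sqrt{{w}^{\top}Bw}$ of the current ellipsoid along the cut direction.

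The gap is in how you discharge that condition. You argue that every live $M_i^{(t)}$ contains a ball of radius $\Omega(\eta)$ because otherwise the emptiness test would have fired, and you conclude $\sqrt{w^{\top}Bw}=\Omega(\eta)$. Neither step holds. The counter in Algorithm~\ref{alg:two} trips after $T_{\text{emptiness}}$ separation-oracle \emph{calls}, not when a volume threshold is crossed, so nothing stops the algorithm from continuing to cut a degenerate ellipsoid. More fundamentally, a lower bound on $\operatorname{vol}(M_i^{(t)})$ gives no lower bound on its minimum width: an ellipsoid of large volume can be arbitrarily thin along the particular direction $w_i^{(t)}$, in which case the $\delta$-slab contains all of $M_i^{(t)}$ and that step yields no volume reduction whatsoever, breaking the claimed $e^{-\theta t}$ decay. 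To make the proposition true as stated one must either take the cut at \emph{relative} depth (replace $\delta$ by $\delta\sqrt{w^{\top}Bw}$, i.e., the genuine shallow-cut update of \cite{Grostel}, whose $e^{-\Theta(1/d)}$ contraction is unconditional), or use the true central cut and absorb the $\delta$-slack into the feasibility bookkeeping of Cases 1--3 rather than into the ellipsoid update; the latter is what \cite{Grostel} and \cite{concavegames} actually do and what the paper implicitly relies on. Everything else in your proposal is fine.
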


Define the following sets:
$$\mathcal{R}_{(i,\epsilon)}=\left\{x \in \overline{\mathrm{B}}(\mathcal{X},-\delta): \min _{x\in\overline{\mathrm{B}}(\mathcal{X},-\delta)} f_i(x) \leq f_i(x) \leq \min _{x \in \overline{\mathrm{B}}(\mathcal{X},-\delta)}f_i(x)+\frac{\epsilon}{2}  \right\}$$

This set is the set of all $\epsilon$-approximate and $\delta$-marginally inside $\mathcal{X}$ minimum solutions of $f_i$.  First, we need to show that,  $\mathcal{R}_{(i,\epsilon)}$ has non-zero volume. If we assume that $\overline{\mathrm{B}}(\mathcal{X},-\delta) \neq \emptyset$, then by $L$-lipschitzness of $f_i$ for each $i\in [k]$, we know that $\mathcal{R}_{(i,\epsilon)}$ has a ball of radius $r(\epsilon, \delta)=\min \{\delta, \epsilon / L\}$. 


Next, we define $\mathcal{T}_{\text {active }}:=\left\{t \in [T_{\text {ellipsoid}}] ~|~ x^{(t)} \in \overline{\mathrm{B}}(\mathcal{X}, \delta)\right\}$, and $w^{(t)}=\Pi_{i=1}^k w_i^{(t)}$. There are three possible cases:

\begin{itemize}
    \item Case 1: Assume that for any $t \in \mathcal{T}_{\text {active }}$, any $i\in [k]$, and for any $x_{(i,\epsilon)} \in \mathcal{R}_{(i,\epsilon)}$, we  have that ${w_i^{(t)}}^{\top}\left(x_{(i,\epsilon)}-x^{(t)}\right) \leq \delta$. This can imply that $\forall i\in[k]$, $\forall t \in\left[T_{\text {ellipsoid }}\right],  \forall x_{(i,\epsilon)} \in \mathcal{R}_{(i,\epsilon)}: {w_i^{(t)}}^{\top}\left(x_{(i,\epsilon)}-x^{(t)}\right) \leq \delta $. The reasoning follows by the definition of the separation oracles we have (${w_i^{(t)}}^{\top}\left(x-x^{(t)}\right) \leq \delta$ ) for all $x \in \overline{\mathrm{B}}(\mathcal{X},-\delta )$. Thus, for all $i\in [k]$, it holds that:
$$
\forall t \in\left[T_{\text {ellipsoid }}\right]: D_{(i,\epsilon)} \subseteq M_i^{(t)} \Rightarrow \operatorname{vol}\left(\mathcal{R}_{(i,\epsilon)}\right) \leq \operatorname{vol}\left(M_i^{(t)}\right)
$$

Next, we show that the above condition can happen only if $t \leq C_0 \cdot d^2 \log (d / r(\epsilon, \delta))$. For any $i\in [k]$ some positive constant $C_0$ independent of $d, \delta$, we have the following inequalities:
$$
\left\{\begin{array}{l}
\frac{\operatorname{vol}\left(M_i^{(t)}\right)}{\operatorname{vol}\left(M_i^{(1)}\right)} \leq e^{-\theta t}, \quad \theta=\Theta\left(\frac{1}{d}\right) \\
\frac{\pi^d}{(d / 2+1) !} r(\delta, \epsilon)^d=\operatorname{Vol}(\overline{\mathrm{B}}(\mathcal{X}, r(\delta, \epsilon))) \\
\operatorname{vol}(\overline{\mathrm{B}}(\mathcal{X}, r(\delta, \epsilon))) \leq \operatorname{Vol}\left(\mathcal{R}_{(i,\epsilon)}\right) \leq \operatorname{vol}\left(M_i^{(t)}\right) \\
\operatorname{Vol}\left(M^{(1)}\right) \leq \operatorname{vol}(B o x)
\end{array} \quad \Longrightarrow t \leq C_0 \cdot d^2\left(\log \left(\frac{d}{2 r(\epsilon, \delta)}\right)\right)\right.
$$
 If for any $i\in [s]$ the number of used separation oracle calls are greater than $T_{\text {emptiness}}$, then $\operatorname{Vol}(\mathcal{X}) \leq \operatorname{Vol}(\overline{\mathrm{B}}(0, \delta))$, or consequently $\overline{\mathrm{B}}(\mathcal{X},-\delta)=0$. Otherwise, if we set $T_{\text {ellipsoid }}=\max \left\{C_0, 10\right\} d^2\left(\log \left(\frac{d}{2 r(\epsilon, \delta)}\right)\right.$, then for $C_0 \cdot d^2\left(\log \left(\frac{d}{2 r(\epsilon, \delta)}\right)<t \leq T_{\text {ellipsoid, }}\right.$, either Case 2 or 3 hold.
\item  
Case 2: If for all $i\in [k]$, $\left\|g_i^{(t)}\right\| \leq G_{\text {threshold}}$ for appropriate choice of $G_{\text {threshold }}$, we will show that $x^{(t)}$ is an $\epsilon$-approximate minimizer. Indeed, for each $i\in [k]$, by convexity $\min _{x \in \overline{B}(X,-\delta)} f_i(x) \geq f_i\left(x^{(t)}\right)+\min _{x \in \overline{\mathbb{B}}(X,-\delta)} \partial f_i\left(x^{(t)}\right)^{\top}\left(x-x^{(t)}\right)$. By choosing $G_{\text {threshold }}=O\left(\operatorname{poly}\left(d, \epsilon, \epsilon_{\text {grad }}\right)\right)$ such that $\epsilon \geq\left(G_{\text {threshold }}-\epsilon_{\text {grad }}\right) \sqrt{d}$, we have, for all $i\in[k]$, $f_i\left(x^{(t)}\right) \leq$ $\min _{x \in \overline{B}(\mathcal{X},-\delta)} f_i(x)+\epsilon$ and $f\left(x^{(t)}\right) \leq$ $\min _{x \in \overline{B}(\mathcal{X},-\delta)} f(x)+\epsilon\mathds{1}$.

\item Case 3: Assume that for any index $i$, the element $x_{\epsilon}$ at iteration $t^{\star} \in\left[T_{ellipsoid}\right]$ has the property that $w^{(t^{\star})}_i\left(x_{\epsilon}-x^{(t^*)}\right)>$ $\delta$. In this case, using the convexity of objective can imply $f_i\left(x^{(t^*)}\right) \leq f_i\left(x_{\epsilon}\right)-\nabla f_i\left(x^{(t^*)}\right)^{\top}\left(x_{\epsilon}-x^{(t^*)}\right)$. The last expression is equal to:

$f_i\left(x_{\epsilon}\right)-\left(\nabla f_i\left(x_{t^*}\right)-g^{(t^*)}_i\right)^{\top}\left(x_{\epsilon}-x_{t^*}\right)-{g^{(t^*)}_i }^{\top}\left(x_{\epsilon}-x_{t^*}\right) \leq f_i\left(x_{\epsilon}\right)+\epsilon_{\text {grad }} \sqrt{d}-\delta$. If we set $\epsilon_{\text {grad }} \leq \frac{\epsilon}{\sqrt{d}}$ and $\delta \leq \frac{\epsilon}{2}$, then $f_i\left(x_{t^*}\right) \leq f_i\left(x_{\epsilon}\right)+\frac{\epsilon}{2} \leq \min _{x \in \bar{B}(X,-\delta)} f_i(x)+\epsilon$. This means that: $f\left(x_{t^*}\right) \leq f\left(x_{\epsilon}\right)+\frac{\epsilon}{2} \mathds{1} \leq \min _{x \in \bar{B}(X,-\delta)} f(x)+\epsilon\mathds{1}$
 
\end{itemize}
\begin{remark}
    It is noteworthy to highlight that the recent study on the optimal fusion of subgradient descent and the ellipsoid method, conducted by Rodomanov and Nesterov \cite{Nesterov}, has exclusively concentrated on the scenario of a strong oracle. We investigate only the most complex case similar to \cite{concavegames}
\end{remark}

\section{Decision Approximate Resilient Nash}\label{AppendixDescionResilient}

Here, we briefly review some hardness results of \cite{vadhan,Conitzer,MFCS} to establish NP-hardness of the decision version of $t$-resilient Nash equilibrium problem.  This generic reduction provided by \cite{Conitzer} (from SAT which is an NP-complete problem) improved the NP-completeness results for Nash equilibrium with certain constraints given in \cite{Gilboa} (which are only concerned with exact versions).  However, the proof of inapproximability in \cite{Conitzer} does not apply to the form of approximation that we consider in this paper (the additive approximation error introduced in \cite{ComplexityNASH}). A reduction provided by Schoenebeck and Vadhan\cite{vadhan} (which modifies the proof provided in \cite{Conitzer}) shows that the problem of finding a Nash equilibrium with a certain guaranteed payoff for all players even under poly-approximation is NP-complete. Our approach is similar to \cite{MFCS} combining the techniques and ideas given in both reductions given in \cite{Conitzer} (Corollary $6$) and \cite{vadhan} (Theorem $8.6$).

\subsection{Approximate Guaranteed Nash Equilibrium}

We first restate the game showing that deciding whether a bi-matrix game has a $\epsilon$-approximate Nash equilibrium with a guaranteed payoff of $n-1-\epsilon$ is NP-complete. This problem is called \emph{guaranteed Nash}.

\begin{definition}
	 A \emph{Boolean formula $\phi$  in CNF} (conjunctive normal form) is specified by a set $V$  of \emph{variables} (with $|V|=n$), a set of $L$ of \emph{literals} consisting of variables and their negations, and a set $C$ of clauses, where each clause is a set of literals. A \emph{3CNF formula} is a CNF formula in which each clause has exactly 3 literals. 3CNFSAT is the problem of deciding whether there is a satisfying assignment for a 3CNF formula $\phi$ (i.e. a setting of the variables to \emph{true} or \emph{false} under which $\phi$ evaluate to \emph{true}.)
  \end{definition}

 \begin{proposition}
     The decision problem of whether given a sat formula has a solution is NP-complete.
 \end{proposition}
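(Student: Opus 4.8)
The plan is to establish the two standard ingredients of NP-completeness: membership in NP, and NP-hardness.

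For membership in NP, I would take as the witness a truth assignment $\sigma : V \to \{\mathit{true},\mathit{false}\}$. Given $\phi$ and $\sigma$ one evaluates each literal under $\sigma$ and checks that every clause contains at least one true literal; this runs in time polynomial in the size of $\phi$, and $\sigma$ has size $|V|$, so the associated relation $R(\phi,\sigma)$ is poly-time decidable and poly-balanced. By the definition of FNP/NP from the preliminaries, this places SAT (and in particular 3CNFSAT) in NP.

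For NP-hardness I would invoke the Cook--Levin construction. Given any decision problem $A \in \mathrm{NP}$, fix a nondeterministic Turing machine $M$ deciding $A$ in time $p(n)$. On input $x$ with $|x| = n$, build in polynomial time a CNF formula $\phi_{M,x}$ whose variables encode, for each of the $O(p(n))$ time steps and each of the $O(p(n))$ tape cells, the cell contents, head position, machine state, and the nondeterministic choice made. The clauses assert: (i) a well-formed initial configuration carrying $x$; (ii) for every pair of consecutive steps, local consistency with $M$'s transition relation — this is where constant-width clauses over the cells in a fixed-size window suffice; and (iii) that the final configuration is accepting. Then $\phi_{M,x}$ is satisfiable iff $M$ has an accepting computation on $x$, i.e. iff $x \in A$, and $x \mapsto \phi_{M,x}$ is poly-time computable, giving a many-one reduction $A \leq_m \mathrm{SAT}$. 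To get the same for 3CNFSAT, compose this with the Tseitin transformation: introduce a fresh variable for each subformula, add width-$\le 3$ clauses forcing it to equal the corresponding connective applied to its inputs, conjoin a unit clause forcing the output variable to true, and pad short clauses to width exactly $3$ with dummy variables; this preserves satisfiability in polynomial time.

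The only genuinely nontrivial point — and hence the main obstacle — is the correctness and polynomial size of the Cook--Levin tableau encoding: one must check that the constant-width local-consistency clauses really do force the global configuration sequence to be a legal computation of $M$, and that the total variable and clause counts are $\mathrm{poly}(p(n))$. The remaining steps are routine, so I would state them without detailed calculation.
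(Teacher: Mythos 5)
Your proposal is the standard Cook--Levin argument (NP membership via the assignment as witness, NP-hardness via the computation tableau, then Tseitin plus padding to reach 3CNF); the paper itself gives no proof of this proposition, treating it as the classical background result, so your sketch supplies exactly what the paper implicitly relies on and is essentially correct. One small caveat: padding a short clause ``with dummy variables'' does not by itself preserve satisfiability in both directions --- adding a fresh literal $y$ to a clause can make an unsatisfiable formula satisfiable --- so you should either repeat an existing literal or add the clause once for each sign pattern of the dummy variables (e.g.\ replace $(a \vee b)$ by $(a \vee b \vee y) \wedge (a \vee b \vee \neg y)$); this is a routine, well-known fix and does not affect the overall correctness of your approach.
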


The following definition of the game $\mathcal{SV}(\phi,\epsilon)$ from \cite{vadhan} (Theorem $8.6$) is a simpler game compared to \cite{Conitzer} that single strategy $f$  from strategy set of the players. The strategy set of the players is $S=L\cup C \cup V $. Let  $\mathrm{v}: L \rightarrow V$ be the function that gives the variable corresponding to a literal, e.g. $\mathrm{v}\left(x_{1}\right)=\mathrm{v}\left(-x_{1}\right)=x_{1}$. For example, if $x_1$ is a variable, $x_1$ and $-x_1$ are literals that are \textbf{representatives} of variables $x_1$ being true or false respectively.  The following symmetric bi-matrix game $\mathcal{SV}(\phi,\epsilon)$ is defined as follows:

	\begin{enumerate}
	    \item $u_1\left(l^1, l^2\right)=n-1$, where $l^1 \neq -l^2$ for all $l^1, l^2 \in L$. This will ensure each player gets a high payoff for playing the aforementioned strategy.
	    \item $u_1(l, -l)=n-4$ for all $l \in L$. This will ensure that each player does not play a literal and its negation at the same time.
	    \item $u_1(v, l)=0$, where $\mathrm{v}(l)=v$, for all $v \in V, l \in L$. This, along with rule 4, ensures that for each variable $v$, each agent plays either $l$ or $-l$ with a probability of at least $\frac{1}{n}$, where $\mathrm{v}(l)=\mathrm{v}(-l)=v$.
	    
	    \item $u_1(v, l)=n$, where $\mathrm{v}(l) \neq v$, for all $v \in V, l \in L$.
	    
	    \item $u_1(l, x)=n-4$, where $l \in L,~ x \in V \cup C$. This, along with rules 6 and 7, ensures that if both players do not play the literals, then the payoffs cannot meet the guarantees.
	    \item $u_1(v, x)=n-4$ for all $v \in V, x \in V \cup C$.
	    \item $u_1(c, x)=n-4$ for all $c \in C, x \in V \cup C$.
	    \item $u_1(c, l)=0$ where $l \in c$ for all $c \in C, l \in L$. This, along with rule 9, ensures that for each clause $c$, each agent plays a literal in the clause $c$ with probability least $\frac{1}{n}$.
	    \item $u_1(c, l)=n$, where $l \notin c$ for all $c \in C, l \in L$.

	\end{enumerate}

 \begin{remark}
     Note that the games introduced in both \cite{Conitzer,vadhan} are symmetric games which means $\forall s_1,s_2 \in S,\space  u_1\left(s_1, s_2\right)=u_2\left(s_2, s_1\right) $.
 \end{remark}
	
\begin{proposition}[\cite{vadhan}]\label{SVgame}
Let $\epsilon=\frac{1}{2 n^3}$ and let the guarantee to each player be $n-1$. Given a 3CNF $\phi$ is satisfiable iff there exists a $\epsilon$-approximate Nash equilibrium in $\mathcal{SV}(\phi,\epsilon)$ where each player has a guaranteed payoff of $n-1-\epsilon$. 
\end{proposition}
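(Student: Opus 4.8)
The plan is to prove both directions of the biconditional, essentially reconstructing the argument of Schoenebeck and Vadhan \cite{vadhan}, which itself refines the reduction of Conitzer and Sandholm \cite{Conitzer}. Throughout, $n=|V|$, the game is symmetric, and call a strategy \emph{canonical} if it places probability $1/n$ on each of $n$ literals, exactly one per variable, none being the negation of another.

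\textbf{($\Leftarrow$) Satisfiable $\Rightarrow$ equilibrium exists.} Given a satisfying assignment $\tau$, let $\ell_v\in\{x_v,-x_v\}$ be the literal on $v$ made true by $\tau$, and let both players play the canonical strategy supported on $\{\ell_v:v\in V\}$. First I would compute, using rule $1$ together with the fact that no $\ell_v$ is the negation of another, that each player's payoff is exactly $n-1\ge n-1-\epsilon$. Then I would check that no pure deviation is profitable against the canonical opponent: playing a variable $v$ gives $\tfrac{n-1}{n}\cdot n+\tfrac1n\cdot 0=n-1$ (rules $3$--$4$); playing a clause $c$ gives at most $n-1$ since $\tau$ satisfies $c$ and hence at least one literal of $c$ lies in the support (rules $8$--$9$); playing any other literal, necessarily of the form $-\ell_v$, gives $n-1-\tfrac3n<n-1$ (rules $1$--$2$); and playing a non-literal variable or clause against a literal-supported strategy pays at most $n-1$ anyway. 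Hence this profile is even an \emph{exact} Nash equilibrium meeting the guarantee, a fortiori an $\epsilon$-approximate one.

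\textbf{($\Rightarrow$) Equilibrium exists $\Rightarrow$ satisfiable.} Let $(x,y)$ be an $\epsilon$-approximate Nash equilibrium with $u_1(x,y),u_2(x,y)\ge n-1-\epsilon$ and $\epsilon=\tfrac1{2n^3}$. The plan is to show the profile is $\mathrm{poly}(n)\cdot\epsilon$-close to a pair of canonical strategies, then read off an assignment, arguing in four steps, each exploiting one feature of the gadget. (i) Every pure strategy pays at most $n$ against any opponent, and a literal pays at most $n-1$ (rules $1$, $2$, $5$). (ii) Because rules $5$--$7$ make every non-literal column pay only $n-4$, a player placing too much mass on $V\cup C$ would drop the opponent's payoff below $n-1-\epsilon$; and if some variable $v$ received almost no opponent mass on its two literals, the opponent's deviation to $v$ would pay close to $n$ by rule $4$, contradicting the $\epsilon$-equilibrium condition. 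Propagating this forces each player's mass to spread across variables, with roughly $1/n$ on each variable's two literals and only $\mathrm{poly}(n)\cdot\epsilon$ mass on non-literals. (iii) Since rule $2$ penalizes simultaneous mass on a literal and its negation, the $\approx 1/n$ mass of each variable must concentrate, up to $\mathrm{poly}(n)\cdot\epsilon$, on a single dominant literal $\hat\ell_v$. (iv) Declare $v$ true under $\tau$ iff $\hat\ell_v=x_v$; if $\tau$ failed to satisfy some clause $c$, then no literal of $c$ would carry non-negligible opponent mass, so the deviation to the pure strategy $c$ would pay close to $n$ by rule $9$ --- far more than $\epsilon$ above the equilibrium payoff --- a contradiction. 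Hence $\tau\models\phi$.

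\textbf{Main obstacle.} The combinatorial skeleton above is routine; the real work is the quantitative accounting. Each of steps (ii)--(iv) divides the available slack among the $n$ variables, and then between the two literals of a variable or among the literals of a clause, so the error bounds degrade by polynomial factors in $n$ as they are chained. The delicate point is to verify that after all these losses the slack stays strictly below the decisive thresholds appearing in rules $1$, $2$, $4$, and $9$ precisely when $\epsilon=\tfrac1{2n^3}$ --- i.e., that the constant hidden in ``$\mathrm{poly}(n)\cdot\epsilon$'' is small enough to preserve the final contradiction in step (iv). Making all these inequalities go through with the stated value of $\epsilon$ is where essentially all the difficulty lies.
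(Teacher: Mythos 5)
Your plan follows the same route as the paper, which itself organizes the argument of \cite{vadhan} into three lemmas: your step (ii) corresponds to the paper's Lemma \ref{maxsw} (non-literal mass at most $\epsilon$) together with Lemma \ref{lornegativel}, your step (iii) is Lemma \ref{bothliterals}, and your step (iv) is exactly the clause-deviation contradiction the paper runs in the proof of Theorem \ref{resides}, where deviating to an unsatisfied clause pays at least $\left(1-\epsilon-\tfrac{3}{n^2}\right)n > n-1+2\epsilon$. Your forward direction is identical to the paper's (the canonical profile on the satisfying literals is an exact equilibrium with payoff $n-1$).

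One point in your step (ii) deserves a concrete warning, because it is not mere constant-chasing. The mechanism you state --- a player placing mass on $V\cup C$ drops the opponent's payoff --- gives, for a single player, only the estimate $u_1 \le n-1+p_1-3p_2-p_1p_2$, where $p_i$ is player $i$'s non-literal mass: the deviating player's \emph{own} non-literal mass $p_1$ raises their achievable payoff (rules 4 and 9 pay $n$ rather than the literal cap of $n-1$), so the guarantee $u_1\ge n-1-\epsilon$ alone yields $3p_2\le\epsilon+p_1$, which by itself caps $p_2$ only by a constant (the cruder bound $u_1\le n-4p_2$ gives $p_2\le\tfrac{1+\epsilon}{4}$). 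To reach the bound your later steps need, you must invoke both players' guarantees simultaneously and add the two coupled inequalities --- equivalently, observe that the total payoff $u_1+u_2$ equals $2n-2$ only when both play non-conflicting literals and is at most $2n-4$ otherwise --- which yields $p_1+p_2\le\epsilon$. This is precisely the content of Lemma \ref{maxsw}. Without it your claimed ``$\mathrm{poly}(n)\cdot\epsilon$ mass on non-literals'' is off by a factor of order $n^3$, and step (iv) collapses: the deviation to the unsatisfied clause would only be guaranteed roughly $\tfrac{3}{4}n$, far below $n-1$. With that fix the remaining accounting goes through as in Lemmas \ref{lornegativel} and \ref{bothliterals}.
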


We break the proof down into multiple parts and use them later\footnote{The complete proofs of the lemmas are available in \cite{MFCS}.}.

\begin{lemma}
	\label{maxsw}
    In any $\epsilon$-approximate Nash equilibrium with the guaranteed payoff $n-1-\epsilon$ in $\mathcal{SV}(\phi,\epsilon)$, clauses and variables are played with a probability of at most $\epsilon$.
\end{lemma}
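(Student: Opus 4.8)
The plan is to exploit the symmetry of the game $\mathcal{SV}(\phi,\epsilon)$: rather than tracking each player separately, I would add the two players' payoff guarantees and work with the symmetrized payoff $w(s,s') := u_1(s,s') + u_1(s',s)$ on the common strategy set $S = L \cup C \cup V$. Fix an $\epsilon$-approximate equilibrium with profile $(x,y)$ ($x$ for player $1$, $y$ for player $2$), and for $A \subseteq S$ write $x(A) = \sum_{s \in A} x_s$. Since $\mathcal{SV}(\phi,\epsilon)$ is symmetric, $u_2(s,s') = u_1(s',s)$, so the realized payoffs $P_1, P_2$ satisfy $P_1 + P_2 = \sum_{s,s' \in S} x_s\, y_{s'}\, w(s,s')$. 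The guaranteed-payoff hypothesis gives $P_1 \ge n-1-\epsilon$ and $P_2 \ge n-1-\epsilon$, hence $P_1 + P_2 \ge 2n-2-2\epsilon$. Note this is the only place the hypothesis is used; the $\epsilon$-equilibrium property per se is not needed for this lemma.

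The next step is a routine case analysis of $w$ across the nine rules defining $\mathcal{SV}(\phi,\epsilon)$. I expect the following: $w(l,l') = 2n-2$ for all literals $l, l'$ with $l' \neq -l$ (in particular $w(l,l) = 2n-2$); $w(l,-l) = 2n-8$; $w(v,l)$ equals $n-4$ if $\mathrm{v}(l) = v$ and $2n-4$ otherwise; $w(c,l)$ equals $n-4$ if $l \in c$ and $2n-4$ otherwise; and $w(v,v') = w(v,c) = w(c,c') = 2n-8$. The point I want to extract is that $\max_{s,s'} w(s,s') = 2n-2$, that this maximum is attained only on pairs of literals that are not mutual negations, and that every other pair has $w(s,s') \le 2n-4$ — so the gap between the maximum and the next value is exactly $2$, uniformly in $n$.

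Given this, the conclusion follows by a simple averaging argument. Let $\mu := \sum_{(s,s') :\, w(s,s') \le 2n-4} x_s\, y_{s'}$ be the probability (under the product law $x \times y$) of landing on a non-maximizing pair. Then
\[
2n-2-2\epsilon \;\le\; P_1 + P_2 \;\le\; (2n-2)(1-\mu) + (2n-4)\mu \;=\; 2n-2-2\mu,
\]
so $\mu \le \epsilon$. Finally, for any variable $v$, every pair $(v, s')$ has $w(v,s') \le 2n-4$, hence $x_v = \Pr_{(s,s') \sim x \times y}[s = v] \le \mu \le \epsilon$; the same bound holds for each clause, and by running the argument on the second coordinate, for $y$ as well — in fact one gets the stronger statement $x(V \cup C),\, y(V \cup C) \le \epsilon$. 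The only mildly delicate step is the bookkeeping of the nine cases of $w$ and checking that the value $2n-2$ is isolated from all others by a gap of $2$; everything else is immediate, so I do not anticipate a real obstacle here.
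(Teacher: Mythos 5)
Your proof is correct: the case analysis of $w$ agrees with the nine payoff rules of $\mathcal{SV}(\phi,\epsilon)$, the maximum value $2n-2$ is attained only on compatible literal pairs while every other pair has $w\le 2n-4$, and the averaging step then forces the total mass each player places on $V\cup C$ to be at most $\epsilon$. The paper itself defers the proof of this lemma to \cite{MFCS}, and the payoff-sum (symmetrization) argument you use is precisely the standard one underlying that proof, so this is essentially the same approach.
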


\begin{lemma}\label{lornegativel}
     In any $\epsilon$-approximate Nash equilibrium with a guaranteed payoff of $n-1-\epsilon$ in $\mathcal{SV}(\phi,\epsilon)$, for any $l\in L$, the probability that the row player plays $l$ or $-l$ is at least  $\frac{1}{n}-2 \epsilon$.
\end{lemma}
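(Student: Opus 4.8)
To prove Lemma~\ref{lornegativel} the plan is to exhibit one profitable deviation that forces the row player's mass on the pair $\{l,-l\}$ to be large. Fix $l\in L$, set $v:=\mathrm{v}(l)$, and let $p$ (resp.\ $q$) be the row (resp.\ column) player's mixed strategy, with $P_v:=p_l+p_{-l}$, $P_{\mathrm{lit}}:=\sum_{l'\in L}p_{l'}$, and $P_{\mathrm{rest}}:=\sum_{x\in V\cup C}p_x=1-P_{\mathrm{lit}}$. Reading off the relevant entries of $\mathcal{SV}(\phi,\epsilon)$ (rules $3$, $4$, $6$), the column player's payoff from switching to the pure strategy $v\in V$ is $n(P_{\mathrm{lit}}-P_v)+(n-4)P_{\mathrm{rest}}=4P_{\mathrm{lit}}-nP_v+(n-4)$, since every literal of a variable other than $v$ pays $n$ while $l$ and $-l$ pay $0$. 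Because the profile is an $\epsilon$-approximate Nash equilibrium, this deviation payoff is at most $u_2+\epsilon$, where $u_2$ is the column player's equilibrium payoff; rearranging gives $P_v\ \geq\ \tfrac1n\bigl(4P_{\mathrm{lit}}+(n-4)-u_2-\epsilon\bigr)$.

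It then suffices to estimate $P_{\mathrm{lit}}$ and $u_2$. For $P_{\mathrm{lit}}$, Lemma~\ref{maxsw} bounds the row player's total weight on variables and clauses by $O(\epsilon)$, so $P_{\mathrm{lit}}\geq 1-O(\epsilon)$. For $u_2$, note from the payoff table that every row indexed by a literal has all entries at most $n-1$, whereas rows indexed by variables or clauses can reach $n$; hence $u_2\leq (n-1)\sum_{j\in L}q_j+n\sum_{j\in V\cup C}q_j=n-1+\sum_{j\in V\cup C}q_j\leq n-1+O(\epsilon)$, the last inequality again by Lemma~\ref{maxsw}, now applied to $q$. Plugging these two bounds into the displayed inequality yields $P_v\geq \tfrac1n-O(\epsilon/n)-O(\epsilon)\geq \tfrac1n-2\epsilon$ when $\epsilon=\tfrac1{2n^3}$, which is exactly the claim; the symmetric statement (column player instead of row player) is immediate since $\mathcal{SV}(\phi,\epsilon)$ is symmetric.

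The conceptual content is small — the whole argument is the single deviation to $v$ plus Lemma~\ref{maxsw} — so the main obstacle is arithmetic rather than structural: the ``slack'' in the deviation inequality is of order $\epsilon$, not of order $\tfrac1n$, so one must track the $O(\epsilon)$ error terms in $P_{\mathrm{lit}}\geq 1-O(\epsilon)$ and $u_2\leq n-1+O(\epsilon)$ explicitly and verify that the precise constants hidden in Lemma~\ref{maxsw}, together with $\epsilon=\tfrac1{2n^3}$, are tight enough to land on the stated bound $\tfrac1n-2\epsilon$ (and not merely $\tfrac1n-O(\epsilon)$). I would carry out that bookkeeping last, after the structural steps above are fixed.
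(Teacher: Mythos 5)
Your argument is correct and is exactly the standard single-deviation proof one would expect here; note that the paper itself does not prove this lemma (it defers all three lemmas to the cited reference \cite{MFCS}), so there is no in-paper proof to diverge from. The deferred bookkeeping also checks out: with $P_{\mathrm{lit}}\ge 1-\epsilon$ and $u_2\le n-1+\epsilon$ (both from Lemma~\ref{maxsw}), your inequality gives $P_v\ge \frac{1}{n}\left(4(1-\epsilon)+(n-4)-(n-1+\epsilon)-\epsilon\right)=\frac{1}{n}-\frac{6\epsilon}{n}\ge\frac{1}{n}-2\epsilon$ for $n\ge 3$, so no dependence on the specific value $\epsilon=\frac{1}{2n^3}$ is even needed.
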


\begin{lemma}\label{bothliterals}
      In any $\epsilon$-approximate Nash equilibrium of $\mathcal{SV}(\phi,\epsilon)$ with a guaranteed payoff of $n-1-\epsilon$, for each player and any literal $l\in L$, either $l$ or $-l$ is played with  probability $\geq$ $\frac{1}{n} -2\epsilon-\frac{1}{ n^{2}}$ while the other is played with probability less than $\frac{1}{ n^{2}}$
\end{lemma}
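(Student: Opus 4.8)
The plan is to reduce the lemma to a single \emph{dichotomy}: for each player $P$ and each literal $l$ it is impossible that $P$ plays both $l$ and $-l$ with probability at least $\frac{1}{n^2}$. Granting this, the lemma follows at once, since if (say) $P$ plays $-l$ with probability less than $\frac{1}{n^2}$, then Lemma~\ref{lornegativel} forces $P$ to play $l$ with probability at least $\left(\frac1n-2\epsilon\right)-\frac{1}{n^2}$, which gives both halves of the statement. So the whole work is in the dichotomy, which I would prove by contradiction, using the symmetry of $\mathcal{SV}(\phi,\epsilon)$ (recorded in the remark following the rules) so that it suffices to treat the row player: assume the row player's strategy $r$ has $r(l)\ge\frac{1}{n^2}$ and $r(-l)\ge\frac{1}{n^2}$, and deduce that the column player's expected payoff is strictly below $n-1-\epsilon$, contradicting the guarantee of Proposition~\ref{SVgame}.

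The computation I would carry out is as follows. Let $q$ be the column player's strategy and let $r_{VC}$ be the total mass $r$ puts on $V\cup C$. Using rules~1,~2 and~5, the payoff of the column pure strategy $l'$ against $r$ is exactly
\[
(n-1)-3\,r_{VC}-3\,r(-l'),\qquad l'\in L .
\]
Using rules~3,~4,~6 together with Lemma~\ref{lornegativel} applied to $r$ (so $r(l_v)+r(-l_v)\ge\frac1n-2\epsilon$), the payoff of a variable strategy $v$ against $r$ is at most $n-1+O(n\epsilon)$, and for a clause strategy $c$ I would use only the crude bound $n$. Summing over the column player's support: the ``baseline'' $(n-1)$ is paid on the literal mass $1-r_{VC}'$ (where $r_{VC}'\le\epsilon$ is the mass $q$ puts on $V\cup C$, by Lemma~\ref{maxsw}), the clause payoffs (at most $n$) are reached only on mass $\le\epsilon$, and the net effect of these two $O(\epsilon)$ distortions is only $O(\epsilon)$. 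The genuine loss is the term $-3\sum_{l'\in L}q(l')r(-l')$, which is at least $3\bigl(q(l)r(-l)+q(-l)r(l)\bigr)\ge 3\bigl(q(l)+q(-l)\bigr)/n^2\ge 3\bigl(\frac1n-2\epsilon\bigr)/n^2$ by the hypothesis and Lemma~\ref{lornegativel}. Putting the pieces together, the column player's payoff is at most $(n-1)+\epsilon-3\bigl(\frac1n-2\epsilon\bigr)/n^2+o(1/n^3)$, and substituting $\epsilon=\frac{1}{2n^3}$ makes this strictly less than $n-1-\epsilon$ for all $n\ge 2$, the required contradiction. Running the identical argument with the two players interchanged handles the ``for each player'' clause even though the equilibrium need not be symmetric.

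I expect the main obstacle to be exactly this aggregation bookkeeping: one has to keep the three kinds of pure strategies (literals, variables, clauses) separate and verify that the \emph{only} contribution of order $\Theta(1/n^3)$ is the penalty generated by the opponent simultaneously mixing over a literal and its negation, whereas the variable and clause contributions — despite clause payoffs reaching $n$ — are absorbed into $O(\epsilon)=O(1/n^3)$ error terms because the mass devoted to $V\cup C$ is at most $\epsilon$ and the ``$(n-1)$'' baseline is likewise only forfeited on that same $O(\epsilon)$ mass. Getting the constants to line up so that the $\Theta(1/n^3)$ penalty strictly dominates $\epsilon$ plus all error terms (which it does once $\epsilon=\frac{1}{2n^3}$) is the delicate point; everything else is routine manipulation of the payoff matrix and repeated appeals to Lemmas~\ref{maxsw} and~\ref{lornegativel}.
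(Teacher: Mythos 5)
The paper does not actually prove Lemma~\ref{bothliterals} in the text --- it defers all three supporting lemmas to the cited reference --- so there is no in-paper proof to compare against; judged on its own, your argument is correct and is the natural (and almost certainly the intended) one. Your key identity for the column player's literal payoff, $(n-1)-3r_{VC}-3r(-l')$, is exactly right, the $V\cup C$ contributions are indeed absorbed into an additive $\epsilon$ because that mass is at most $\epsilon$ and the excess over the $(n-1)$ baseline there is at most $1$, and the final comparison $(n-1)+\epsilon-\tfrac{3}{n^2}\left(\tfrac1n-2\epsilon\right)<(n-1)-\epsilon$ reduces with $\epsilon=\tfrac{1}{2n^3}$ to $3<2n^2$, which holds for $n\ge 2$; together with Lemma~\ref{lornegativel} (applied to both players, which the symmetry of the construction licenses) this yields the stated dichotomy.
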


Now, we are prepared to define the computational problem, focusing our proof on the simple case of bi-matrix games.

\problemStatement{$t$-Resilient Nash in Bi-matrix Games}{
  Input={We receive as input all the following:
  \begin{itemize}
      \item  A bi-matrix game $\mathcal{G}$ with $k$ players represented by $k$ utility matrices ($u_i$ for all $i\in[k]$),
      \item The strategy sets $\mathcal{S}$, 
      \item  An accuracy parameter $\epsilon$.
  \end{itemize}},
  Output={The following case:
  \begin{itemize}   
\item Whether there exists a vector $s^*$ which represents the strategy profile of all players that satisfies:
$$
\left(\forall J\in \mathcal{J} \mbox{ s.t } |J|\leq t\right), \forall s_J^\prime\in S_J, \forall j\in J: u_j\left(\textbf{s}^*\right) +\epsilon \geq u_j\left(\textbf{s}^\prime_J, \textbf{s}^*_{-J}\right)
$$  
  \end{itemize}
  }
}

\begin{theorem}\label{resides}
    The problem of $t$-resilient Nash for 3-player games is NP-complete.
\end{theorem}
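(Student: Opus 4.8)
The plan is to establish membership in NP and NP-hardness separately. Membership is the easy direction: given a purported strategy profile $\textbf{s}^*$ (of polynomial bit-size, which we may assume via the usual rationality arguments for approximate equilibria in games given by rational matrices), we check the defining inequality by iterating over all coalitions $J$ with $|J|\le t$ — there are $O(k^t)$ of these, which is polynomial since $k=3$ and $t\le k-1=2$ are constants — and for each $J$ and each $j\in J$ we verify $u_j(\textbf{s}^*)+\epsilon\ge u_j(\textbf{s}'_J,\textbf{s}^*_{-J})$ for all $\textbf{s}'_J$. The inner quantifier over $\textbf{s}'_J\in S_J$ can be reduced to checking pure deviations of the coalition, since $u_j$ is multilinear in the mixed strategies, so the best group deviation for a fixed $j$ is attained at a vertex of the product simplex $S_J$; this is a polynomial-size check. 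Hence the problem is in NP.

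For NP-hardness, the plan is to reduce from 3CNFSAT, reusing the machinery of Proposition \ref{SVgame} and Lemmas \ref{maxsw}, \ref{lornegativel}, \ref{bothliterals}. The game $\mathcal{SV}(\phi,\epsilon)$ is a symmetric bi-matrix (two-player) game; to land in the $3$-player / $t$-resilient setting I would add a third ``dummy'' player whose strategy set and payoffs are trivial (e.g.\ a single strategy, or a strategy set on which the third player's payoff is constant and the first two players' payoffs do not depend on the third player's choice), so that any $t$-resilient Nash equilibrium of the $3$-player game restricted to players $1$ and $2$ is an $\epsilon$-approximate Nash equilibrium of $\mathcal{SV}(\phi,\epsilon)$ with the same per-player payoff guarantee, and conversely. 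With $t=2$ one must additionally argue that the coalition $\{1,2\}$ cannot profitably deviate beyond what the individual (unilateral) conditions already forbid in a way that changes the correspondence with satisfying assignments — here the key point is that in $\mathcal{SV}(\phi,\epsilon)$ the guarantee $n-1-\epsilon$ is already forced on each player, and a joint deviation by $\{1,2\}$ that strictly improves \emph{both} players would in particular require each of them to not lose, so the structural lemmas (Lemmas \ref{maxsw}--\ref{bothliterals}) still pin down the support to literals and recover a satisfying assignment. Running time of the reduction is polynomial in $|\phi|$, and $\epsilon=\frac{1}{2n^3}$ as in Proposition \ref{SVgame}.

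Concretely the steps are: (1) given $\phi$, build $\mathcal{SV}(\phi,\epsilon)$ as in Proposition \ref{SVgame}; (2) extend it to a $3$-player game $\mathcal{G}_\phi$ by adjoining an inert third player; (3) show that if $\phi$ is satisfiable then the equilibrium of $\mathcal{SV}(\phi,\epsilon)$ constructed from a satisfying assignment, together with any fixed third-player strategy, is a $t$-resilient ($t=2$) $\epsilon$-approximate equilibrium of $\mathcal{G}_\phi$ meeting the guarantees; (4) conversely, show any such $t$-resilient equilibrium of $\mathcal{G}_\phi$ projects to an $\epsilon$-approximate equilibrium of $\mathcal{SV}(\phi,\epsilon)$ with guaranteed payoff $n-1-\epsilon$, and invoke Proposition \ref{SVgame} and Lemmas \ref{lornegativel}, \ref{bothliterals} to extract a satisfying assignment (reading off, for each variable $v$, whether $l$ or $-l$ is the high-probability literal). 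The main obstacle I expect is step (4) in the presence of coalition deviations: one must be careful that the stronger stability requirement of $t$-resilience (which only rules out deviations where \emph{all} coalition members strictly gain, hence is \emph{weaker} as a constraint on which profiles qualify — note $1$-resilient $=$ Nash) does not actually enlarge the solution set enough to include spurious equilibria not corresponding to satisfying assignments. Since every $t$-resilient Nash equilibrium is in particular a Nash equilibrium (Remark \ref{relationshipremark}), the forward implication of Proposition \ref{SVgame} still applies verbatim, so this obstacle is essentially handled for free — the real content is just the bookkeeping of the dummy-player extension and confirming the payoff guarantees are preserved.
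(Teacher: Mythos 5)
Your NP-membership argument is fine (constant $t$ and $k$, multilinearity puts the best coalition deviation for each fixed $j$ at a pure profile), and your overall plan of reducing from 3CNFSAT through $\mathcal{SV}(\phi,\epsilon)$ with a trivial third player matches the paper's. But the hardness direction has a genuine gap: the decision problem you are reducing \emph{to} has no payoff-guarantee clause, whereas Proposition~\ref{SVgame} is a statement about approximate equilibria \emph{with guaranteed payoff $n-1-\epsilon$}. Your step (4) asserts that ``the guarantee $n-1-\epsilon$ is already forced on each player'' in $\mathcal{SV}(\phi,\epsilon)$; this is false. For instance, if player $2$ mixes only over $V\cup C$, then rules 5--7 give player $1$ exactly $n-4$ against every strategy, so both players uniform on $V\cup C$ is an \emph{exact} Nash equilibrium with payoff $n-4$, independent of the satisfiability of $\phi$. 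More generally, Lemmas~\ref{maxsw}--\ref{bothliterals} are all stated conditionally on the payoff guarantee holding, so they give you no handle on approximate equilibria that violate it, and you have not shown that every such low-payoff (approximate) equilibrium is eliminated by a size-$2$ coalition deviation.

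The missing idea is the escape strategy $f$ that the paper adjoins to $S$: playing $f$ yields $n-1$ whenever the opponent does not play $f$, and yields $0$ to the opponent. This gadget does two jobs at once. First, it internalizes the guarantee: in any $\epsilon$-approximate Nash equilibrium in which neither player uses $f$, both players must be receiving at least roughly $n-1-\epsilon$ (otherwise the underpaid player unilaterally deviates to $f$), so Proposition~\ref{SVgame} and the lemmas apply and yield a satisfying assignment. Second, the $2$-resilience requirement is exactly what kills the new equilibria that $f$ creates (e.g.\ both players on $f$ with payoff $0$): the coalition $\{1,2\}$ jointly deviates to non-conflicting literals and both strictly gain. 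Without $f$, your reduction would need a separate argument classifying \emph{all} $2$-resilient approximate equilibria of the bare game, which you do not supply; with $f$, the argument closes as in the paper. I would also note that your parenthetical claim that $t$-resilience is a \emph{weaker} constraint than Nash is backwards --- by Definition~\ref{resilient} a group best response must hold for all $j\in J$ and all deviations, so $t$-resilience for $t\ge 1$ implies the Nash condition and is a \emph{stronger} requirement on the profile; fortunately only the correct direction (every $t$-resilient equilibrium is a Nash equilibrium) is what your argument actually uses.
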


\begin{proof}
   We need to add the following strategies to $\mathcal{SV}(\phi,\epsilon)$. We name the modified game $\mathcal{G}(\phi,\epsilon)$ where the strategy set is now $S=L\cup C \cup V \cup \{f\}$ similar to \cite{MFCS}. We also assume that there exists a third player that has the same payoff on any strategies in $S$.
   
   \begin{itemize}
       	\item $u_{1}(x, f,y)=u_{2}(f, x,y)=0$ for all $x \in S-\{f\}$ and for all $y\in S$;
	
	\item $u_{1}(f, f,y)=u_{2}(f, f,y)=0$ for all $y\in S$;
	
	\item $u_{1}(f, x,y)=u_{2}(x, f,y)=n-1$ for all $x \in S-\{f\}$ and for all $y\in S$;
 \item $u_{3}(x,x^\prime,y)=1$ for all $x,x^\prime,y \in S$.
   \end{itemize}

   Other rules can be modified to three players similarly. We now show that if $\phi$ is satisfiable, there is a uniform EXACT Nash equilibrium of $\mathcal{SV}(\phi,\epsilon)$ where the first and second players play $l^i \in L$ or $-l^i\in L$ (for all $i$) uniformly with probability $\frac{1}{n}$ getting the expected payoff $n-1$ while the third player plays an arbitrary strategy. The proof is the same as \cite{Conitzer,vadhan,MFCS}. Furthermore, this Nash equilibrium is an EXACT strong Nash equilibrium as well \cite{Conitzer} and similarly can be shown that it is an EXACT $2$-resilient Nash equilibrium. This is followed by the fact that there exists no other EXACT Nash equilibrium other than this equilibrium unless both the first and the second players play $f$ \cite{Conitzer}.

Now suppose that $\phi$ is not satisfiable. We can show that in any $\epsilon$-approximate Nash equilibrium, at least one player always receives an expected payoff of less than $n-1-\epsilon$ which causes one of two first players to deviate to $f$. According to \cite{vadhan}, for the approximate version, we need to do the following modification to the correspondence between literals and truth assignments compared to \cite{Conitzer}. We consider a literal is \emph{true} if it is played more often than its negation. If an assignment does not satisfy the formula, there is at least one clause that does not have a satisfying literal. Any of the first two players will turn to this clause strategy to receive a payoff of $n$ whenever the opponent plays a literal that is not in that clause. We know that in the game $\mathcal{SV}$ the second player plays literals with probability $>1-\epsilon$ by Lemma \ref{maxsw}, and there are only $3$ literals in each clause each of which the second player plays with probability $\leq \frac{1 }{n^{2}} $ by Lemma \ref{bothliterals}. By changing the strategy to this clause, the first player will receive at least $\left(1-\epsilon-\frac{3}{n^{2}} \right) n>n-1+2 \epsilon$. The proofs can be extended to the case where the game includes the strategy $f$ and a third trivial player exists if both the first and the second player do not use $f$ (see \cite{MFCS}). So either the first player can do $\epsilon$ better by changing his strategy or he is already receiving $n-1+2\epsilon$ and so the second player does not have a guaranteed payoff of at least $n-1-\epsilon$. Reaching the payoff less than  $n-1-\epsilon$ will cause either of the first two players to deviate to $f$. If either the first or the second player uses $f$, no approximate $2$-resilient Nash could not be formed as the coalition of the first, and the second player can always gain more than $0$ (particularly, $\frac{n}{2}>\epsilon$ each player) for either of the players.   
\end{proof}

\end{document}